\documentclass{article}
\usepackage{authblk}

\usepackage[english]{babel}

\usepackage[a4paper,top=1in,bottom=1in,left=1in,right=1in]{geometry}

\usepackage{amsmath}
\usepackage{amsthm}
\usepackage{tabulary}
\usepackage{graphicx}
\usepackage[colorlinks=true, allcolors=blue]{hyperref}
\usepackage[dvipsnames]{xcolor}
\usepackage[T1]{fontenc}
\usepackage[mode=buildnew]{standalone}
\usepackage{cleveref}
\usepackage{comment}
\usepackage{mathtools}
\usepackage{url}
\usepackage{multirow}
\usepackage{wrapfig}
\usepackage{graphicx}
\usepackage{caption}
\usepackage{amsfonts}
\usepackage{enumerate}
\usepackage{enumitem}
\usepackage{amsthm}
\usepackage{thmtools}
\usepackage{thm-restate}
\usepackage{array}
\usepackage[inkscapeformat=pdf]{svg}

\newtheorem{theorem}{Theorem}

\newtheorem{definition}{Definition}
\newtheorem{lemma}{Lemma}

\newtheorem{observation}[theorem]{Observation}
\newtheorem{prop}{Proposition}

\newtheorem{remark}{Remark}

\usepackage[ruled,vlined,linesnumbered]{algorithm2e}
\crefname{algocf}{algorithm}{algorithms}
\Crefname{algocf}{Algorithm}{Algorithms}

\usepackage{xcolor}


\newcommand{\noextremity}[2]{\mathsf{State_{#1,#2}[NoInnerExtr]}}
\newcommand{\acyclic}[2]{\mathsf{State_{#1,#2}[Acyclic]}}
\newcommand{\reachesst}[2]{\mathsf{State_{#1,#2}[Reaches_{st}]}}
\newcommand{\reachests}[2]{\mathsf{State_{#1,#2}[Reaches_{ts}]}}
\newcommand{\state}[2]{\mathsf{State_{#1,#2}[\cdot]}}
\newcommand{\true}{\mathsf{True}}
\newcommand{\false}{\mathsf{False}}
\newcommand{\Null}{\mathsf{Null}}

\DeclareMathOperator{\skel}{skeleton}
\DeclareMathOperator{\dirskel}{skeleton^*}
\DeclareMathOperator{\expansion}{expansion}

\newcommand{\iink}{i \in \{1,\dots,k\}}
\newcommand{\iinkz}{i \in \{0,\dots,k\}}
\newcommand{\jink}{j \in \{1,\dots,k\}}
\newcommand{\jinkz}{j \in \{0,\dots,k\}}
\newcommand{\iinl}{i \in \{1,\dots,\ell\}}
\newcommand{\Bst}{B_{st}}

\newcommand{\signs}{\{+,-\}}

\newcommand{\alex}[1]{\textcolor{Green}{(AT\ifstrempty{#1}{}{: #1})}}
\newcommand{\todo}[1]{\textcolor{red}{(TODO\ifstrempty{#1}{}{: #1})}}

\usepackage[table]{xcolor}
\usepackage{booktabs}
\usepackage{adjustbox}
\usepackage{graphicx}
\usepackage{array}
\usepackage{tabularx} 

\definecolor{famBiSB}{HTML}{F9D4D4}
\definecolor{famSnarl}{HTML}{CFE0EF}

\newcommand{\equalcontrib}{\thanks{These authors contributed equally.}}
\newcommand{\cosupervised}{\thanks{These authors jointly supervised this work.}}

\title{Identifying all snarls and superbubbles in linear-time, \\via a unified SPQR-tree framework}
\author[1]{Francisco Sena\equalcontrib}
\author[1]{Aleksandr Politov$^{*}$}
\author[2]{Corentin Moumard}
\author[3]{Manuel Cáceres\cosupervised}
\author[1]{Sebastian Schmidt$^\dagger$}
\author[1]{Juha Harviainen$^\dagger$}
\author[1]{Alexandru I. Tomescu$^\dagger$}

\affil[1]{\small Department of Computer Science, University of Helsinki, Helsinki, Finland\\
\texttt{\{francisco.sena,aleksandr.politov\\sebastian.schmidt,juha.harviainen,alexandru.tomescu\}@helsinki.fi}}
\affil[2]{\small ENS Lyon, Lyon, France\\
\texttt{corentin.moumard@ens-lyon.fr}}
\affil[3]{\small Department of Computer Science, Aalto University, Espoo, Finland\\
\texttt{manuel.caceres@aalto.fi}}
\date{}

\begin{document}

\maketitle
\begingroup
\renewcommand\thefootnote{}\footnote{Co-funded by the European Union (ERC, SCALEBIO, 101169716). Views and opinions expressed are however those of the author(s) only and do not necessarily reflect those of the European Union or the European Research Council. Neither the European Union nor the granting authority can be held responsible for them. Juha Harviainen was supported by the Research Council of Finland, Grant 351156.}\addtocounter{footnote}{-1}
\renewcommand\thefootnote{}\footnote{We are grateful to Benedict Paten for very helpful explanations and clarifications on snarls and the snarl decomposition and to Romeo Rizzi for helpful comments on the manuscript.\\[0.2cm]\includegraphics[width=4cm]{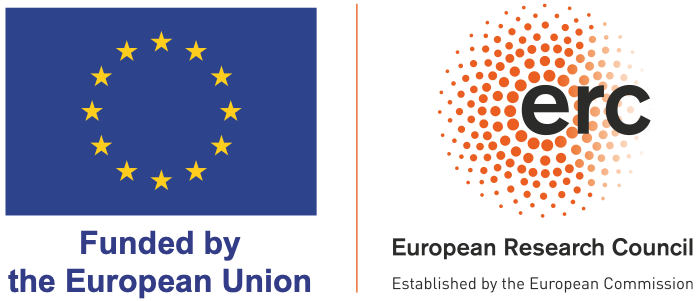}}\addtocounter{footnote}{-1}
\endgroup

\begin{abstract}
\emph{Snarls} and \emph{superbubbles} are fundamental pangenome decompositions capturing variant sites. These bubble-like structures underpin key tasks in computational pangenomics, including structural-variant genotyping, distance indexing, haplotype sampling, and variant annotation. Snarls can be quadratically-many in the size of the graph, and since their introduction in 2018 with the \texttt{vg} toolkit, there has been no published work on identifying \emph{all} snarls in linear time. Moreover, even though it is known how to find superbubbles in linear time, this result is a highly specialized and tailored solution only achieved after a long series of papers.

We present the first algorithm identifying all snarls in linear time. This is based on a new representation of all snarls, whose size is linear in input graph size, and which can be computed in linear time. Moreover, our algorithm is based on a \emph{unified framework} that also provides a new linear-time algorithm for finding superbubbles. The key observation behind our results is that all such bubble-like structures are separated from the rest of the graph by two vertices (except for cases which are trivially computable), i.e.~their endpoints are a \emph{2-separator} of the underlying undirected graph. Based on this, we employ the well-known SPQR tree decomposition, which encodes all 2-separators, to guide a traversal that finds the bubble-like structures efficiently. 

We implemented our algorithms in C++ (available at \url{https://github.com/algbio/BubbleFinder}) and evaluated them on various pangenomic datasets.
Our algorithms outcompete or they are on the same level of existing methods.
For snarls, we are up to two times faster than \texttt{vg}, while identifying all snarls.
When computing superbubbles, we are up to 50 times faster than BubbleGun.

Our SPQR tree framework provides a unifying perspective on bubble-like structures in pangenomics, together with a template for finding other bubble-like structures efficiently.

\end{abstract}

\medskip
\noindent\textbf{Keywords:} snarl, superbubble, pangenome, variation structure, graph algorithm, connectivity, 2-separator, SPQR tree, feedback arc, bidirected graph

\thispagestyle{empty}
\setcounter{page}{0}
\newpage

\section{Introduction}

\textbf{Background and motivation.} It is well known that genomic variation between sequences induces various structures in graphs built from them. For example, \emph{bubbles}~\cite{zerbino2008velvet,fasulo2002efficiently} in assembly graphs are two parallel paths with the same start vertex and the same end vertex, which are used in identifying sequencing errors~\cite{lin2016assembly,iqbal2012novo}, and in de novo variant calling~\cite{reads2010identifying,uricaru2015reference}. The most popular generalization of a bubble is that of a \emph{superbubble}~\cite{onodera2013detecting}, where the subgraph between the start and end vertex can be any acyclic graph, separated from the rest of the graph. While genome graphs are usually bidirected~\cite{medvedev2007computability}, superbubbles can also be applied to them by considering the ``doubled'' directed version of a bidirected graph~\cite{dabbaghie2022bubblegun,garg2018graph,iqbal2012novo,shafin2020nanopore}. Superbubbles have witnessed various recent applications, including long-read metagenome assembly~\cite{kolmogorov2020metaflye}, telomere-to-telomere assembly of diploid chromosomes~\cite{rautiainen2023telomere} and multiple whole-genome alignment~\cite{minkin2020scalable}.

The development of variation and pangenome graphs led to the introduction of proper generalizations of superbubbles, to explicitly handle the bidirected nature of the graphs. The most well-known notions are \emph{snarls} and \emph{ultrabubles}~\cite{paten2018ultrabubbles}. At a high-level, snarls can be thought of as minimal subgraphs with two endpoints that separate the interior from the rest of the graph, and at each endpoint, all incidences of a same sign are inside the snarl, and all incidences of opposite sign are outside the snarl.\footnote{An \emph{incidence} is a vertex $v$ followed by a sign $d_v \in \{+,-\}$, e.g. $v+$. A snarl can be identified by a pair of incidences $\{vd_v,ud_u\}$, e.g. $\{v+,u-\}$ where the $+$ incidences of $v$, and the $-$ incidences of $u$, are inside the snarl. See \Cref{sec:preliminaries,sec:snarls-main-text}.} Snarls were introduced for decomposing pangenome graphs into sites for variant calling with the \texttt{vg}~toolkit~\cite{garrison2018variation}, and have seen further applications in variant annotation \cite{hickey2024pangenome,wang2025population}, distance-based seed indexing~\cite{chang2020distance}, structural-variant genotyping~\cite{siren2021pangenomics}, haplotype sampling for personalized pangenome references~\cite{siren2024personalized}. Other bubble-like structures tailored to bidirected graphs include also \emph{bibubbles}~\cite{li2024exploring}, \emph{flubbles}~\cite{mwaniki2024popping}, \emph{bundles}~\cite{bundles}.


Despite the similarities between the large number of bubble-like structures, there is no unified methodology for efficiently computing them. Moreover, the number of all snarls can be quadratic in the input graph size. The authors of~\cite{paten2018ultrabubbles} used the \emph{cactus graph}~\cite{paten2011cactus} to prune some snarls in order to compute in linear time a \emph{snarl decomposition}, whose size is linear in the input graph size. While a snarl decomposition suffices for some applications (e.g. distance-based seed indexing), the goal of this paper is to identify \emph{all} bubble-like structures existing in the graph. 
Moreover, since superbubbles can indeed be computed in linear time~\cite{gartner2019direct}, it would seem reasonable to assume that such an approach can be adapted also for snarls. However, this achievement is heavily tailored to superbubbles, and, crucially, it also relies on the \emph{directed} nature of the input graph. Finally, this has been obtained after a series of papers~\cite{onodera2013detecting,loglinear,brankovic2016linear,gartner-revisited}, begging the question of how large undertaking obtaining a linear-time snarl algorithm is.\footnote{Initially, superbubbles were computable in $O(|V|(|E|+|V|))$-time in~\cite{onodera2013detecting}, and later improved to $O(\log|V|(|E|+|V|))$-time~\cite{loglinear}. Linear time $O(|V| + |E|)$ was first obtained for acyclic graphs in~\cite{brankovic2016linear}, then for general graphs in~\cite{gartner-revisited}, and then further simplified by~\cite{gartner2019direct}.}




\textbf{Contributions.} In this paper we show that snarls admit a representation whose size is linear in the input graph size, and which can also be computed in time linear in the input size. Thus, we can \emph{identify} all snarls in time linear in the input size. We obtain these algorithms via a method that also leads to a new linear-time algorithm computing all superbubbles of a directed graph. These lead, for the first time, to a \emph{unified} framework for finding both of types of structures, giving hope that other bubble-like structures could also be computed in linear time in the same manner (e.g. ultrabubbles and bibubbles, for which linear-time algorithms are currently missing).

One property that we use in this paper is that bubble-like structures previously mentioned are such that the two extremity vertices separate the inside of the structure from the rest of the graph, namely they form a \emph{2-separator} of the undirected counterpart of the input graph. We can then use graph theory developed for 2-separators. A key such notion is the \emph{SQPR tree} of a (biconnected) graph~\cite{battista1990on-line,gutwenger2001linear}, which encodes all 2-separators of the graph. While SPQR trees have already been applied in computational genomics (e.g.~to visualize scaffold graphs for metagenomic sequence assembly~\cite{metagenomescope} and for inference of local ancestry~\cite{Jafarzadeh2025.07.05.662656}), they have not been applied to solve the enumeration of existing bubble-like notions.

At a high level, our algorithms traverse the SPQR tree of the undirected counterpart of the input graph and, by using the original edge (bi)directions, evaluate certain properties in the tree, allowing us to conclude whether the 2-separators correspond to bubble-like structures. The power of the SPQR tree is two fold. On the one hand, it encodes candidates to bubble-like structures (the 2-separators). On the other hand, its tree structure allows for a dynamic programming-like approach where local information (of a component) can be re-used later (for another component). To obtain the linear running times we also overcome the following challenges independent to each structure.

    \textbf{Snarls.} The initial results on snarls by \cite{paten2018ultrabubbles} used cactus graphs (constructable in linear-time), to show that snarls correspond to either \emph{chain pairs} (whose number is linear in the input size), or to \emph{bridge pairs} (whose number can be quadratic in the input size). Here, we show that all snarls admit a representation which is linear in the input size, and can also be computed in linear time. 
    
    For this, we identify the quadratically-many snarls by characterizing a special set of vertices, which we call \emph{sign-consistent}. These are just cutvertices whose incidences in each corresponding component all have the same sign. By splitting each of these vertices we obtain a set of disjoint graphs, the \emph{sign-cut graphs}, which preserve the set of all original snarls and have the property that every snarl has its defining incidences in a \emph{single} sign-cut graph. Moreover, snarl components are fully contained inside sign-cut graphs, except for one simple case.
    Inside a sign-cut graph, the endpoints of snarls are either (i) both tips (vertices with all incident edges having the same sign in the sign-cut graph)---these can be quadratically-many; or (ii) both non-tips, which, importantly, form a 2-separator of a block (except the one simple case mentioned above)---which are linearly-many. To identify which 2-separators form a snarl, we characterize how snarls interact with each type of tree nodes (S, P, or R).
    Essentially, separability comes from analyzing the incidences at the vertices of the 2-separators and minimality follows from the connectivity properties of each type of node.
    
    For the linear-size representation, we further observe that \emph{any} pair of tip-tip vertices from case (i) forms a snarl. As such, these quadratically-many snarls can be simply represented as the set of tips in each sign-cut graph. Putting everything together, we obtain the following result:

    \begin{theorem}
        \label{thm:main}
        Given a bidirected graph $G$, there exists a representation of all snarls of $G$ of size $|V(G)| + |E(G)|$ consisting of sets $T_1, T_2, \dots, T_k$ and $S_1, S_2, \dots, S_\ell$, where
            \begin{enumerate}[nosep]
                \item each $T_i$ is a set of incidences of $G$, and any pair of incidences in $T_i$ identifies a snarl of $G$;
                \item each $S_i$ is a pair of incidences $\{ud_u, vd_v\}$ identifying a snarl of $G$;
                \item $\sum_{i=1}^{k} |T_i|=O(|V(G)|)$ and $\sum_{i=1}^{\ell} |S_i|= 2\ell = O(|V(G)|+|E(G)|)$.
            \end{enumerate}
        Moreover, this representation can be computed in time $O(|V(G)| + |E(G)|)$.
    \end{theorem}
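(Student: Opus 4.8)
The plan is to prove \Cref{thm:main} by assembling the structural characterizations announced above into a single constructive procedure, organized along the two-level decomposition: first reduce $G$ to its sign-cut graphs, then classify the snarl endpoints inside each sign-cut graph into the tip--tip and the non-tip regimes. First I would decompose $G$ by computing its cutvertices in a single DFS and splitting every \emph{sign-consistent} vertex into one copy per incident component. By the reduction lemma this preserves the set of all snarls and guarantees that every snarl has both defining incidences inside a single sign-cut graph, so that snarl components are fully contained in one sign-cut graph; the lone exceptional case (where this containment fails) I would enumerate directly and emit as individual pairs $S_i$. Since the number of vertex copies created is bounded by the size of the block--cut tree, the sign-cut graphs have total size $O(|V(G)|+|E(G)|)$ and are built in linear time.

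Second, within each sign-cut graph I would handle the tip--tip case by invoking the observation that \emph{any} pair of tips of a sign-cut graph identifies a snarl. It then suffices to output, per sign-cut graph, one set $T_i$ collecting its tips (each recorded as an incidence of its fixed sign), which compresses the potentially quadratically-many tip--tip snarls into linear space and satisfies condition~1 of the statement. The tips are found in linear time by scanning incidences, and $\sum_i |T_i| = O(|V(G)|)$ holds because each vertex copy contributes at most one tip while the number of copies is $O(|V(G)|)$.

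Third, for the non-tip case I would build the SPQR tree of each block in linear time and traverse it, using the original (bi)directions to test each candidate 2-separator. Here I would appeal to the per-node-type characterizations: separability is read off from the signs of the incidences at the two separating vertices, while minimality follows from the connectivity guaranteed by the S, P, or R node type. Processing the tree bottom-up lets information computed at a component be reused at its ancestors, so each tested 2-separator is decided in amortized constant time and the surviving ones are emitted as pairs $S_i$. Since the number of 2-separators encoded across all SPQR trees is $O(|V(G)|+|E(G)|)$, both $\ell$ and the total running time are linear, giving condition~2 and the size bound $\sum_i |S_i| = 2\ell = O(|V(G)|+|E(G)|)$.

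The main obstacle I anticipate is the third step: characterizing, for every SPQR node type, exactly which 2-separators correspond to snarls, and proving that the traversal evaluates both separability and minimality in total linear time. The sign analysis at the separating vertices must be carried out within the relevant skeleton, and minimality must rule out any smaller separating structure nested inside the candidate---both of which interact subtly with the virtual edges linking skeletons across the tree. Establishing that these checks compose into a single linear-time dynamic programming pass, rather than re-examining overlapping regions of the graph, is where the real work lies; by contrast, the sign-cut decomposition, the tip--tip compression, and the resulting size bounds are comparatively routine once the structural lemmas are in place.
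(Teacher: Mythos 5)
Your overall route is the same as the paper's: split sign-consistent cutvertices to form sign-cut graphs, compress the tip--tip snarls into per-graph tip sets $T_i$, and recover the remaining (non-tip) snarls by traversing the SPQR tree of each block with per-node-type tests, reusing subtree information for linear time. However, there is a genuine gap in your third step. You assert that ``the number of 2-separators encoded across all SPQR trees is $O(|V(G)|+|E(G)|)$,'' and you base both the bound on $\ell$ and the running time on ``testing each candidate 2-separator.'' This is false as stated: by \Cref{prop:spqr-tree-contains-split-pairs}, the separation pairs include \emph{all pairs of nonadjacent vertices of every S-node}, and these can be $\Theta(|V(G)|^2)$ in number (e.g., in a long cycle every nonadjacent pair of vertices is a separation pair). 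So an algorithm that examines every candidate 2-separator cannot run in linear time, and the surviving candidates cannot all be emitted as pairs $S_i$ within the claimed bound. The missing idea is the paper's S-node analysis (\Cref{prop:S-node-snarls}): inside an S-node only \emph{consecutive} good vertices in the circular order can form snarls, because a good vertex lying strictly between two others violates minimality; it is this restriction to consecutive pairs, not a count of separation pairs, that caps the number of reported non-tip snarls, and hence $\ell$, at linear.

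Beyond that, the per-node-type characterizations that you defer as ``the real work'' are indeed where most of the paper's proof lives, and they contain cases your sketch does not anticipate: snarls $\{ud_u,vd_v\}$ whose endpoints are adjacent in a block but do \emph{not} form a separation pair (\Cref{prop:snarl-edge-case}, handled by a separate edge scan), the interplay between P-nodes and S-nodes needed to decide minimality in constant time (which forces S-nodes to be processed before P-nodes), and the exclusion of endpoints with dangling blocks (\Cref{prop:mixed-dangling-blocks}). Your plan correctly identifies the architecture --- sign-cut graphs, tip sets, SPQR traversal with subtree reuse --- but without these ingredients neither the correctness of items 1--2 nor the size and time bounds of item 3 actually follow.
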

    
    \textbf{Superbubbles.} Superbubbles present similarities to snarls, but also different challenges. Similarly to snarls (whose endpoints are in the same sign-cut graph) we show that the endpoints of superbubbles are in the same biconnected component of the graph (which now fully contains their inner component). The challenge with superbubbles is that their interior must be acyclic. This now requires traversing the tree in multiple phases, maintaining several properties (including acyclicity) via dynamic programming.

    First, we show that no superbubble can have endpoints that are non-adjacent vertices in an S-node of the SPQR tree. Thus, all 2-separators that are superbubble candidates are in SPQR-tree edges. Every such edge induces a separation of the graph, and each must be checked if it is a superbubble. As for snarls, we also characterize how superbubbles interact with each type of tree nodes (S, P, or R).
    
    To ensure linear-time, we reuse properties from the subtrees of the current tree edge. More specifically, we need to keep track of whether the subgraph contains graph sources, graph sinks, or if it is acyclic. If all these properties hold for a side of a 2-separator $\{s,t\}$, by a simple observation we have that $s$ reaches $t$ (or $t$ reaches $s$). Thus, at later states of the algorithm, this subgraph does not need to be reinspected, as the only relevant properties have been computed and the entire subgraph can be emulated as an edge $(s,t)$ (or $(t,s)$). During the DP, besides standard state examinations, essentially only these reachabilities are relevant, as they allow us to assign directions to the edges contained in the nodes of the SPQR tree. We thus obtain a \emph{directed graph} per tree-node, which will be used to decide acyclicity at later stages.
    
    A final challenge is the following. Let $\{\mu,\mu_1\}, \dots, \{\mu,\mu_k\}$ be SPQR tree edges, with $\mu,\mu_1,\dots,\mu_k$ SPQR tree nodes, encoding 2-partitions of the (edges of) the graph, $(A_1,B_1)$,\dots, $(A_k,B_k)$, where $\mu$ is ``contained'' in each $A_i$.
    Here, we have to decide the acyclicity of each of the graphs $A_i$. 
    While we do have the directed graphs mentioned above, we still cannot afford to solve each of these problems independently, otherwise we may get a quadratic running time algorithm. Thus, we manage to decide the acyclicity of all these graphs at once. For that, we develop a reduction to computing the set of \emph{feedback edges} (whose removal makes the graph acyclic), by adapting a classic algorithm~\cite{garey1978linear}. Due to lack of space, we present all results on superbubbles in \Cref{sec:superbubbles}. 

\textbf{Implementation and experiments.}
We implemented our algorithms in C++ (\url{https://github.com/algbio/BubbleFinder}) and evaluate our implementations on graphs built with PGGB~\cite{garrison2024pangenomegraphs}, \texttt{vg}~\cite{garrison2018variation} variation graphs and a pangenome de Bruijn graph.
Our experimental results indicate that even though our algorithmic framework is of a much more generic nature, we outcompete specialized algorithms on most datasets.
We are up to two times faster than \texttt{vg} when computing snarls (and identifyin them all), and up to more than 50 times faster than BubbleGun~\cite{dabbaghie2022bubblegun} when computing superbubbles.
And even when we are slower, we never take more than twice the time of the previous methods.




\section{Preliminaries}
\label{sec:preliminaries}



\textbf{Bidirected graphs.}
A \emph{bidirected graph} $G = (V, E)$ has a set of vertices $V=V(G)$ and a set of edges $E=E(G)$.
A \emph{sign} is a symbol $d \in \{+, -\}$, and the \emph{opposite sign} $\hat{d}$ of $d$ is defined as $\hat{+} = -$ and $\hat{-} = +$.
A pair $(v,d)$ where $v \in V$ and $d \in \{+, -\}$ is an \emph{incidence}, which we concisely write as $vd$, e.g.~$v+$ or $v-$.
An edge $e \in E(G)$ is an unordered pair of incidences $\{u d_u, v d_v\}$, and we say that $e$ is incident in/at $u$ (resp. $v$) with sign $d_u$ (resp. $d_v$).
We let $N^+_G(v)$ (resp. $N^-_G(v)$) denote the set of those vertices $x$ for which there is an edge $\{v+,xd_x\}$ (resp. $\{v-,xd_x\}$) in $G$.
We say that a bidirected graph $H$ is a \emph{subgraph} of $G$ (and write $H \subseteq G$) if $V(H) \subseteq V(G)$ and $E(H) \subseteq E(G)$.
We say that $v$ is a \emph{tip} in $G$ if no two incidences of $G$ in $v$ have distinct signs.
A subgraph $H$ of $G$ is \emph{maximal} w.r.t.~a given property if no proper supergraph of $H$ contained in $G$ has that property. 


Standard notions such as walk and cycle exist in the context of bidirected graphs (see., e.g.~\cite{medvedev2007computability}). From these one can also define standard concepts of connectivity. However, in this work, we refer to the connectivity of a bidirected graph in terms of the connectivity of its underlying undirected graph. The undirected graph of $G$ is denoted by $U(G)$ and is obtained from $G$ by ignoring the signs in all its incidences (and keeping parallel edges that possibly appear).

\textbf{Undirected graphs and connectivity.}
Let $H$ be an undirected graph and let $u,v$ be vertices. If there is an edge in $H$ whose endpoints are $u$ and $v$ then we denote that edge as $\{u,v\}$.
A \emph{$u$-$v$ path} in $H$ is a path between $u$ and $v$. The \emph{internal vertices} of a path are the vertices contained in the path except $u$ and $v$.
Graph $H$ is \emph{$k$-connected} if it has more than $k$ vertices and no subset of fewer than $k$ vertices disconnects the graph. By Menger's theorem~\cite{menger}, if a graph $H$ is $k$-connected then $H$ has $k$ internally vertex-disjoint paths between any two of its vertices.
A \emph{connected component} (or just \emph{component}) is a maximally connected subgraph.
We call a vertex a \emph{cutvertex} if its removal increases the number of connected components of the graph. A connected graph with no cutvertex is also \emph{biconnected}. A set of two vertices whose removal increases the number of connected components is called a \emph{separation pair}. A biconnected graph with no separation pair is also \emph{triconnected}. Notice that we allow biconnected (triconnected) graphs to have fewer than three (four) vertices. We call an edge a \emph{bridge} if its removal increases the number of connected components; a set of parallel edges whose removal increases the number of connected components is called a \emph{multi-bridge}.
A set $X \subseteq V(H)$ of at least $k$ vertices is \emph{$(<k)$-inseparable} in $H$ if no two vertices of $X$ can be separated (i.e., they end up in different components) by removing fewer than $k$ other vertices. A maximal $(<k)$-inseparable set of vertices is called a \emph{$k$-block}. A \emph{separation} of $H$ is a pair of vertex sets $(A,B)$ such that $V = A \cup B$, $A\setminus B$ and $B\setminus A$ are nonempty, and there is no edge between $A\setminus B$ and $B\setminus A$.

\begin{figure}[t]
   \centering
   \includegraphics{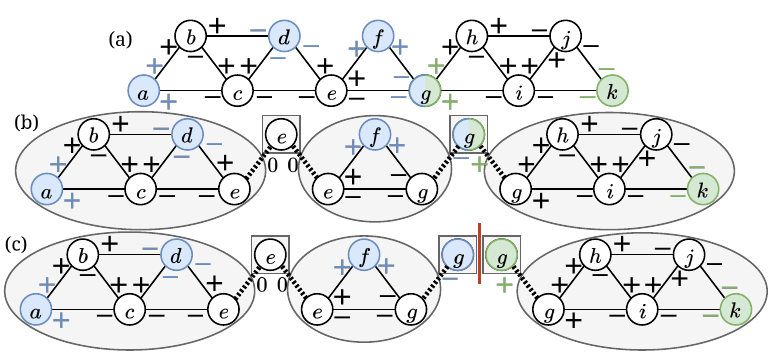}
   \caption{\textbf{Block-cut trees and representing quadratically many snarls in linear size.} (a) A bidirected graph $G$ with snarls $\{a+, d-\}$, $\{a+, f+\}$, $\{a+, g-\}$, $\{d-, f+\}$, $\{d-, g-\}$, $\{f+, g-\}$ (blue) and $\{g+, k-\}$ (green).
   (b) The block-cut tree of $G$.
   Block nodes are drawn as ellipses, and cutnodes as squares.
   Tree edges are bold dashed lines.
   On the cutnodes, we annotate the incidence signs for each block, using the special value $0$ for mixed signs.
   (c) The two sign-cut graphs of $G$), which are obtained by splitting the sign-consistent cutvertices (here, only $g$). The tips of the sign-cut graph on the left are $\{a+,d-,f+,g-\}$ (in blue), and those on the right are $\{g+,k-\}$ (in green). This is a linear-size representation of quadratically-many tip-tip snarls (i.e.~all pairs of vertices in each tip set).
   Cutvertex $e$ has at least one $0$, hence it is not sign-consistent, and hence it is not split.
   Cutvertex $g$ has no $0$, and thus it is split.
   No snarl can have endpoints in distinct sign-cut graphs (e.g. across the red cut), because otherwise $g$ would disprove its minimality. }
   \label{fig:bc-tree}
\end{figure}


\textbf{Block-cut trees.}
Let $H=(V,E)$ be an undirected connected graph with at least two vertices. It follows from the definition of $k$-block that a 2-block of $H$ is a maximal connected subgraph without cutvertices (see~\cite{diestel}). (For simplicity, we will refer to 2-blocks simply as blocks.) The \emph{block-cut tree} of $H$ is a tree with \emph{node set} $N$ and edge set $A$. The nodes in $N$ are of two types: \emph{block nodes} (either a maximal 2-connected subgraph or multi-bridges of $H$), and \emph{cutnodes}, (cutvertices of $H$).

The edges in $A$ represent how the blocks of $H$ are ``connected'' via the cutvertices of $H$ as follows.
Let $v$ be a cutvertex of $H$ and let $\mu$ be the cutnode of $N$ corresponding to $v$. Then $H-v$ consists of components $C_1,\dots,C_\ell$ ($\ell \geq 2$) and $\mu$ has $\ell$ neighbours in the tree, each corresponding to the block contained in $C_i + v$ that meets $v$.
Every edge of $H$ lies in a unique block of $H$~\cite{diestel}.
Notice that for any two vertices, there exists at most one block containing them both.


\begin{figure}
    \centering
    \includegraphics{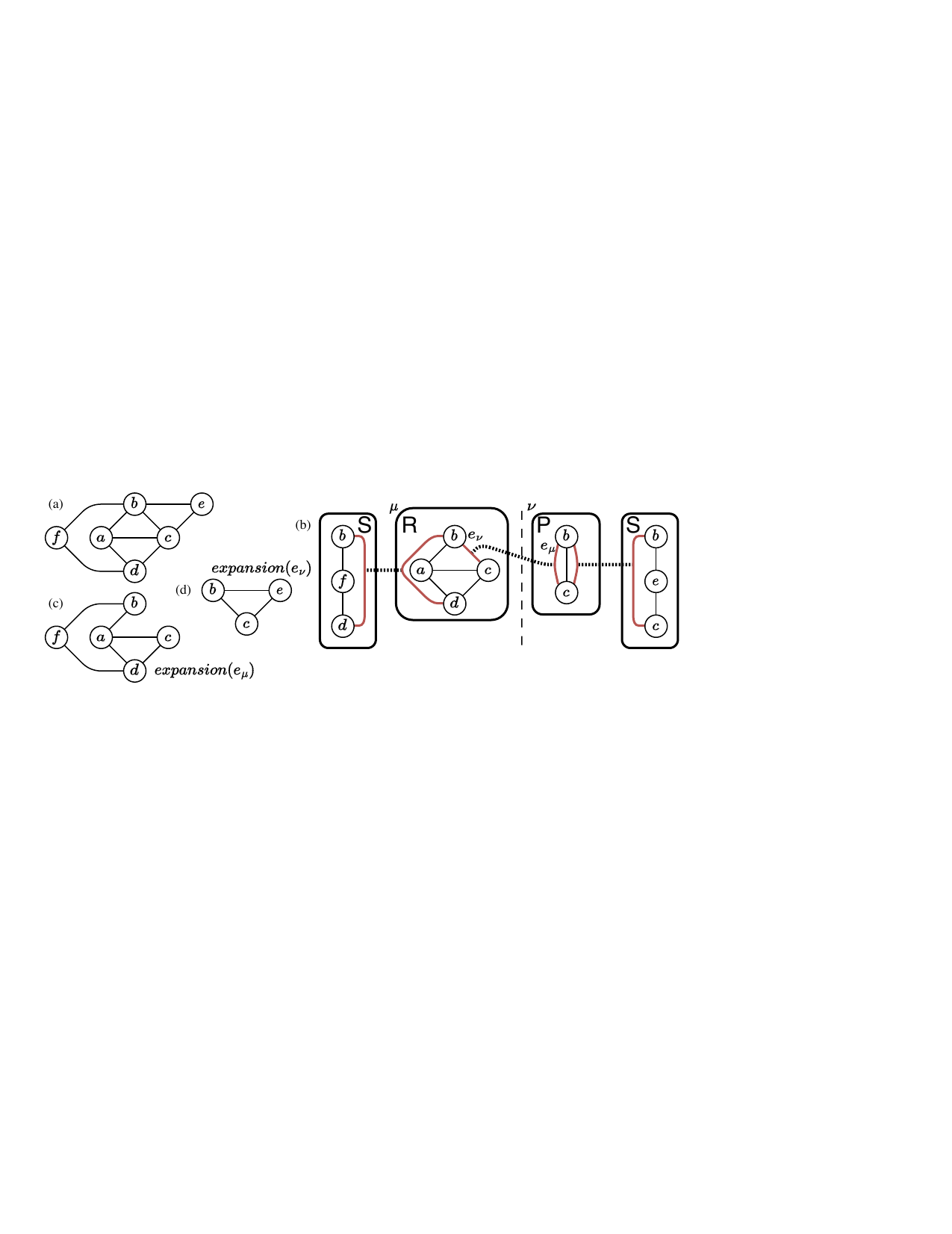}
    \caption{\textbf{Concepts of SPQR trees}. (a) An undirected graph $H$. (b) The SPQR tree of $H$. Tree edges are bold dashed lines. Red edges are virtual and black solid edges are real edges. Node $\mu$ is an R-node and its skeleton is 3-connected. Node $\nu$ is a P-node with $k=3$. The virtual edge $e_\nu$ of $\skel(\mu)$ pertains to $\nu$ and the virtual edge $e_\mu \in E(\skel(\nu))$ pertains to $\mu$. The tree edge $\{\nu,\mu\}$ induces a separation of $H$ with vertex sets $\{a,b,c,d,f\}$ and $\{b,c,e\}$. (c) The graph $\expansion(e_\mu)$. The real edges appearing to the left of the separation induce $\expansion(e_\mu)$. (d) The graph $\expansion(e_\nu)$.}
    \label{fig:spqr-nodes}
\end{figure}

\textbf{SPQR trees.}
SPQR trees represent the decomposition of a biconnected graph according to its separation pairs in a tree-like way, thus exposing the 3-blocks of the graph (in analogously to how block-cut trees exposed the 2-blocks and how they are connected via cutvertices). They were first formally defined by Tamassia and Di Battista~\cite{battista1990on-line}, but were informally known before~\cite{lane1937structural,hopcroft1973dividing,bienstock1988complexity}.
They can be constructed in linear time~\cite{hopcroft1973dividing,gutwenger2001linear}, and if unrooted they are unique in undirected graphs~\cite{battista1990on-line}. SPQR trees proved to be a valuable tool in the design of algorithms for different problems~\cite{rotenberg,di1996line,onlinegraphalgorithms}.

To define SPQR trees we need some basic definitions. Let $H$ be an undirected biconnected graph with at least two edges. A \emph{split pair} of $H$ is a separation pair or an edge of $H$. A \emph{split component} of a split pair $\{u,v\}$ is an edge $\{u,v\}$ or a maximal subgraph $C$ of $H$ such that $\{u, v\}$ is not a split pair of $C$. Let $\{s,t\}$ be a split pair of $H$. A maximal split pair $\{u, v\}$ of $H$ with respect to $\{s, t\}$ is such that for any other split pair $\{u', v'\}$ vertices $u, v, s$, and $t$ are in the same split component.

\textbf{SPQR tree construction.} We now give a recursive definition of SPQR trees based on~\cite{gutwenger2005inserting} (see~\cite{rotenberg} for a different but equivalent definition).
Let $e=\{s,t\}$ be a designated edge, called the \emph{reference edge}. The SPQR tree $T$ of $H$ with respect to $e$ is defined as a rooted tree with nodes of four types: S~(series), P~(parallel), Q~(single edge), and R~(rigid). Each node $\mu$ in $T$ has an associated biconnected graph \emph{$\skel(\mu)$}, called the \emph{skeleton} of $\mu$ with $V(\skel(\mu)) \subseteq V(H)$. Tree $T$ is one of the following cases:

\begin{description}[nosep]
    \item[Trivial case:] If $H$ consists of exactly two parallel edges between $s$ and $t$, then $T$ consists of a single Q-node whose skeleton is $H$ itself.
    \item[Parallel case:] If the split pair $e = \{s,t\}$ has $k+1$ split components $H_0,\dots,H_k$ with $k\geq 2$ where $H_0$ is the split component containing $e$, the root of $T$ is a P-node $\mu$ whose skeleton consists of $k+1$ parallel edges $e_0,\dots,e_k$ between $s$ and $t$, where $e_0 = e$.
    \item[Series case:] Otherwise, if the split pair $\{s,t\}$ has exactly two split components, where one of them is $e$ and the other is denoted as $H'$. If $H'$ is a chain of 2-blocks $H_1,\dots,H_k$ separated by cutvertices $c_1,\dots,c_{k-1}$ $(k \geq 2)$ in this order from $s$ to $t$, then the root of $T$ is an S-node $\mu$ whose skeleton is a cycle $e_0, e_1,..., e_k$ where $e_0$ = $e$, $c_0=s$, $c_k=t$, and $e_i = (c_{i-1}, c_i)$ $(i=1,\dots,k)$.
    \item[Rigid case:] If none of the above cases applies, let $\{s_1,t_1\},\dots,\{s_k,t_k\}$ be the maximal split pairs of $H$ with respect to $\{s,t\}$ $(k\geq1)$, and, for $i=1,\dots,k$ let $H_i$ be the union of all the split components of $\{s_i,t_i\}$ except the one containing $e$. The root of $T$ is an R-node $\mu$ whose skeleton is obtained from $H$ by replacing each subgraph $H_i$ with the edge $e_i = \{s_i, t_i\}$.
\end{description}
Except for the trivial case, $\mu$ has children $\mu_1,\dots,\mu_k$, such that $\mu_i$ is the root of the SPQR tree of $H_i \cup e_i$ with respect to $e_i$ for $i=1,\dots,k$; notice how the reference edge $e_i$ in $\skel(\mu_i)$ ensures that $\skel(\mu_i)$ is biconnected. Once the construction/recursion is finished, we add a Q-node with vertex set $\{s,t\}$ representing the first reference edge $e=\{s,t\}$ and make it a child of the root of $T$.
Node $\mu_i$ is associated with edge $e_i$ of the skeleton of its parent $\mu$, called the \emph{virtual edge} of $\mu_i$ in $\skel(\mu)$ $(i=1,\dots,k)$. Conversely, $\mu$ is implicitly associated with the reference edge $e_i$ in $\skel(\mu_i)$.
Notice that reference and virtual edges encode the same information: two subgraphs of $H$ and how they attach to each other. Indeed, a reference edge $e$ in a node $\mu$ is just another virtual edge with the additional property of pointing to the parent of $\mu$ in $T$.
We say that $\mu$ is the \emph{pertinent node} of $e_i \in E(\skel(\mu_i))$ (or that $e_i \in E(\skel(\mu_i))$ pertains to $\mu$), and that $\mu_i$ is the pertinent node of $e_i \in E(\skel(\mu))$ (or that $e_i \in E(\skel(\mu))$ pertains to $\mu_i$).

\textbf{Additional definitions.}
For simplicity, we will omit Q-nodes from the SPQR tree.
This amounts to replacing every virtual edge pertaining to a Q-node by a \emph{real edge}, and then deleting every Q-node from the tree. The edges of a skeleton are then either real or virtual.

Suppose now that $\nu$ is the parent of $\mu$ in $T$. Let $e_\nu \in \skel(\mu)$ be the edge pertaining to $\nu$ and let $e_\mu \in \skel(\nu)$ be the edge pertaining to $\mu$. Let $\{s,t\}$ be the endpoints of $e_\nu$ and $e_\mu$. Deleting the edge $\{\nu,\mu\}$ from $T$ disconnects $T$ into two subtrees, $T_\nu$ containing $\nu$ and $T_\mu$ containing $\mu$. The \emph{expansion graph} of $e_\nu$, denoted as $\expansion(e_\nu)$, is the subgraph induced in $H$ by the real edges contained in the skeletons of the nodes in $T_\nu$. The graph $\expansion(e_\mu)$ is defined analogously with respect to $T_\mu$. If $e=\{u,v\}$ is a real edge in $skeleton(\nu)$ then $\expansion(e)$ is a graph with a single edge $\{u,v\}$.
Notice that each edge of $T$ encodes a separation. More specifically, $(V(\expansion(e_\nu)),V(\expansion(e_\mu)))$ is a separation, $\expansion(e_\nu)\cup \expansion(e_{\mu})=H$, $E(\expansion(e_\nu))\cap E(\expansion(e_{\mu}))=\emptyset$, and $V(\expansion(e_\nu)) \cap V(\expansion(e_\mu)) = V(\skel(\nu) \cap V(\skel(\mu)) = \{s,t\}$.
Notice also that for every node $\mu$ of $T$ whose skeleton has edges $e_1,\dots,e_k$, the graph $\bigcup_{i=1}^k \expansion(e_i)$ is just $H$. In SPQR trees, no two S-nodes and no two P-nodes are adjacent~\cite{di1996line}.

For simplicity, we allow building the SPQR tree of bidirected/directed graphs (connectivity is seen from their undirected counterparts). Moreover, we assume that real edges encode their relevant properties in the bidirected/directed graph, which also applies to the $\expansion$ operator and to the blocks of $U(G)$. Furthermore, in this paper we only build SPQR trees of 2-connected graphs.




The next statements are well known results about SPQR trees. \Cref{lem:spqr-total-size} below is given in a context where Q-nodes are part of the tree. Clearly, by removing Q-nodes the bounds remain valid.

\begin{prop}[SPQR trees and separation/split pairs]\label{prop:spqr-tree-contains-split-pairs}
    Let $H$ be an undirected 2-connected graph and let $T$ be its SPQR trees with Q-nodes omitted. For each S-node $\mu$ of $T$, let $X_\mu$ denote the set of all pairs of nonadjacent vertices in $\skel(\mu)$. Then the union of the virtual edges over the skeletons of the nodes of $T$ together with the union of all the $X_\mu$ is exactly the set of separation pairs of $H$. If Q-nodes are included in the tree, then the resulting set is exactly the set of split pairs of $H$.
\end{prop}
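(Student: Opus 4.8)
The plan is to prove both inclusions: every split pair of $H$ is captured by the claimed union (virtual edges, the $X_\mu$ sets, and Q-node edges), and conversely every element of that union is genuinely a split pair. I would work directly from the recursive SPQR-tree construction given in the excerpt, exploiting the correspondence between tree edges and separations already established (namely that each tree edge $\{\nu,\mu\}$ yields a separation with $V(\expansion(e_\nu)) \cap V(\expansion(e_\mu)) = \{s,t\}$).

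First I would establish the ``easy'' direction, that everything in the union is a split pair. A real/Q-node edge $\{u,v\}$ is an edge of $H$, hence a split pair by definition. For a virtual edge $e_i = \{s_i, t_i\}$ sitting in some skeleton, the associated tree edge induces a separation whose shared vertex set is exactly $\{s_i,t_i\}$; since both sides of the separation are nonempty (each side contains real edges, as skeletons are genuine subgraphs of $H$), removing $\{s_i,t_i\}$ disconnects $H$, so it is a separation pair and thus a split pair. For a nonadjacent pair $\{u,v\} \in X_\mu$ in an S-node, the skeleton of $\mu$ is a cycle $c_0, c_1, \dots, c_k$; two nonadjacent vertices $c_i, c_j$ on this cycle separate it into two arcs, and because the expansion of each skeleton edge attaches to $H$ only at its two endpoints, the corresponding subgraphs of $H$ are separated by $\{c_i,c_j\}$. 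I would argue each such arc carries at least one real edge somewhere in its subtree (so both sides are nonempty), giving a genuine separation pair.

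For the converse — every split pair lies in the union — the natural route is induction on the recursive construction. Given a split pair $\{s',t'\}$ of $H$, I would trace where $s'$ and $t'$ live relative to the top-level split at the reference edge $\{s,t\}$. If the construction is in the parallel or series or rigid case, $\{s',t'\}$ either is realized at the root skeleton (as a virtual edge, a real edge, or, in the S-node, as a nonadjacent cycle pair), or both $s'$ and $t'$ are confined within a single child subgraph $H_i \cup e_i$, in which case the inductive hypothesis applied to that child's SPQR tree places $\{s',t'\}$ in the union for that subtree. The main care is to confirm that a split pair cannot ``straddle'' the top-level decomposition in a way that escapes the root skeleton: this is precisely the content of maximality in the definition of maximal split pairs used for the R-node case, and of the chain-of-blocks structure in the S-node case.

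I expect the main obstacle to be the bookkeeping in the rigid (R-node) case and the precise handling of separation pairs whose two vertices are the split endpoints $s,t$ themselves but which are ``used differently'' across children. Specifically, I must verify that a separation pair of $H$ that is not an edge and is not realized as a virtual edge of any skeleton must appear as a nonadjacent pair in exactly an S-node — this is where the asymmetry between S-nodes (which contribute $X_\mu$) and P/R-nodes (whose nonadjacent pairs do \emph{not} all yield split pairs) must be argued carefully. In a P-node the skeleton is a bundle of parallel edges so there are no nonadjacent pairs to worry about, and in an R-node the skeleton is $3$-connected, so by definition it has no separation pair, meaning every separation pair of $H$ living ``at'' that node must be carried by one of its virtual edges rather than by an internal nonadjacent pair; I would invoke triconnectivity of R-node skeletons to rule out spurious pairs. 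Assembling these node-type analyses into a clean induction, while keeping the nonemptiness-of-both-sides condition satisfied throughout, is the technically delicate part.
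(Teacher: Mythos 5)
First, an important point of comparison: the paper never proves this proposition at all. It is introduced with ``The next statements are well known results about SPQR trees'' and deferred to the classical literature (Hopcroft--Tarjan, Di Battista--Tamassia, Gutwenger--Mutzel), so your proposal is not competing with a proof in the paper; it is reconstructing the classical argument. Your overall architecture is indeed the standard route: forward direction via the separations encoded by tree edges and by arcs of S-node cycles, converse by induction on the recursive construction with a P/S/R case analysis in which triconnectivity of R-skeletons rules out nonadjacent pairs.

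That said, two steps of the sketch would not survive being written out. (a) In the forward direction, ``both sides of the separation contain real edges'' is not the right invariant: to get a separation pair under the paper's definition (removal of the two vertices increases the number of components) you need, on each side, a vertex \emph{other than} $s_i,t_i$, and these vertices must end up disconnected. With Q-nodes omitted and $H$ simple this holds (every non-Q child's expansion has an internal vertex), but the paper explicitly allows parallel edges in $U(G)$, and then the claim fails under the stated definition: for a triangle $a,b,c$ with a doubled edge $\{a,b\}$, the SPQR tree consists of a P-node and an S-node sharing the virtual edge $\{a,b\}$, yet deleting $\{a,b\}$ from $H$ leaves the single vertex $c$, so $\{a,b\}$ is not a separation pair in the vertex-removal sense. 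A correct proof must either assume simplicity or work with the separation-class (Hopcroft--Tarjan) notion. (b) In the converse, the claim that a split pair cannot straddle the decomposition is the real content of the theorem, and ``maximality of the split pairs in the R-case'' does not by itself deliver it. What is needed is a lifting/projection argument: if $x$ lies strictly inside $\expansion(e_1)$ and $y$ strictly inside $\expansion(e_2)$ for distinct virtual edges $e_1,e_2$ of a skeleton, then since each $H_i \cup e_i$ is biconnected, every vertex of $\expansion(e_1)-x$ still reaches a pole of $e_1$ inside $\expansion(e_1)-x$ (and symmetrically for $e_2$), so $H-\{x,y\}$ remains connected and $\{x,y\}$ is no separation pair; dually, a separation pair whose two vertices lie in a common R-skeleton would project to a separation pair of that 3-connected skeleton (or, in the other direction, three internally disjoint skeleton paths lift to three internally disjoint paths in $H$). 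Until these lemmas are stated and proved, the induction step is an outline rather than a proof.
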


\begin{lemma}[SPQR trees require linear space~\cite{onlinegraphalgorithms}]
\label{lem:spqr-total-size}
    Let $H=(V,E)$ be an undirected biconnected graph. The SPQR tree $T$ of $H$ has $O(|V(H)|)$ nodes and the total number of edges in the skeletons is $O(|E(H)|)$.
\end{lemma}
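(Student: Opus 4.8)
The plan is to reduce both bounds to a single quantity---the number of nodes of $T$---and then to bound that number by $O(|V(H)|)$ through a type-by-type charging argument. First I would set up the bookkeeping for skeleton edges. Every skeleton edge is either \emph{real} or \emph{virtual}. A real edge is an edge of $H$ that was never replaced during the recursive construction, so the real edges are in bijection with $E(H)$ and number exactly $|E(H)|$. A virtual edge is always one of the two mutually associated edges $e_\nu\in\skel(\mu)$ and $e_\mu\in\skel(\nu)$ created by a tree edge $\{\mu,\nu\}$; hence each tree edge contributes exactly two virtual edges, and a tree on $|N|$ nodes has $|N|-1$ edges. Therefore the total number of skeleton edges equals $|E(H)| + 2(|N|-1)$, where $N$ is the node set of $T$. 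It thus suffices to prove $|N| = O(|V(H)|)$: the bound on skeleton edges then follows, since $H$ biconnected forces every vertex to have degree at least $2$ and hence $|E(H)|\ge|V(H)|$.

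Next I would bound the S- and R-nodes by counting \emph{skeleton vertices}. The structural lower bounds are that an S-node skeleton is a cycle on at least $3$ vertices, an R-node skeleton is triconnected and hence has at least $4$ vertices, while a P-node skeleton has only $2$ vertices (though at least $3$ parallel edges). Writing $p,s,r$ for the numbers of P-, S-, and R-nodes, so $|N|=p+s+r$, these give $\sum_\mu |V(\skel(\mu))| \ge 2p+3s+4r$. For an exact expression on the left I would use the standard fact that, for a fixed vertex $v$, the nodes whose skeletons contain $v$ induce a connected subtree $T_v$ of $T$, and that every tree edge $\{\mu,\nu\}$ belongs to exactly the two subtrees $T_s,T_t$ associated with the endpoints $\{s,t\}$ of its virtual edge. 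Summing over $v$ gives $\sum_v(|T_v|-1)=2(|N|-1)$, and since $\sum_v|T_v|=\sum_\mu|V(\skel(\mu))|$ this rearranges to $\sum_\mu|V(\skel(\mu))| = |V(H)| + 2(|N|-1)$. Combining with the lower bound, the $2|N|$ terms cancel and leave $s+2r \le |V(H)|-2$, so the S- and R-nodes are already $O(|V(H)|)$.

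The main obstacle is the P-nodes, each contributing only two skeleton vertices and hence invisible to the count above. Here I would invoke the fact, stated in the text, that no two S-nodes and no two P-nodes are adjacent in $T$; in particular the P-nodes form an independent set, so no tree edge is incident to two P-nodes. The crucial extra input is that every P-node has tree-degree at least $2$: its skeleton has at least three parallel edges, but contributes at most one \emph{real} parallel edge, so at least two of its edges are virtual, i.e.\ $\deg_T(\mu)\ge 2$. Double counting the tree edges incident to P-nodes then gives $2p \le \sum_{\mu\in P}\deg_T(\mu) \le |N|-1 = p+s+r-1$, hence $p \le s+r-1$. Together with $s+2r\le|V(H)|-2$ this yields $|N| = p+s+r = O(|V(H)|)$, and feeding this back into the identity of the first paragraph gives the $O(|E(H)|)$ bound on total skeleton edges.

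I expect the P-node step to be the delicate one, precisely because it is the only place where edge multiplicity matters: the assertion that a P-node forces at least two virtual edges rests on $H$ contributing at most one real edge to each parallel class, which is where the argument (and indeed the sharp $O(|V(H)|)$-node bound) is sensitive to $H$ being simple, so for a multigraph one must first merge each parallel class into its single defining P-node. By contrast, the consequence needed downstream---total skeleton size $O(|E(H)|)$---is unconditional: since every skeleton has at least three edges, $3|N| \le \sum_\mu |\mathrm{edges}(\mu)| = |E(H)| + 2(|N|-1)$, giving $|N|\le|E(H)|$ and thus total skeleton size $\le 3|E(H)|$ directly.
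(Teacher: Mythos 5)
Your proposal is correct, but note that the paper itself contains no proof of this lemma: it is imported verbatim from~\cite{onlinegraphalgorithms}, with only the remark that dropping Q-nodes preserves the bounds. So the comparison here is between your self-contained argument and a citation. Your three counting ingredients are all sound and consistent with the paper's conventions: (i) with Q-nodes omitted, each edge of $H$ is a real edge in exactly one skeleton and each tree edge contributes exactly two virtual edges, so the total number of skeleton edges is $|E(H)| + 2(|N|-1)$; (ii) the nodes whose skeletons contain a fixed vertex $v$ form a subtree $T_v$ (this coherence property follows from the fact, stated in the preliminaries, that $V(\skel(\mu)) \cap V(\skel(\nu))$ equals the pole pair of the tree edge $\{\mu,\nu\}$), and each tree edge lies in exactly two such subtrees, giving $\sum_{\mu} |V(\skel(\mu))| = |V(H)| + 2(|N|-1)$ and hence $s + 2r \le |V(H)| - 2$; (iii) P-nodes are pairwise non-adjacent, and for simple $H$ each has at most one real and therefore at least two virtual edges, giving $p \le s + r - 1$. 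Together these yield $|N| = O(|V(H)|)$ and, via (i) and $|E(H)| \ge |V(H)|$ for biconnected graphs, total skeleton size $O(|E(H)|)$. What your route buys is verifiability and an explicit localization of where simplicity of $H$ is used; what the paper's route buys is brevity, deferring to the original source.

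Your multigraph caveat is not pedantry but exactly the right flag, and it is worth making it concrete: the $O(|V(H)|)$ node bound genuinely fails for multigraphs. Take $K_n$ with every edge doubled; then every vertex pair is a split pair with three split components, so the tree has $\Theta(|V|^2)$ P-nodes. This is relevant to this paper because SPQR trees are built on blocks of $U(G)$, which are allowed to keep parallel edges (for instance, antiparallel arcs $uv$ and $vu$ in the superbubble setting become parallel undirected edges). For such blocks only your unconditional fallback survives --- every skeleton has at least three edges, hence $3|N| \le |E(H)| + 2(|N|-1)$, so $|N| \le |E(H)|$ and the total skeleton size is at most $3|E(H)|$ --- but that $O(|V(H)|+|E(H)|)$ form is all the paper's running-time analyses actually need.
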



\section{The snarl algorithm}
\label{sec:snarls-main-text}

Originally, snarls have been defined on a \emph{biedged} graph~\cite{paten2018ultrabubbles}. In this paper we use an equivalent definition of snarls in bidirected graphs which exposes more conveniently the property that the endpoint vertices of the snarl may form a separation pair. We prove this equivalence in \Cref{sec:biedged-snarls}.

In the rest of this section we will assume, without loss of generality, that $G=(V,E)$ is a \emph{connected} bidirected graph. To give our equivalent snarl characterization we need to introduce some more terminology. The \emph{splitting} operation takes an incidence $xd_x$ and adds a new vertex $x'$ to $G$.
Then it changes the incidences $x\hat{d_x}$ in all edges of $G$ into incidences $x'\hat{d_x}$.
As a result, all edges incident with sign $d_x$ to $x$ will be incident to $x'$ instead.
We assume that there are no parallel edges as they have no effect on snarls (two edges $\{xd_x,yd_y\}$ and $\{zd_z,wd_w\}$ are parallel if $x=z$, $d_x=d_z$, $y=w$, $d_y=d_w$).
We can now define (\Cref{def:snarl}) snarls and snarl components (see also \Cref{fig:SPQR tree}). Snarl components exist in the graph resulting from splitting both incidences but contain only vertices and edges of the original graph.

\begin{restatable}[Snarl, Snarl component]{definition}{snarlandcomponent}
    \label{def:snarl}
    A pair of incidences $\{xd_x, yd_y\}$ with $x \neq y$ is a \emph{snarl} if
    \begin{enumerate}[nosep]
        \item[(a)] \emph{separable:}
        the graph created by splitting the incidences $xd_x$ and $yd_y$ contains a separate component $X$ containing $x$ and $y$ but not the vertices $x'$ and $y'$ created by the split operation.
        We call $X$ the \emph{snarl component} of $\{xd_x, yd_y\}$.
        
        \item[(b)] \emph{minimal:}
        no incidence $zd_z$ with vertex $z \in X$ different from $x$ and $y$ exists such that $\{xd_x, zd_z\}$ and $\{z\hat{d}_z, yd_y\}$ are separable.
    \end{enumerate}
\end{restatable}

\begin{figure}[t]
   \centering
   \includegraphics{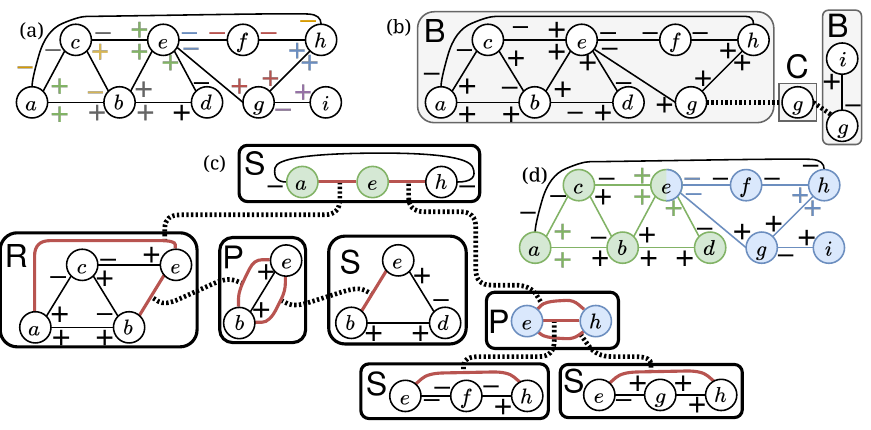}
   \caption{\textbf{Snarls and 2-separators in the SPQR tree.}
   (a) A bidirected graph.
   The snarls are $\{a+, e+\}$ (green), $\{c+, b-\}$ (yellow), $\{c-, b+\}$ (gray), $\{e-, h+\}$ (blue), $\{a-, h-\}$ (orange), $\{f-, g+\}$ (red), $\{g-, i+\}$ (purple).
   Snarl $\{c+, b-\}$ is contained in snarl $\{a+, e+\}$.
   (b) The corresponding block-cut~tree.
   Blocks are marked with a B, and cutnodes with a C.
   There is one cutnode $g$, which separates the two blocks.
   (c) The SPQR~tree of the largest block.
   Its nodes are marked with their types and contain their skeletons.
   Red edges are virtual, and dashed black edges connect pairs of virtual edges.
   The snarls $\{a+, e+\}$ (green), $\{e-, h+\}$ (blue) are highlighted.
   Snarl $\{g-, i+\}$ is not part of the block. Snarl $\{f-, g+\}$ is a special case that spans the whole block and is handled when finding snarls in the sign-cut tree (see \Cref{thm:where-are-snarls-after-cutting} and \Cref{prop:snarls-both-tips-component}).
   Snarls $\{c+, b-\}$, $\{c-, b+\}$ and $\{a-, h-\}$ are single-edge snarls that are handled separately (see \Cref{prop:snarl-edge-case}).
   (d) The snarl components of $\{a+, e+\}$ (green) and $\{e-, h+\}$ (blue).
   Vertex $e$ is in both snarl components.}
   \label{fig:SPQR tree}
\end{figure}




\subsection{Properties of snarls and connectivity}

Let $G$ be a bidirected graph. Let us call a vertex $x$ \emph{sign-consistent} if $x$ is a cutvertex of $G$ such that all incidences in each component of $G-v$ have the same sign. Notice that splitting a sign-consistent vertex $u$ paired with any incidence in $\{+,-\}$ creates two components, one containing the positive incidences at $u$ in $G$ and the other containing the negative incidences.
We can now define sign-cut graphs.

\begin{definition}[Sign-cut graphs]
    Let $G$ be a bidirected graph and let $Y \subseteq V(G)$ be the set of sign-consistent vertices of $G$. The \emph{sign-cut graphs} of $G$ are the graphs resulting from splitting each cutvertex $y\in Y$ together with any sign in $\{+,-\}$, and relabeling each new vertex $y'$ as $y$.
\end{definition}

It is easily seen that for incidences $ud_u$ and $vd_v$ with $u \neq v$, splitting $ud_u$ and then $vd_v$ yields the same graph as splitting $vd_v$ and then $ud_u$. Hence, any ordering of splits during the construction of sign-cut graphs yields the same set of components, and thus sign-cut graphs are well defined.

Every sign-consistent vertex of $G$ becomes a tip in both sign-cut graphs it appears in (one with its positive incidences and the other with its negative incidences), and moreover a vertex of $G$ is contained in two sign-cut graphs if and only if it is sign-consistent. One way to see sign-cut graphs is through the block-cut tree of $G$ (see~\Cref{fig:bc-tree}). Moreover, sign-cut graphs are easily built in linear time from block-cut trees by examining the incidences of $G$ at its cutvertices. Clearly, sign-cut graphs partition the incidences of $G$ and the set of blocks of $G$ coincide with the union of the blocks over its sign-cut graphs.

The next result tells us that the incidences of snarls are confined to sign-cut graphs, so for any snarl $\{ud_u,vd_v\}$ there is exactly one sign-cut graph containing $ud_u$ and $vd_v$.

\begin{restatable}{lemma}{snarlsinsidesigncuttrees}
\label{lem:snarls-inside-sign-cut-tree}
    Let $G$ be a bidirected graph, let $F_1$ and $F_2$ be distinct sign-cut graphs of $G$, and let $u,v$ be vertices where $u \in V(F_1)$ and $v \in V(F_2)$.
    If $F_1$ has an incidence in $u$ with sign $d_u$ and $F_2$ has an incidence in $v$ with sign $d_v$ then $\{u d_u,v d_v\}$ is not a snarl of $G$.
\end{restatable}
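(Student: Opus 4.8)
The plan is to show that $\{u d_u, v d_v\}$ violates one of the two conditions of \Cref{def:snarl}. If the pair is not \emph{separable}, there is nothing to prove, so I would assume it is separable and let $X$ be its snarl component: $X$ contains $u$ and $v$ but not $u'$ and $v'$, which means that inside $X$ the vertex $u$ keeps only its $d_u$-incidences and $v$ only its $d_v$-incidences, while every other vertex of $X$ keeps all of its incidences. Equivalently, every edge of $G$ with exactly one endpoint in $V(X)$ is incident to $u$ with sign $\hat d_u$ or to $v$ with sign $\hat d_v$. The goal is then to produce a witness contradicting \emph{minimality}.

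The witness will be a sign-consistent cutvertex $w$ lying between $u$ and $v$. To locate it I would use the tree obtained from the block-cut tree of $G$ by contracting the non-sign-consistent cutvertices, whose super-block nodes are exactly the sign-cut graphs and whose cut-nodes are the sign-consistent cutvertices. The one elementary fact I need is that the two sign-cut graphs sharing a sign-consistent cutvertex carry opposite signs there, which is immediate from the splitting operation that defines them. Taking $w$ to be the first sign-consistent cutvertex on the unique path from $F_1$ to $F_2$, I obtain a vertex that separates $u$ from $v$ in $G$ and whose incidences on the $u$-side all have one sign $d_w$ while those on the $v$-side have the opposite sign $\hat d_w$ (the hypothesis $u d_u \in F_1$ pins down the side and sign of $u$, and $v$ lies strictly beyond $w$).

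Before using $w$ I would dispose of the degenerate possibility $w \in \{u,v\}$: if $w = u$ then $u$ is itself sign-consistent with $v$ on its $\hat d_u$-side, so splitting $u d_u$ alone already separates $u$ from $v$ and the pair is not separable (symmetrically for $w=v$); hence under the separability assumption $w \neq u,v$. In this main case, since $w$ separates $u$ from $v$ in $G$ and splitting $u d_u$ and $v d_v$ can only destroy connections, $w$ still separates $u$ from $v$ after the split; therefore any $u$--$v$ path inside $X$ must run through $w$, so $w \in V(X)$ and $w$ is an admissible choice of $z$ for condition (b).

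Finally I would verify that $z = w$ with sign $d_w$ witnesses non-minimality, i.e.\ that both $\{u d_u, w d_w\}$ and $\{w \hat d_w, v d_v\}$ are separable. The candidate components are the two pieces into which $w$ cuts $X$: namely $X_1 = X \cap (A \cup \{w\})$ and $X_2 = X \cap (\bar A \cup \{w\})$, where $A$ and $\bar A$ are the $d_w$- and $\hat d_w$-sides of the cutvertex $w$ in $G$. The boundary check is routine: an edge leaving $X_1$ either leaves $X$, and then (since the boundary edges of $X$ at $v$ have no endpoint in $X_1$) sits at $u$ with sign $\hat d_u$, or it goes from $A$ to $\bar A$ and hence passes through $w$ with sign $\hat d_w$; symmetrically for $X_2$. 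The step I expect to be the main obstacle, and would write most carefully, is the connectivity of $X_1$ and $X_2$: because $w$ is a cutvertex of $G$, a simple path of $X$ joining two vertices of $A$ cannot leave $A$ without crossing $w$ twice, so it stays inside $A \cup \{w\}$, giving connectivity of $X_1$ (and likewise of $X_2$). Together with the boundary description, this shows that $X_1$ and $X_2$ are precisely the snarl components after the two splits, so condition (b) fails and $\{u d_u, v d_v\}$ is not a snarl.
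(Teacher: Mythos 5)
Your proposal is correct and takes essentially the same route as the paper: the paper's proof also reduces to a sign-consistent cutvertex $x$ between the two sign-cut graphs (asserted from the block-cut-tree structure, exactly as you do), disposes of the degenerate case $x\in\{u,v\}$ by a direct failure of separability, and in the main case violates minimality via the pairs $\{ud_u,xd_x\}$ and $\{x\hat{d}_x,vd_v\}$, which is precisely the content of its \Cref{prop:snarls-good-cutvertex}. The only difference is executional---you establish separability of the two sub-pairs by explicitly exhibiting the components $X_1=X\cap(A\cup\{w\})$ and $X_2=X\cap(\bar{A}\cup\{w\})$ and checking their boundaries and connectivity, whereas the paper argues via the inclusion $V(X^+)\subseteq V(X)$---plus one line you should add dismissing $u=v$ (not a snarl by \Cref{def:snarl}), which the paper states explicitly.
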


In fact, we can show an equivalence between snarls of $G$ and snarls of the sign-cut graphs of $G$.

\begin{restatable}{lemma}{snarlsGF}
\label{lem:snarls-G-F}
    Let $G$ be a bidirected graph, let $\{ud_u,vd_v\}$ be a pair of incidences. Then $\{ud_u,vd_v\}$ is a snarl of $G$ if and only if there is a sign-cut graph $F$ of $G$ such that $\{ud_u,vd_v\}$ is a snarl of $F$.
\end{restatable}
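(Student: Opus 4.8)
The plan is to deduce the statement from the confinement result \Cref{lem:snarls-inside-sign-cut-tree} together with a reduction to a \emph{single} sign-consistent split, handled one cutvertex at a time. First I would invoke \Cref{lem:snarls-inside-sign-cut-tree}: if $\{ud_u,vd_v\}$ is a snarl of $G$ then its two incidences cannot lie in distinct sign-cut graphs, so there is exactly one sign-cut graph $F$ carrying both of them; conversely a snarl of $F$ has both incidences in $F$. Since splits commute (as remarked before \Cref{def:snarl}), building the sign-cut graphs amounts to splitting the sign-consistent cutvertices in any order, so it suffices to prove the following single-split statement and then induct. Let $y$ be a sign-consistent cutvertex of $G$, write $G=G_1\cup G_2$ with $V(G_1)\cap V(G_2)=\{y\}$, where $G_1$ carries the incidences at $y$ of one sign and $G_2$ those of the other, and suppose both $ud_u$ and $vd_v$ lie in $G_1$. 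I claim $\{ud_u,vd_v\}$ is a snarl of $G$ if and only if it is a snarl of $G_1$; splitting $y$ (and relabelling) and then recursing on the component $G_1$ then yields the full equivalence.

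For separability I would argue that $G_2$ is attached to the rest only through the single vertex $y$ and contains neither of the split-copies $u',v'$ (these live in $G_1$). Consequently, attaching or deleting $G_2$ can neither create nor destroy connectivity between $u$ and $v$ after splitting $ud_u,vd_v$, nor can it pull $u'$ or $v'$ into their component. Hence $\{ud_u,vd_v\}$ is separable in $G$ iff it is separable in $G_1$, and the snarl component in $G$ is either $X_1$ or $X_1\cup G_2$ (the latter exactly when $y\in X_1$, where $X_1$ is the component in $G_1$). The same one-vertex-attachment argument shows that, for \emph{any} pair of incidences both living in $G_1$, separability in $G$ and in $G_1$ coincide; I will reuse this in the minimality analysis.

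The heart of the proof is the equivalence of the minimality condition, i.e.\ that an intermediate split vertex $z$ exists in $G$ iff one exists in $G_1$. I would distinguish three cases according to where $z$ sits. If $z\in V(G_1)\setminus\{y\}$, then all incidences of the pairs $\{ud_u,zd_z\}$ and $\{z\hat d_z,vd_v\}$ lie in $G_1$, so the separability equivalence above transfers the certificate in either direction. If $z=y$, then $y$ is a \emph{tip} of $G_1$ (all its $G_1$-incidences share one sign), so for either choice of sign one of the two required pairs splits $y$ at the sign carrying no $G_1$-incidence, which isolates $y$ and makes that pair non-separable; thus $y$ is never an intermediate split. (The degenerate case in which an endpoint itself equals $y$ is even easier, since there the snarl-split of that endpoint coincides with the sign-cut split and places all of $G_2$ on the excluded side.)

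The main obstacle is the remaining case $z\in V(G_2)\setminus\{y\}$, which I expect to require the only genuinely new idea, and which I would rule out entirely. Here $u,v\neq y$, so splitting the endpoints does not touch $G_2$. Suppose such a $z$ were an intermediate split, with separating components $W_1\ni u,z$ and $W_2\ni v,z$; since $z$ reaches the rest of $G$ only through the cutvertex $y$, both $W_1$ and $W_2$ contain $y$. One can then read both requirements off the \emph{fixed} graph $G_2^{\mathrm{split}(z)}$ obtained by splitting $z$ into its two sign-copies: separability of $\{ud_u,zd_z\}$ forces the $d_z$-copy of $z$ to reach $y$ while the $\hat d_z$-copy does not (otherwise the excluded copy would rejoin $W_1$ through $y$), whereas separability of $\{z\hat d_z,vd_v\}$ forces exactly the opposite, the $\hat d_z$-copy to reach $y$ and the $d_z$-copy not to. But splitting $z$ produces the same graph in both situations—only the labels of the two copies are exchanged—so ``the vertex holding the $d_z$-incidences reaches $y$'' has a single truth value, and the two conclusions are contradictory. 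Hence no $z\in V(G_2)\setminus\{y\}$ can be intermediate. With all three cases settled, minimality transfers both ways, the single-split equivalence follows, and the inductive reduction over all sign-consistent cutvertices gives the lemma.
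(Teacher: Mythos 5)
Your proof is correct, but it takes a genuinely different route from the paper's. The paper proves the separability equivalence between $G$ and the sign-cut graph $F$ in one shot: an offending $u$-$u'$ path in the split graph either stays inside $F$ (contradicting separability in $F$) or leaves $F$ and would have to re-enter it through the same sign-consistent cutvertex, repeating a vertex; minimality is then transferred in both directions by applying this separability equivalence to the two pairs $\{ud_u,xd_x\}$ and $\{x\hat{d}_x,vd_v\}$. You instead peel off one sign-consistent cutvertex $y$ at a time, prove the single-split equivalence ``snarl of $G$ iff snarl of $G_1$,'' and induct; crucially, your minimality step is a case analysis on where the intermediate vertex $z$ sits, not a reduction to separability. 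This buys something real: when $y$ lies inside the snarl component, the component in $G$ strictly contains $G_2$, so the minimality quantifier in $G$ ranges over vertices that are not in $F$ at all, and merely separable pairs (unlike snarls) genuinely can span two sides of the cut---a pair $\{ud_u,zd_z\}$ with $z\in V(G_2)\setminus\{y\}$ can be separable in $G$. For such $z$ the paper's step ``separable in $G$, hence separable in $F$'' does not literally apply, whereas your fixed-graph argument---both separability requirements are read off the same graph, namely $G_2$ with $z$ split into its two sign-copies, and they demand contradictory reachability of $y$ from those copies---rules these $z$ out directly; it is the same mechanism as \Cref{prop:snarls-good-cutvertex}, aimed at intermediate vertices instead of endpoints. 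So your proof is longer, but tighter exactly where the paper is terse.

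One point you should make explicit for the induction to be well-founded: the sign-consistent cutvertices of $G$ other than $y$ are exactly the sign-consistent cutvertices of $G_1$ and of $G_2$, so that the sign-cut graphs of $G$ are precisely those of $G_1$ together with those of $G_2$. This is routine---for $x\in V(G_1)\setminus\{y\}$ the components of $G_1-x$ are the traces of the components of $G-x$, and the incidences of $x$ are untouched by the split at $y$---and note that $y$ itself, being a tip of $G_1$, may formally qualify as a sign-consistent cutvertex of $G_1$, but splitting a tip is a no-op apart from creating an isolated vertex, so this does not disturb the recursion.
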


Ultimately, together with other results shown in \Cref{sec:missing-snarls}, sign-cut graphs allow us to pinpoint all potential snarls in the graph as hinted in the next theorem.

\begin{restatable}{theorem}{snarlsaftercutting}
\label{thm:where-are-snarls-after-cutting}
    Let $G$ be a bidirected graph and let $F$ be a sign-cut graph of $G$. The following hold for $G$.
    \begin{enumerate}[nosep]
        \item If $u$ is a non-tip and $x$ is a tip with sign $q\in\signs$ in $F$ then $\{ud_u,xq\}$ is not a snarl for any sign $d_u \in \signs$.
        \item If $u$ and $v$ are tips in $F$ with signs $d_u,d_v \in\signs$ then $\{ud_u,vd_v\}$ is a snarl.
        \item If $\{ud_u,vd_v\}$ is a snarl and $u$ and $v$ are non-tips in $F$ then there is a block of $F$ containing both $u$ and $v$, and no other block containing $u$ has incidences of opposite signs in $u$ and no other block containing $v$ has incidences of opposite signs in $v$.
    \end{enumerate}
\end{restatable}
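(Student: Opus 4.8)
The plan is to prove all three statements inside a single sign-cut graph $F$ and then transfer back to $G$. By \Cref{lem:snarls-G-F} and the fact that the incidences of $G$ are partitioned among its sign-cut graphs, a pair whose two incidences both lie in the same sign-cut graph $F$ is a snarl of $G$ if and only if it is a snarl of $F$, while by \Cref{lem:snarls-inside-sign-cut-tree} a pair with incidences in different sign-cut graphs is never a snarl; since ``$u$ a non-tip of $F$'' forces $ud_u\in F$ and ``$x$ a tip of sign $q$'' forces $xq\in F$, all three statements reduce to statements about snarls of the single graph $F$. The workhorse is a structural reading of a snarl component $X$ of a separable pair (\Cref{def:snarl}): every vertex of $X\setminus\{u,v\}$ has all its edges inside $X$, the $d_u$-edges at $u$ lead into $X$ while the $\hat d_u$-edges lead to $V(F)\setminus X$, and symmetrically at $v$. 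The boundary conditions are forced because the $\hat d_u$-edges move to the split copy $u'$, so an endpoint of such an edge inside $X$ would pull $u'$ into $X$. In particular each component of $F-u$ (resp.\ $F-v$) attaches through a single sign, so $\{u,v\}$ is a ``sign-clean'' separator.

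Statements 1 and 2 then follow from a dichotomy about tip endpoints. If $x$ is a tip of sign $q$, splitting $xq$ isolates $x'$, so $x$ imposes no boundary and the entire separation rests on the other endpoint; such a separation exists only if that endpoint is a cutvertex each of whose $F$-components attaches through one sign, i.e.\ a sign-consistent cutvertex of $F$. Since sign-cut graphs are obtained precisely by splitting away all sign-consistent cutvertices, the non-tip endpoint $u$ cannot be such a vertex (it would have been split, contradicting $u\in V(F)$ as a non-tip), which gives Statement 1. For Statement 2, two tips isolate both $x'$ and $y'$, so the whole connected graph $F$ is a valid snarl component and separability holds; and any intermediate vertex $z$ witnessing non-minimality, applied to $\{xq_x,zd_z\}$ and $\{z\hat d_z,yq_y\}$, is forced to be a sign-consistent (non-tip) cutvertex of $F$ — a tip $z$ makes one of the two companion pairs non-separable — which is impossible, so the pair is minimal and hence a snarl.

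For Statement 3 both endpoints are non-tips, so $u'$ and $v'$ carry real edges; after discarding the single-edge case $X=\{u,v\}$ (\Cref{prop:snarl-edge-case}), the pair $\{u,v\}$ is a genuine separation pair with nonempty interior $X^\circ=X\setminus\{u,v\}$ and nonempty exterior. The first claim is that $u$ and $v$ share a block $B$. A separation pair can fail to lie in a common block only when some cutvertex of $F$ mediates between $u$ and $v$ through $X$; I would rule this out using minimality together with the absence of sign-consistent cutvertices, the sole surviving configuration being a same-sign (tip) cutvertex threading the component, which is exactly the ``spanning'' snarl recovered via the both-tips statement and \Cref{prop:snarls-both-tips-component}. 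Granting $u,v\in B$, the second claim follows by contradiction: a further block $B'\ni u$ (so $B\cap B'=\{u\}$ and $v\notin B'$) with incidences of both signs at $u$ would contain a $d_u$-edge reaching $X^\circ$ and a $\hat d_u$-edge reaching the exterior, and then $2$-connectivity of $B'$ supplies a path in $B'-u$ joining these endpoints while avoiding $u$ and $v$, contradicting that $\{u,v\}$ separates interior from exterior; bridge blocks carry a single sign at $u$ and are vacuously fine.

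I expect the main obstacle to be twofold. The first is making the separability reading precise and matching the ``single active boundary vertex'' degeneracy exactly with sign-consistency, so that the defining property of sign-cut graphs can be invoked cleanly in Statements 1 and 2. The second, and more delicate, is delimiting the exceptional snarls in Statement 3: a cutvertex of $F$ lying inside the snarl component must be shown either to violate minimality or to yield a spanning snarl routed to \Cref{prop:snarls-both-tips-component}, and single-edge snarls routed to \Cref{prop:snarl-edge-case}; only once these are carved out does the clean ``separation pair lies in one block'' argument go through.
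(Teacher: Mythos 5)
Your reduction to a single sign-cut graph and your arguments for Statements~1 and~2 are essentially workable and close in spirit to the paper's proof (separability dies at the non-tip endpoint; tips kill the companion pairs in the minimality check, cf.\ \Cref{prop:tips-dont-violate-minimality}). Two caveats there. First, your pivotal step ``$u$ sign-consistent within $F$ $\Rightarrow$ it would have been split'' quietly identifies sign-consistency measured in $F$ with sign-consistency measured in $G$; splitting other vertices changes cutvertices and components, so this identification is a claim that needs proof (the paper channels exactly this fact through \Cref{prop:mixed-signs-inside-block}: every non-tip of a sign-cut graph lies in a block with incidences of both signs). Second, your ``workhorse'' assertion that each component of $F-u$ attaches through a single sign is false as stated: once only $u$ is removed, the inside and outside of $X$ are glued together through $v$ (which is a non-tip), so the component of $F-u$ containing $v$ generally attaches to $u$ with both signs; the correct statement is about components of $F-\{u,v\}$. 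Both issues are repairable.

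The genuine gap is the common-block claim in Statement~3. Your dichotomy for the mediating cutvertex $c$ --- sign-consistent (impossible) versus ``same-sign tip cutvertex'' (routed to \Cref{prop:snarls-both-tips-component}) --- is neither exhaustive nor effective. It omits the generic case where $c$ is a non-tip cutvertex with a mixed component, and even in the tip case the routing is vacuous: \Cref{prop:snarls-both-tips-component} concerns snarls whose \emph{endpoints} are both tips, whereas in Statement~3 both endpoints are non-tips, so no ``spanning snarl'' can arise. More fundamentally, minimality cannot deliver the contradiction you need: for a tip or mixed cutvertex $c$, the companion pairs $\{ud_u,cd_c\}$ and $\{c\hat{d}_c,vd_v\}$ are themselves non-separable (by exactly your Statement~1 mechanism), so the minimality of $\{ud_u,vd_v\}$ is \emph{not} violated, yet the configuration must still be excluded. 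The exclusion has to come from the separability of the original pair, argued at the endpoint rather than at $c$: since $u$ is a non-tip, some block $H'$ of $F$ carries incidences of both signs at $u$ (\Cref{prop:mixed-signs-inside-block}); if $v\notin H'$, then $2$-connectivity of $H'$ (or triviality for a multi-bridge) yields a path between a $d_u$-neighbour and a $\hat{d}_u$-neighbour of $u$ inside $H'$ avoiding both $u$ and $v$, so after splitting $ud_u$ and $vd_v$ the vertices $u$ and $u'$ remain connected, contradicting separability. Hence $v$ must lie in the mixed block at $u$, which is the desired common block; the no-dangling-blocks part then follows by the same path argument (\Cref{prop:mixed-dangling-blocks}). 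Your argument for the dangling-block half is fine, but it presupposes the common block, so this missing step is load-bearing; also note that ``discarding'' the single-edge case via \Cref{prop:snarl-edge-case} is not quite legitimate, since Statement~3 must hold for those snarls too (there the edge itself supplies the common block, and the dangling-block claim still needs the separability argument).
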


Item (3) of
\Cref{thm:where-are-snarls-after-cutting} highlights a basic property of snarls that we shall use routinely in what follows. Let $v$ be a vertex and let $H$ be a block containing $v$. If $H'$ is a block distinct from $H$ that contains $v$ and has incidences of opposite signs at $v$, then $H'$ is a \emph{dangling block} at $v$ with respect to $H$ (for example, non cutvertices have no dangling blocks). The next result is a key ingredient for our algorithm. Essentially, we show that any snarl implicitly encodes a split pair of $U(G)$.

\begin{restatable}[Snarls and split pairs]{theorem}{snarlssplitpairs}
\label{thm:snarls-split-pairs}
    Let $G$ be a bidirected graph and let $\{ud_u,vd_v\}$ be a snarl of a sign-cut graph of $G$ where $u$ is a non-tip. Then $\{u,v\}$ is a split pair of a block of $U(G)$.
\end{restatable}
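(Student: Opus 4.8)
The plan is to localize the separation implied by the snarl to a single block of $U(G)$. First I would use the hypotheses to pin down structure. Since $u$ is a non-tip and $\{ud_u,vd_v\}$ is a snarl, the contrapositive of item (1) of \Cref{thm:where-are-snarls-after-cutting} forces $v$ to be a non-tip as well, and then item (3) hands me a block $B$ of $F$ containing both $u$ and $v$; because the blocks of a sign-cut graph are exactly the blocks of $U(G)$, this $B$ is a block of $U(G)$. If $B$ is a bridge or a multi-bridge (in particular if it has only two vertices), then $V(B)=\{u,v\}$ and $\{u,v\}$ is an edge, hence a split pair, so I may assume $B$ is biconnected with at least three vertices. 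It then suffices to show that $\{u,v\}$ is either an edge of $B$ or a separation pair of $B$.

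Next I would extract the separation carried by the snarl. By separability (\Cref{def:snarl}(a)), splitting $ud_u$ and $vd_v$ yields the snarl component $X$ with $u,v\in X$ and $u',v'\notin X$. Every vertex of $X$ other than $u,v$ is unsplit, so its incidences are unchanged; since $X$ is a connected component of the split graph, all of its neighbours lie in $X$. Hence deleting $\{u,v\}$ from $F$ separates $X\setminus\{u,v\}$ from $V(F)\setminus X$ with no edges across. If either side is empty, then $X=\{u,v\}$ (forcing the edge $\{ud_u,vd_v\}$), or $V(F)=X$, in which case the non-tip property makes $u'$ incident to an edge whose only possible other endpoint is $v'$, forcing the back edge $\{u\hat d_u,v\hat d_v\}$; in both cases $\{u,v\}$ is an edge of $B$ and we are done. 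So I may assume both sides are nonempty.

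The crux is to show that $B$ meets both sides of this separation, and for this I would show that $B$ carries both signs at $u$. Item (3) only guarantees that the blocks other than $B$ are single-signed at $u$; to conclude that $B$ itself is mixed at $u$, I rely on the fact that a sign-cut graph has no sign-consistent cutvertex (the one-shot splitting of the sign-consistent vertices of $G$ is idempotent, as any path realizing both signs at a non-split vertex survives every split of a sign-consistent vertex on it). Consequently the non-tip $u$ has some block of $F$ in which both signs occur at $u$, and by item (3) that block must be $B$. Thus $B$ contains a $d_u$-edge at $u$, whose other endpoint $w_1$ keeps its attachment to $u$ after the split and so lies in $X$, and an $\hat d_u$-edge at $u$, whose other endpoint $w_2$ is carried to $u'$ and so lies in $V(F)\setminus X$. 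If $w_1=v$ or $w_2=v$, then $\{u,v\}$ is an edge of $B$; otherwise $w_1,w_2\in V(B)\setminus\{u,v\}$ lie on opposite sides of the separation, so no path of $B-\{u,v\}$ can join them and $\{u,v\}$ is a separation pair of $B$. In either case $\{u,v\}$ is a split pair of $B$, as required.

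I expect the main obstacle to be precisely the step asserting that $B$ is mixed at $u$. Item (3) alone is compatible with $u$ being single-signed in $B$ while receiving its opposite sign from a dangling block, which would make $u$ a sign-consistent cutvertex of $F$; ruling this out—i.e.\ proving that the sign-cut construction leaves no sign-consistent cutvertex—is the real content. Without it the localization to $B$ (and indeed the statement) can fail: a $K_{2,3}$ with a pendant edge at each of two vertices of the large side yields a perfectly good separable-and-minimal snarl whose endpoints are not a split pair of the $K_{2,3}$, and the only thing that excludes it is that those endpoints are \emph{tips} of the relevant sign-cut graph rather than non-tips.
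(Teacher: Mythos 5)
Your proof is correct, and while it shares the paper's overall frame, the decisive step is carried out by a different and more elementary mechanism. Both proofs open the same way: items (1) and (3) of \Cref{thm:where-are-snarls-after-cutting} make $u$ and $v$ non-tips lying in a common block $B$ of the sign-cut graph (hence of $U(G)$), with no dangling blocks. The paper then converts the absence of dangling blocks into $|\mathcal{C}^\pm_F(u)|=|\mathcal{C}^\pm_F(v)|=1$ and invokes \Cref{prop:mixed-components-for-snarls} to obtain two internally vertex-disjoint $u$-$v$ paths with end incidences $ud_u,vd_v$ and $u\hat d_u,v\hat d_v$; for a non-edge $\{u,v\}$, internal vertices of these paths supply one vertex of $B$ inside the snarl component $X$ and one outside, and separability then makes $\{u,v\}$ a separation pair. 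You instead read the separation directly off $X$ (no edges of $F-u-v$ cross from $X\setminus\{u,v\}$ to $V(F)\setminus X$) and take as witnesses the two neighbours $w_1,w_2$ of $u$ inside $B$ along edges of opposite sign at $u$, which land on opposite sides of that separation; this avoids the disjoint-paths lemma altogether and lets the degenerate cases ($V(X)=\{u,v\}$ or $V(X)=V(F)$, where $\{u,v\}$ must be an edge) fall out of the same analysis. The crux you flag---that $B$ itself is mixed at $u$, which item (3) alone does not give---is real, and your justification is sound: a closed walk at $u$ realizing both signs re-enters and re-exits any sign-consistent vertex within the same component of that vertex's removal, hence with equal signs, so it survives all splits. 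This is precisely the paper's \Cref{prop:mixed-signs-inside-block}; note that the paper's own proof needs it implicitly as well, since deducing $|\mathcal{C}^\pm_F(u)|=1$ (rather than merely $\le 1$) from item (3) requires exactly this fact, so your identification of the ``real content'' is accurate. Your closing $K_{2,3}$ example also correctly shows that the non-tip hypothesis cannot be dropped, since tip-tip snarls need not induce split pairs. The one economy the paper has over you: since a split pair is by definition a separation pair or an edge, the case $\{u,v\}\in E(U(G))$ can be dismissed in one line, which would let you drop most of your degenerate-case bookkeeping.
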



\begin{figure}
   \centering
   \includegraphics{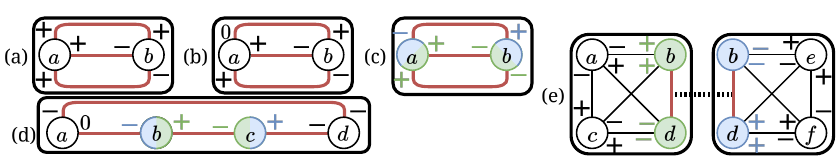}
   \caption{
   \textbf{Snarls and 2-separators in S, P and R nodes.}
   Red edges are virtual, and are marked with $+$ or $-$ at their endpoints if all their represented incidences at that endpoint are $+$ or $-$.
   If the incidences are mixed, we mark them with a $0$.
   No vertex has dangling blocks.
   (a) and (b) are P-nodes without a consistent partitioning of the edges at their endpoints.
   (c) is a P-node with a consistent partitioning of the edges, and hence it contains the snarls $\{a-, b+\}$ and $\{a+, b-\}$.
   (d) is an S-node where for $b$ and $c$ the signs are partitioned between the two virtual edges, and for $a$ and $d$ the signs are mixed.
   Hence, it contains the snarls $\{b+, c-\}$ and $\{b-, c+\}$.
   (e) is a pair of adjacent R-nodes.
   Since the signs of $b$ and $d$ are partitioned along the boundaries of the R-nodes, they contain the snarls $\{b+, d-\}$ and $\{b-, d+\}$.
   }
   \label{fig:spqr-snarls-cases}
\end{figure}



\subsection{Finding all snarls}

In this section we describe an algorithm finding all snarls in a bidirected graph.
The algorithm computes the sign-cut graphs of $G$ and only reports snarls within sign-cut graphs (correctness of this step is ensured by \Cref{lem:snarl-equivalence}).
For each sign-cut graph $F$ the algorithm first builds a list of all the incidences corresponding to tips of $F$; by (2) of \Cref{thm:where-are-snarls-after-cutting}, every pair of incidences in this list forms a snarl. Then the algorithm runs a dedicated routine on every block of $F$, where the \emph{non-tip with non-tip} snarls are reported (see \Cref{alg:snarls} in the Appendix); we describe this step next.

First, we examine pairs $\{ud_u,vd_v\}$ such that $\{u,v\}$ is a separation pair of a block of $U(G)$. For this, we use the SPQR tree representation to find separation pairs, which, together with some signs, are separable in the bidirected sense; the minimality property required by snarls will follow from the structure of the SPQR tree nodes (see \Cref{fig:spqr-snarls-cases} for examples of the various cases). Lastly, we examine pairs $\{ud_u,vd_v\}$ such that $\{u,v\}$ is an edge of $U(G)$ with some simple conditions (see \Cref{prop:snarl-edge-case} for details and also \Cref{alg:find-edge-snarls}).

In the S-nodes we mark those vertices where all incidences to their left have sign $q \in\signs$ and all incidences to their right have sign $\hat{q}$, and moreover these must not have dangling blocks. More formally, let $\mu$ be an S-node (within a block $H$) whose skeleton contains a vertex $v$ and let $e,e'\in\skel(\mu)$ denote the edges incident to $v$. Say that $v$ is \emph{good} if all incidences in $\expansion(e)$ (resp. $\expansion(e')$) in $v$ have sign $q \in\signs$ (resp. $\hat{q}$) and $v$ has no dangling blocks with respect to $H$.
We can build a circular list respecting the order of the vertices in $\skel(\mu)$ such that any two consecutive vertices $u$ and $v$ together with the incidences to their right, and left, respectively, are separable, and moreover no vertex in the resulting component violates minimality (essentially, such vertex has to be good, but $u$ and $v$ are consecutive in the list). As a result, we can report as snarls all pairs of consecutive incidences in the list (see \Cref{fig:spqr-snarls-cases}(d) and \Cref{alg:find-snarls-in-S} in the Appendix). We formalize this discussion in the following:

\begin{restatable}[Snarls and S-nodes]{prop}{snodesnarls}
\label{prop:S-node-snarls}
    Let $G$ be a bidirected graph, let $H$ be a maximal 2-connected subgraph of $G$, let $T$ be the SPQR tree of $H$, and let $\mu$ be an S-node of $H$ with vertices $v_1,\dots,v_k$ and virtual edges $e_1,\dots,e_k$ with the endpoints of the edge $e_i$ being $v_i$ and $v_{(i+1)\,\textrm{mod}\,k}$ $(k \geq 3)$.
    Let $v_{i_1},\dots,v_{i_q}$ be the list of the good vertices of $\mu$ where $i_1,\dots,i_q \subseteq \{1,\dots,k\}$ and $i_1 < \dots < i_q$, and let $\hat{d}_{i_j},d_{i_j} \in\signs$ denote the signs of the incidences at $v_{i_j}$ in the expansions of the edges $e_{(i - 1)\,\textrm{mod}\,k}$ and $e_i$, respectively, for $j \in \{1,\dots,q\}$ $(q \geq 0)$.
    Then $\{v_{i_1}d_{v_1}, v_{i_2}\hat{d}_{v_2}\}, \dots, \{v_{i_q}d_{v_q}, v_{i_1}\hat{d}_{v_1}\}$, are snarls.
\end{restatable}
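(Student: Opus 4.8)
The plan is to verify directly the two defining properties of a snarl in \Cref{def:snarl}---separability and minimality---for each cyclically consecutive pair of good vertices (the statement is vacuous when $q\le 1$, since then no pair of distinct vertices arises). Fix one such pair $v_{i_j},v_{i_{j+1}}$. Recall that $\skel(\mu)$ is a cycle and that each edge $e_\ell$ expands, by the separation encoded by its SPQR-tree edge, to a subgraph $\expansion(e_\ell)$ meeting the rest of $H$ only at its two endpoints. I would set $X=\bigcup_{\ell=i_j}^{i_{j+1}-1}\expansion(e_\ell)$ (indices cyclic), the \emph{arc} spanned between the two good vertices; this is the candidate snarl component, it contains $v_{i_j}$ and $v_{i_{j+1}}$, and it attaches to the complementary arc only at these two vertices.

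For separability I would split $v_{i_j}d_{i_j}$ and $v_{i_{j+1}}\hat d_{i_{j+1}}$ and show $X$ becomes a separate component. Goodness of $v_{i_j}$ means its incidences split cleanly: those in the arc side $\expansion(e_{i_j})$ have sign $d_{i_j}$ and those in $\expansion(e_{(i_j-1)\bmod k})$ have sign $\hat d_{i_j}$, so splitting $v_{i_j}d_{i_j}$ retains the whole arc side at $v_{i_j}$ and moves the complementary side to the new copy $v_{i_j}'$; the symmetric statement holds at $v_{i_{j+1}}$, where the arc side $\expansion(e_{i_{j+1}-1})$ carries sign $\hat d_{i_{j+1}}$. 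Since neither endpoint has dangling blocks, every other block of $G$ at them is single-signed there and migrates wholesale to one side, never reconnecting a vertex with its split copy. As $\{v_{i_j},v_{i_{j+1}}\}$ separates the arc from the complementary arc in $H$ (nonadjacent vertices of an S-node form a separation pair by \Cref{prop:spqr-tree-contains-split-pairs}; the adjacent, single-edge situation is covered by \Cref{prop:snarl-edge-case}), the two splits disconnect $X$ from $v_{i_j}',v_{i_{j+1}}'$ and from the complementary arc, giving separability with snarl component $X$.

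For minimality, suppose toward a contradiction that some $zd_z$ with $z\in X\setminus\{v_{i_j},v_{i_{j+1}}\}$ makes both $\{v_{i_j}d_{i_j}, zd_z\}$ and $\{z\hat d_z, v_{i_{j+1}}\hat d_{i_{j+1}}\}$ separable. A first elementary step is that separability forces split pairs: if splitting $ad_a,zd_z$ isolates a component $X_1\ni a,z$ avoiding $a',z'$, then $X_1\setminus\{a,z\}$ has no edge to the remaining vertices, so $\{a,z\}$ is a split pair of $U(G)$ (a separation pair, or an edge when $X_1=\{a,z\}$). Applying this to both pairs, $\{v_{i_j},z\}$ and $\{z,v_{i_{j+1}}\}$ are split pairs; since $v_{i_j}$ retains only its arc ($d_{i_j}$) side, the component grows from $v_{i_j}$ into the arc and reaches $z$, so $z$ lies on the arc, and reading off the signs at $z$ shows its incidences toward $v_{i_j}$ all have sign $d_z$ and those toward $v_{i_{j+1}}$ all have sign $\hat d_z$, with no dangling block at $z$ (a mixed-sign block would reconnect $z$ with $z'$).

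It then remains to locate $z$. If $z$ is a skeleton vertex $v_m$ with $i_j<m<i_{j+1}$, the clean sign split just derived says precisely that $v_m$ is \emph{good}, contradicting that $v_{i_j},v_{i_{j+1}}$ are consecutive good vertices. The crux---and the step I expect to be the main obstacle---is ruling out a $z$ in the interior of some arc expansion $\expansion(e_\ell)$. Here I would invoke that no two S-nodes are adjacent, so the pertinent node $\mu_\ell$ of $e_\ell$ is not an S-node; hence $\skel(\mu_\ell)$ minus its reference edge has at least two internally vertex-disjoint $v_\ell$--$v_{(\ell+1)\bmod k}$ paths, which expand to internally vertex-disjoint paths across $\expansion(e_\ell)$. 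No single interior vertex can meet all of them, so $z$ cannot separate the $v_{i_j}$-side from the $v_{i_{j+1}}$-side of the arc, contradicting that both refined pairs are separable. Carefully combining this disjoint-paths argument with the sign bookkeeping above (and handling the degenerate single-expansion arc) rules out every refining $z$ and establishes minimality.
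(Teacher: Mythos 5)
Your proposal is correct and follows essentially the same route as the paper's proof: separability is obtained from the arc between the two consecutive good vertices (clean sign split at the endpoints by goodness, plus absence of dangling blocks), and minimality is obtained by excluding any refining vertex inside the resulting component. The one substantive difference is in the minimality step. The paper dismisses refining vertices that are \emph{not} skeleton vertices of the S-node with a single sentence (the necessary condition that no sign-crossing path at such a vertex may avoid $u$ and $v$), and then handles skeleton vertices by showing that a non-good one reconnects with its split copy through a mixed-sign expansion; you instead show that a refining skeleton vertex would have to be good (contradicting consecutiveness of $v_{i_j},v_{i_{j+1}}$), and for vertices interior to an arc expansion $\expansion(e_\ell)$ you supply an explicit argument from the fact that no two S-nodes are adjacent: the pertinent node of $e_\ell$ is a P- or R-node, giving two internally vertex-disjoint paths across the expansion, one of which avoids $z$ and (via 2-connectivity of the pertinent graph) yields a path reconnecting $z$ with $z'$. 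This fills in precisely the step the paper leaves implicit, which is a genuine improvement in self-containedness. Two cautions in carrying it out: (i) when the reconnecting path passes through $u$ or $v$ (possible when $e_\ell$ is incident to them), you must invoke goodness to see that it crosses them on two \emph{same-signed} incidences and therefore survives the splits---this is the ``sign bookkeeping'' you defer; and (ii) your inference that the refining vertex ``lies on the arc'' is slightly too quick (as is the corresponding claim in the paper), since the snarl component may also contain blocks of $G$ hanging off internal arc vertices; these candidates are excluded by the same reconnection argument (their mixed-sign incidences reconnect $z$ with $z'$ inside the hanging structure, and single-signed hanging attachments make one of the two refined pairs fail to have a connecting component), which both you and the paper only gesture at via the dangling-block remark.
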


In the P-nodes (with vertex set $\{u,v\}$) we group the virtual edges $e$ whose expansions have $+$ incidences in $u$ and those whose expansions have $-$ incidences in $u$, and do the same for $v$. The resulting sets with respect to $u$ have to be disjoint, and the same for $v$ (notice that a virtual edge whose expansion has incidences of opposite signs in, e.g., $u$, appears in both sets of $u$). Further, one set of $u$ has to match exactly one of $v$, implying that the remaining two sets also match. If any of these conditions does not hold then separability is violated in any pair of incidences containing the vertices $u$ or $v$.
More formally, we can give a characterization of separability of two incidences $\{ud_u,vd_v\}$ whenever a P-node has vertex set $\{u,v\}$ (see \Cref{fig:spqr-snarls-cases}(a)--(c) and \Cref{alg:find-snarls-in-P} in the Appendix). Minimality is handled later on and follows from the structure of the P-nodes. Here, there is also an interplay with S-nodes, hence why S-nodes are examined first.

\begin{restatable}[Snarls and P-nodes]{prop}{pnodesnarls}
\label{prop:P-node-snarls}
    Let $G$ be a bidirected graph, let $F$ be a sign-cut graph of $G$, and let $H$ be a maximal 2-connected subgraph of $H$.
    Let $T$ be the SPQR tree of $H$ and let $\mu \in V(T)$ be a P-node. Let $e_1,\dots,e_k$ denote the edges of $\skel(\mu)$ with endpoints $\{u,v\}$ $(k\geq3)$. Let $d_u,d_v\in\signs$ be signs.
    Let $E^{d_u}_u = \{ e_i : V(\expansion(e_i))\cap N^{d_u}(u) \neq \emptyset \}$, $E^{\hat{d}_u}_u = \{ e_i : V(\expansion(e_i))\cap N^{\hat{d}_u}(u) \neq \emptyset \}$, and $E^{d_v}_v = \{ e_i : V(\expansion(e_i))\cap N^{d_v}(v) \neq \emptyset \}$.
    Then $\{ud_u,vd_v\}$ is separable in $F$ if and only if $E^{d_u}_u \neq \emptyset$, $E^{d_u}_u \cap E^{\hat{d}_u}_u = \emptyset$, $E^{d_v}_v \cap E^{\hat{d}_v}_v = \emptyset$, $E^{d_u}_u = E^{d_v}_v$, and $u$ and $v$ have no dangling blocks with respect to $H$.
\end{restatable}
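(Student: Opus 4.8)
My plan is to prove both directions by tracking how the split operation acts on the parallel structure that the P-node exposes. The structural fact I rely on throughout is that the skeleton edges $e_1,\dots,e_k$ partition $H$ into \emph{bundles} $B_i := \expansion(e_i)$ that pairwise meet only in $\{u,v\}$, whose union is $H$, and such that $B_i$ together with its virtual edge $e_i=\{u,v\}$ is biconnected. From biconnectivity I extract the workhorse of the argument: for every bundle both $B_i-u$ and $B_i-v$ are connected (deleting one vertex of a biconnected graph, which also removes the incident reference edge $e_i$, keeps it connected). Since each $B_i$ is connected and contains $u$ and $v$, it carries at least one incidence at $u$ and one at $v$; hence, once condition~2 holds, every bundle lies in exactly one of $E^{d_u}_u,E^{\hat d_u}_u$, and symmetrically at $v$ under condition~3, so $E^{d_u}_u$ and $E^{\hat d_u}_u$ partition the bundles.

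First I would settle how a single bundle behaves under the split of $ud_u$ and $vd_v$. If $B_i$ carries both signs at $u$, then its $d_u$-edges stay at $u$ and its $\hat d_u$-edges move to $u'$; since $B_i-u$ is connected these neighbours are joined, so $u$ and $u'$ land in the same component (and symmetrically at $v$). If instead $B_i$ is sign-pure at both ends, the two signs dictate its role: a $(d_u,d_v)$ bundle keeps both endpoints at $u,v$ (an \emph{inside} bundle); a $(\hat d_u,\hat d_v)$ bundle moves entirely onto $u',v'$ (an \emph{outside} bundle); and the two mixed bundles $(d_u,\hat d_v)$ and $(\hat d_u,d_v)$ connect $u$ to $v'$, respectively $u'$ to $v$, again via $B_i-v$ (resp.\ $B_i-u$) being connected. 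I would record this as a short four-case analysis, because every later claim is a corollary of it.

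For the forward direction I assume $\{ud_u,vd_v\}$ is separable with snarl component $X\ni u,v$ and $u',v'\notin X$. Each ``bad'' bundle type drags one of $u',v'$ into the component of $u$: both signs at $u$ forces $u'\in X$ (giving condition~2), both signs at $v$ forces $v'\in X$ (condition~3), and the two mixed types force $v'\in X$ or $u'\in X$. Ruling them out leaves only inside and outside bundles, and since $E^{d_u}_u$ is then exactly the set of inside bundles, and likewise $E^{d_v}_v$, we get $E^{d_u}_u=E^{d_v}_v$ (condition~4). Condition~1 follows because with no inside bundle $u$ would retain no incidence inside $H$, and as no block other than $H$ contains $v$ it could not reach $v$ at all, contradicting $u,v\in X$. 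Finally, a dangling block $H'$ at $u$ has opposite signs at $u$ and $H'-u$ connected, so it joins $u$ to $u'$ exactly as a both-signs bundle does, forcing $u'\in X$; hence condition~5. For the converse I assume all five conditions and take $X$ to be the component of $u$ after splitting: condition~1 provides an inside bundle through which $v\in X$; conditions~2--4 make every bundle inside or outside, so inside bundles touch only $u,v$ and outside bundles only $u',v'$; condition~5 makes every other block at $u$ or $v$ sign-pure, hence attached wholly to one side, and (since two distinct blocks share at most one vertex) no such block contains both $u$ and $v$. Together these give $u',v'\notin X$, so $\{ud_u,vd_v\}$ is separable.

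I expect the main obstacle to be keeping the ``inside $H$'' reasoning (the bundle case analysis yielding conditions~1--4) cleanly separated from the ``outside $H$'' reasoning (other blocks at the cutvertices $u,v$, giving condition~5), while checking that non-dangling external blocks are genuinely harmless: they are absorbed entirely into one side of the split and, because two distinct blocks share at most one vertex, can never bridge $u$ to $u'$ or reconnect the two sides. The one delicate point is justifying that a mixed or dangling piece truly merges the two split copies; this rests entirely on the biconnectivity of $B_i\cup e_i$ (and of each block), which is exactly what lets a split-pair vertex be deleted without disconnecting the piece. Once that observation is in place, both directions reduce to bookkeeping over the four bundle types.
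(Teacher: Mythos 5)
Your overall route is the same as the paper's: classify each expansion $B_i=\expansion(e_i)$ by the signs it shows at $u$ and at $v$, show that any mixed pattern (or dangling block) merges a split copy with its original, and read conditions (1)--(4) off the surviving ``inside/outside'' bundles; your backward direction is fine (indeed more detailed than the paper's, which dismisses it in one line). However, there is a genuine gap in the forward direction, exactly at the step that yields conditions (2) and (3). For a bundle $B_i$ carrying both signs at $u$, you join the $d_u$-neighbour $a$ and the $\hat{d}_u$-neighbour $b$ by a path in $B_i-u$, justified by the biconnectivity of $B_i\cup e_i$. But that path may pass through $v$, and since $vd_v$ is split simultaneously, the path can be severed there (it may enter $v$ on a $d_v$ edge and leave on a $\hat{d}_v$ edge, which after the split sit on the distinct vertices $v$ and $v'$). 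What you actually need is an $a$-$b$ path avoiding \emph{both} $u$ and $v$, and biconnectivity of $B_i\cup e_i$ does not provide it: the four-cycle through $u,a,v,b$ is biconnected, yet has no $a$-$b$ path avoiding $\{u,v\}$. (Such a graph cannot occur as a P-node expansion, but nothing in your stated workhorse rules it out.)

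The missing ingredient is the defining property of split components, which is what the paper invokes at precisely this point: since $\expansion(e_i)$ is a split component of the split pair $\{u,v\}$, the pair $\{u,v\}$ is not a split pair of $\expansion(e_i)$, hence $\expansion(e_i)-\{u,v\}$ is connected and the required $a$-$b$ path avoiding both poles exists. With this replacement your four-case analysis goes through (the sign-pure cases need only connectivity of $B_i$, as you note, and the dangling-block case is safe because $v\notin H'$). A smaller point: your one-line justification of condition (1) (``no inside bundle, hence $u$ cannot reach $v$'') silently needs a block--cut-tree argument that a path leaving $u$ into a sign-pure non-$H$ block can neither reach $v$ nor sneak back to $u'$; this is recoverable once condition (5) is in hand, but as written it is a sketch. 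The paper avoids it altogether by deducing (1) from the sign-cut-graph property: if $E^{d_u}_u=\emptyset$ and $u$ is a non-tip, then some block other than $H$ has mixed signs at $u$, i.e.\ a dangling block at $u$, contradicting the already-established condition (5).
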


Now, separation pairs which are the vertices of an S-node or a P-node have been processed. Thus, the only separation pairs left to be examined are those encoded in a tree-edge with R-nodes endpoints.
This case is similar to the P-node. If $\{u,v\}$ is the virtual edge (with one copy in each R-node), then the incidences in one side of the separation at $u$ should all have the same sign, and the same for $v$. Then, the incidences in the other side of the separation should have the opposite signs at $u$ and $v$, respectively. As usual, $u,v$ have no dangling blocks. Minimality follows from the fact that R-nodes have too many $u$-$v$ paths for a single vertex to violate minimality. See \Cref{fig:spqr-snarls-cases}(e) and \Cref{alg:find-snarls-in-RR} in the Appendix.

\begin{restatable}[Snarls and R-nodes]{prop}{rnodesnarls}
\label{prop:R-node-snarls}
    Let $G$ be a bidirected graph, $H$ a maximal 2-connected subgraph of $G$, $T$ the SPQR tree of $H$, and $\nu$ an R-node of $T$. Let $e_\mu \in E(\skel(\nu))$ be a virtual edge with pertaining node $\mu$. Let $e_\nu \in E(\skel(\mu))$ be the virtual edge pertaining to $\nu$.
    If $\expansion(e_\nu)$ has only $ud_u$ and $vd_v$ incidences at $u$ and $v$ respectively, and $\expansion(e_\mu)$ has only $u\hat{d}_u$ and $v\hat{d}_v$ incidences at $u$ and $v$ respectively, and $u$ and $v$ have no dangling blocks with respect to $H$, then $\{ud_u,vd_v\}$ is a snarl.
\end{restatable}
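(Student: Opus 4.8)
The plan is to exhibit $X := \expansion(e_\nu)$ (together with any blocks that hang off $u$ or $v$ on the correct side) as the snarl component, verifying separability (part (a) of \Cref{def:snarl}) directly from the sign hypotheses, and deriving minimality (part (b)) from the $3$-connectivity of the R-node skeleton. First I would record the structural facts supplied by the tree edge $\{\nu,\mu\}$: it induces a separation with $V(\expansion(e_\nu)) \cap V(\expansion(e_\mu)) = \{u,v\}$, with $\expansion(e_\nu) \cup \expansion(e_\mu) = H$, and no edge of $H$ joining the interiors of the two sides. Since $\expansion(e_\nu) \cup \{e_\mu\}$ is biconnected by the SPQR construction, $\expansion(e_\nu)$ is connected and contains both $u$ and $v$.

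For separability I would split $ud_u$ and $vd_v$, creating $u'$ (carrying the $\hat d_u$ incidences) and $v'$ (carrying the $\hat d_v$ incidences). By hypothesis every incidence of $\expansion(e_\nu)$ at $u$ (resp.\ $v$) has sign $d_u$ (resp.\ $d_v$), so no edge of $\expansion(e_\nu)$ migrates to $u'$ or $v'$ and $\expansion(e_\nu)$ stays connected and attached to $u,v$; dually, every incidence of $\expansion(e_\mu)$ at $u,v$ has the opposite sign, so all of $\expansion(e_\mu)$ migrates to $u',v'$. The only shared vertices were $u,v$, now torn apart, so no walk can cross from the $\expansion(e_\nu)$ side to $u'$ or $v'$; in particular a detour $u \leadsto v \leadsto u'$ is blocked at $v$. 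The \emph{no dangling block} hypothesis completes the global picture: every other block meeting $u$ (resp.\ $v$) is sign-consistent there, hence attaches wholly to $u$ or wholly to $u'$ and never reconnects the two copies. Thus the component $X$ of $u$ contains $u,v$ but neither $u'$ nor $v'$.

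The hard part will be minimality, and here the plan is to exploit that $\skel(\nu)$ is $3$-connected. Suppose some interior $z \in X$, $z \neq u,v$, makes both $\{ud_u,zd_z\}$ and $\{z\hat d_z, vd_v\}$ separable. The chaining of these two separable pairs forces the $d_z$-side of $z$ to face $u$ and the $\hat d_z$-side to face $v$, so that $z$ becomes the unique point of passage between the two subregions; concretely, every $u$-$v$ path inside $X$ must traverse $z$, since a $z$-avoiding path would reconnect the two torn copies of $z$ and break one of the separabilities. On the other hand, $3$-connectivity gives three internally vertex-disjoint $u$-$v$ paths in $\skel(\nu)$; at most one of these is the virtual edge $e_\mu$ itself, and the remaining (at least two) use only other edges of $\skel(\nu)$ and therefore expand to at least two internally vertex-disjoint $u$-$v$ paths inside $\expansion(e_\nu) \subseteq X$, each starting at $u$ with sign $d_u$ and ending at $v$ with sign $d_v$. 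A single vertex $z$ can be internal to at most one of them, so a $z$-avoiding $u$-$v$ path survives, contradicting the passage property.

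Two points will require care, and they are where I expect the real work to lie. First I would rule out that a minimality-violating $z$ sits in an extra block hanging off $u$ or $v$: such a block is itself $2$-connected, so a cycle through $z$ reconnects its two torn copies and destroys separability, confining $z$ to $\expansion(e_\nu)$ where the path argument applies. Second, I would make precise the claim that two chained separable pairs partition the snarl component into two pieces meeting only at $z$, since this is exactly the bridge between the definition of minimality and the disjoint-paths count; it must be handled carefully when $z$ lies inside an expanded virtual edge of $\skel(\nu)$ rather than on the skeleton itself, so that the expansions of the two surviving skeleton paths are genuinely internally disjoint and still avoid $z$.
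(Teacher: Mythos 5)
Your overall strategy coincides with the paper's: separability is argued exactly as in the paper (the sign hypotheses keep $\expansion(e_\nu)$ attached to $u,v$ and push all of $\expansion(e_\mu)$ onto $u',v'$, while the no-dangling-blocks hypothesis disposes of the remaining blocks), and minimality is derived from the $3$-connectivity of $\skel(\nu)$, which yields two internally vertex-disjoint $u$-$v$ paths inside $\expansion(e_\nu)\subseteq X$. The difference is in how those two paths are converted into minimality. The paper simply invokes its \Cref{prop:snarls-disjoint-paths} (separable plus two internally vertex-disjoint $u$-$v$ paths in the snarl component implies snarl), whose proof is a sign analysis of the incidences that a hypothetical violating vertex has \emph{on the two paths themselves}; it never needs your stronger claim that every $u$-$v$ path in $X$ passes through $z$.

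Your route rests entirely on that stronger ``unique point of passage'' claim, and this is where the proposal stops short of a proof. The claim is in fact true, but your one-line justification --- a $z$-avoiding path ``would reconnect the two torn copies of $z$'' --- does not work by itself: such a path touches neither copy of $z$, so to reconnect $z$ and $z'$ one must additionally extract, from the separability of $\{z\hat{d}_z,vd_v\}$, a $v$-to-$z$ path re-entering $z$ on the appropriate side, and then check that the resulting walks survive the relevant splits; this forces a chained sign-bookkeeping argument that plays all three separability hypotheses (including that of $\{ud_u,vd_v\}$ itself) against each other, and is essentially a re-proof, in stronger form, of \Cref{prop:snarls-disjoint-paths}. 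Moreover, your first ``point of care'' is resolved incorrectly as stated: for $z$ lying in a block hanging off $u$, a cycle through $z$ need not reconnect $z$ and $z'$, since the cycle may meet $z$ with two incidences of the \emph{same} sign; that case is in any event subsumed by the passage claim once it is proved in full generality, so the separate block argument is both flawed and unnecessary. In short, the plan is structurally identical to the paper's and can be completed, but the step you yourself flag as ``where the real work lies'' is exactly the content of the paper's \Cref{prop:snarls-disjoint-paths}, and the justifications sketched for it would not survive as written.
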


See \Cref{sec:missing-snarls} for a detailed description of the algorithm (\Cref{alg:snarls}) cited in \Cref{thm:snarls-correct-time} below. From the proof of \Cref{thm:snarls-correct-time}, also \Cref{thm:main} follows.

\begin{restatable}{theorem}{snarlscorrecttime}
\label{thm:snarls-correct-time}
    Let $G$ be a bidirected graph. There exists an algorithm identifying all snarls, which can be implemented to run in time $O(|V(G)| + |E(G)|)$.
\end{restatable}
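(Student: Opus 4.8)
The plan is to combine the reduction to sign-cut graphs with a single linear-time sweep of the SPQR trees of the blocks, using precomputed sign information to decide each candidate snarl locally.

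First I would reduce the problem from $G$ to its sign-cut graphs. By \Cref{lem:snarls-G-F}, $\{ud_u,vd_v\}$ is a snarl of $G$ if and only if it is a snarl of some sign-cut graph, so it suffices to report snarls within each sign-cut graph and take the union. To build them I would compute the block-cut tree of $U(G)$ in linear time, inspect at each cutvertex the signs of its incidences in each incident block to decide whether it is sign-consistent, and split exactly the sign-consistent cutvertices, all in $O(|V|+|E|)$ time. Within each sign-cut graph $F$, \Cref{thm:where-are-snarls-after-cutting} partitions the snarls: every tip-tip pair is a snarl (item~2), so I collect the list of tip incidences of $F$ as the compact set $T_i$ (this is the linear-size representation of the quadratically-many tip-tip snarls, and also absorbs the special block-spanning case via \Cref{prop:snarls-both-tips-component}); non-tip/tip pairs are never snarls (item~1) and are ignored; and the remaining non-tip/non-tip snarls lie inside a single block of $F$ (item~3).

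For the non-tip/non-tip snarls, \Cref{thm:snarls-split-pairs} guarantees that $\{u,v\}$ is a split pair of a block of $U(G)$, so I would build the SPQR tree of every block (linear in total by \Cref{lem:spqr-total-size} together with linear-time SPQR construction). By \Cref{prop:spqr-tree-contains-split-pairs}, every split pair is either a virtual edge of some skeleton, a nonadjacent pair inside an S-node, or a real edge. I would therefore handle each kind with its dedicated characterization: nonadjacent pairs inside S-nodes via \Cref{prop:S-node-snarls} (marking good vertices, forming the circular list, and emitting consecutive pairs), the single virtual edge of each P-node via \Cref{prop:P-node-snarls}, virtual edges between two R-nodes via \Cref{prop:R-node-snarls}, and adjacent (real-edge) candidates via \Cref{prop:snarl-edge-case}. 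Correctness of the union of these reports follows directly from the cited statements together with \Cref{lem:snarls-G-F}.

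The efficiency argument is where the real work lies. Each of \Cref{prop:S-node-snarls,prop:P-node-snarls,prop:R-node-snarls} only needs, for every virtual edge $e$, the \emph{signs of the incidences at its two shared endpoints in} $\expansion(e)$, plus, for each vertex, whether it has a dangling block of opposite sign; the latter I would precompute directly from the block-cut tree. For the expansion-sign information I would run a two-pass dynamic program over each SPQR tree: a bottom-up pass computing, for every virtual edge, the signs at its endpoints contributed by the subtree below it, and a top-down pass combining a node's parent-side information with its siblings' subtree information to obtain the complementary expansion-sign at every virtual edge. Since the relevant datum at each endpoint is just the constant-size set of signs present, merging is $O(1)$ per skeleton edge, so the whole DP is linear in the total skeleton size, which is $O(|E|)$ by \Cref{lem:spqr-total-size}. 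The main obstacle is the P-node test of \Cref{prop:P-node-snarls}, where one must check the set equalities $E^{d_u}_u=E^{d_v}_v$ over the possibly many parallel edges without incurring a quadratic cost; I would avoid this by noting that the number of sign choices $d_u,d_v\in\signs$ is constant and that each check merely partitions the parallel edges of the P-node by their (clean) endpoint signs, hence runs in time linear in the number of edges of that P-node. Summing this constant-work-per-skeleton-edge bound over all nodes, and adding the linear sign-cut and SPQR-construction costs, yields the claimed $O(|V(G)|+|E(G)|)$ running time; \Cref{thm:main} then follows since the reported objects are exactly the tip lists $T_i$ and the individual non-tip/non-tip snarls $S_i$ of the stated sizes.
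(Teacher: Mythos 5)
Your overall architecture is exactly the paper's: reduce to sign-cut graphs via \Cref{lem:snarls-G-F}, capture tip--tip snarls as the lists $T_i$ via item~(2) of \Cref{thm:where-are-snarls-after-cutting}, confine the remaining snarls to blocks via item~(3) and \Cref{thm:snarls-split-pairs}, classify split pairs through \Cref{prop:spqr-tree-contains-split-pairs}, and run dedicated S-/P-/R-/edge routines; your two-pass sign DP over the SPQR tree is also the same device the paper uses (the paper does one bottom-up pass plus a subtract-from-total step, which is equivalent), so the runtime argument is fine.

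However, there is a genuine gap in your correctness claim: you assert that ``correctness of the union of these reports follows directly from the cited statements,'' but several of those statements are only one-directional, and as written your algorithm would be unsound and your completeness argument incomplete. Concretely, \Cref{prop:P-node-snarls} characterizes \emph{separability} only, not minimality; if you report every pair passing that test, you will report separable non-snarls --- e.g.\ when $E^{d_u}_u$ consists of a single virtual edge pertaining to an S-node whose skeleton contains a good vertex $w \notin \{u,v\}$, that $w$ violates minimality. The paper handles this with an explicit check that the pertinent S-node has no good vertices besides $u,v$ (which is also why S-nodes must be processed \emph{before} P-nodes, an ordering your sketch does not impose). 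Similarly, \Cref{prop:snarl-edge-case} is a \emph{necessary} condition on edge snarls, not a characterization: soundly reporting the opposite-signed pair additionally requires testing that no S-node skeleton contains both $u$ and $v$ (\Cref{alg:find-edge-snarls}), and its completeness needs the interplay with P-nodes recorded in \Cref{obs:edge-case}. Finally, \Cref{prop:S-node-snarls} only says consecutive good pairs \emph{are} snarls; to not miss snarls you must still prove the converse-type statement that any snarl whose endpoints lie in an S-node skeleton is either such a consecutive good pair or is picked up by the P-node routine --- this is precisely the multi-case analysis (adjacent vs.\ nonadjacent endpoints, good vs.\ not good, pertaining node P vs.\ R) that occupies the completeness part of the paper's \Cref{thm:snarls-correct}. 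Without these pieces the theorem does not follow; with them, your proof would coincide with the paper's.
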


\section{Experiments}

\begin{table}
\centering
\begin{adjustbox}{max width=\textwidth}
\begin{tabular}{llcccccccc}
\toprule
Group & Dataset &
\multicolumn{1}{c}{$n$} &
\multicolumn{1}{c}{$m$} &
\multicolumn{2}{c}{\cellcolor{famSnarl}\textbf{BubbleFinder}} &
\cellcolor{famSnarl}\textbf{vg snarls} &
\multicolumn{2}{c}{\cellcolor{famBiSB}\textbf{BubbleFinder}} &
\cellcolor{famBiSB}\textbf{BubbleGun} \\
& &
\multicolumn{1}{c}{$(\times 10^6)$} &
\multicolumn{1}{c}{$(\times 10^6)$} &
\cellcolor{famSnarl}{I/O+ALGO} &
\cellcolor{famSnarl}{all} &
\cellcolor{famSnarl}{} &
\cellcolor{famBiSB}{I/O+ALGO} &
\cellcolor{famBiSB}{all} &
\cellcolor{famBiSB}{} \\
\midrule
PGGB & \textsf{50x E. coli}    & $1.6$  & $2.1$
    & \textbf{0:22} & \textbf{0:33} & 1:10
    & \textbf{0:26} & 0:48
    & 0:40 \\
& \textsf{Primate Chr. 6} & $34.3$ & $47.1$
    & \textbf{11:09} & \textbf{14:23} & 22:29
    & \textbf{11:20} & \textbf{19:47}
    & 2:21:26 \\
& \textsf{Tomato Chr. 2}  & $2.3$  & $3.2$
    & \textbf{0:33} & \textbf{0:52} & 1:18
    & \textbf{0:35} & 1:07
    & 0:52 \\
& \textsf{Mouse Chr.~19}  & $6.2$  & $8.6$
    & \textbf{1:53} &   \textbf{2:25}  & 3:13
    & \textbf{2:02} & \textbf{3:33}
    & 6:49 \\
\midrule
\texttt{vg} & \textsf{Chromosome 1}   & $18.8$ & $25.7$
    & 12:30 & 27:12 & \textbf{7:24}
    & \textbf{21:27} & \textbf{42:23}
    & \cellcolor[HTML]{F2F2F2} TO \\
& \textsf{Chromosome 10}  & $11.6$ & $15.9$
    & 7:27 & 16:20  & \textbf{4:15}
    & \textbf{12:52} & \textbf{25:28}
    & \cellcolor[HTML]{F2F2F2} TO \\
& \textsf{Chromosome 22}  & $3.2$  & $4.4$
    & 1:59  &  4:23  & \textbf{1:04}
    & \textbf{3:29}  & \textbf{6:56}
    & \cellcolor[HTML]{F2F2F2} TO \\
\midrule
DBG & \textsf{10x M. xanthus} & $1.6$ & $2.1$
    & 0:31 & 0:42 & \textbf{0:29} & 0:50 & 1:26 & \textbf{0:26} \\
\bottomrule
\end{tabular}
\end{adjustbox}
\caption{Wall clock execution times measured for each dataset and tool, in the format min:sec or h:min:sec. Snarls are in blue, superbubbles are in red.
Cells containing ``TO'' indicate a time out after three hours. Columns $n$ and $m$ denote the number of nodes and edges, in millions.
Columns labeled ``I/O+ALGO'' indicate that the timing does not include the building of the BC and SPQR trees.
We bold either the state-of-the-art or our running times that beat the state-of-the-art.
}
\label{tab:wall_clock_seq}
\end{table}

\textbf{Datasets.} We evaluate our algorithm on three groups of pangenome graphs. See \Cref{sec:additional-experimental-details} for accession codes and more detailed descriptions.
Datasets \textsf{50x~E.~coli}, \textsf{Primate~Chr.~6}, \textsf{Tomato~Chr.~2} and \textsf{Mouse~Chr.~19} are pangenome graphs constructed with PGGB~\cite{garrison2024pangenomegraphs}.
Datasets \textsf{Chromosome~1|10|22} are variant graphs for human chromosome 1|10|22 constructed from the ($\sim$6.5~millions) variant calls from phase~3 of the 1000~Genomes Project~\cite{1000genomesproject}.
We constructed these graphs using the \texttt{vg}~toolkit~\cite{garrison2018variation}.
Dataset \textsf{10x~M.~xanthus} is a de Bruijn graph with a $k$-mer size of 41 representing the pangenome of 10 Myxococcus xanthus genomes, which was used to evaluate BubbleGun~\cite{dabbaghie2022bubblegun}.

\textbf{Our tool.} We implemented our algorithms in C++ in our new tool BubbleFinder (\url{https://github.com/algbio/BubbleFinder}) and our experimental pipeline in Snakemake~\cite{snakemake} (\url{https://github.com/algbio/BubbleFinder-experiments}).
We use the graph data structures of BC tree and SPQR tree provided by the Open Graph Drawing Framework~\cite{chimani13ogdf}.
To the best of our knowledge, OGDF is the fastest SPQR tree implementation publicly available. However, OGDF does not scale to biological-sized datasets: we had to increase the process stack size to hundreds of gigabytes to build BC/SPQR trees for the largest datasets. We believe that, despite these issues, our implementation provides a good first impression of the practical applicability of our algorithms.


\textbf{Existing tools.} We compare our implementation for finding snarls against the \texttt{snarls} subcommand of the \texttt{vg}~toolkit~\cite{paten2018ultrabubbles}.
We use flag \texttt{-a} to compute snarls (default is ultrabubbles) and flag \texttt{-T} to report also single-edge snarls.
Note that \texttt{vg snarls -a -T} does not compute all snarls, but only an ``snarl decomposition'', which contains a linear amount of snarls.
In contrast, BubbleFinder computes all snarls. For superbubbles, we compare our implementation against BubbleGun~\cite{dabbaghie2022bubblegun}.
BubbleGun is a Python implementation of the initial $O(|V|(|E|+|V|))$-time superbubble algorithm~\cite{onodera2013detecting} on the (virtually) doubled representation of the bidirected graph.
Besides computing superbubbles, BubbleGun also chains them and filters circular chains.
Additionally, BubbleGun allows the arc $(t, s)$ to exist in a superbubble $(s, t)$.

\begin{table}
\centering
\begin{adjustbox}{max width=\textwidth}
\begin{tabular}{l c c c c c c c c c c c >{\centering\arraybackslash}p{2.0cm}}
\toprule
Dataset & $n$ & $m$ &
\multicolumn{4}{c}{\cellcolor{famSnarl}\textbf{BubbleFinder (all)}} &
\cellcolor{famSnarl}\textbf{vg snarls} &
\multicolumn{4}{c}{\cellcolor{famBiSB}\textbf{BubbleFinder (all)}} &
\cellcolor{famBiSB}\textbf{BubbleGun} \\
&
$(\times 10^6)$ &
$(\times 10^6)$ &
\cellcolor{famSnarl}{$t=1$} &
\cellcolor{famSnarl}{$t=4$} &
\cellcolor{famSnarl}{$t=8$} &
\cellcolor{famSnarl}{$t=16$} &
\cellcolor{famSnarl}{} &
\cellcolor{famBiSB}{$t=1$} &
\cellcolor{famBiSB}{$t=4$} &
\cellcolor{famBiSB}{$t=8$} &
\cellcolor{famBiSB}{$t=16$} &
\cellcolor{famBiSB}{} \\
\midrule
\textsf{Chromosome 1}   & $18.8$ & $25.7$ & 27:12 & 11:32 & 8:46 & 7:33 & \textbf{7:24} & 42:23 & 17:35 & 13:04 & \textbf{11:09} & \cellcolor[HTML]{F2F2F2} TO  \\
\textsf{Chromosome 10}  & $11.6$ & $15.9$ & 16:20 & 6:57 & 5:16 & 4:29 & \textbf{4:15} & 25:28 & 10:39 & 7:49 & \textbf{6:37} & \cellcolor[HTML]{F2F2F2} TO \\
\textsf{Chromosome 22}  & $3.2$ & $4.4$ & 4:23 & 1:51 & 1:23 & 1:11 & \textbf{1:04} & 6:56 & 2:47 & 2:03 & \textbf{1:43} & \cellcolor[HTML]{F2F2F2} TO  \\
\bottomrule
\end{tabular}
\end{adjustbox}
\caption{Wall clock times for executing BubbleFinder with multiple threads, in the format min:sec. Snarls are in blue and superbubbles in red.
Cells containing ``TO'' indicate a time out after three hours. Columns $n$ and $m$ denote the number of nodes and edges in millions, respectively, and $t$ indicates the number of threads used.
We bold the fastest runtimes.
}
\label{tab:wall_clock_par}
\end{table}

\textbf{Results on snarls.} 
\Cref{tab:wall_clock_seq} shows the execution times for computing snarls and superbubbles.
To simulate building the BC~tree and the SPQR~trees in a separate preprocessing step, we subtract the time it takes to build the trees from the total running time of our tool.
This simulated timing is given in the columns labeled ``I/O+ALGO'' and the total timing is given in the columns labeled ``all''.
When analyzing the PGGB graphs \textsf{50x~E.~coli}, \textsf{Primate~Chr.~6}, \textsf{Tomato~Chr.~2} and \textsf{Mouse~Chr.~19}, we observe that BubbleFinder is up to 2 times faster than \texttt{vg} on the larger graphs, 3 times faster on the smallest graph and in general always faster.
When analysing the graphs for \textsf{Chromosome~1|10|22}, we observe that BubbleFinder is up to two times slower than \texttt{vg}.
When analysing the de Bruijn graph \textsf{10x~M.~xanthus}, we observe that BubbleFinder is about as fast as \texttt{vg}.
All in all, taking the block-size histograms in \Cref{fig:block_dist} in the Appendix into account, we see that BubbleFinder is faster on datasets with large blocks.
Recall that our algorithm actually computes all snarls, whereas \texttt{vg} computes only an amount linear in the size of the graph.
Therefore, we conclude that our algorithm outcompetes \texttt{vg} in terms of runtime.

Note that even if we compare the ``all'' running times against the other tools, our implementation is still competitive for three of the PGGB datasets, indicating that even with using an SPQR-tree builder that is not meant for biological data, our approach is the new state-of-the-art in some biological applications.

\Cref{tab:wall_clock_par} shows the execution times with different amounts of threads for the \texttt{vg} datasets.
We see that BubbleFinder achieves a speedup of up to $4\times$ when run with 16 threads, obtaining similar running times as \texttt{vg}. This is thanks to the flat distribution of block sizes in these datasets.
As visualized in \Cref{fig:block_dist} in the Appendix, no block has more than 100 edges.
Other datasets contain single large blocks with millions of edges, which counter our block-level parallelization strategy.
Note that \texttt{vg snarls} also supports multiple threads in its CLI, but its publication does not mention any parallelisation, and running it with multiple threads did not change the runtime. 
Hence we ran it only single-threaded.

\Cref{tab:max_rss} in the Appendix shows the RAM usage.
The RAM usage of BubbleFinder is about 4-10 times higher than that of \texttt{vg}, if we do not exclude the tree building.
But if we focus only on I/O and our algorithms, our algorithm uses between half and twice as much memory than \texttt{vg}.
Hence, the highest memory consumption occurs during building the trees, which suggests that with an efficient bioinformatics-ready SPQR tree builder, our algorithms will easily be competitive in terms of memory.

\textbf{Results on superbubbles.}
For superbubbles we obtain similar results. The runs on superbubbles are roughly 50\% slower than the corresponding ones on snarls, which can be explained with two reasons: our algorithm for superbubbles has to keep track of more complex properties such as reachability and acyclicity (more complex algorithm); and while we output a linear-size representation of all snarls, we list all superbubbles in a plain format (larger output).
We note that BubbleFinder runs in similar times as BubbleGun on small graphs, but BubbleFinder is about 10 times faster on larger graphs.

\section{Conclusions}

From a theoretical point of view, our new SPQR tree framework has great potential for being extended to other bubble-like structures, such as ultrabubbles and bibubbles. However, at the moment there exists a lack of basic graph algorithms for bidirected graphs, which would be needed for such extensions. For example, for ultrabubbles we would need a linear-time algorithm for feedback edges in bidirected graphs and for bibubbles, we would need an efficient algorithm for (source/sink) reachability under edge removals. Both of these are interesting challenges to explore in the future. From a practical point of view, we believe that the main bottleneck of our implementation is OGDF. By further investigating the library, we note that while the SPQR tree is linear in size (in the input graph) the overhead of metadata we obtain from OGDF is prohibiting their usage on a large pangenomic level. Also, the construction algorithms from OGDF perform recursive searches, which also restricts their use in large pangenomic levels (due to memory consumption and stack size limitation). Both of these motivate the (re-)implementation of a pangenome-level BC tree and SPQR tree construction (as well as the algorithms on top of them). Furthermore, the current bottleneck of parallelization is that both the OGDF library as well as our implementation do not support parallelization on a deeper level than that of a block, which motivates further methods of parallelization within blocks.

\bibliographystyle{plain}
\bibliography{bibliography}

\newpage
\appendix

\section*{\centering Appendix of\\``Identifying all snarls and superbubbles in linear-time, \\via a unified SPQR-tree framework''}

\section{Defining snarls in bidirected graphs}
\label{sec:biedged-snarls}

\begin{figure}[h]
   \centering
   (a)\includegraphics{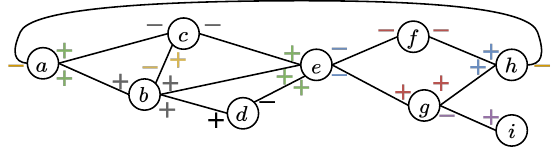}

   (b)\includegraphics{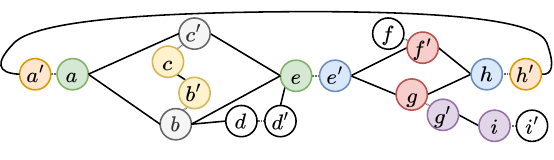}
   \caption{
   (a) A bidirected graph.
   The snarls are $\{a+, e+\}$ (green), $\{c+, b-\}$ (yellow), $\{c-, b+\}$ (gray), $\{e-, h+\}$ (blue), $\{a-, h-\}$ (orange), $\{f-, g+\}$ (red), $\{g-, i+\}$ (purple).
   Snarl $\{c+, b-\}$ is contained in snarl $\{a+, e+\}$.
   (b) The corresponding biedged graph.
   The outer edges are black, and the inner edges are dashed in gray.
   The snarls are highlighted in the same colors.
   }
   \label{fig:biedged}
\end{figure}

Paten et al.~\cite{paten2018ultrabubbles} define snarls via the \emph{biedged graph}, which is an undirected graph $B(G) = (B(V), B(E))$ created from a bidirected graph $G = (V, E)$ as follows.
We perform the \emph{splitting} operation on each node, and the replace each edge with an undirected $\{v_1 d_1, v_2 d_2\}$ with an undirected \emph{outer} edge $\{a, b\}$.
Then, we insert an \emph{inner} edge $\{v, v'\}$ for each pair of nodes where $v'$ was produced by splitting $v$.
We call $v$ the \emph{opposite} of $v'$ and vice-versa.
If an inner edge has a parallel outer edge, we consider them separate edges.
Otherwise, we assume that there are no parallel edges, as they have no effects on snarls.
See \Cref{fig:biedged} for an example of a biedged graph and its corresponding bidirected graph.
In the biedged graph, Paten~et~al. define snarls as follows.

\begin{definition}[Snarl in biedged graphs]
    \label{def:snarl-biedged}
    A pair set of distinct, non-opposite nodes $\{x, y\}$ is a \emph{snarl} if
    \begin{enumerate}
        \item[(a)] \emph{separable:} the removal of the inner edges incident with $x$ and $y$ disconnects the graph, creating a separate component $X$ that contains $x$ and $y$ and not their opposites.
        We call $X$ the \emph{snarl component} of $\{x, y\}$.
        \item[(b)] \emph{minimal:} no pair of opposites $\{z, \hat{z}\}$ in $X$ different from $x$ and $y$ exists such that $\{x, z\}$ and $\{y, \hat{z}\}$ are separable.
    \end{enumerate}
\end{definition}

To avoid using the biedged graph in our algorithm, we propose an equivalent definition of snarls in bidirected graphs.

\snarlandcomponent*

See \Cref{fig:biedged} for an example of snarls in a bidirected graph and the corresponding snarls in the corresponding biedged graph.
We show that the two definitions of snarls are equivalent.

\begin{lemma}[Equivalence of snarl definitions]
    \label{lem:snarl-equivalence}
    Let $\{xd_x, yd_y\}$ be a pair set of incidences in a bidirected graph $G$.
    Let $\{a, b\}$ be the corresponding pair set in the corresponding biedged graph $B(G)$.
    Then $\{xd_x, yd_y\}$ is a snarl in $G$ by \Cref{def:snarl} if and only if $\{a, b\}$ is a snarl in $B(G)$ by \Cref{def:snarl-biedged}.
\end{lemma}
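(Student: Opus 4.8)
The plan is to build a precise dictionary between the two constructions and then check that each clause of one definition translates verbatim into the corresponding clause of the other. The crucial observation is that the \emph{splitting} operation of \Cref{def:snarl} is exactly inner-edge removal in $B(G)$, read through a contraction. In $B(G)$ \emph{every} vertex $v$ of $G$ is already split into a ``$+$-half'' and a ``$-$-half'' joined by the inner edge $\{v,v'\}$, whereas the bidirected definition splits only $xd_x$ and $yd_y$. To reconcile this I would first prove a general correspondence: for any pair of incidences $\{pd_p,qd_q\}$ with $p\neq q$, start from $B(G)$ with the two inner edges at $p$ and $q$ removed and contract every remaining inner edge. Since contracting an edge never alters the partition of a graph into connected components (two nodes are joined afterwards iff they were before, under the identification), the components of this contracted graph correspond bijectively to those of $B(G)$ minus the two inner edges; and the contracted graph is isomorphic to the graph $G'$ obtained from $G$ by splitting $pd_p$ and $qd_q$. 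Under this isomorphism the half of $p$ carrying the $d_p$-incidences maps to $p$, its opposite maps to the split copy $p'$, symmetrically for $q$, and every other vertex $w$ keeps its two halves joined and so maps to the single vertex $w$ of $G'$. In particular $\{pd_p,qd_q\}$ is separable in $G$ if and only if the corresponding biedged pair is separable in $B(G)$, with the witnessing components matching as vertex sets.

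Separability, i.e.\ clause (a), is then immediate: applying the general correspondence to the main pair $\{xd_x,yd_y\}$ identifies the biedged component containing $a,b$ but not $\hat a,\hat b$ with the component of $G'$ containing $x,y$ but not $x',y'$, and conversely. Thus the two notions of ``snarl component $X$'' coincide, once each interior vertex $w$ is identified with the pair $\{w_+,w_-\}$ of its biedged halves.

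For minimality, i.e.\ clause (b), I would reuse the same correspondence applied to a third split. Fix a vertex $z\in X$ with $z\neq x,y$ and a sign $d_z$; on the biedged side this is the pair of opposite nodes $\{z_+,z_-\}$, where $z_{d_z}$ corresponds to the incidence $zd_z$ and $z_{\hat d_z}$ to $z\hat d_z$. Because $z$ is interior, its inner edge is contracted and both halves lie in $X$, which matches the requirement ``$\{z,\hat z\}$ in $X$'' of \Cref{def:snarl-biedged}(b). Applying the general separability correspondence to $\{x,z_{d_z}\}$ and to $\{y,z_{\hat d_z}\}$ shows that $\{xd_x,zd_z\}$ is separable in $G$ iff $\{x,z_{d_z}\}$ is separable in $B(G)$, and that $\{z\hat d_z,yd_y\}$ is separable iff $\{y,z_{\hat d_z}\}$ is. Hence a minimality-violating choice on one side is a minimality-violating choice on the other, so the minimality clauses are equivalent. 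Combining this with the separability equivalence proves the lemma in both directions.

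The main obstacle I expect is the sign bookkeeping needed to keep the three-way correspondence consistent across all clauses: bidirected vertex $\leftrightarrow$ pair of biedged halves $\leftrightarrow$ pair of split copies, while making sure the ``opposite'' relation $z\leftrightarrow\hat z$ in $B(G)$ lines up with the $d_z\leftrightarrow\hat d_z$ flip in the bidirected minimality condition, and that the containment requirements $x,y\in X$, $x',y'\notin X$ translate correctly. A secondary technicality is the caveat about an inner edge parallel to an outer edge (an edge $\{v+,v-\}$ of $G$): contraction then produces a self-loop, which must be declared irrelevant to connectivity, exactly as the paper already stipulates that such parallel edges have no effect on snarls.
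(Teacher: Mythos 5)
Your proof is correct and follows essentially the same route as the paper's: first establish that splitting incidences in $G$ corresponds exactly to inner-edge removal in $B(G)$ (so separability transfers), then obtain equivalence of minimality by applying that separability correspondence to the auxiliary pairs at a third vertex $z$. Your contraction dictionary simply makes rigorous the translation step that the paper's proof asserts directly, so the two arguments coincide in substance.
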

\begin{proof}
    $(\Rightarrow)$
    Let $\{xd_x, yd_y\}$ be a snarl in $G$.
    Let $\{a, b\}$ be the corresponding nodes in $B(G)$.
    We show that $\{a, b\}$ fulfills each condition of \Cref{def:snarl-biedged}.
    \begin{itemize}
        \item
        By \Cref{def:snarl}, it holds that $x \neq y$ and hence $a$ and $b$ are distinct and non-opposite.
        
        \item \emph{separable}:
        By \Cref{def:snarl}, it holds that when splitting the incidences $xd_x$ and $yd_y$ in $G$, then the graph contains a separate component $X$ containing $x$ and $y$ but not the nodes $x'$ and $y'$ created by the node separation operation.
        Hence, when removing the inner edges incident with $a$ and $b$ in $B(G)$, we create a separate component that contains $a$ and $b$, but not their opposites.
        Therefore, $\{a, b\}$ is separable by \Cref{def:snarl-biedged}.
        
        \item \emph{minimal}:
        By \Cref{def:snarl}, it holds that no incidence $zd_z$ with $z \in X$ different from $x$ and $y$ exists such that $\{xd_x, zd_z\}$ and $\{yd_y, z\hat{d_z}\}$ are separable.
        Since we have shown the equivalence of separability above, this implies that in $B(G)$, no node $c \in X$ different from $a$ and $b$ exists such that $\{a, c\}$ and $\{b, \hat{c}\}$ are separable.
        Hence, $\{a, b\}$ is minimal by \Cref{def:snarl-biedged}.
    \end{itemize}
    
    Therefore, if $\{xd_x, yd_y\}$ is a snarl in $G$, it holds that $\{a, b\}$ is a snarl in $B(G)$.

    $(\Leftarrow)$
    Let $\{a, b\}$ be a snarl in $B(G)$.
    Let $\{xd_x, yd_y\}$ be the corresponding incidences in $G$.
    We show that $\{xd_x, yd_y\}$ fulfills each condition of \Cref{def:snarl}.
    \begin{itemize}
        \item
        By \Cref{def:snarl-biedged}, it holds that $a$ and $b$ are distinct and non-opposite, and hence $x \neq y$.
        
        \item \emph{separable}:
        By \Cref{def:snarl-biedged}, it holds that the removal of the inner edges $\{a, \hat{a}\}$ and $\{b, \hat{b}\}$ disconnects $B(G)$, creating a separate component $X$ that contains $a$ and $b$ and neither $\hat{a}$ nor $\hat{b}$.
        Hence, when splitting the incidences $xd_x$ and $yd_y$ in $G$, we create a separate component $X$ that contains $x$ and $y$ but not the nodes $x'$ and $y'$ created by the node separation operation.
        Therefore, $\{xd_x, yd_y\}$ is separable by \Cref{def:snarl}.
        
        \item \emph{minimal}:
        By \Cref{def:snarl-biedged}, it holds that no node $c \in X$ different from $a$ and $b$ exists such that $\{a, c\}$ and $\{b, \hat{c}\}$ are separable.
        Since we have shown the equivalence of separability above, this implies that in $G$, no incidence $zd_z$ with $z \in X$ different from $x$ and $y$ exists such that $\{xd_x, zd_z\}$ and $\{yd_y, z\hat{d_z}\}$ are separable.
        Hence, $\{xd_x, yd_y\}$ is minimal by \Cref{def:snarl}.
    \end{itemize}
    
    Therefore, if $\{a, b\}$ is a snarl in $B(G)$, it holds that $\{xd_x, yd_y\}$ is a snarl in $G$.
\end{proof}

\newpage
\section{Missing results on snarls}
\label{sec:missing-snarls}

In this section we give a detailed explanation of our results on snarls. Recall that we assume implicitly that our bidirected graphs are connected, i.e., they consist of a single component.

\subsection{Properties of snarls and connectivity}

Let $v$ be a vertex of a bidirected graph $G$. We let $\mathcal{C}^+_G(v)$ ($\mathcal{C}^-_G(v)$) denote the set of components of $G-v$ containing only positive (negative) incidence in $v$, and $\mathcal{C}^{\pm}_G(v)$ is the set components of $G-v$ containing at least one positive and at least one negative incidence in $v$. For instance, if $v$ is sign-consistent then $\mathcal{C}^\pm_G(v)=0$, if $\mathcal{C}^\pm_G(v)=0$ then $v$ has no dangling blocks from anywhere, and if $\mathcal{C}^\pm_G(v)=1$ then $v$ may or may not have dangling blocks depending on the block in question.

The next result is the motivation behind sign-cut graphs. 
Essentially, we show that a sign-consistent vertex $x$ violates minimality between any two incidences whose vertices are separated by $x$.

\begin{prop}
\label{prop:snarls-good-cutvertex}
    Let $G$ be a bidirected graph and let $x$ be a cutvertex of $G$ such that $\mathcal{C}^\pm_G(x)$ is empty. If $u$ is in a component of $\mathcal{C}^+_G(x)$ and $v$ is in a component of $\mathcal{C}^-_G(x)$ then $\{ud_u,vd_v\}$ is not a snarl for any incidences $ud_u$ and $vd_v$ in $G$.
\end{prop}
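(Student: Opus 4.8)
The plan is to dispose of the easy case and then force a failure of \emph{minimality}. If $\{ud_u,vd_v\}$ is not separable then it is trivially not a snarl, so I would assume it is separable and fix its snarl component $X$, obtained by splitting $ud_u$ and $vd_v$. The goal is then to produce an incidence that witnesses the failure of condition~(b) of \Cref{def:snarl}, and the obvious candidate is the cutvertex $x$ itself, taken with sign $+$ on the side of $u$ (recall $u$ lies in a component of $\mathcal{C}^+_G(x)$ and $v$ in a component of $\mathcal{C}^-_G(x)$).

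First I would record two structural facts. Since $u$ and $v$ lie in distinct components of $G-x$, the vertex $x$ separates them in $G$; and because splitting $ud_u$ and $vd_v$ leaves $x$ untouched ($x\neq u,v$) and merges no vertices, every $u$–$v$ path inside $X$ projects to a $u$–$v$ walk of $G$ and hence passes through $x$. Thus $x\in X$ and $x\neq u,v$, so $x$ is an admissible choice of $z$ in condition~(b). Second, and crucially, I would exploit the hypothesis $\mathcal{C}^\pm_G(x)=\emptyset$: each component of $G-x$ meets $x$ with a single sign, so splitting $x$ (at either sign) severs all positive components from all negative components, with no edge between the two copies of $x$. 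Consequently, splitting $x+$ breaks the graph into a \emph{positive side} containing $x$ and $u$ and a \emph{negative side} containing the new copy $x'$ and $v$, and these two sides are disconnected.

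The heart of the argument is to show that both $\{ud_u,x+\}$ and $\{x-,vd_v\}$ are separable. For the first, I would compare the graph obtained by splitting $\{ud_u,x+\}$ with the graph obtained by splitting $\{ud_u,vd_v,x+\}$: they differ only by the extra split of $vd_v$, which lives entirely on the negative side, while the split at $x+$ has already disconnected the two sides, so the positive side—and in particular the component of $u$—coincides in both graphs. In the latter graph this component is exactly the positive part $X^+$ of $X$ (splitting $x+$ cuts $X$ into its positive and negative parts, since within $X$ the vertex $x$ separates $u$ from $v$), and $X^+$ contains $u$ and $x$ but excludes $u'$ (which already lay outside $X$, and splitting only refines components) and $x'$ (which is on the negative side). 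Hence $\{ud_u,x+\}$ is separable with snarl component $X^+$, and the symmetric comparison gives that $\{x-,vd_v\}$ is separable with component $X^-$. Taking $z=x$ with $d_z=+$ then contradicts condition~(b), so $\{ud_u,vd_v\}$ is not a snarl.

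I expect the main obstacle to be precisely the separability of these two sub-pairs, and in particular the need to keep $u'$ (respectively $v'$) out of the sub-snarl component even though we no longer split the ``far'' incidence. The localization observation above is what resolves it: because $x$ is sign-consistent, the split at $x$ isolates the $u$-side from everything touching $v$, so whether or not $vd_v$ is split is invisible to the component of $u$, which lets me import the exclusion of $u'$ directly from the original snarl component $X$.
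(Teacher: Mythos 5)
Your proof is correct and takes essentially the same route as the paper's: both refute minimality by exhibiting $x$ itself as the witness, showing that $\{ud_u,x+\}$ and $\{x-,vd_v\}$ are separable, and both rest on the fact that sign-consistency of $x$ makes the split at $x+$ disconnect the positive side (containing $u$) from the negative side (containing $v$ and $x'$). The only difference is bookkeeping in how $u'$ is kept out of the sub-snarl component: the paper proves $V(X^+)\subseteq V(X)$ by merging a $v$-$x$ path with an $x$-$y$ path through $x$, while you obtain the same containment by noting that the $vd_v$-split is invisible to $u$'s side once $x+$ is split, so the component of $u$ coincides with the positive part of $X$ and inherits the exclusion of $u'$.
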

\begin{proof}
    We can assume that there are incidences at $x$ of opposite signs, since otherwise the statement holds vacuously.
    Suppose for a contradiction that there are signs $d_u,d_v \in \signs$ in $u$ and $v$ such that $\{ud_u,vd_v\}$ is a snarl with a snarl component $X$. Since every $u$-$v$ path in $G$ contains $x$, every $u$-$v$ path in the graph after splitting $ud_u$ and $vd_v$ also contains $x$, and thus $x \in V(X)$. We argue that $\{ud_u,x+\}$ is separable in $G$ with a snarl component $X^+$. Analogously, we can then show that $\{x-, vd_v\}$ is separable, consequently violating the minimality property of the supposed snarl~$\{ud_u,vd_v\}$.
    
    Since $\mathcal{C}^\pm_G(x)$ is empty, splitting $x+$ creates two components $C^+$ and $C^-$, with the former containing all the positive incidences $x+$ of $G$ and the latter containing all the negative incidences $x-$ of $G$ reattached to $x'$.
    Clearly, $x'$ is not in $V(X^+)$. It remains to show that $u' \notin V(X^+)$.
    
    We observe that $V(X^+) \subseteq V(X)$. To see why, pick a vertex $y \in V(X^+)$. There there is an $x$-$y$ path in the graph obtained by splitting $x+$ and $ud_u$ in $G$ such that the first edge of the path corresponds to a bidirected edge with incidence $x+$. This path also exists in the graph where $x+$ is not split. Splitting $vd_v$ does not affect the existence of this path either, since $x$ is a $u$-$v$ cutvertex of $G$.
    
    Since $x \in X$ there is a $v$-$x$ path in the graph obtained by splitting $ud_u$ and $vd_v$ that enters $x$ through an incidence $x-$. We can thus simply merge the $v$-$x$ path and the $x$-$y$ path to get a $v$-$y$ path in the graph obtained by splitting $ud_u$ and $vd_v$, proving that $y \in V(X)$.
    Since $V(X^+) \subseteq V(X)$, the claim follows from the fact that $u' \notin V(X)$ because $\{ud_u,vd_v\}$ is separable in $G$. Therefore $\{ud_u,x+\}$ is separable, and analogously $\{x-, vd_v\}$ is separable, contradicting the fact that $\{ud_u,vd_v\}$ is a snarl.
\end{proof}

The proof of the next result is essentially a direct application of \Cref{prop:snarls-good-cutvertex} but it also requires handling of a case that arises from the fact that sign-consistent vertices appear in two sign-cut graphs.

\snarlsinsidesigncuttrees*
\begin{proof}
    We can assume that $u \neq v$ for otherwise $\{ud_u,vd_v\}$ is not a snarl by definition.
    
    There is a sign-consistent vertex $x \in V(G)$ that puts $u$ and $v$ in distinct sign-cut graphs of $G$ (possibly $x=u$ or $x=v$).
    If $x$ can be chosen so that it is distinct from $u$ and $v$ then we can apply \Cref{prop:snarls-good-cutvertex} to $G,x,u,v$ and conclude that $\{ud_u,vd_v\}$ is not a snarl for any $d_u,d_v \in \signs$.
    Otherwise $x$ is equal to $u$ or $v$. Suppose without loss of generality that $x=u$ and suppose for a contradiction that $\{ud_u,vd_v\}$ is a snarl with component $X$.
    Since $u=x$ is the sign-consistent vertex that puts $u$ and $v$ in different sign-cut graphs by necessity, the incidences in $F_1$ in $u$ all have sign $d_u$. Moreover, $F_2$ is the other sign-cut graph containing $u$ whose incidences have sign $\hat{d}_u$.
    Thus every path from $u$ to $v$ in $G$ uses an edge containing an incidence $u\hat{d}_u$ and hence $v\notin V(X)$, or $X$ contains an incidence in $u\hat{d}_u$ and hence $X$ contains an incidence in $u'$. Both cases violate the separability of $\{ud_u,vd_v\}$ and thus $\{ud_u,vd_v\}$ is not a snarl.
\end{proof}

We can now prove equivalence of snarls between $G$ and the sign-cut graphs of $G$.

\snarlsGF*
\begin{proof}
    We start by proving that separability is retained in both directions. Consequently, we show that a minimal snarl in $F$ satisfies minimality in $G$ and vice versa. 

    ($\Leftarrow$, \emph{separability}) Let $\{ud_u,vd_v\}$ be a snarl of $F$. We show that $\{ud_u,vd_v\}$ is a snarl of $G$. 
    Since it is a snarl of $F$, $\{ud_u, vd_v\}$ is separable and there is a $u$-$v$ path in $F$. Thus the path is also in its supergraph $G$. 
    It remains to show that $u'$ and $v'$ are separated from $u$ and $v$ in $G$ when we split $ud_u$ and $vd_v$. Assume contrarily that there is a $u$-$u'$ path in the resulting graph. The proof is analogous for the $v$-$v'$ path. If the path contains only vertices from $F$, then the path would also be in $F$ and $\{ud_u,vd_v\}$ would not be separable in $F$.
    On the other hand, if there is another sign-cut graph $F'$ such that the path also contains a vertex $w \in V(F')$, then the subpaths $u$-$w$ and $w$-$v$ pass through some cutvertex twice. This also results in a contradiction, since paths do not repeat vertices.
    
    ($\Rightarrow$, \emph{separability}) Let $\{ud_u,vd_v\}$ be a snarl of $G$. By \Cref{lem:snarls-inside-sign-cut-tree} the incidences $ud_u$ and $vd_v$ are in the same sign-cut graph $F$ of $G$. We show that $\{ud_u,vd_v\}$ is a snarl of $F$. 
    
    Since it is a snarl of $G$, there is a $u$-$v$ path in $G$. 
    Note that the path only contains vertices from $V(F)$ by assuming the contrary, i.e., there is another sign-cut graph $F'$ such that the path also contains a vertex $w \in V(F')$. This is impossible, since the $u$-$v$ path does not repeat vertices, but the $u$-$w$ subpath and the $w$-$v$ subpath pass through some cutvertex twice. 
    Since $F$ is a subgraph of $G$, separability then holds trivially.

    ($\Leftarrow$, \emph{minimality}) Let $\{ud_u,vd_v\}$ be a snarl of $F$. We prove that minimality of $\{ud_u, vd_v\}$ holds in $G$. Assume the contrary, that is, there exists $x \in V(G)$ and a sign $d_x$ such that $\{ud_u, xd_x\}$ and $\{x\hat{d}_x, vd_v\}$ are separable in $G$. By the separability result for the other direction, $\{ud_u, xd_x\}$ and $\{x\hat{d}_x, vd_v\}$ would then be separable in $F$, implying that $\{ud_u,vd_v\}$ is not a snarl of $F$.

    ($\Rightarrow$, \emph{minimality}) Let $\{ud_u,vd_v\}$ be a snarl of $G$. For minimality, suppose that there exists a vertex $w$ and sign $d_x$ such that $\{ud_u, xd_x\}$ and $\{x\hat{d}_x, vd_v\}$ are separable in $F$. Then, it follows from the other direction of the proof that they are separable in $G$, and so $\{ud_u,vd_v\}$ is not a snarl of $G$ by a violation of minimality.
\end{proof}

Now we give a series of results that support our SPQR tree approach in finding snarls. Essentially, these results relate the structure of snarls with connectivity properties of the graph.

\begin{prop}
\label{prop:tips-dont-violate-minimality}
    Let $G$ be a bidirected graph, let $ud_u$ be an incidence of $G$, and let $z \in V(G)$ be a vertex distinct from $u$. If $z$ is a tip in $G$ with sign $d_z$ then $\{ud_u,z\hat{d}_z\}$ is not separable.
\end{prop}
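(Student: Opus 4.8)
The plan is to read the statement off directly from the splitting operation, once we record what the tip hypothesis forces. First I would observe that, since $z$ is a tip with sign $d_z$, every incidence of $G$ at $z$ carries the sign $d_z$; in particular $G$ contains no incidence of the form $z\hat{d}_z$. This single observation is the crux of the argument, and everything that follows is routine bookkeeping about the split.

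Next I would unfold the splitting operation applied to the incidence $z\hat{d}_z$. By the definition of splitting, this introduces a fresh vertex $z'$ and reroutes to $z'$ exactly those edges incident at $z$ with the \emph{opposite} sign of $\hat{d}_z$, namely sign $d_z$, while $z$ retains only the edges incident with sign $\hat{d}_z$. But by the first step there are no such $\hat{d}_z$-incidences at $z$, so every edge formerly at $z$ is moved to $z'$, and $z$ becomes an isolated vertex in the split graph (if $z$ had no edges to begin with, it is isolated trivially). Splitting the second incidence $ud_u$ only touches edges at $u$, and since $u \neq z$ it does not restore any edge at $z$; moreover, as noted after the definition of sign-cut graphs, the order in which the two incidences are split is irrelevant, so I may freely assume $z\hat{d}_z$ is split first.

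Finally, I would invoke separability as defined in \Cref{def:snarl}(a): separability of $\{ud_u, z\hat{d}_z\}$ requires a connected component $X$ of the split graph that contains both $u$ and $z$. Since $z$ is isolated, its component is the singleton $\{z\}$, which cannot contain $u$ because $u \neq z$. Hence no such $X$ exists, and $\{ud_u, z\hat{d}_z\}$ is not separable, as claimed.

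I do not expect a genuine obstacle here; the proposition is essentially a sanity check on the definitions. The only point requiring care is to apply the splitting convention in the correct direction, i.e.\ to confirm that it is splitting $z\hat{d}_z$ (and not $zd_z$) that strands $z$. This is immediate once the tip condition has ruled out all $\hat{d}_z$-incidences, so the whole argument is a short, direct consequence of the two definitions.
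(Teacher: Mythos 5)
Your proof is correct and takes essentially the same approach as the paper's: splitting $z\hat{d}_z$ strands $z$ as an isolated vertex (since all of its edges carry sign $d_z$ and are rerouted to $z'$), so no component of the split graph can contain both $u$ and $z$, violating separability. You also resolved the splitting convention in the same direction the paper's other proofs use (the split incidence's sign stays at the original vertex), which is indeed the only delicate point.
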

\begin{proof}
    Suppose for a contradiction that $\{ud_u,z\hat{d}_z\}$ is separable with component $X$.
    Splitting $z\hat{d}_z$ results in $z$ being isolated because $z$ is a tip in $G$ with sign $d_z$ (and $z'$ retains the original incidences of $G$ in $z$). Splitting $ud_u$ does not create new paths in the resulting graph, and hence there is no $u$-$z$ path in the graph resulting from these two splits. Therefore $X$ does not contain $z$, contradicting the fact that $\{ud_u,z\hat{d}_z\}$ is separable.
\end{proof}

\begin{prop}
\label{prop:mixed-signs-inside-block}
    Let $G$ be a bidirected graph, let $F$ be a sign-cut graph of $G$, and let $u \in V(F)$ be a vertex. If $u$ is a non-tip in $F$ then there is a block of $F$ with incidences of opposite signs in $u$.
\end{prop}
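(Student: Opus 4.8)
The plan is to avoid reasoning inside $F$ directly and instead transfer the question back to $G$, exploiting the fact that a non-tip cannot be a split vertex. First I would observe that since $u$ is a non-tip in $F$, $u$ cannot be sign-consistent in $G$: the excerpt records that every sign-consistent vertex becomes a tip in each sign-cut graph it appears in. Hence $u$ is not among the split vertices, so $u$ keeps exactly its $G$-incidences in $F$; in particular $u$ is a non-tip of $G$ as well. I would also record the structural consequence that, since $u$ is not split, every block of $G$ containing $u$ survives unchanged as a block of $F$ containing $u$: splitting only separates blocks at sign-consistent cutvertices distinct from $u$, and such a split cannot tear apart a single $2$-connected block nor move $u$ out of one of its blocks. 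Thus it suffices to exhibit a block of $G$ carrying incidences of both signs at $u$.

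Next I would unfold the negation of sign-consistency. By definition $u$ is sign-consistent exactly when $u$ is a cutvertex of $G$ and every component of $G-u$ attaches to $u$ with incidences of a single sign, i.e. $\mathcal{C}^\pm_G(u)=\emptyset$. Since $u$ is not sign-consistent, either (i) $u$ is not a cutvertex of $G$, or (ii) $u$ is a cutvertex and some component $C\in\mathcal{C}^\pm_G(u)$ attaches to $u$ with incidences of both signs. In case (i), $u$ lies in a unique block $B$ of $G$, and because $u$ is a non-tip all of its two-signed incidences belong to $B$; hence $B$ already has opposite-sign incidences at $u$. In case (ii), I would invoke the block-cut structure recalled in the preliminaries: all edges joining $u$ to a single component $C_i$ of $G-u$ lie in the unique block associated with $C_i$ (the block of $C_i+u$ meeting $u$), since any two such edges close a cycle through $u$ via a path inside the connected $C_i$. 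Applying this to the mixed component $C$ yields a block $B$ whose incidences at $u$ are precisely the incidences of $u$ into $C$, which contain both signs by the choice of $C$.

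In either case $B$ is a block of $G$ with incidences of opposite signs at $u$, and by the transfer observation $B$ is also a block of $F$ containing $u$, which is exactly what is claimed. The main obstacle I anticipate is the transfer step itself: one must argue cleanly that $u$'s incidences and its incident blocks are genuinely identical in $G$ and in $F$, i.e.\ that splitting the other sign-consistent cutvertices neither alters the signs at $u$ nor redistributes the blocks through $u$ across different sign-cut graphs. This rests on the two facts already stated in the excerpt---that the sign-cut graphs partition the incidences of $G$, and that the blocks of $G$ are exactly the union of the blocks of its sign-cut graphs---together with the remark that a non-split vertex keeps all of its blocks inside a single sign-cut graph. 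Once these are in place, the case analysis above is routine.
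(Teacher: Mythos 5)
Your proof is correct and is essentially the contrapositive of the paper's argument: the paper assumes for contradiction that every block of $F$ is sign-uniform at $u$ and, via the same cutvertex/non-cutvertex case split, concludes that $u$ is either a tip or sign-consistent with respect to $G$, while you unfold non-sign-consistency of $u$ in $G$ and construct the mixed-sign block directly. The only substantive difference is that you make the transfer of incidences and blocks between $G$ and $F$ explicit (via the stated facts that sign-cut graphs partition the incidences and that the blocks of $G$ coincide with those of its sign-cut graphs), a step the paper compresses into the single assertion ``thus $u$ is sign-consistent with respect to $G$.''
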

\begin{proof}
    Suppose for a contradiction that the incidences at $u$ in every block of $F$ all have the same sign.
    If $u$ is not a cutvertex in $F$ then $u$ is in a unique block of $F$, thus $u$ is a tip, a contradiction.
    Otherwise there is a block intersecting $u$ containing only incidences of sign $q \in\signs$ and a block intersecting $u$ containing only incidences of sign $\hat{q}$, since $u$ is a non-tip by assumption. Thus $u$ is sign-consistent with respect to $G$, a contradiction to the fact that $F$ is a sign-cut graph of $G$.
\end{proof}

\begin{prop}
\label{prop:snarls-both-tips-component}
    Let $G$ be a bidirected graph and let $\{ud_u,vd_v\}$ be a snarl of $G$ with component $X$. Then $X=G$ if and only if $u$ and $v$ are tips in $G$ with signs $d_u,d_v$, respectively.
\end{prop}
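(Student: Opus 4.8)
The plan is to derive both directions of the equivalence directly from the definition of the splitting operation, using the two background facts already available in this section: that $G$ is connected, and that the snarl component $X$ is viewed as a subgraph of the original graph whose vertex set contains $u$ and $v$ but omits the split copies $u'$ and $v'$. The single operational fact I will lean on throughout is that splitting an incidence $xd_x$ relocates exactly the $\hat{d_x}$-incidences of $x$ to the fresh vertex $x'$, leaving the $d_x$-incidences (and every other edge) untouched.

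For the ($\Leftarrow$) direction I would assume $u$ and $v$ are tips with signs $d_u$ and $d_v$. Since $u$ is a tip with sign $d_u$, it carries no $\hat{d_u}$-incidence, so splitting $ud_u$ relocates nothing and merely attaches an isolated vertex $u'$; the same holds for $v'$. Hence the graph obtained after both splits is exactly $G$ together with the two isolated vertices $u'$ and $v'$, and no original edge is lost. Because $G$ is connected and $u \neq v$, the component containing both $u$ and $v$ is all of $G$, and therefore the snarl component is $X = G$.

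For the ($\Rightarrow$) direction I would reason at the level of each endpoint. Assume $X = G$. Then every original edge lies in $X$, while separability guarantees $u' \notin V(X)$, so no edge of $X$ is incident to $u'$. After splitting $ud_u$, an edge is incident to $u'$ precisely when, in $G$, it carried the incidence $u\hat{d_u}$. Since all original edges belong to $X = G$ but none is incident to the excluded vertex $u'$, there can be no original edge with incidence $u\hat{d_u}$; thus $u$ has only $d_u$-incidences and is a tip with sign $d_u$. The identical argument applied to $v'$ shows $v$ is a tip with sign $d_v$.

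I expect the only delicate point to be the bookkeeping of the splitting operation: pinning down precisely that ``$X = G$'' means every edge of $G$, regarded as an edge of the post-split graph, has both endpoints inside $X$, together with the exact characterization of which edges get rerouted to $u'$ (namely the $\hat{d_u}$-incident ones). Once that correspondence is stated cleanly, both directions collapse to one-line observations, and crucially no appeal to minimality is needed, since $\{ud_u,vd_v\}$ is assumed to be a snarl. A minor case worth flagging is that a tip with the opposite sign $\hat{d_u}$ would make $u$ itself isolated and hence also contradict $X = G$, but this is already subsumed by the edge-counting argument above.
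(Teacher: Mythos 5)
Your proof is correct. The ($\Leftarrow$) direction coincides with the paper's: splitting tip incidences creates two isolated vertices and leaves $G$ intact, so connectivity gives $X=G$. The ($\Rightarrow$) direction, however, is organized differently and more cleanly. The paper proves the contrapositive: assuming $u$ is a non-tip, it picks an edge $\{u\hat{d}_u,wd_w\}$ and distinguishes two cases — if $w\neq v$, it extends a $u$-$w$ path inside $X$ by the relocated edge to reach $u'$, concluding that the \emph{vertex} $w$ must lie outside $V(X)$; if $w=v$, it first deduces $d_w=\hat{d}_v$ (else $u$ and $u'$ stay connected through $v$) and then exhibits the fully relocated edge $\{u'\hat{d}_u,v'\hat{d}_v\}$ as a missing \emph{edge} of $X$. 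Your direct argument subsumes both cases with a single uniform observation: separability excludes $u'$ from $V(X)$, and exactly the edges carrying the incidence $u\hat{d}_u$ are rerouted to $u'$ by the split, so $E(X)=E(G)$ forbids any such edge, forcing $u$ to be a tip with sign $d_u$. This eliminates both the case distinction and the path-connectivity argument; the only extra content of the paper's version is the (unused) byproduct that the neighbour $w$ itself lies outside $V(X)$ when $w\neq v$. Both proofs ultimately rest on the same bookkeeping you explicitly flag — edges retain their identity under splitting, and precisely the $u\hat{d}_u$-incident ones move to $u'$ — so once that is pinned down, your version is a genuine simplification of the paper's forward direction.
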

\begin{proof}
    $(\Leftarrow)$
    If $u$ and $v$ are tips in $G$ with signs $d_u$ and $d_v$ then splitting $ud_u$ and $vd_v$ results in a graph with components $G$ and two isolated vertices, $u'$ and $v'$. Since $G$ is connected\footnote{Recall that we implicitly assume that every graph is connected.} it follows that $X=G$.

    $(\Rightarrow)$
    We show that if $u$ or $v$ are non-tips in $G$ then $X \neq G$ (i.e., there is an edge or a vertex of $G$ that is not part of $X$, as $X \subseteq G$ by definition of snarl and snarl component).
    Suppose without loss of generality that $u$ is a non-tip in $G$ and that $d_u=+$.
    Since $u$ is a non-tip in $G$ by assumption, $G$ has an edge $\{u-,wd_w\}$ for some incidence $wd_w$. Let $G'$ denote the graph resulting from splitting $u+$ and $vd_v$.
    
    Suppose that $w \neq v$. If $w \in V(X)$ then there is a $u$-$w$ path in $X$ because components are connected. This path can be extended with the edge $\{wd_w,u'-\} \in E(G')$, and so there is a $u$-$u'$ path in $G'$ and thus $u' \in V(X)$, contradicting the fact that $\{ud_u,vd_v\}$ is separable. Thus $w \notin V(X)$ and so $V(X) \neq V(F)$.

    Suppose that $w = v$. Then $d_w=\hat{d}_v$, otherwise $\{ud_u,vd_v\}$ is not separable since $u$ and $u'$ would be connected in $G'$ via $v$. Moreover, $\{u'-,v'\hat{d}_v\}=\{u'-,w'd_w\}$ is not an edge of $X$ since otherwise $u',v' \in V(X)$, contradicting the separability of $\{ud_u,vd_v\}$. Thus $\{u-,wd_w\} \notin E(X)$, and so $E(X) \neq E(G)$.
\end{proof}

\begin{prop}[Dangling blocks]
\label{prop:mixed-dangling-blocks}
    Let $G$ be a bidirected graph, let $H$ be a block of $G$, and let $u,v \in V(H)$ be vertices. If $u$ or $v$ has dangling blocks with respect to $H$ then $\{ud_u,vd_v\}$ is not separable for any $d_u,d_v \in \signs$.
\end{prop}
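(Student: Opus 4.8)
The plan is to argue by contradiction. By symmetry I may assume that $u$ (rather than $v$) has a dangling block $H'$ with respect to $H$, and I suppose that $\{ud_u, vd_v\}$ is separable with snarl component $X$. I would then exhibit, after splitting both incidences, a walk connecting the split vertex $u$ to its copy $u'$ entirely through $H'$; since separability demands $u \in V(X)$ but $u' \notin V(X)$, this contradiction proves the claim.

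First I would pin down the structure of $H'$. Because there are no parallel edges, a bridge block carries a single incidence at $u$ and so cannot have incidences of opposite signs there; hence the dangling block $H'$ must be $2$-connected. This yields the two facts I will use: $H'$ contains two distinct edges incident at $u$, say $e_+ = \{u+, a d_a\}$ and $e_- = \{u-, b d_b\}$ with $a, b \neq u$ (no self-loops), and $H' - u$ is connected (which also covers the degenerate case $a = b$, where $H' - u$ is a single vertex). Next I would analyse the split of $ud_u$: irrespective of whether $d_u = +$ or $d_u = -$, exactly one of $e_+, e_-$ stays at $u$ while the other is relocated to the new vertex $u'$, so $u$ becomes adjacent to one of $\{a,b\}$ and $u'$ to the other. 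Joining $a$ to $b$ by a path inside the connected graph $H' - u$ and appending $e_+$ and $e_-$ then produces a $u$-$u'$ walk lying wholly inside $H'$ (together with $u'$).

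The step I expect to be the crux is showing that this $u$-$u'$ walk survives the second split. Here I would invoke the fact, recorded in the preliminaries, that any two vertices lie in at most one common block: since $H$ already contains both $u$ and $v$ and $H' \neq H$ contains $u$, it follows that $v \notin V(H')$. Consequently no edge of $H'$ is incident at $v$, so splitting $vd_v$ modifies neither $H'$ nor the constructed walk. Therefore $u$ and $u'$ remain in the same component after both splits, contradicting separability; the case in which $v$ has the dangling block is identical after swapping the roles of $u$ and $v$, which completes the plan.
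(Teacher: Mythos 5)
Your proof is correct and follows essentially the same route as the paper's: split $ud_u$, build a $u$-$u'$ connection through the dangling block $H'$, observe that $v \notin V(H')$ because two vertices lie in at most one common block, and conclude that splitting $vd_v$ cannot break this connection. You are in fact slightly more careful than the paper on one point, explicitly routing the $a$-$b$ path inside $H' - u$ (and covering the two-vertex case) rather than just citing connectivity of $H'$.
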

\begin{proof}
    \sloppypar
    Without loss of generality suppose that $u$ has a dangling block $H'$ with respect to $H$. So $H'$ has edges $\{u+,xd_x\}$ and $\{u-,yd_y\}$. Notice that $u$ is a cutvertex of $G$, since otherwise there is at most one block containing $u$.
    Splitting $ud_u$ leaves $u$ and $u'$ connected by taking the edge $\{u+,xd_x\}$ followed by a $x$-$y$ path (which exists since $H'$ is block and thus is connected) followed by the edge $\{u-,yd_y\}$. Since no two blocks can contain the same two vertices, $v$ is not in $H'$. Splitting $vd_v$ does no affect the path previously constructed, and so $u$ and $u'$ remain connected in the graph resulting from splitting $ud_u$ and $vd_v$, and therefore $\{ud_u,vd_v\}$ is not separable.
\end{proof}

\begin{prop}
\label{prop:mixed-components-for-snarls}
    Let $G$ be a bidirected graph and let $ud_u,vd_v$ be incidences of $G$. If $\{ud_u,vd_v\}$ is a snarl and $|\mathcal{C}^\pm_G(u)| = 1$ or $|\mathcal{C}^\pm_G(v)| = 1$, then there is a $u$-$v$ path that starts with a $ud_u$ incidence and ends with a $vd_v$ incidence and a $u$-$v$ path that starts with a $u\hat{d}_u$ incidence and ends with a $v\hat{d}_v$ incidence.
\end{prop}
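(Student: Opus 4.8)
The plan is to establish the two paths separately: the one whose incidences are $d_u,d_v$ is immediate from connectivity of the snarl component, and the one whose incidences are $\hat d_u,\hat d_v$ is the real content, which I would reduce to a connectivity statement. Write $G'$ for the graph obtained by splitting both $ud_u$ and $vd_v$, and let $X$ be the snarl component, so $u,v\in V(X)$ while $u',v'\notin V(X)$. In $X$ the vertex $u$ retains only its $d_u$-incidences and $v$ only its $d_v$-incidences; since $X$ is connected, any $u$-$v$ path inside $X$ begins with a $ud_u$ incidence and ends with a $vd_v$ incidence, and reversing the split yields the first required path. For the second, note that a $u'$-$v'$ path in $G'$ never visits $u$ or $v$ (these are distinct vertices lying in $X$), so reversing the split turns it into a genuine $u$-$v$ path starting with $u\hat d_u$ and ending with $v\hat d_v$. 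Hence it suffices to prove that $u'$ and $v'$ lie in the same component of $G'$.

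Assume without loss of generality that $|\mathcal{C}^\pm_G(u)|=1$ (the case of $v$ is symmetric), and let $D$ be the unique component of $G-u$ carrying incidences of both signs at $u$. The first key step is to show $v\in V(D)$. If instead $v\notin V(D)$, then $v$ has no neighbour in $D$ (as $D$ is a component of $G-u$), so splitting $vd_v$ leaves $D$ and its edges to $u$ untouched; since $D$ is connected and meets $u$ through a $d_u$-edge and $u'$ through a $\hat d_u$-edge, the vertices $u$ and $u'$ would stay connected in $G'$ via $D$, contradicting $u'\notin V(X)$. This is exactly where the hypothesis $|\mathcal{C}^\pm_G(u)|=1$ is used: a single two-sided component forces $v$ into it.

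Next I would restrict to $\hat G := G[V(D)\cup\{u\}]$, which is connected and contains both $u$ and $v$. Because $D$ is a component of $G-u$, any $u$-$v$ path not revisiting $u$ stays inside $\hat G$; in particular the inside path found above lies in the split of $\hat G$, so $u$ and $v$ remain connected there, while $u$ and $u'$ stay separated (inherited from $G'$). Splitting $ud_u$ alone keeps $\hat G$ connected through $D$ (which meets $u$ with both signs); call this $\hat G_u$. When $vd_v$ is additionally split, let $C_1,\dots,C_m$ be the components of $\hat G_u - v$. Each $C_i$ attaches to $v$, via a $d_v$-edge (joining $C_i$ to $v$) or a $\hat d_v$-edge (joining $C_i$ to $v'$); no $C_i$ can use both, for otherwise $v$ and $v'$ would lie in one component, dragging $u$ and $u'$ into it and contradicting their separation. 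Thus every vertex lies in $P_v\ni v$ or $P_{v'}\ni v'$, giving exactly two components. Since $u$ shares its component with $v$ we get $u\in P_v$, and hence $u'\in P_{v'}$, so $u'$ and $v'$ are connected; this connection persists in $G'$, producing the second path.

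I expect the middle reduction to be the main obstacle: proving that $v$ lands inside the single mixed component $D$, and that after the two splits this component cleanly separates into a $v$-side carrying $u$ and a $v'$-side carrying $u'$. The bookkeeping of which sign each $C_i$ uses to attach to $v$, together with checking that the extracted $u'$-$v'$ path in $\hat G$-split is still a valid path in $G'$ (and therefore a simple $u$-$v$ path in $G$), are the points that will need the most care; everything else follows from separability of the snarl and the connectivity of $D$.
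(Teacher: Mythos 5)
Your proof is correct, but it takes a genuinely different route from the paper's. The paper works by direct path surgery inside the mixed component: it fixes edges $\{u+,ad_a\}$ and $\{u-,bd_b\}$ from $u$ into $C_u\in\mathcal{C}^\pm_G(u)$, observes that separability forces every $a$-$b$ path of $G-u$ to pass through $v$, and then cuts such a path at $v$; the two halves, prefixed with these two edges at $u$, are simultaneously the two required paths, with their terminal signs at $v$ again dictated by separability. You instead get the $(d_u,d_v)$-path for free from connectivity of the snarl component (correctly noting that this half needs only separability, not the hypothesis on $\mathcal{C}^\pm_G(u)$), and you reduce the $(\hat{d}_u,\hat{d}_v)$-path to showing that $u'$ and $v'$ share a component of the split graph $G'$, which you prove structurally: $v$ must lie in the unique mixed component $D$, and after splitting both incidences, the graph induced on $V(D)\cup\{u\}$ decomposes into exactly two components, one containing $\{u,v\}$ and the other containing $\{u',v'\}$, because each component of $\hat{G}_u-v$ attaches to $v$ with a single sign. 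Both arguments pivot on the same two ingredients (the mixed component and separability), but yours buys a slightly stronger structural picture --- a clean two-sided decomposition of $D\cup\{u\}$ under the split, in effect a local description of the snarl component --- at the cost of length, while the paper's buys brevity and an explicit simultaneous construction of both paths. One small point of wording: to rule out a $C_i$ attaching to both $v$ and $v'$, the contradiction is immediate, since a $v$-$v'$ connection in a subgraph of $G'$ already violates separability ($v\in X$, $v'\notin X$); your detour via ``dragging $u$ and $u'$ together'' also works, but it tacitly uses that $u'$ has a neighbour in some $C_j$, which in turn attaches to $v$ or $v'$.
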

\begin{proof}
    Without loss of generality suppose that $|\mathcal{C}^\pm_G(u)|=1$. Let $C_u \in \mathcal{C}^\pm_G(u)$. Then $C_u$ has edges $\{u+, ad_a\}$ and $\{u-, bd_b\}$ for some incidences $ad_a,bd_b$ in $C_u$. 
    Notice that every $a$-$b$ path in $G-u$ contains $v$ since $a$ and $b$ remain in the same component after removing $u$ from the graph and $u$ and $u'$ are separated after splitting $ud_u$ and $vd_v$.
    
    However, this implies that $v$ is reachable from $u$ in $G$ by both a path $p_1$ that starts with a $u+$ incidence and a path $p_2$ that starts with a $u-$ incidence. Further, since $\{ud_u,vd_v\}$ is a snarl, by the separability property we have that $p_1$ and $p_2$ are vertex-disjoint (except for $u$ and $v$), the path $p_1$ ends with an incidence $vd_v$ if $d_u = +$ and $v\hat{d}_v$ otherwise, and the path  $p_2$ ends with an incidence $vd_v$ if $d_u = -$ and $v\hat{d}_v$ otherwise. The claim follows.
\end{proof}

With these results and the equivalence given by \Cref{lem:snarls-G-F}, we can now give a precise proof of \Cref{thm:where-are-snarls-after-cutting}.

\snarlsaftercutting*
\begin{proof}
    We prove each item separately with respect to $F$ (which is correct due to \Cref{lem:snarls-G-F}).
    \begin{enumerate}
        \item Since $u$ is a non-tip in $F$, by \Cref{prop:mixed-signs-inside-block}, $F$ has a block $H$ containing edges $\{u+, vd_v\}$ and $\{u-, wd_w\}$.
        We show that $u$ and $u'$ remain connected after splitting $ud_u$ and $xd_x$ in $F$, which violates the separability of $\{ud_u,xd_x\}$ and implies that $\{ud_u,xd_x\}$ is not a snarl.

        Suppose without loss of generality that $d_x=+$.
        Splitting $x+$ in $F$ results in a graph $G'$ where $x$ retains all the (positive) incidences of $F$ in $x$ and $x'$ is an isolated vertex since $x$ is a source in $F$.
        Therefore splitting $ud_u$ in $G'$ results in a graph where $u$ and $u'$ are connected since $u$ can be left with the edge $\{u+,vd_v\}$, $u'$ can be entered through the edge $\{wd_w,u'-\}$, and in between these edges we can insert a $v$-$w$ path using vertices of $G$ that avoids $u$ since $H$ is a block (if $H$ is a multi-bridge then $v=w$ and the path is trivial).
    
        \item Let $u,v$ be distinct tips in $F$ with signs $d_u,d_v \in \signs$. We show that $\{ud_u,vd_v\}$ is a snarl of $F$ by showing separability and minimality.

        Let $F'$ denote the graph resulting from splitting $ud_u$ and $vd_v$.
        Since $u$ and $v$ are tips in $F$ with signs $d_u$ and $d_v$, respectively, $F'$ consists of three components, which are $F$ and the two isolated vertices $u'$ and $v'$. Therefore $u$ and $v$ are in the same component because $F$ is connected, and $u'$ and $v'$ are not in $F$. Hence, $\{ud_u,vd_v\}$ is separable.

        To see minimality, suppose for a contradiction that there is an incidence $zd_z$ in $F$ with $z$ distinct from $u$ and $v$ such that $\{ud_u,zd_z\}$ and $\{vd_v,z\hat{d}_z\}$ are separable. We conclude that $z$ is not a tip in $F$ by applying \Cref{prop:tips-dont-violate-minimality} to $vd_v$ and $z$, since otherwise $\{vd_v,z\hat{d}_z\}$ is not separable. But since $u$ and $v$ are tips by assumption and we have already proven in (1) that a tip and a non-tip are not separable, the claim then follows.
        

        \item Let $\{ud_u,vd_v\}$ be a snarl of $F$. We first show that there is a block $H$ of $F$ containing both $u$ and $v$. Suppose for a contradiction no block of $F$ contains both $u$ and $v$. Because $u$ is a non-tip in $F$ we can apply \Cref{prop:mixed-signs-inside-block} to conclude that there are edges $\{u+,vd_v\}$ and $\{u-,wd_w\}$ within a block of $F$. Therefore there is a $v$-$w$ path $p$ in $F$ avoiding $u$ (since a block is 2-connected, or a multi-bridge in which case $v=w$ and $p$ is trivial).
        Thus, splitting $ud_u$ leaves $v$ and $w$ connected via $p$, and hence $u$ and $u'$ remain connected (similarly to the path constructed in the proof of item 1). Since $v$ is in a different block than $u$ by assumption, vertex $v$ is not in $p$. Therefore splitting $vd_v$ does not separate $u$ from $u'$, contradicting the separability of $\{ud_u,vd_v\}$.

        To conclude that no other block of $F$ containing $u$ has incidences of opposite signs in $u$, and the same for $v$, (i.e., $u$ and $v$ have no dangling blocks with respect to $H$), apply \Cref{prop:mixed-dangling-blocks}.
    \end{enumerate}
\end{proof}

\begin{prop}
\label{prop:snarl-edge-case}
    Let $G$ be a bidirected graph, let $F$ be a sign-cut graph of $G$, let $H$ be a block of $F$ containing the vertices $u$ and $v$, and let $\{ud_u,vd_v\}$ be a snarl of $F$. If $u$ and $v$ are non-tips in $H$, $\{u,v\}$ is an edge of $H$, and $\{u,v\}$ is not a separation pair of $H$, then $e=\{ud_u,vd_v\} \in E(H)$ and all incidences at $u$ and $v$ except for those in $e$ have signs $\hat{d_u},\hat{d}_v$, respectively, or $e=\{u\hat{d}_u,v\hat{d}_v\} \in E(H)$ and all incidences at $u$ and $v$ except for those in $e$ have signs $d_u,d_v$, respectively.
\end{prop}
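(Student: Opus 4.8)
The plan is to read the whole statement off a single connectivity engine supplied by the hypothesis that $\{u,v\}$ is \emph{not} a separation pair of the block $H$. That hypothesis means the induced subgraph $R := H - \{u,v\}$ stays connected, and (when $H$ is $2$-connected, i.e.\ $|V(H)| \ge 3$) that both $u$ and $v$ send at least one edge of $H$ into $R$, since otherwise one of them would be a cutvertex of $H$, contradicting that $H$ is a block. First I would record how the split interacts with this picture: by \Cref{def:snarl}, splitting $ud_u$ keeps every $d_u$-incidence at $u$ and relocates every $\hat{d}_u$-incidence to the new vertex $u'$ (symmetrically for $vd_v$ and $v'$), and separability says the snarl component $X$ contains $u,v$ but neither $u'$ nor $v'$.

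Next I would prove that the edges of $H$ from $u$ into $R$ are sign-uniform at $u$: if $u$ had a $d_u$-edge to some $w \in R$ and a $\hat{d}_u$-edge to some $w' \in R$, then after splitting $ud_u$ the first keeps $w$ attached to $u$ while the second attaches $w'$ to $u'$; since $R$ is connected there is a $w$-$w'$ path inside $R$, which avoids $u$ and $v$ and therefore survives both splits, yielding a $u$-$u'$ path and contradicting separability. Let $s_u \in \signs$ denote this common sign, and define $s_v$ analogously for $v$.

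Then I would pin down $(s_u,s_v)$. The two ``mixed'' choices are impossible: if $s_u = d_u$ and $s_v = \hat{d}_v$, the connected $R$ attaches to $u$ on one side and to $v'$ on the other, forcing a $u$-$v'$ path and contradicting $v' \notin X$; the case $s_u = \hat{d}_u$, $s_v = d_v$ is symmetric. This leaves $(s_u,s_v) = (\hat{d}_u, \hat{d}_v)$ or $(d_u, d_v)$. In either case I would discard the two ``crossing'' edges between $u$ and $v$, namely $\{ud_u, v\hat{d}_v\}$ and $\{u\hat{d}_u, vd_v\}$, each of which would after the splits connect $X$ to $u'$ or $v'$; as there are no parallel edges, the only edges that may join $u$ and $v$ are $\{ud_u,vd_v\}$ and $\{u\hat{d}_u, v\hat{d}_v\}$. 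Finally, non-tipness of $u$ in $H$ means $u$ carries both signs among its $H$-incidences: in the case $(\hat{d}_u,\hat{d}_v)$ its $R$-edges carry $\hat{d}_u$, so its forced $d_u$-incidence must be $\{ud_u,vd_v\}$, giving $e = \{ud_u,vd_v\} \in E(H)$ with all remaining $H$-incidences at $u$ (and, by the mirror argument, at $v$) equal to $\hat{d}_u$ (resp.\ $\hat{d}_v$) — this is disjunct (i). The case $(d_u,d_v)$ is the mirror image and produces $e = \{u\hat{d}_u, v\hat{d}_v\}$, which is disjunct (ii).

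\textbf{Main obstacle.} Two points require care. The benign one is the degenerate case $R = \emptyset$ (i.e.\ $V(H) = \{u,v\}$): here I would argue directly, enumerating the at most four non-parallel edges between $u$ and $v$, discarding the two crossing ones by separability, and using non-tipness to force both remaining ones to appear, which satisfies both disjuncts. The delicate point is that the split is a \emph{global} operation in $F$, so incidences of $u$ or $v$ that leave $H$ could in principle spoil the sign pattern. Because $\{ud_u,vd_v\}$ is a snarl, \Cref{prop:mixed-dangling-blocks} guarantees that neither $u$ nor $v$ has a dangling block with respect to $H$, so every other block at $u$ (resp.\ $v$) is monochromatic; if one reads ``all incidences'' as those of $H$, this already closes the argument, but for the global reading one must additionally exclude a monochromatic block of the ``inside'' sign at $u$, which would enlarge $X$ and, via minimality (\Cref{def:snarl}(b), cf.\ \Cref{prop:mixed-components-for-snarls}), contradict that $\{ud_u,vd_v\}$ is a snarl. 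Reconciling this minimality bookkeeping with the otherwise purely local block argument is the step I expect to be the most error-prone.
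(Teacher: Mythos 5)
Your core argument is correct and is essentially the paper's own proof in a cleaner packaging: both proofs discard the two crossing edges $\{ud_u,v\hat{d}_v\}$, $\{u\hat{d}_u,vd_v\}$ via separability, and both derive the sign pattern from the fact that two edges of opposite sign at $u$ going into the connected graph $R=H-\{u,v\}$ would yield a $u$-$u'$ path after splitting (the paper phrases this as the $a$-$b$ path argument). Your version is in fact more careful than the paper's in two spots: you handle the degenerate case $V(H)=\{u,v\}$ explicitly, and by separating $R$-edges from $u$-$v$ edges you avoid the paper's implicit assumption that the chosen endpoint $a$ of the $\hat{d}_u$-edge is distinct from $v$.

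On your flagged ``delicate point'': the reading that resolves it is the local one --- the conclusion concerns the incidences of $H$, which is exactly what the paper's proof establishes (it only ever considers edges of $H$). Your proposed repair for the global reading, namely that a monochromatic block of the inside sign at $u$ would enlarge $X$ and contradict minimality, is actually false: take $H$ the triangle with edges $\{u+,v+\}$, $\{u-,w+\}$, $\{v-,w-\}$ and attach a pendant block $\{u+,x+\}$. Then $\{u+,v+\}$ is still a snarl of this sign-cut graph (the only candidate $z=x$ violates neither split condition, since $\{u+,x+\}$ is not separable), yet $u$ carries an incidence of sign $d_u=+$ outside $e$ in $F$, so the conclusion fails under the global reading. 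Hence no minimality bookkeeping can rescue that reading; it is fortunate that your main argument, like the paper's, never needs it.
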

\begin{proof}
    First notice that $\{ud_u, v\hat{d}_v\}, \{u\hat{d}_u, vd_v\} \not\in E(H)$, since otherwise there is a path from $u$ to $v'$ or from $u'$ to $v$ after splitting $ud_u$ and $vd_v$. We argue on the two cases described in the statement.
    
    Suppose that $e=\{ud_u,vd_v\} \in E(H)$ and pick an edge $\{u\hat{d}_u, ad_a\}$ of $H$ (which exists since $u$ is a non-tip in $H$).
    By \Cref{prop:mixed-dangling-blocks} it follows that $u$ and $v$ have no dangling blocks with respect to $H$, and since $u$ and $v$ are both non-tips in $H$ it follows that $H$ is the only block of $F$ containing incidences of opposite signs at $u$ and $v$. Suppose for a contradiction that $H$ has an edge $\{ud_u, bd_b\} \neq e$. Since $\{u,v\}$ is not a separation pair of $H$, there exists a path from $a$ to $b$ in $H$ that does not contain $u$ or $v$. This path remains if we split $ud_u$ and $vd_v$, thus violating separability between $u$ and $u'$, a contradiction; symmetrically, the same occurs if $H$ has an edge $\{v\hat{d}_v, cd_c\}$.
    The case when $e=\{u\hat{d}_u,v\hat{d}_v\} \in E(H)$ follows with an identical argument.
\end{proof}

The next result is a key ingredient for our algorithm.

\snarlssplitpairs*
\begin{proof}
    
    If $\{u,v\}$ is an edge of $U(G)$ we are done, so assume that $\{u,v\} \notin E(U(G))$.
    
    Since $\{ud_u,vd_v\}$ is a snarl and $u$ is a non-tip, \Cref{thm:where-are-snarls-after-cutting} implies that $u$ and $v$ are both non-tips belonging to the same block $H$ of a sign-cut graph $F$ of $G$, and thus of $G$ too since the blocks of $G$ coincide with the blocks of its sign-cut graphs. Moreover, $H$ is the only block containing incidences of opposite signs at $u$ and at $v$ by (3) of \Cref{thm:where-are-snarls-after-cutting}, which implies that $|\mathcal{C}^\pm_F(u)| = |\mathcal{C}^\pm_F(v)| = 1$.
    Thus, \Cref{prop:mixed-components-for-snarls} implies that there are vertex-disjoint paths $p_1$ and $p_2$ between $u$ and $v$ such that $p_1$ starts and ends with incidences $ud_u$ and $vd_v$, respectively, and $p_2$ starts and ends with incidences $u\hat{d}_u$ and $v\hat{d}_v$, respectively. Since $\{u,v\}$ is not an edge, both paths must contain an internal vertex distinct from $u$ and $v$ which moreover is contained in $H$. Denote these by $x$ and $y$ for $p_1$ and $p_2$, respectively. By the definition of snarl component, we have $x \in V(X)$ and $y \in V(H) \setminus V(X)$. Separability further implies that all $x$-$y$ paths go through $u$ or $v$, since otherwise there would be a path from $u$ to $u'$ in the graph obtained by splitting the incidences $ud_u$ and $vd_v$. In other words, $\{u,v\}$ is a separation pair of $H$.
\end{proof}

The case when $\{u,v\}$ is an edge of $U(G)$ is slightly overlooked in the proof of \Cref{lem:snarls-G-F}. However, the goal of this result is to conclude that the vertices of a snarl may form a separation pair. The edge-case (which is essentially described in \Cref{prop:snarl-edge-case}) is easy to handle algorithmically, although its correctness requires some effort in proving.

\subsection{Finding all snarls}
\label{sec:finding-snarls-inside-blocks}

In this section we develop an algorithm to find snarls whose vertices form a separation pair of a block of $G$, as described in \Cref{thm:snarls-split-pairs}. Since snarls are only defined by their separability and minimality, it is not required to maintain any partial information during the algorithm.
We start by giving a useful result for showing minimality of separable pairs of incidences. Then we show results giving (mostly) sufficient conditions for a snarl to exist within the different nodes of the SPQR tree. Finally we give the correctness and running-time proof of our algorithm.

\begin{prop}\label{prop:snarls-disjoint-paths}
    Let $G$ be a bidirected graph and let $ud_u,vd_v$ be incidences of $G$ so that $\{ud_u,vd_v\}$ is separable with component $X$. If $X$ has two internally vertex-disjoint $u$-$v$ paths then $\{ud_u,vd_v\}$ is a snarl.
\end{prop}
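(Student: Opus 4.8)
The statement only asks us to upgrade a separable pair into a snarl, and separability is part of the hypothesis, so the entire task is to verify the minimality condition (b) of \Cref{def:snarl}. The plan is to argue the contrapositive: assuming minimality fails, I will show that $X$ cannot contain two internally vertex-disjoint $u$-$v$ paths. Concretely, suppose there are $z \in V(X)\setminus\{u,v\}$ and a sign $d_z$ with both $\{ud_u,zd_z\}$ and $\{z\hat d_z,vd_v\}$ separable. Since two internally vertex-disjoint $u$-$v$ paths meet only at $u$ and $v$, and $z\neq u,v$, at least one of them, call it $Q$, avoids $z$ entirely. I will derive a contradiction from the mere existence of such a $Q$; this forces every $u$-$v$ path of $X$ to pass through $z$, which is incompatible with two internally vertex-disjoint ones.

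The main tool is a single graph $G^*$ obtained from $G$ by splitting all three incidences $ud_u$, $vd_v$, $zd_z$ simultaneously, producing copies $u'$, $v'$, $z'$; thus $z$ retains its $d_z$-incidences while $z'$ carries its $\hat d_z$-incidences, and similarly for $u,v$. The key structural remark is that undoing one split, i.e.\ merging a vertex with its copy (``un-splitting''), only increases connectivity. Hence each doubly-split graph appearing in the hypotheses is exactly $G^*$ with one vertex un-split, and so has at least the connectivity of $G^*$. I would also record the elementary un-splitting observation: for a vertex $p\notin\{z,z'\}$, merging $z$ with $z'$ connects $p$ to the merged node if and only if $p$ already reaches $z$ or $z'$ in $G^*$ (and likewise for $u,u'$ and $v,v'$).

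With this in place the steps are as follows. First, $Q$ survives in $G^*$: as $Q\subseteq X$ it leaves $u$ with sign $d_u$ and enters $v$ with sign $d_v$, and it avoids $z$, so it is unaffected by all three splits; hence $u$ and $v$ are connected in $G^*$. Second, translating the excluded copies of the two separability hypotheses into $G^*$, separability of $\{ud_u,zd_z\}$ yields $u\not\sim z'$ and $z\not\sim z'$ in $G^*$, while separability of $\{z\hat d_z,vd_v\}$ yields $v\not\sim z$ in $G^*$. Third, since $z,v\in X$ both $u$ and $v$ reach the merged node in the graph obtained from $G^*$ by un-splitting $z$; combining this with $u\not\sim z'$ gives $u\sim z$, and with $v\not\sim z$ gives $v\sim z'$, both in $G^*$. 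Chaining $z\sim u\sim v\sim z'$ then produces a $z$–$z'$ connection in $G^*$, contradicting $z\not\sim z'$ from the second step. This contradiction rules out $Q$ and completes the argument.

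The delicate part will be the sign bookkeeping in the second and third steps: the two separability hypotheses use opposite signs at $z$ (namely $zd_z$ versus $z\hat d_z$), so one must carefully track which copy, $z$ or $z'$, each excluded endpoint corresponds to inside the common graph $G^*$. Getting these identifications right is precisely what forces the connection to land on $z$ from the $u$-side and on $z'$ from the $v$-side, manufacturing the forbidden $z$–$z'$ path. A welcome byproduct is that this uniform $G^*$ argument also subsumes the a priori awkward situation in which $z$ lies on neither disjoint path, a case that a direct path-surgery argument would have to single out and treat separately.
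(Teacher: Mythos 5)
Your proof is correct, and it takes a genuinely different route from the paper's. The paper argues directly on the two internally disjoint paths: assuming a minimality-violating incidence $wd_w$, it shows by rerouting through whichever of $p_1,p_2$ avoids $w$ that the two incidences of $w$ on any $u$-$v$ path must carry equal signs, then upgrades this to the claim that every $w$-$u$ and $w$-$v$ path starts with the same sign at $w$, which contradicts separability of $\{ud_u,wd_w\}$ or $\{w\hat{d}_w,vd_v\}$. You instead translate all three separability hypotheses into (non-)connectivity statements about a single auxiliary graph, the triple-split $G^*$, via two monotonicity facts: un-splitting a vertex can only add connectivity, and a path to a merged vertex lifts to a path ending at one of its two copies; the contradiction $z\sim u\sim v\sim z'$ versus $z\not\sim z'$ is then pure transitivity of connectivity. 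Your bookkeeping is right at the one delicate spot: because the second hypothesis splits $z\hat{d}_z$ rather than $zd_z$, its retained vertex is $G^*$'s $z'$ and its excluded copy is $G^*$'s $z$, so the two hypotheses yield exactly $u\not\sim z'$, $z\not\sim z'$ and $v\not\sim z$ in $G^*$, from which $u\sim z$ and $v\sim z'$ follow using $z\in V(X)$. Compared with the paper, your argument is more mechanical and uniform: it invokes the disjoint-paths hypothesis exactly once (to obtain a $u$-$v$ path in $X$ avoiding $z$), and it sidesteps the subtleties of the paper's path surgery --- for instance that concatenating paths produces walks rather than paths, and that $w$ may lie on neither $p_1$ nor $p_2$, a case the paper covers only implicitly by quantifying over all $u$-$v$ paths through $w$. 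What the paper's sign-chasing buys in exchange is local structural information (single-signedness at $w$ along paths) in the same style as its other propositions; as a self-contained proof of this particular statement, your version is, if anything, tighter.
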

\begin{proof}
    Let $p_1$ and $p_2$ be two internally-vertex disjoint $u$-$v$ paths in $X$. Because they are in $X$, they start with a $ud_u$ incidence and end with a $vd_v$ incidence.
    
    Suppose for a contradiction that there is a vertex $w \in X \setminus \{u, v\}$ such that $\{ud_u, wd_w\}$ and $\{vd_v, w\hat{d}_w\}$ are separable with $d_w\in\signs$. Assume without loss of generality that $w$ is not in $p_2$. The two incidences $wd_1$ and $wd_2$ in $p_1$ must have identical signs $d_1, d_2 \in \signs$, because otherwise there would be a path from $w$ to $w'$ after splitting $wd_1$ and either $ud_u$ or $vd_v$ by splitting $p_1$ on $w$ into a $u-w$ path and a $w'$-$v$ path and merging the endpoints of the paths $u$ and $v$ with the $p_2$ path.
    
    For the remainder of the proof, we can thus assume that no $u$-$v$ path containing $w$ has both incidences $w+$ and $w-$. It cannot be the case that there are paths with two $w+$ incidences and paths with two $w-$ incidences, because then there is also a path with a single $w+$ and a single $w-$ incidences. Without loss of generality, suppose then both incidences on any $u$-$v$ path containing $w$ are $w+$. This also implies all $w$-$u$ paths and $w$-$v$ paths start with a $w+$ incidence. Consequently, neither $\{w-, ud_u\}$ nor $\{w-, vd_v\}$ is separable, contradicting the earlier assumptions.
\end{proof}

The technique to find snarls within S-nodes is similar to the technique used to define sign-cut graphs. Let $\mu$ be an S-node with edges vertices $v_1,\dots,v_k$ and edges $e_1,\dots,e_k$ $(k \geq 3)$ such that each vertex $v_i$ is an endpoint of exactly two consecutive edges $e_i$ and $e_{(i+1) \mod k}$ of $\skel(\mu)$ for every $\iink$. Let us say that $v$ is \emph{good} if the incidences in $\expansion(e_i)$ into $v$ all have sign $d\in \{+,-\}$, the incidences in $\expansion(e_{(i+1) \mod k})$ into $v$ all have sign $\hat{d}$, and there are no dangling blocks with respect to $H$ at $u$ and $v$.

\begin{algorithm}[ht]
\caption{$\mathsf{FindSnarlsInSnodes}(F,H,T)$\label{alg:find-snarls-in-S}}
\KwIn{Sign-cut graph $F$ of a bidirected graph, maximal 2-connected subgraph $H$ of $G$, SPQR tree $T$ of $H$}
\For{each S-node $\mu$ of $T$}{
    $v_1,\dots,v_k \gets$ ordered sequence of the vertices of $\skel(\mu)$\;
    $e_1,\dots,e_k \gets$ ordered sequence of the edges of $\skel(\mu)$ such that $v_i$ is an endpoint of $e_{(i+1 \bmod k)}$ and $e_i$\;
    $W \gets [\;]$\; 
    \For{$i \in [1,k]$}{
        \If{$v_i$ is a tip in $F$ or $\mathsf{HasDangling}(F,H,v_i)$}{
            \textbf{continue}\;
        }
        $L, R \gets \expansion(e_i), \expansion(e_{(i+1 \bmod k)})$\;
        \If{$(N^+_H(v_i) \subseteq V(L)$ or $N^+_H(v_i) \subseteq V(R))$ and $(N^-_H(v_i) \subseteq V(L)$ or $N^-_H(v_i) \subseteq V(R))$}{
            \tcp{$v_i$ is good}
            $d \gets +$ if $(|N^+_H(v_i)|>0)$ else $-$\;
            $W.\mathsf{append}(v_i\hat{d})$\;
            $W.\mathsf{append}(v_id)$\;
        }
    }
    Report the pairs formed by the incidences in $W$ in consecutive positions starting from the second element, and lastly pair the last incidence with the first incidence of $W$\; \label{line:S-node-snarls}
}
\end{algorithm}

\snodesnarls*
\begin{proof}
    For conciseness, let $u = v_{i_j}$ and $v = v_{i_{(j + 1)\,\mathrm{mod}\,q}}$ for an arbitrary $j \in \{1, 2, \dots, q\}$.
    For separability, first note that after obtaining graph $G'$ by splitting $ud_u$ and $v\hat{d_v}$ in $G$, there remains a path from $u$ to $v$ through the edges of
    \[ E\left(\expansion(e_{i_j})\right) \cup E\left(\expansion(e_{(i_j+1)\,\mathrm{mod}\, k})\right) \cup \dots \cup E\left(\expansion\left(e_{\left(-1 + i_{(j+1)\,\mathrm{mod}\,q}\right)\,\mathrm{mod}\,k}\right)\right)\,. \]
    
    It remains to show that $u$ does not reach $u'$ and $v$ does not reach $v'$ in $G'$. All incidences of $u'$ are in $E\left(\expansion(e_{(i_j - 1)\,\mathrm{mod}\,k})\right)$ and all the incidences of $v'$ are in $E\left(\expansion(e_{i_{(j+1)\,\mathrm{mod}\,q}})\right)$. Further, there are no $u'd_u$ or $v'\hat{d}_v$ incidences because of splitting.
    Without loss of generality, assume for contradiction that there is a $u$-$u'$ path $p$ in $U(G')$. There then has to exist a last vertex $a$ on the path $p$ and its (not necessarily immediate) successor $b$ with the property that $a \in \{u', v'\}$ or
    \[ a \in V\left(\expansion(e_{i_{(j+1)\,\mathrm{mod}\,q}})\right) \cup V\left(\expansion(e_{(1 + i_{(j+1)\,\mathrm{mod}\,q})\,\mathrm{mod}\,k})\right) \cup \dots \cup V\left(\expansion\left(e_{\left(i_j - 1\right)\,\mathrm{mod}\,k}\right)\right) \]
    with $a \not\in \{u, v\}$, and
    \[ b \in V\left(\expansion(e_{i_j})\right) \cup  V\left(\expansion(e_{(i_j+1)\,\mathrm{mod}\,k})\right) \cup \dots \cup V\left(\expansion\left(e_{\left(-1 + i_{(j+1)\,\mathrm{mod}\,q}\right)\,\mathrm{mod}\,k}\right)\right)\,. \]
    By the definition of splitting, it cannot be the case that $a = v'$ and $b = v$, since there are no edges between $v$ and $v'$ and we have no mixed-sign dangling blocks.
    Similarly, it cannot be that $a = u'$ and $b = u$. On the other hand, we must have either $a = v'$ and $b = v$ or $a = u'$ and $b = u$, because we are operating within an S-node. By the contradiction, $u$ does not reach $u'$ and $v$ does not reach $v'$ in $G'$.
    
    For minimality, suppose for a contradiction that there is a vertex $w \in X \setminus \{u, v\}$ in the snarl component of $\{ud_u, v\hat{d}_v\}$ such that $\{ud_u, w\hat{d}_w\}$ and $\{wd_w, v\hat{d}_v\}$ are separable for some sign $d_w \in \signs$. Clearly, $w$ cannot have mixed dangling blocks. It also must be that $w$ is a vertex of the S-node, since a necessary condition for separability is that there cannot be a path that starts with $w+$ incidence and ends with $w-$ incidence and does not pass through $u$ or $v$. Let thus $w = v_l$. Since we assumed that $w$ is not a good vertex and there are no mixed-sign dangling blocks, either $\expansion(e_l)$ or $\expansion(e_{(l - 1)\,\mathrm{mod}\,k})$ must have both $w+$ and $w-$ incidences. Without loss of generality, assume this to be $e_{l}$. Then, the non-separability of $\{ud_u, w\hat{d}_w\}$ follows by there being a path from $w$ and $w'$ to $v$ if we split $ud_u$ and $w\hat{d}_w$.
\end{proof}

\begin{algorithm}[ht]
\caption{$\mathsf{FindSnarlsInPnodes}(F,H,T)$\label{alg:find-snarls-in-P}}
\KwIn{Sign-cut graph $F$ of a bidirected graph, maximal 2-connected subgraph $H$ of $G$, SPQR tree $T$ of $H$}
\For{each P-node $\mu$ of $T$}{
    $u,v \gets$ the vertices of $\skel(\mu)$\;
    $e_1,\dots,e_k \gets$ the edges of $\skel(\mu)$\;
    $X_1,\dots,X_k \gets \expansion(e_1),\dots,\expansion(e_k) \; (k\geq 3)$\;
    \If{$\mathsf{HasDangling}(F,H,u)$ or $\mathsf{HasDangling}(F,H,v)$ or $u$ is a tip in $F$ or $v$ is a tip in $F$}{
        \textbf{continue}\;
    }
    Build the sets $E^+_u, E^-_u, E^+_v, E^-_v$ as described in \Cref{prop:P-node-snarls}\;
    \tcp{Since $u$ and $v$ are non-tips in $F$ and have no dangling blocks, all of the above are non-empty}
    \For{$d_u,d_v \in\signs$}{
        \If{$E^{d_u}_u \neq \emptyset$, $E^{d_u}_u \cap E^{\hat{d}_u}_u = \emptyset$, $E^{d_v}_v \cap E^{\hat{d}_v}_v = \emptyset$, $E^{d_u}_u = E^{d_v}_v$}{
            \tcp{Equivalently, $E^{\hat{d}_u}_u \neq \emptyset$, $E^{d_u}_u \cap E^{\hat{d}_u}_u = \emptyset$, $E^{d_v}_v \cap E^{\hat{d}_v}_v = \emptyset$, $E^{\hat{d}_u}_u = E^{\hat{d}_v}_v$}
            \If{$|E^{d_u}_u| = 1$ and the pertaining node $\beta$ of the unique edge in $E^{d_u}_u$ is an S-node}{
                \textit{GoodVertices} $\gets $ all the good vertices in $\skel(\beta)$\;
                \If{ \textit{GoodVertices} $ \setminus \{u,v\} = \emptyset$}{
                    \tcp{\textit{GoodVertices} allows us to decide the condition above in constant time. For this reason, it is important that S-nodes are processed before P-nodes.}
                    Report $\{ud_u,vd_v\}$\; \label{line:P-node-snarls-1-1}
                }
            }
            \Else{
                Report $\{ud_u,vd_v\}$\; \label{line:P-node-snarls-1-2}
            }
            \If{$|E^{\hat{d}_u}_u| = 1$ and the pertaining node $\beta$ of the unique edge in $E^{\hat{d}_u}_u$ is an S-node}{
                \textit{GoodVertices} $\gets $ all the good vertices in $\skel(\beta)$\;
                \If{\textit{GoodVertices} $ \setminus \{u,v\} = \emptyset$}{
                    Report $\{u\hat{d}_u,v\hat{d}_v\}$\; \label{line:P-node-snarls-2-1}
                }
            }
            \Else{
                Report $\{u\hat{d}_u,v\hat{d}_v\}$\; \label{line:P-node-snarls-2-2}
            }
        }
    }
}
\end{algorithm}

\pnodesnarls*
\begin{proof}
    $(\Rightarrow)$ Suppose that $\{ud_u,vd_v\}$ is separable in $F$ with component $X$. We show that each condition described in the statement holds.

    If $u$ and $v$ are tips in $F$ with signs $d_u$ and $d_v$, respectively, then each condition clearly holds. By (1) of \Cref{thm:where-are-snarls-after-cutting} we can thus assume in the remainder of the proof that $u$ and $v$ are both non-tips in $F$.

    By \Cref{prop:mixed-dangling-blocks} it follows that $u$ and $v$ have no dangling blocks since $\{ud_u,vd_v\}$ is separable.

    If $E_u^{d_u} = \emptyset$ then $H$ has no incidences at $u$ with sign $d_u$. Since $F$ is a sign-cut graph, there is a block of $F$ containing opposite incidences at $u$, for otherwise $u$ is a non-tip and a sign-consistent vertex in $F$, contradicting the fact that $F$ is a sign-cut graph of $G$. In other words, $v$ has a dangling block with respect to $H$, but we already established that $u$ has no dangling blocks. Therefore $E_u^{d_u} \neq \emptyset$.

    If $E_u^{{d}_u} \cap E_u^{\hat{d}_u}$ contains an edge $e \in \skel(\mu)$ then $\expansion(e)$ has edges $\{ud_u,ad_a\}$ and $\{u\hat{d}_u,bd_b\}$.
    In $\expansion(e)$ there is an $a$-$b$ path avoiding $u$ and $v$ otherwise $a$ and $b$ would be in different split components of $\mu$, contradicting the fact that $a,b \in V(\expansion(e))$. Thus the graph resulting from splitting $ud_u$ and $vd_v$ has a $u$-$u'$ path. So $E_u^{{d}_u} \cap E_u^{\hat{d}_u} = \emptyset$. Symmetrically we get $E_v^{{d}_v} \cap E_v^{\hat{d}_v} = \emptyset$.
    
    If $E_u^{{d}_u} \neq E_v^{d_v}$ then, without loss of generality, there is an edge $e \in E_u^{{d}_u} \setminus E_v^{{d}_v}$. Since $E_u^{{d}_u} \cap E_u^{\hat{d}_u} = \emptyset$ and $E_v^{{d}_v} \cap E_v^{\hat{d}_v} = \emptyset$, it follows that every incidence of $\expansion(e)$ in $v$ has sign $\hat{d}_v$. So $\expansion(e)$ has a $u$-$v$ path $p$ that starts with a $ud_u$ incidence and ends with a $v\hat{d}_v$ incidence, which moreover does not contain a $vd_v$ incidence. Let $F'$ denote the graph resulting from splitting $ud_u$ and $vd_v$.
    If $E_u^{{d}_u} \cap E_v^{d_v} = \emptyset$ then $F'$ has no $u$-$v$ path, a contradiction. Otherwise $u$ and $v$ are connected in $F'$ and $p$ exists in $F'$ (where the occurrence of $v$ is replaced by $v'$), and hence there is a $v$-$v'$ path via $u$ in $F'$, a contradiction. Therefore $E_u^{{d}_u} = E_v^{d_v}$.

    $(\Leftarrow)$
    If $E^{d_u}_u \neq \emptyset$, $E^{d_u}_u \cap E^{\hat{d}_u}_u = \emptyset$, $E^{d_v}_v \cap E^{\hat{d}_v}_v = \emptyset$, $E^{d_u}_u = E^{d_v}_v$, and $u$ and $v$ have no dangling blocks with respect to $H$, then the separability of $\{ud_u,vd_v\}$ follows at once.

    
\end{proof}

\begin{algorithm}[ht]
\caption{$\mathsf{FindSnarlsBetweenRRnodes}(F,H,T)$\label{alg:find-snarls-in-RR}}
\KwIn{Sign-cut graph $F$ of a bidirected graph, maximal 2-connected subgraph $H$ of $G$, SPQR tree $T$ of $H$}
\For{$\{\nu,\mu\} \in E(T)$}{
    \If{$\nu$ and $\mu$ are R-nodes}{
        $e_{\mu} \gets$ the virtual edge in $\skel(\nu)$ pertaining to $\mu$\;
        $e_{\nu} \gets$ the virtual edge in $\skel(\mu)$ pertaining to $\nu$\;
        $u,v \gets$ the endpoints of $e_\nu,e_\mu$\;
        $X_\mu, X_\nu \gets \expansion(e_\mu), \expansion(e_\nu)$\;
        \If{$\mathsf{HasDangling}(F,H,u)$ or $\mathsf{HasDangling}(F,H,u)$ or $u$ is a tip in $F$ or $v$ is a tip in $F$}{
            \textbf{continue}\;
        }
        \For{$d_u,d_v \in\signs$}{
            \If{$N^{d_u}_H(u) \subseteq V(X_\mu)$ and $N^{\hat{d}_u}_H(u) \subseteq V(X_\nu)$ and $N^{d_v}_H(v) \subseteq V(X_\mu)$ and $N^{\hat{d}_v}_H(v) \subseteq V(X_\nu)$}{
                Report $\{ud_u, vd_v\}, \{u\hat{d}_u, v\hat{d}_v\}$\; \label{line:R-node-snarls}
            }
        }
    }
}
\end{algorithm}

\rnodesnarls*
\begin{proof}
    First we argue on the separability.
    Let $G'$ denote the graph after splitting $ud_u$ and $vd_v$.
    
    Since $u$ and $v$ have no dangling block with respect to $H$, every incidence in each block distinct from $H$ that intersects $u$ or $v$ has the same sign, so in $G'$ those blocks intersecting $u$ with $ud_u$ incidences will remain attached to $u$ and those with $u\hat{d}_u$ incidences are reattached to $u'$, and the same for $v$. Thus $u$ and $u'$ are not connected in $G'$ via any of these blocks, and the same for $v$ and $v'$.
    
    Further, notice that the tree-edge $\{\nu,\mu\}$ encodes a separation of $H$ where one side consists of $V(\expansion(e_\nu))$. Since $\expansion(e_\nu)$ is connected and all incidences at $u$ and $v$ have signs $d_u$ and $d_v$, respectively, every $u$-$v$ path in $\expansion(e_\nu)$ starts with a $ud_u$ incidence and ends with a $vd_v$ incidence, so $u$ and $v$ are connected in $G'$. Moreover, because all incidences of $\expansion(e_\mu)$ at $u$ and $v$ have signs $\hat{d}_u$ and $\hat{d}_v$, respectively, $u$ and $u'$ are not connected in $G'$ by a path containing $v$ or $v'$, and analogously $v$ and $v'$ are not connected by a path containing $u$ or $u'$. Therefore $\{ud_u,vd_v\}$ is separable.
    
    For minimality notice that $\expansion(e_\nu)$ has two internally vertex-disjoint $u$-$v$ paths since $\skel(\nu)$ is 3-connected. So we can apply \Cref{prop:snarls-disjoint-paths} and conclude that $\{ud_u,vd_v\}$ is a snarl.
\end{proof}

The next proposition is merely technical and will help on the proof of the main theorem.

\begin{prop}
\label{prop:reaches-in-expansion}
    Let $G$ be a bidirected graph, let $H$ be a maximal 2-connected subgraph of $G$, and let $T$ be the SPQR tree of $H$. Let $\mu$ be a node of $T$, let $e=\{u,v\}$ be a virtual edge of $\skel(\mu)$, and let $a \in V(H) \setminus \{u\}$ be a vertex. If $a \in V(\expansion(e))$ then $\expansion(e)$ has an $a$-$v$ path avoiding $u$.
\end{prop}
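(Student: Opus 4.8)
The plan is to reduce the statement to the single fact that $\expansion(e)$ together with the virtual edge $e$ forms a biconnected graph, after which the absence of a cutvertex does all the work. First I would recall from the recursive construction that the node $\mu'$ pertaining to the virtual edge $e=\{u,v\}$ of $\skel(\mu)$ is exactly the root of the SPQR tree of $\expansion(e)\cup e$ taken with reference edge $e$; as the construction points out, the reference edge ensures that this skeleton is biconnected, so the graph $G_e := \expansion(e)\cup\{u,v\}$ (with $\{u,v\}$ added as a possibly parallel edge) is biconnected. Equivalently, one may read this off the separation property: the tree edge incident to $\mu$ through which $e$ is virtual induces a separation of $H$ whose shared vertices are exactly $\{u,v\}$ and one of whose sides is $V(\expansion(e))$; since $H$ is biconnected, adding back $\{u,v\}$ to this side restores biconnectivity.

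Next I would invoke biconnectivity directly. Because $G_e$ is biconnected, $u$ is not a cutvertex of $G_e$, so $G_e - u$ is connected (this also covers the degenerate case where $G_e$ has only the two vertices $u,v$, in which case $a=v$ and the single-vertex path suffices). Both $a$ and $v$ survive the deletion of $u$: we have $a \neq u$ by hypothesis, and $v \neq u$ because $u,v$ are the two distinct endpoints of the edge $e$, while $v \in V(\expansion(e))$ since endpoints of $e$ always lie in its expansion. Hence there is an $a$-$v$ path $p$ in $G_e - u$.

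Finally I would observe that $p$ avoids $u$, so it uses no edge incident to $u$; in particular it does not use the added edge $\{u,v\}$, whence every edge of $p$ is a real edge of $\expansion(e)$. Thus $p$ is an $a$-$v$ path contained entirely in $\expansion(e)$ and avoiding $u$, which is exactly what is required.

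The main obstacle is the first step, namely justifying that $\expansion(e)\cup\{u,v\}$ is biconnected, as this is where the SPQR-tree machinery is genuinely used; I would state it as a direct consequence of the construction (each virtual edge's expansion, augmented by the edge itself, is precisely the biconnected graph whose SPQR tree is rooted at the pertinent node) rather than re-deriving it. Once this is in hand, the remaining two steps are immediate from the no-cutvertex property and the elementary observation that a $u$-avoiding path cannot traverse an edge incident to $u$.
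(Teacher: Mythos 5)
Your proof is correct, but it takes a genuinely different route from the paper's. The paper argues by contradiction entirely inside $H$: assuming every $a$-$v$ path in $\expansion(e)$ passes through $u$, it observes that since $a$ lies in a split component of $\{u,v\}$, any $a$-$v$ path in $H$ that avoids $u$ must first exit the expansion through $v$, and its prefix would already be a $u$-avoiding $a$-$v$ path inside $\expansion(e)$; hence $u$ would be an $a$-$v$ cutvertex of $H$, contradicting 2-connectivity of $H$. You instead reduce everything to the single structural fact that $G_e = \expansion(e) \cup \{u,v\}$ is biconnected, and then finish by deleting $u$ and discarding the added edge. What each approach buys: the paper's argument is fully self-contained, using only facts it explicitly states in its preliminaries (the separation encoded by a tree edge, and 2-connectivity of $H$); yours is more modular and reusable, but it rests on a fact the paper never states as a lemma. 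Your construction-based justification is legitimate (the recursion builds the SPQR tree of $H_i \cup e_i$, and SPQR trees are only defined for biconnected graphs), but read literally it covers only virtual edges pertaining to \emph{children}; for the reference edge pointing to the parent you need either the unrooted-uniqueness/re-rooting symmetry of SPQR trees or your separation-based argument, and the latter you assert ("adding back $\{u,v\}$ restores biconnectivity") without the short excursion-replacement proof it requires. That fact is classical and true, so this is a presentational looseness rather than an error, but in a self-contained write-up you should either prove it (replace each maximal excursion of a path through the other side of the separation by the edge $\{u,v\}$) or cite it explicitly, since it is precisely where the 2-connectivity of $H$ gets used.
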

\begin{proof}
    If $a=v$ we are done, so let us argue the case $a \neq v$. Suppose for a contradiction that every $a$-$v$ path in $\expansion(e)$ contains $u$. Since $a$ is contained in a split component with respect to $\{u,v\}$, every $a$-$v$ path in $H$ also contains $u$. So $u$ is an $a$-$v$ cutvertex with respect to $H$, contradicting the fact that $H$ is 2-connected.
\end{proof}

\begin{prop}
\label{prop:R-node-mixed}
    Let $G$ be a bidirected graph, let $H$ be a maximal 2-connected subgraph of $G$, and let $T$ be the SPQR tree of $H$. Let $\nu$ be a node of $T$ such that $\skel(\nu)$ has a virtual edge $e_\mu=\{u,v\}$ pertaining to an R-node $\mu$. If $\expansion(e_\mu)$ has opposite incidences at $u$, then $\{ud_u,vd_v\}$ is not separable.
\end{prop}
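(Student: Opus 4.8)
The plan is to prove non-separability by exhibiting, after splitting $ud_u$ and $vd_v$, a walk connecting $u$ to its split copy $u'$ that lies entirely inside $\expansion(e_\mu)$. Since $\expansion(e_\mu)\subseteq H\subseteq G$, such a walk forces $u$ and $u'$ into the same component of the split graph, so no separate component $X$ can contain $u,v$ while excluding $u',v'$; hence $\{ud_u,vd_v\}$ is not separable. I would first observe that splitting $ud_u$ (resp.\ $vd_v$) only reattaches edges incident to $u$ (resp.\ $v$), so it is enough to reason inside $\expansion(e_\mu)$, where the split behaves exactly as globally.

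First I would use the hypothesis to fix two witnessing edges $\{u+,ad_a\}$ and $\{u-,bd_b\}$ of $\expansion(e_\mu)$ (opposite incidences at $u$); assuming no self-loops, $a,b\neq u$. Each such real edge lies in the expansion $\expansion(f)$ of a unique skeleton edge $f$ of $\mu$ incident to $u$; call these $f_a=\{u,x_a\}$ and $f_b=\{u,x_b\}$. Because the witnessing edges belong to $\expansion(e_\mu)$ and not to $\expansion(e_\nu)$, we have $f_a,f_b\neq e_\nu$. Since $\mu$ is an R-node, $\skel(\mu)$ is simple, so $e_\nu$ is the \emph{only} $\{u,v\}$-edge of $\skel(\mu)$; consequently $x_a,x_b\neq v$, and more to the point $\expansion(e_\mu)$ contains no edge between $u$ and $v$ at all (such an edge would sit in some $\expansion(f)$ with $f\neq e_\nu$ yet join the two skeleton vertices $u,v$, forcing $f=e_\nu$). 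In particular $a,b\neq v$. This is the crucial point that rules out the awkward case in which a neighbour of $u$ is exactly the vertex $v$ being split.

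Next I would build a walk from $a$ to $b$ avoiding both $u$ and $v$. Applying \Cref{prop:reaches-in-expansion} inside node $\mu$ to $f_a$ (and to $f_b$) yields an $a$-$x_a$ path in $\expansion(f_a)$ and a $b$-$x_b$ path in $\expansion(f_b)$, each avoiding $u$; neither expansion contains $v$, because $f_a,f_b$ are not incident to $v$. Since $\skel(\mu)$ is $3$-connected, $\skel(\mu)-\{u,v\}$ is connected and contains an $x_a$-$x_b$ path; every edge $g$ on it has endpoints distinct from $u,v$, so $g\neq e_\nu$ and $\expansion(g)\subseteq\expansion(e_\mu)$ with $u,v\notin V(\expansion(g))$. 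Expanding these edges produces an $x_a$-$x_b$ walk in $\expansion(e_\mu)-\{u,v\}$, and concatenating the three pieces gives an $a$-to-$b$ walk inside $\expansion(e_\mu)$ touching neither $u$ nor $v$.

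Finally I would assemble the conclusion. Assume without loss of generality $d_u=+$. After splitting $ud_u$, the edge $\{u+,ad_a\}$ keeps $u$ adjacent to $a$, while $\{u-,bd_b\}$ is reattached as $\{u'-,bd_b\}$, making $u'$ adjacent to $b$; as $a,b\neq v$, these incidences are untouched by splitting $vd_v$, and the $a$-$b$ walk above survives both splits since it avoids $u$ and $v$. Composing $u\to a\to\cdots\to b\to u'$ shows $u$ reaches $u'$, contradicting separability; the case $d_u=-$ is symmetric (swap the two witnessing edges), and the choice of $d_v$ is irrelevant since the walk never meets $v$. I expect the main obstacle to be precisely this $v$-avoidance: it is easy to see that $\expansion(e_\mu)-u$ is connected, but the connecting path could pass through $v$ and be severed by the $v$-split, so the real work is using the R-node hypotheses---simplicity (no $\{u,v\}$-edge, hence $a,b\neq v$) and $3$-connectivity of $\skel(\mu)$ (rerouting around $v$)---to keep the entire walk clear of $v$.
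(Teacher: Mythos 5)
Your proof is correct and follows essentially the same route as the paper's: both fix two edges witnessing opposite incidences at $u$, use \Cref{prop:reaches-in-expansion} to route from $a$ and $b$ to skeleton vertices of $\skel(\mu)$ while avoiding $u$, use 3-connectivity of the R-node skeleton to connect these inside $\skel(\mu)-\{u,v\}$, and conclude that a $u$-$u'$ path survives the two splits. Your write-up is somewhat more explicit than the paper's on the corner cases (simplicity of the R-node skeleton ruling out $a,b=v$ and $x_a,x_b=v$), but the underlying argument is the same.
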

\begin{proof}
    Suppose that $\expansion(e_\mu)$ has edges $\{ud_u,ad_a\}$ and $\{u\hat{d}_u,bd_b\}$.
    Notice that vertex $a$ is contained in the expansion of a virtual edge $e_a$ of $\skel(\mu)$ (and analogously $e_b$ for vertex $b$), and since $u$ and $v$ are not both the endpoints of these virtual edges (as $a,b\in V(\expansion(e_\mu))$, we have that $u$ is one endpoint of $e_a$ and the other endpoint, say $a'$, is distinct from $v$, and that $u$ is one endpoint of $e_b$ and the other endpoint, say $b'$, is distinct from $v$. Now notice that $\expansion(e_a)$ has an $a$-$a'$ path avoiding $u$ by \Cref{prop:reaches-in-expansion}, and analogously $\expansion(e_b)$ has a $b$-$b'$ path avoiding $u$.
    Since these paths are contained in $e_a$ (and $e_b$) and $v \notin \expansion(e_a),\expansion(e_b)$, they also avoid $v$.
    Moreover, $\skel(\mu)$ has an $a'$-$b'$ path avoiding $\{u,v\}$ because the skeleton of R-nodes are 3-connected. Therefore, $\expansion(e_\mu)$ has an $a$-$b$ avoiding $\{u,v\}$, and thus the graph resulting from splitting $ud_u$ and $vd_v$ has a $u$-$u'$ path, and so $\{ud_u,vd_v\}$ is not separable.
\end{proof}

We can now present the correctness proof of the algorithm. In short, the algorithm computes the sign-cut graphs of $G$. Then, for each sign-cut graph its SPQR tree is built, and from there those snarls whose vertices form a separation pair are detected. 

\begin{theorem}
\label{thm:snarls-correct}
    Let $G$ be a bidirected graph. The algorithm identifying snarls (\Cref{alg:snarls}) is correct, that is, it identifies all snarls of $G$ and only its snarls.
\end{theorem}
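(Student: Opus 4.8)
The plan is to prove the two inclusions separately: \emph{soundness} (every pair the algorithm reports is indeed a snarl of $G$) and \emph{completeness} (every snarl of $G$ is reported). Throughout, I would first invoke \Cref{lem:snarls-G-F} to reduce everything to the sign-cut graphs: since the algorithm processes each sign-cut graph $F$ independently and only reports pairs of incidences lying in a common $F$, it suffices to show that, for each $F$, the routines report exactly the snarls of $F$. The tip--tip snarls are immediate: by item~(2) of \Cref{thm:where-are-snarls-after-cutting} every pair of tip incidences of $F$ is a snarl, and these are precisely the pairs the algorithm enumerates from the tip list, while item~(1) guarantees that no tip/non-tip pair is ever a snarl, so nothing spurious is introduced here.

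The heart of the argument is the non-tip/non-tip case. For completeness I would take an arbitrary snarl $\{ud_u,vd_v\}$ of $F$ with $u,v$ non-tips. By item~(3) of \Cref{thm:where-are-snarls-after-cutting} the two vertices lie in a common block $H$ with no dangling blocks, and by \Cref{thm:snarls-split-pairs} the set $\{u,v\}$ is a split pair of $H$. Now I would invoke \Cref{prop:spqr-tree-contains-split-pairs} to classify $\{u,v\}$ as exactly one of: (i) a pair of nonadjacent vertices of some S-node skeleton; (ii) the endpoints of a virtual edge, i.e.\ a tree edge $\{\nu,\mu\}$; or (iii) a real edge of $H$. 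Case~(i) is covered by $\mathsf{FindSnarlsInSnodes}$ and \Cref{prop:S-node-snarls}; case~(iii) is covered by the edge routine and \Cref{prop:snarl-edge-case}. For case~(ii), since no two S-nodes and no two P-nodes are adjacent, the unordered type of $\{\nu,\mu\}$ is one of S--P, S--R, P--R, or R--R; whenever a P-node is involved the separation pair is exactly that P-node's vertex set and the snarl is detected by $\mathsf{FindSnarlsInPnodes}$ via \Cref{prop:P-node-snarls}, the R--R case is detected by $\mathsf{FindSnarlsBetweenRRnodes}$ via \Cref{prop:R-node-snarls}, and the remaining S--R case reduces to $\mathsf{FindSnarlsInSnodes}$ because the virtual edge joins two consecutive vertices of the S-node cycle whose single intervening expansion is the snarl component, with minimality following from the $3$-connectivity of the R-node as in \Cref{prop:snarls-disjoint-paths}.

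For soundness I would run the same case split in reverse: every pair reported by $\mathsf{FindSnarlsInSnodes}$, $\mathsf{FindSnarlsBetweenRRnodes}$, or the edge routine is a snarl directly by \Cref{prop:S-node-snarls}, \Cref{prop:R-node-snarls}, and \Cref{prop:snarl-edge-case} respectively, and each reported tip--tip pair is a snarl by item~(2) of \Cref{thm:where-are-snarls-after-cutting}. The one place requiring extra care is $\mathsf{FindSnarlsInPnodes}$, since \Cref{prop:P-node-snarls} certifies only \emph{separability}: minimality must be argued separately. Here I would show that a separable P-node pair $\{ud_u,vd_v\}$ can fail minimality only through a good vertex sitting inside a side $E^{d_u}_u$ that consists of a single edge pertaining to an S-node, which is exactly the situation excluded by the \textit{GoodVertices} test in \Cref{alg:find-snarls-in-P}; in all other configurations the two sides each provide an internally vertex-disjoint $u$--$v$ path and \Cref{prop:snarls-disjoint-paths} yields minimality.

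The main obstacle is the bookkeeping that makes the case analysis simultaneously \emph{exhaustive} and \emph{non-redundant}: I must confirm that \Cref{prop:spqr-tree-contains-split-pairs} leaves no split pair unclassified, that the S--R boundary case is genuinely absorbed by the S-node routine rather than falling through the cracks, and---most delicately---that the minimality of P-node snarls is correctly decided in constant time by reusing the good-vertex information computed for S-nodes. This is precisely why the routines must be run in the order S-nodes before P-nodes, and establishing that this ordering suffices (i.e.\ that the only minimality-breaking vertex for a separable P-node pair is a good vertex of an adjacent single-edge S-node) is the crux of the proof.
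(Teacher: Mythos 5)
Your overall architecture matches the paper's proof: reduce to sign-cut graphs via \Cref{lem:snarls-G-F}, settle tip--tip pairs with items (1)--(2) of \Cref{thm:where-are-snarls-after-cutting}, funnel non-tip snarls through \Cref{thm:snarls-split-pairs} and \Cref{prop:spqr-tree-contains-split-pairs} into a per-node-type analysis, and you correctly identify the S-before-P ordering and the \textit{GoodVertices} test as the crux of P-node minimality. However, your handling of the real-edge case has two genuine gaps. First, for soundness you claim that pairs reported by the edge routine are snarls ``directly by \Cref{prop:snarl-edge-case}'', but that proposition runs in the wrong direction: its hypothesis is that $\{ud_u,vd_v\}$ \emph{is} a snarl, so it gives necessary incidence conditions and cannot certify reported pairs. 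The paper instead argues separability directly from the conditions tested in $\mathsf{FindEdgeSnarls}$, gets minimality for \Cref{line:edge1-snarls} from $V(X)=\{u,v\}$, and for \Cref{line:edge2-snarls} uses the guard that no S-node contains both $u$ and $v$ to conclude that $u,v$ lie in a P- or R-node, hence admit three internally vertex-disjoint $u$--$v$ paths, hence minimality via \Cref{prop:snarls-disjoint-paths}. Your sketch omits this argument entirely.

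Second, for completeness your claim that case (iii) ``is covered by the edge routine'' would fail if executed: $\mathsf{FindEdgeSnarls}$ deliberately reports the outer pair $\{u\hat{d}_u,v\hat{d}_v\}$ only when no S-node skeleton contains both $u$ and $v$. When an S-node does contain them and the outer pair is nonetheless a genuine snarl, the paper must show it is caught elsewhere --- by $\mathsf{FindSnarlsInSnodes}$ when $\{u,v\}$ is a real edge of that S-node's skeleton, or by $\mathsf{FindSnarlsInPnodes}$ via \Cref{obs:edge-case} when it is a virtual edge pertaining to a P-node with two real edges. This three-way interplay is exactly the ``falling through the cracks'' phenomenon you flag for the S--R case, but it occurs in the case you declared closed. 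Relatedly, your assertion that cases (i)--(iii) are mutually exclusive is false (a P-node vertex pair may simultaneously be a real edge of $H$; the paper explicitly remarks that being a separation pair does not exclude being an edge), and the correctness of the whole case analysis depends on handling these overlaps rather than assuming them away.
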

\begin{proof}

    \textbf{(Completeness.)}
    We argue that every snarl of $G$ is reported by the algorithm.

    Let $\{ud_u,vd_v\}$ be a snarl of $G$. By \Cref{lem:snarl-equivalence} there is a sign-cut graph of $G$ where $\{ud_u,vd_v\}$ is also a snarl.
    Thus, let $F$ denote the sign-cut graph where $\{ud_u,vd_v\}$ is a snarl. Clearly, $F$ is examined by the algorithm.
    By (1) of \Cref{thm:where-are-snarls-after-cutting} it follows that either $u$ and $v$ are both tips or both non-tips in $F$.
    If $u$ and $v$ are both tips then the snarl in question is encoded in the list described in Line~\ref{line:tip-tip-snarls} of \Cref{alg:snarls} (in $\mathcal{T}_i$ every two incidences form a snarl).
    Otherwise $u$ and $v$ are both non-tips in $F$. By (3) of \Cref{thm:where-are-snarls-after-cutting} it follows that $u$ and $v$ belong to the same block $H$ of $F$, and $u$ and $v$ have no dangling blocks with respect to $H$. Thus $u$ and $v$ are both non-tips only in $H$.
    Moreover, since $u$ (or $v$) is a non-tip, \Cref{thm:snarls-split-pairs} implies that the pair $\{u,v\}$ is an edge of $U(F)$ or it is a separation pair of a block of $F$. In fact, since the blocks of $F$ partition its edges and no two distinct blocks contain the same two vertices, in fact we have that $\{u,v\} \in U(H)$ or that $\{u,v\}$ is a separation pair of $H$. We argue on the two cases. Let $T$ denote the SPQR tree of $H$, let $F'$ denote the graph resulting from splitting $ud_u$ and $vd_v$, and let $X \subseteq F'$ denote the snarl component of $\{ud_u,vd_v\}$ restricted to $F'$.
    
    If $\{u,v\}$ is a separation pair of $H$ then \Cref{prop:spqr-tree-contains-split-pairs} implies that $T$ has an edge $\{\nu,\mu\}$ corresponding to the separation pair $\{u,v\}$ or $T$ has an S-node where $u$ and $v$ are nonadjacent. Therefore it is enough to analyze all the S- and P-nodes individually, and all virtual edges contained in the skeletons of the R-nodes. Importantly, we remark that the current assumptions do not exclude the possibility that $\{u,v\}$ is an edge of $H$. We discuss each of the possible cases.

    \begin{enumerate}
        \item Suppose that $\mu$ is an S-node of $T$ such that $\{u,v\}\subseteq V(\skel(\mu))$. If $u$ and $v$ are good in $\mu$ and consecutive in the (circular) list of good vertices with corresponding incidences $d_u$ and $d_v$ then $\{ud_u,vd_v\}$ is reported in Line~\ref{line:S-node-snarls} (notice that here, $\{u,v\}$ may be a real edge of $\skel(\mu)$). If $u$ and $v$ are good in $\mu$ and not consecutive in the (circular) list of good vertices, then there is a good vertex $w$ in between $u$ and $v$ such that $wd_w$ is an incidence violating minimality of $\{ud_u,vd_v\}$ (see the proof of \Cref{prop:S-node-snarls}), a contradiction to the fact that $\{ud_u,vd_v\}$ is a snarl. If $u$ and $v$ are not good then we require a careful argument, for which we do case analysis on the adjacency relation between $u$ and $v$.
        \begin{enumerate}
            \item If $u$ and $v$ are nonadjacent in $\skel(\mu)$ then $\{ud_u,vd_v\}$ is not separable. To see why, notice that at least one edge $e$ incident to $u$ is such that $\expansion(e)$ has edges $\{ud_u,ad_a\}$ and $\{u\hat{d}_u,bd_b\}$ (because $u$ is not good), and notice that $a,b\neq v$ because $u$ and $v$ are nonadjacent in $\skel(\mu)$. Let $w$ be the first vertex in $\skel(\mu)$ on an $a$-$v$ path in $\expansion(e)$ that avoids $u$ (such a path exists by \Cref{prop:reaches-in-expansion}). Notice that $w \neq v$ since $u$ and $v$ are not adjacent. Due to the structure of S-nodes, doing the same for $b$ also yields vertex $w$. Then $H$ has an $a$-$b$ without $v$, and also without $u$ by construction. So $F'$ has an $a$-$b$ path and thus it has a $u$-$u'$ path, a contradiction.
            
            \item Otherwise $u$ and $v$ are adjacent in $\skel(\mu)$, so let $e=\{u,v\} \in \skel(\mu))$. There are two cases to consider.
            
            If $e=\{u,w\}\in E(\skel(\mu))$ is an edge with $w \neq v$ such that $\expansion(e)$ has opposite incidences at $u$ then by an identical argument as the one just made above we get a contradiction on the separability of $\{ud_u,vd_v\}$; symmetrically, the same applies to $v$.

            The last case is thus when $e$ is such that $\expansion(e)$ has opposite incidences at $u$ and $v$, and the other two edges $e_u$ and $e_v$ of $\skel(\mu)$ adjacent at $u$ and $v$, respectively, are such that their expansions have only incidences of the same sign at $u$ and $v$.
            If the pertaining node of $e$ is an R-node then \Cref{prop:R-node-mixed} gives a contradiction to the separability of $\{ud_u,vd_v\}$.
            Thus, the pertaining node of $e$ is a P-node and the snarl is reported once P-nodes are analyzed, as shown below in item (2).
        \end{enumerate}

        The last case is when only one vertex between $u$ and $v$ is good. In this case $\{ud_u,vd_v\}$ is easily seen to not be separable (the argument is essentially the same as the one used in item (1a)).

        \item Suppose that $\mu$ is a P-node of $T$ such that $V(\skel(\mu))=\{u,v\}$. Since $\{ud_u,vd_v\}$ is separable, \Cref{prop:P-node-snarls} implies the conditions described in the statement. Due to minimality of $\{ud_u,vd_v\}$, there is no incidence $wd_w$ with $w \in \expansion(e)\setminus \{u,v\}$ such that $\{ud_u,wd_w\}$ and $\{vd_v,w\hat{d}_w\}$ are separable.
        We do case analysis on the number of edges in $E^{d_u}_u$.
        
        Suppose that $|E^{d_u}_u| = 1$ and let $e$ be the unique edge $e$ in $E^{d_u}_u$. If $e$ pertains to an S-node $\alpha$ then $\skel(\alpha)$ has no good vertices except $\{u,v\}$, for otherwise such a vertex paired with a sign $d_w\in\signs$ can serve as incidences $wd_w$ and $w\hat{d}_w$, thus violating minimality (the correctness of this step is essentially explained in the proof of \Cref{prop:S-node-snarls}). Thus, in this case, the algorithm reports the snarl in Line~\ref{line:P-node-snarls-1-1} (or Line~\ref{line:P-node-snarls-2-1}, symmetrically). If $e$ does not pertain to an S-node then the snarl is reported in Line~\ref{line:P-node-snarls-1-2} (symmetrically, Line~\ref{line:P-node-snarls-2-2}) and the same if $|E^{d_u}_u| > 1$.

        \begin{observation}\label{obs:edge-case}
            In this case, the algorithm also finds snarls when $\{u,v\}$ is not a separation pair. More precisely, when $\skel(\mu)$ has two real edges $\{ud_u,vd_v\}$ and $\{u\hat{d}_u,v\hat{d}_v\}$ and one virtual edge pertaining to an S-node whose expansion contains only incidences at $u$ and $v$ of signs ${d_u}$ and ${d_v}$, respectively, the snarl $\{ud_u,vd_v\}$ is reported.
        \end{observation}
        
        \item Suppose that $\mu$ is an R-node of $T$ with a virtual edge $\{u,v\}$. If the pertaining node $\nu$ of this virtual edge is an S- or a P-node then $\{ud_u,vd_v\}$ was reported before. So $\nu$ is an R-node. We claim that the incidences in $\expansion(e_\nu)$ all have the same sign in $u$, and those in $\expansion(e_\mu)$ all have the opposite sign in $u$, and the same for $v$. Suppose for a contradiction (and without loss of generality) that $\expansion(e_\mu)$ has incidences of opposite signs at $u$. Then \Cref{prop:R-node-mixed} gives a contradiction to the fact that $\{ud_u,vd_v\}$ is separable.
        Now notice that the conditions just established on the incidences of $u$ and $v$ are precisely those described in \Cref{alg:find-snarls-in-RR} and thus the snarl is reported in Line~\ref{line:R-node-snarls} when the assignments to the sign variables $d_u$ and $d_v$ are those matching that of the snarl.
    \end{enumerate}

    Otherwise $\{u,v\}$ is an edge of $H$ and is not a separation pair of $H$.
    We are in conditions of applying \Cref{prop:snarl-edge-case} from where two cases follow.
    We have $e=\{ud_u,vd_v\} \in E(H)$ and all incidences at $u$ and $v$ except for those in $e$ have signs $\hat{d_u},\hat{d}_v$, respectively, or $e=\{u\hat{d}_u,v\hat{d}_v\} \in E(H)$ and all incidences at $u$ and $v$ except for those in $e$ have signs $d_u,d_v$, respectively.
    In the former case the snarl $\{ud_u,vd_v\}$ is reported in Line~\ref{line:edge1-snarls}.
    In the latter case the snarl $\{ud_u,vd_v\}$ is reported in Line~\ref{line:edge2-snarls} (where the signs are written with the respective opposites) if also no S-node of $T$ contains both $u$ and $v$.
    Indeed, if no S-node of $T$ contains both $u$ and $v$ then the snarl is reported.
    
    Otherwise, let $\beta$ be the S-node whose skeleton contains $u$ and $v$; our goal is to either derive a contradiction by concluding that $\{ud_u,vd_v\}$ is not a snarl or to show that the snarl $\{ud_u,vd_v\}$ is reported in another phase of the algorithm.
    Notice that $u$ and $v$ are adjacent in $\skel(\beta)$ because $\{u,v\}$ is an edge of $H$. Thus let $e=\{u,v\} \in E(\skel(\beta))$. We do case analysis on the type of $e$.
    If $e$ is a real edge then the snarl is reported when S-nodes are analyzed: $u$ and $v$ are classified as good vertices due to the assumption on the incidences and moreover they are consecutive in the circular list of good vertices.
    Otherwise $e$ is a virtual edge. Since $\{u,v\}$ is not a separation pair, the pertaining node of this edge is a P-node $\alpha$. Notice that $\skel(\alpha)$ consists of exactly two real edges and one virtual edge (which pertains to $\beta$). Indeed, if $\skel(\alpha)$ has three real edges then it is not hard to see that $\{ud_u,vd_v\}$ is not separable.
    Moreover, all incidences in the expansion of $\{u,w\} \in \skel(\beta)$ for $w \neq v$ all have the same sign (due to the assumption on the incidences at $u$), and the same for the expansion of the other edge of $\skel(\beta)$ incident to $v$ where $u$ is not an endpoint.
    But then we are exactly in the conditions described in \Cref{obs:edge-case}, and thus the snarl is reported when P-nodes are analyzed.
    
    All cases were examined and thus every snarl of $G$ is reported by the algorithm.

    \textbf{(Soundness.)}
    We argue that the algorithm only reports snarls.
    
    By \Cref{lem:snarl-equivalence}, if $\{ud_u,vd_v\}$ is a snarl in a sign-cut graph of $G$ then it is also a snarl in $G$. Thus, let $F$ denote the sign-cut graph where the pair $\{ud_u,vd_v\}$ is reported. It is enough to show that $\{ud_u,vd_v\}$ is separable and minimal in $F$.

    If $ud_u$ and $vd_v$ are incidences of the set built in Line~\ref{line:tip-tip-snarls} of \Cref{alg:snarls} then $u$ and $v$ are both tips in $F$. By (2) of \Cref{thm:where-are-snarls-after-cutting} it follows that that $\{ud_u,vd_v\}$ is a snarl and thus the set $\mathcal{T}_i$ encodes only snarls.

    If $\{ud_u,vd_v\}$ is reported by virtue of Line~\ref{line:S-node-snarls} of \Cref{alg:find-snarls-in-S} then $ud_u$ and $vd_v$ are consecutive elements of $W$ and $u\neq v$. Moreover, the conditions expressed in \Cref{alg:find-snarls-in-S} clearly identify all and only good vertices. Thus \Cref{prop:S-node-snarls} implies that $\{ud_u,vd_v\}$ is a snarl.

    If $\{ud_u,vd_v\}$ and $\{u\hat{d}_u,v\hat{d}_v\}$ are reported in Line~\ref{line:R-node-snarls} of \Cref{alg:find-snarls-in-RR} then \Cref{prop:R-node-snarls} implies that $\{ud_u,vd_v\}$ is a snarl (the conditions of the statement match those in the algorithm). Applying \Cref{prop:R-node-snarls} symmetrically to the adjacent R-node implies that also $\{u\hat{d}_u,v\hat{d}_v\}$ is a snarl.
    
    If $\{ud_u,vd_v\}$ is reported in \Cref{alg:find-snarls-in-P} then \Cref{prop:P-node-snarls} implies that $\{ud_u,vd_v\}$ is separable (with component $X$).
    We argue on the minimality for each of the lines a pair of incidences is reported. Without loss of generality, we argue only for Line~\ref{line:P-node-snarls-1-1} and Line~\ref{line:P-node-snarls-1-2}.
    If $|E^{d_u}_u| > 1$ then $X$ has two internally vertex-disjoint $u$-$v$ paths (this is easily seen from the structure of P-nodes), so we can apply \Cref{prop:snarls-disjoint-paths} and conclude that $\{ud_u,vd_v\}$ is minimal, and so it is a snarl.
    If $|E^{d_u}_u|=1$ then the unique edge $e$ in $E^{d_u}_u$ is either real or virtual. If it is real then $\{ud_u,vd_v\}$ is clearly minimal. If it is virtual then the pertaining node $\beta$ of $e$ is an S- or an R-node since no two P-nodes are adjacent in $T$. We discuss both cases.

    \begin{itemize}
        \item Suppose that $\beta$ is an S-node. The algorithm reports $\{ud_u,vd_v\}$ if $\skel(\beta)$ has no good vertex except $\{u,v\}$. To see this is correct, notice first that $u$ and $v$ are good in $\skel(\beta)$ and thus $\{ud_u,vd_v\}$ is separable. For minimality it is enough to argue that no vertex $w \in V(\skel(\beta)) \setminus \{u,v\}$ is such that $\{ud_u,wd_w\}$ and $\{vd_v,w\hat{d}_w\}$ are separable. Since $w$ is not good but $u$ and $v$ are, it is not hard to see that the graph obtained by splitting $ud_u$ and $wd_w$ or $vd_v$ and $w\hat{d}_w$ is such that $w$ and $w'$ are connected.
        
        \item Suppose that $\beta$ is an R-node. Then $X$ has two internally vertex-disjoint $u$-$v$ paths because $\skel(\beta)$ is 3-connected and $X$ is a subgraph of $\skel(\beta)$ with the edge $\{u,v\}$ removed. Applying \Cref{prop:snarls-disjoint-paths}, we conclude that $\{ud_u,vd_v\}$ is a snarl. (Another way to see this is by noticing that the conditions expressed in \Cref{alg:find-snarls-in-P} are sufficient to apply \Cref{prop:R-node-snarls}, from where we can also conclude that $\{ud_u,vd_v\}$ is a snarl.)
    \end{itemize}

    If $\{ud_u,vd_v\}$ is reported in Line~\ref{line:edge1-snarls} or Line~\ref{line:multi-bridge-snarl} then the pair is clearly separable (with component $X$ restricted to $F$) and minimality follows from the fact that $V(X)=\{u,v\}$.
    
    If $\{u\hat{d}_u,v\hat{d}_v\}$ is reported in Line~\ref{line:edge2-snarls} then no skeleton of an S-node of $T$ contains the vertices $u$ and $v$. Moreover, $\{u\hat{d}_u,v\hat{d}_v\}$ is clearly separable (with component $X$ restricted to $F$). We argue on the minimality.
    Since $\{u,v\}$ is an edge of $U(H)$, $T$ has a node $\mu$ whose skeleton contains the real edge $\{u,v\}$, and thus $\{u,v\}\subseteq V(\skel(\mu))$. Since no skeleton of an S-node of $T$ contains $u$ and $v$, the vertices $u$ and $v$ are contained in a P- or in an R-node, and so $H$ has three internally vertex-disjoint $u$-$v$ paths (the existence of these paths is easily seen from the description of the P- and R-nodes, nonetheless we point to Lemma 2 of~\cite{di1996line}). So $X$ has two internally vertex-disjoint $u$-$v$ paths and hence we can apply \Cref{prop:snarls-disjoint-paths} to conclude that $\{u\hat{d}_u,v\hat{d}_v\}$ is a snarl.

    Every line where the algorithm reports a pair of incidences is analyzed and therefore the algorithm only reports snarls.
\end{proof}

In order to implement \Cref{alg:snarls} we have to support inclusion-neighborhood queries in constant time. For example, in \Cref{alg:find-snarls-in-RR}, we need to decide whether $N^+_H(u) \subseteq V(\expansion(e_\nu))$ and $N^{+}_H(u) \subseteq V(\expansion(e_\mu))$ (in different iterations of $d_u \in \signs$). This condition is equivalent to all $+$ incidences at $v_i$ being in $\expansion(e_\nu)$ or all $+$ incidences at $v_i$ being in $\expansion(e_\mu)$. Therefore, more generally, for a tree-edge $\{\nu,\mu\}$ creating a separation $(V(\expansion(e_\nu),V(\expansion(e_\nu))$, we need to to know the number of $+$ and $-$ incidences in $\expansion(e_\nu)$, and likewise the number of $+$ and $-$ incidences in $\expansion(e_\mu)$.
To support this in overall linear time, we need to reuse information that we computed for subtrees of the current tree edge. We explain this in \Cref{thm:snarls-time}. In fact, this idea will be further developed for superbubbles, where we will maintain other properties (such as acyclicity) in additional to incidence counts.

\begin{theorem}
\label{thm:snarls-time}
    Let $G$ be a bidirected graph. The algorithm identifying snarls (\Cref{alg:snarls}) can be implemented to run in $O(|V(G)|+|E(G)|)$.
\end{theorem}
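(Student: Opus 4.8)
The plan is to bound the running time phase by phase, showing that preprocessing and each node-processing routine runs in time proportional to the total size of the structures it inspects, and that these sizes sum to $O(|V(G)|+|E(G)|)$. The only delicate point is that \Cref{alg:find-snarls-in-S}, \Cref{alg:find-snarls-in-P}, \Cref{alg:find-snarls-in-RR} and the edge-case routine (\Cref{prop:snarl-edge-case}) repeatedly ask ``how many $+$ (resp.\ $-$) incidences at a vertex $x$ lie inside $\expansion(e)$?''---this is exactly what the inclusion tests $N^{d}_H(x)\subseteq V(\expansion(e))$ and the set memberships defining $E^{d_u}_u,E^{d_v}_v$ amount to. Evaluating such a query by materialising $\expansion(e)$ would cost time proportional to the expansion size, and since one vertex lies in many expansions this would blow up to quadratic. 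The whole argument therefore hinges on answering these queries in $O(1)$ after a linear-time precomputation.

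First I would dispose of the cheap preprocessing. The block--cut tree of $G$ is built in $O(|V(G)|+|E(G)|)$ by the standard algorithm; reading off the sign-consistent cutvertices and performing the splits to obtain the sign-cut graphs is an extra linear pass over the incidences at cutvertices, and the resulting sign-cut graphs have $O(|V(G)|+|E(G)|)$ total size since they only duplicate sign-consistent vertices. For each $2$-connected block $H$ of each sign-cut graph we build its SPQR tree in time $O(|V(H)|+|E(H)|)$ (multi-bridges and single edges are handled by the trivial cases); because the blocks partition the edges and, by \Cref{lem:spqr-total-size}, each tree has $O(|V(H)|)$ nodes and $O(|E(H)|)$ skeleton edges, the total construction time and total skeleton size over all blocks is $O(|V(G)|+|E(G)|)$. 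The tip list of each sign-cut graph, which encodes all tip--tip snarls implicitly by (2) of \Cref{thm:where-are-snarls-after-cutting}, is produced in one scan, so these quadratically-many snarls cost only linear time. Finally $\mathsf{HasDangling}(F,H,v)$ is precomputed from the block--cut tree: one pass records, for each (cutvertex, incident block) pair, whether that block carries incidences of both signs at the cutvertex, after which every query is a lookup.

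The heart is the incidence-count precomputation. I would root each SPQR tree and run one bottom-up dynamic program computing, for every node $\mu$ with parent virtual edge $e_\nu$ of endpoints $\{s,t\}$, the number of $+$ and of $-$ incidences at $s$ and at $t$ among the real edges in the subtree $T_\mu$, equivalently in $\expansion(e_\mu)$. At a real (leaf) edge the counts are immediate; at an internal node they are obtained, per endpoint, by summing the real edges of $\skel(\mu)$ incident there together with the already-computed counts of those children whose virtual edge shares that endpoint---correct because any vertex of $\skel(\mu)$ that receives incidences inside a child subtree must be one of that child's two separation vertices. The count on the \emph{other} side, i.e.\ the incidences at $s$ (resp.\ $t$) in $\expansion(e_\nu)$, is then the total number of such incidences in $H$ (read from the degrees of $H$) minus the subtree count, since the two expansions share only $\{s,t\}$ and have disjoint edge sets. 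The DP touches each node once with work proportional to its skeleton, so it runs in $O(\sum_\mu |E(\skel(\mu))|)=O(|E(H)|)$, hence $O(|V(G)|+|E(G)|)$ overall. With these tables every inclusion and membership test used by the routines collapses to comparing a stored count with the matching total, in $O(1)$.

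It remains to charge the routines. \Cref{alg:find-snarls-in-S} handles each S-node in $O(|V(\skel(\mu))|)$: goodness of a vertex is decided by $O(1)$ comparisons on its two incident skeleton edges, and the reported snarls are the consecutive pairs of the good-vertex list. \Cref{alg:find-snarls-in-P} handles a P-node of $k$ edges in $O(k)$: the sets $E^{d_u}_u,E^{\hat{d}_u}_u,E^{d_v}_v,E^{\hat{d}_v}_v$ are built in one scan from the counts, and the interaction with a pertinent S-node $\beta$ is resolved in $O(1)$ by having stored, when S-nodes were processed first, the number of good vertices of $\beta$ and whether the two shared separation vertices are good (this is why S-nodes must precede P-nodes). \Cref{alg:find-snarls-in-RR} examines each SPQR-tree edge between two R-nodes once in $O(1)$. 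The edge-case routine scans each edge of each $U(H)$ once, deciding in $O(1)$ whether it is a separation pair, whether its endpoints are non-tips, the required incidence pattern, and whether a common S-node exists (via the node owning the real edge). Since $\sum_\mu |V(\skel(\mu))|\le \sum_\mu(|E(\skel(\mu))|+1)=O(|V(G)|+|E(G)|)$ and $\sum_H|E(H)|=O(|E(G)|)$, all routines together run in linear time, and the same bounds show the number of reported non-tip snarls is linear, so the output never dominates and \Cref{thm:main} follows. The step I expect to be the main obstacle is establishing and maintaining the exact DP invariant together with the complement trick---guaranteeing that the count attached to each skeleton edge is precisely the number of same-sign incidences of its expansion at the queried endpoint (internal vertices of a child never leaking into a parent separation vertex, and the parent side recovered exactly by subtraction from the $H$-degree); once this is in place, every query becomes an $O(1)$ comparison and summing the phase costs against \Cref{lem:spqr-total-size} yields the $O(|V(G)|+|E(G)|)$ bound.
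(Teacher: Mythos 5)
Your proposal is correct and follows essentially the same route as the paper's own proof: linear-time construction of the block--cut tree, sign-cut graphs, and per-block SPQR trees, followed by a bottom-up DFS dynamic program that stores, for each tree edge, the $+$/$-$ incidence counts of the two separation vertices inside the child-side expansion (charging each real edge to the unique skeleton containing it) and recovers the parent-side counts by subtraction, so every neighborhood-inclusion query becomes $O(1)$. The extra details you supply (tip lists, precomputing the dangling-block predicate, per-routine charging against \Cref{lem:spqr-total-size}) are consistent elaborations of steps the paper treats implicitly.
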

\begin{proof}
    Block-cut trees can be built in linear time \cite{Hopcroft73blockcut} and the total size of the blocks is linear in $|V(G)|+|E(G)|$. 
    We also find the sign-cut graphs in linear time, since we only need to identify the sign-consistent cutvertices of the block-cut tree.
    The case when a block in a multi-bridge is trivial, so suppose that we are analyzing a block $H$ that is 2-connected. Let $|H|=|V(H)|+|E(H)|$. 
    We show that the rest of the algorithm runs in time $O(|H|)$, thus proving the desired bound.

    After building the SPQR tree $T$ of $H$, which can be built in $O(|H|)$ time~\cite{gutwenger2001linear}, the algorithm examines each of the possible cases where snarls can live in. Notice that the work done in \Cref{alg:find-snarls-in-S,alg:find-snarls-in-P,alg:find-snarls-in-RR} is constant-time with exception of the neighborhood queries to $u$ and $v$.
    
    To support deciding inclusions of incidences (as discussed above), we process all edges of $T$ with a DFS traversal starting in the root. Let $\nu$ be the parent of $\mu$ in $T$ and let $\{u,v\}$ denote the endpoints of $e_\nu \in \skel(\mu)$ and of $e_\mu \in \skel(\nu)$. We store at $u$ and $v$ the number of their $+$ and $-$ incidences in $\expansion(e_\mu)$. Assume that we have already computed this information (via the DFS order) for all tree edges to children of $\mu$ in $T$ (if $\mu$ is not a leaf). For all such tree edges to children of $\mu$ in which $u$ is present, we increment the incidence counts of $u$ by these values, and same for $v$. Moreover, we scan every real edge in $\skel(\mu)$ and use the signs of the edge to correspondingly increment the incidence counts for $u$ and $v$. Doing this, we process every real edge once because every edge of the input graph is a real edge in exactly one skeleton. Having the correct incidence counts for $u$ and $v$ in $\expansion(e_\mu)$, we can obtain their incidence counts in $\expansion(e_\nu)$ by subtracting from the total number of $+$ incidences of $u$ the $+$ incidence count of $u$ in $\expansion(e_\mu)$ (and same for $v$). This can again be obtained by paying only constant time per edge.

\end{proof}

\snarlscorrecttime*
\begin{proof}
    Follows from \Cref{thm:snarls-correct} and \Cref{thm:snarls-time}.
\end{proof}

\begin{algorithm}[ht]
\caption{$\mathsf{FindEdgeSnarls}(F,H,T)$\label{alg:find-edge-snarls}}
\KwIn{Sign-cut graph $F$ of a bidirected graph, maximal 2-connected subgraph $H$ of $F$, $T$ the SPQR tree of $H$}
\For{every edge $e = \{ud_u, vd_v\}$ of $H$}{
    \If{every edge of $F$ except $e$ is incident to $u$ and $v$ with signs $\hat{d}_u$ and $\hat{d}_v$, respectively}{
        Report $\{ud_u, vd_v\}$\; \label{line:edge1-snarls}
        \If{no S-node $\mu$ of $T$ is such that $\{u,v\}\subseteq V(\skel(\mu))$}{
            Report $\{u\hat{d}_u, v\hat{d}_v\}$\; \label{line:edge2-snarls}
        }
    }
}
\end{algorithm}

\begin{algorithm}[ht]
\caption{Snarls representation algorithm}
\label{alg:snarls}
\KwIn{Bidirected graph $G$}
\KwOut{A linear-size encoding of snarls of $G$ as subsets $\mathcal{T} = \{T_1, \dots, T_k\}$ and $\mathcal{S} = \{S_1, \dots, S_\ell\}$ of incidences, where all pairs of incidences in $T_i$ are snarls and each $S_j$ is a snarl.} 
$F_1,\dots,F_k \gets \mathsf{BuildSignCutGraphs}(G)$\;
$\mathcal{T} \gets \{\}$\;
$\mathcal{S} \gets \{\}$\;
\For{$\iink$}{
    \If{$F_i$ is an isolated vertex}{
        \textbf{continue}\;
    }
    $T_i \gets$ the set of incidences $vd_v$ where $v$ is a tip in $F_i$ with sign $d_v$\;\label{line:tip-tip-snarls}
    $\mathcal{T} \gets \mathcal{T} \cup \{T_i\}$\;
    \For{every block $H$ of $F_i$}{
        \If{$H$ is a multi-bridge}{
            \For{every edge $\{ud_u,vd_v\} \in E(H)$}{
                \If{every edge of $G$ except $e$ is incident to $u$ and $v$ with signs $\hat{d}_a$ and $\hat{d}_v$, respectively}{
                    $\mathcal{S} \gets \mathcal{S} \cup \{ud_u,vd_v\}$\; \label{line:multi-bridge-snarl}
                }
            }
        }
        \Else{
            $T \gets \mathsf{BuildSPQRTree}(H)$\;
            $\mathcal{S} \gets \mathcal{S} \cup \mathsf{FindSnarlsInSnodes}(F,H,T)$\;
            $\mathcal{S} \gets \mathcal{S} \cup \mathsf{FindSnarlsInPnodes}(F,H,T)$\;
            $\mathcal{S} \gets \mathcal{S} \cup \mathsf{FindSnarlsBetweenRRnodes}(F,H,T)$\;
            $\mathcal{S} \gets \mathcal{S} \cup \mathsf{FindEdgeSnarls}(F,H,T)$\;
        }
    }
}
\Return{$(\mathcal{T}, \mathcal{S})$}
\end{algorithm}

                    

\newpage
\section{Finding superbubbles via SPQR trees}
\label{sec:superbubbles}

In this section we develop a linear-time algorithm to find superbubbles in a directed graph $G$. At a high level, our approach is based on two observations. First, if $B$ is a superbubble with entry point $s$ and exit point $t$, then $\{s,t\}$ is a separation pair of the undirected counterpart of $G$, unless some (manageable) corner cases occur. Secondly, deciding whether a directed graph is a DAG with unique source and unique sink can be done in linear time in the size of the graph, and the computed information can be reused to avoid expensive recomputations for nested superbubbles.

\subsection{Preliminaries}

We begin by giving basic terminology on directed graphs and further definitions required for this section.

\paragraph{Directed graphs.}
A directed graph $G=(V,E)$ has vertex set $V(G)$ and edge set $E(G)$. The sets of \emph{out-neighbors} and \emph{in-neighbors} of a vertex $v$ in $G$ are denoted by $N^+_G(v)$ and $N^-_G(v)$, respectively. We denote an edge from $u$ to $v$ as $uv$ and a path $p$ through vertices $v_1,\dots,v_k$ as $p=v_1\dots v_k$. In this case we say that there is a \emph{$v_1$-$v_k$ path}, that $v_1$ and $v_k$ are \emph{connected} (if there is no such path then they are \emph{disconnected}), or that $v_1$ reaches $v_k$; the vertices of the path $p$ but the first and the last are its \emph{internal vertices}. Without loss of generality, we assume that directed graphs have no parallel arcs since they have no effect on superbubbles.
If $V' \subseteq V(G)$, $E' \subseteq E(G)$, and $E' \subseteq V' \times V'$, we say that $G'=(V',E')$ is a \emph{subgraph} of and write $G' \subseteq G$. Respectively, we say that $G$ is a \emph{supergraph} of $G'$.
We say that a subgraph $G'$ of $G$ is \emph{maximal} on a given property in the sense of the subgraph relation, i.e., when no proper supergraph of $G'$ contained in $G$ has that property.
The subgraph \emph{induced} by a subset $C \subseteq V(G),E(G)$ of vertices or edges of $G$ is denoted by $G[C]$. The \emph{vertex-induced} subgraph is the graph with vertex set $C$ and the subset of edges in $E(G)$ whose endpoints are in $C$. The \emph{edge-induced} subgraph has edge set $C$ and a vertex set consisting of all endpoints of edges in $C$.
A vertex~$v$ is a \emph{source} of $G$ if $|N^-_{G}(v)|=0$ and a \emph{sink} if $|N^+_{G}(v)|=0$.

Similarly to the main matter, we denote by $U(G)$ the undirected graph obtained from $G$ by removing the direction of every edge, and keeping parallel edges that possibly appear because (to avoid losing information, we assume that the edges of $G$ are labeled with unique identifiers that are retained).
As a convention, we refer to directed graphs by $G$ and to undirected graphs by $H$.
We say that vertex $v$ is an \emph{extremity} of a directed graph $G$ if $v$ is a source or sink of $G$, or a cutvertex of $U(G)$.

\paragraph{Additional definitions on SPQR trees.}

Let $G$ be a directed graph.
We will routinely solve subproblems on the skeletons of the nodes of $T$ where their virtual edges are assigned directions depending on the reachability relation of $G$ restricted to their expansions. So let $\mu$ be a node of $T$ and let $e_1=\{s_1,t_1\},e_2=\{s_2,t_2\},\dots,e_k=\{s_k,t_k\}$ be its virtual edges $(k \geq 2)$.
Define the set of directed edges $B_1 = \{ s_it_i : \text{$s_i$ reaches $t_i$ in $\expansion(e_i) ,\; i=1,\dots,k$} \}$ and $B_2 = \{ t_is_i : \text{$t_i$ reaches $s_i$ in $\expansion(e_i),\; i=1,\dots,k $} \}$. We define the \emph{directed skeleton} as $\dirskel(\mu) := (V(\skel(\mu)),B_1 \cup B_2)$.


\subsection{Superbubbles}

The notion of \emph{superbubble} is intended to characterize a specific type of subgraph that commonly arises in graphs built from biological data. These subgraphs encompass one of the main obstacles in the assembly problem. Superbubbles are a generalization of \emph{bubbles}: two vertices connected by many edges with identical directions. Superbubbles were proposed by Onodera~et~al.~\cite{onodera2013detecting} in the setting of directed graphs. They also presented an average-case linear-time algorithm enumerating superbubbles with quadratic complexity in the worst case. Linear-time algorithms enumerating superbubbles are known for directed graphs~\cite{gartner-revisited}, and linear-time algorithms tailored for DAGs are also known~\cite{brankovic2016linear}.
We next recap the definition of superbubbles on directed graphs. 
\begin{definition}[Superbubble, Onodera et al.~\cite{onodera2013detecting}]
\label{def:superbubbles-directed graphs}
    Let $G=(V,E)$ be a directed graph. An ordered pair of distinct vertices $(s,t)$ is a superbubble of $G$ if
    \begin{enumerate}[nosep]
        \item $t$ is reachable from $s$ (\emph{reachability});
        \item the set of vertices reachable from $s$ without using $t$ as an internal vertex coincides with the set of vertices reaching $t$ without using $s$ as an internal vertex, and call $B_{st}$ the induced subgraph of $G$ by this set of vertices (\emph{matching});
        \item $B_{st}$ is acyclic (\emph{acyclicity}); and
        \item no vertex in $B_{st}$ except $t$ forms a pair with $s$ that satisfies above properties (1)--(3) (\emph{minimality}).
    \end{enumerate}
\end{definition}

If only properties (1)--(3) are satisfied, then $(s,t)$ is a \emph{superbubbloid} (see~G{\"a}rtner and Stadler~\cite{gartner2019direct}).
If $(s,t)$ is a superbubble in $G$ then the \emph{interior} of $B_{st}$ is the set $V(B_{st}) \setminus \{s,t\}$. The simplest superbubble consists of a single arc $st$ or a set of parallel arcs $st$ where $N^+_s=\{t\}$, $N^-_t=\{s\}$, and $ts \notin E(G)$; we call this a \emph{trivial superbubble}. The interior of a superbubble does not contain sources or sinks of $G$ due to the matching property.

An equivalent definition of superbubble/superbubbloid is given by G{\"a}rtner et al.~\cite{gartner-revisited}.

\begin{definition}[Superbubbloid~\cite{gartner-revisited,gartner2019direct}]\label{def:gartner}
    Let $G$ be a digraph, $B \subseteq G$, and $s, t \in V(B)$. Then, $B$ equals $B_{st}$ of~\Cref{def:superbubbles-directed graphs} and satisfies the reachability and matching conditions if and only if the following conditions (1)--(4) are satisfied. Moreover, $B$ forms a \emph{superbubbloid} with entrance $s$ and exit $t$ if and only if 
    \begin{enumerate}[nosep] 
        \item every $u \in V(B)$ is reachable from $s$;
        \item $t$ is reachable from every $u \in V(B)$;
        \item if $u \in V(B)$ and $w \in V(G) \setminus V(B)$, then every $w$-$u$ path contains $s$;
        \item if $u \in V(B)$ and $w \in V(G) \setminus V(B)$, then every $u$-$w$ path contains $t$;
        \item if $uv$ is an edge in $B$, then every $v$-$u$ path in $G$ contains both $t$ and $s$; and
        \item $G$ does not contain the edge $ts$.
    \end{enumerate}
\end{definition}

A superbubble $(s,t)$ is a superbubbloid that is minimal in the sense that there is no $s'\in V(B_{st})-s$ such that $(s',t)$ is a superbubbloid.
In other words, a superbubble $(s,t)$ is just superbubbloid such that no vertex in $V(\Bst)\setminus \{s,t\}$ appears in every $s$-$t$ path of $\Bst$.
We also say that $B_{st}$ is a superbubbloid even though a superbubble is an ordered pair of vertices by definition.
One easily observes that if $(s,t)$ is a superbubbloid then no in-neighbour of $s$ and no out-neighbour of $t$ is in $\Bst$, and every out-neighbour of $s$ and every in-neighbour of $t$ is in $\Bst$.

We give some results on the relation between cutvertices and superbubbloids/superbubbles. Importantly, we show that cutvertices (in a broad sense) are not in the interior of superbubbles.

\begin{prop}
\label{prop:cutvertex-inside-bubble}
    Let $G$ be a directed graph and let $s,t$ be vertices of $G$. If $(s,t)$ is a superbubble of $G$ then no vertex in the interior of $B_{st}$ is an $s$-$t$ cutvertex in $U(\Bst)$.
\end{prop}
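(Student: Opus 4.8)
The plan is to argue by contradiction using the reformulation of minimality stated just above this proposition: a superbubble $(s,t)$ is a superbubbloid in which no vertex of $V(\Bst)\setminus\{s,t\}$ lies on \emph{every} directed $s$-$t$ path of $\Bst$. So I would suppose for contradiction that some interior vertex $v\in V(\Bst)\setminus\{s,t\}$ is an $s$-$t$ cutvertex of $U(\Bst)$.

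First I would note that the minimality clause is nonvacuous here, i.e.\ that $\Bst$ contains at least one directed $s$-$t$ path. This follows from the reachability condition together with the matching condition: any vertex reachable from $s$ without using $t$ as an internal vertex belongs to $\Bst$, so a directed $s$-$t$ path of $G$ stays inside $\Bst$ until it first reaches $t$, and hence such a path lies in $\Bst$.

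The crux is the elementary link between directed reachability in $\Bst$ and undirected connectivity in $U(\Bst)$. Forgetting edge directions sends any directed $s$-$t$ path of $\Bst$ to an $s$-$t$ walk in $U(\Bst)$, and therefore to an $s$-$t$ path there. Since $v$ is an $s$-$t$ cutvertex of $U(\Bst)$, the graph $U(\Bst)-v$ places $s$ and $t$ in different components, so every $s$-$t$ path of $U(\Bst)$ passes through $v$. Consequently every directed $s$-$t$ path of $\Bst$ must also pass through $v$, since otherwise forgetting its directions would produce an $s$-$t$ path in $U(\Bst)$ avoiding $v$. Thus $v$ lies on every directed $s$-$t$ path of $\Bst$, contradicting the minimality of $(s,t)$.

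I expect no genuine obstacle: once the minimality reformulation is in hand, the statement is essentially immediate from the fact that a directed path, with its orientations erased, is an undirected path. The only point deserving mild care is confirming that the relevant directed $s$-$t$ paths really are contained in $\Bst$ (so that the minimality clause applies to them), which the matching condition supplies.
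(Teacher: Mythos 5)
Your proof is correct: reducing the $s$-$t$ cutvertex hypothesis to ``$v$ lies on every directed $s$-$t$ path of $B_{st}$'' (by erasing orientations, noting a directed path becomes an undirected one), together with the check that $B_{st}$ really contains a directed $s$-$t$ path (reachability plus matching plus the fact that $B_{st}$ is vertex-induced), contradicts exactly the minimality reformulation the paper states immediately before this proposition. The paper itself gives no written proof of this statement, but your argument is precisely the one its setup intends, and I see no gap in it.
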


\begin{prop}
\label{prop:superbubble-independent-paths}
    Let $G$ be a directed graph and let $(s,t)$ be a superbubbloid of $G$ with graph $\Bst$. If $U(\Bst)$ has two internally vertex-disjoint $s$-$t$ paths then $(s,t)$ is a superbubble.
\end{prop}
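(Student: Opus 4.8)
The plan is to use the reformulation of minimality stated just above the proposition: a superbubbloid $(s,t)$ is a superbubble precisely when no vertex $w \in V(\Bst) \setminus \{s,t\}$ lies on every directed $s$-$t$ path of $\Bst$. Since $(s,t)$ is already assumed to be a superbubbloid, I only need to verify this minimality condition. I would argue by contraposition: I will show that if some interior vertex $w$ lies on every directed $s$-$t$ path of $\Bst$, then $w$ is an $s$-$t$ cutvertex of $U(\Bst)$, which is impossible once $U(\Bst)$ has two internally vertex-disjoint $s$-$t$ paths (any such $w$ is internal to at most one of the two paths, so at least one path survives the deletion of $w$).

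First I would record the structural facts about a superbubbloid that I will rely on: by \Cref{def:gartner}, every vertex of $\Bst$ is reachable from $s$ and reaches $t$, and $\Bst$ is acyclic, so it is a DAG whose unique source is $s$ and unique sink is $t$. A consequence I will use repeatedly is that two directed paths $s \to u$ and $u \to t$ can meet only at $u$ (sharing any other vertex would close a cycle), so their concatenation is a simple directed $s$-$t$ path; hence \emph{every} vertex lies on some directed $s$-$t$ path. I will also use that any directed $s$-$t$ walk can be shortened to a simple $s$-$t$ path through a subset of its vertices.

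Now suppose, for contradiction, that some $w \in V(\Bst) \setminus \{s,t\}$ lies on every directed $s$-$t$ path. I would set $A$ to be the set of vertices that reach $w$ and $C$ the set of vertices reachable from $w$, both including $w$. Acyclicity gives $A \cap C = \{w\}$, and because $w$ lies on every $s$-$t$ path while every vertex lies on some $s$-$t$ path, I get $A \cup C = V(\Bst)$; in particular $s \in A \setminus \{w\}$ and $t \in C \setminus \{w\}$. The crux is to show there is no edge of $U(\Bst)$ between $A \setminus \{w\}$ and $C \setminus \{w\}$. Consider a putative such edge with endpoints $a \in A \setminus \{w\}$ and $c \in C \setminus \{w\}$. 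If it is directed $c \to a$, then $w \to c \to a \to w$ is a cycle, contradicting acyclicity. If it is directed $a \to c$, then no $s$-$a$ path can pass through $w$ (such a path would make $w$ reach $a$, and with $a \to w$ would close a cycle), and symmetrically no $c$-$t$ path can pass through $w$; concatenating an $s$-$a$ path, the edge $a \to c$, and a $c$-$t$ path yields a directed $s$-$t$ walk avoiding $w$, hence a directed $s$-$t$ path avoiding $w$, contradicting the choice of $w$. Therefore no such edge exists, so deleting $w$ separates $s$ from $t$ in $U(\Bst)$, i.e. $w$ is an $s$-$t$ cutvertex of $U(\Bst)$.

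I expect the main obstacle to be exactly this bridge between the hypothesis (phrased for the undirected graph $U(\Bst)$) and the minimality condition (phrased for directed $s$-$t$ paths): I must convert the directed statement ``$w$ lies on every path'' into an undirected separation, and this is where acyclicity together with the reachability properties of \Cref{def:gartner} does the work, ruling out the only dangerous configuration, the forward cross edge $a \to c$. Once the cutvertex is produced, the contradiction with the two internally vertex-disjoint undirected paths is immediate, and minimality, hence the proposition, follows.
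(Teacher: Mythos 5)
Your proof is correct. Note that the paper states this proposition without any proof (as it does \Cref{prop:cutvertex-inside-bubble}), so there is no written argument to compare against; your proposal fills the gap, and with what is plainly the intended argument given how the surrounding propositions are used. The one step that genuinely needs care is the bridge from the directed condition ``$w$ lies on every directed $s$-$t$ path of $\Bst$'' to the undirected conclusion ``$w$ is an $s$-$t$ cutvertex of $U(\Bst)$'', and your handling of it is sound: acyclicity plus the fact that every vertex of $\Bst$ is reachable from $s$ and reaches $t$ (\Cref{def:gartner}) gives $A \cup C = V(\Bst)$ and $A \cap C = \{w\}$, and then both orientations of a putative edge between $A\setminus\{w\}$ and $C\setminus\{w\}$ are excluded --- a back edge $c \to a$ closes a cycle, while a forward edge $a \to c$ yields a directed $s$-$t$ walk, hence path, avoiding $w$. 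Two minor remarks: in the forward-edge case you write ``with $a \to w$'' where you mean that $a$ \emph{reaches} $w$ (that is what membership in $A$ gives), not that $a \to w$ is an edge; and your reliance on the paper's reformulation of minimality (no interior vertex lies on every $s$-$t$ path of $\Bst$) is legitimate, since the paper asserts that equivalence immediately before the proposition. Neither point affects the validity of the argument.
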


\begin{lemma}[Superbubbles and cutvertices]\label{lem:bubbles-cutvertices}
    Let $G$ be a directed graph and let $(s,t)$ be a superbubble of $G$ with graph $\Bst$. Then no vertex in the interior of $B_{st}$ is a cutvertex of $U(G)$.
\end{lemma}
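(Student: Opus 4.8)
The plan is to reduce the global statement about $U(G)$ to a purely local one about $U(B_{st})$, exploiting the fact that the endpoints $\{s,t\}$ seal the interior of $B_{st}$ off from the rest of $G$. Concretely, I would first show that every edge incident to an interior vertex stays inside $B_{st}$, and then strengthen \Cref{prop:cutvertex-inside-bubble} from ``no interior $s$-$t$ cutvertex'' to ``no interior cutvertex of $U(B_{st})$ at all''. Together these two facts will immediately give the claim.

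\textbf{Step 1 (incident edges stay inside).} Let $v$ be an interior vertex of $B_{st}$ and let $\{v,w\}$ be an edge of $U(G)$. I would argue $w\in V(B_{st})$ using the separation conditions of \Cref{def:gartner}. The underlying arc is either $vw$ or $wv$. If it is $vw$ with $w\notin V(B_{st})$, then $vw$ is a $v$-$w$ path avoiding $t$ (as $v,w\neq t$), contradicting condition~(4); symmetrically, $wv$ with $w\notin V(B_{st})$ gives a $w$-$v$ path avoiding $s$, contradicting condition~(3). Hence every neighbour of $v$ in $U(G)$ lies in $V(B_{st})$, and since $B_{st}$ is induced, every edge incident to $v$ belongs to $B_{st}$.

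\textbf{Step 2 (no interior cutvertex of $U(B_{st})$).} By \Cref{prop:cutvertex-inside-bubble}, $s$ and $t$ lie in a common component $C_0$ of $U(B_{st})-v$. Suppose toward a contradiction that $U(B_{st})-v$ has another component $C_1$, and pick $u\in C_1$ (so $u\neq s,t,v$). By conditions~(1)--(2) of \Cref{def:gartner}, $u$ is reachable from $s$ and reaches $t$ in $B_{st}$. A directed $s$-$u$ path runs from $C_0$ into $C_1$, so it crosses the unique boundary vertex $v$; let $v\to a$ be the arc by which it first enters $C_1$, so $a\in C_1$. Since $a$ reaches $t\in C_0$, a directed $a$-$t$ path must leave $C_1$, again only through $v$; let $b\to v$ be the arc by which it first reaches $v$, with the prefix $a\leadsto b$ staying inside $C_1$. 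Then $v\to a\leadsto b\to v$ is a directed cycle of $B_{st}$, contradicting the acyclicity of $B_{st}$. Hence $U(B_{st})-v$ is connected.

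\textbf{Step 3 (conclusion).} By Step~1 all neighbours of $v$ in $U(G)$ lie in $V(B_{st})\setminus\{v\}$, and by Step~2 these all lie in the single connected graph $U(B_{st})-v$, which is a subgraph of $U(G)-v$. Therefore the neighbours of $v$ remain pairwise connected in $U(G)-v$, so deleting $v$ does not increase the number of connected components, i.e.\ $v$ is not a cutvertex of $U(G)$. The main obstacle is Step~2: $s$-$t$-cutvertex-freeness alone does not rule out a ``hidden'' branch hanging off $v$, and the crucial leverage is that $B_{st}$ has no dead ends (everything is reachable from $s$ and reaches $t$) which, combined with acyclicity, forces any such branch to close into a directed cycle.
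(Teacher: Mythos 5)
Your proof is correct, but it takes a genuinely different route from the paper's. The paper argues globally on $U(G)$: it case-splits on whether some block of $U(G)$ contains both $s$ and $t$. If not, a vertex separating $s$ from $t$ in $U(G)$ is shown to be an interior $s$-$t$ cutvertex of $U(B_{st})$, contradicting \Cref{prop:cutvertex-inside-bubble}; if so, it takes the first cutvertex $v'$ on the (totally ordered) sequence of $s$-$v$ cutvertices, picks a neighbour $w$ of $v'$ lying in a block containing neither $s$ nor $t$, and reaches a contradiction either via a directed cycle through $v'$ and $w$ (when $w \in V(B_{st})$) or via forced membership of $w$ in $B_{st}$ (when $w \notin V(B_{st})$). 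You instead localize everything to $B_{st}$: you seal the interior using the separation conditions (3)--(4) of \Cref{def:gartner}, you upgrade \Cref{prop:cutvertex-inside-bubble} to ``no interior vertex is a cutvertex of $U(B_{st})$ at all'' by closing a directed cycle $v \to a \to \cdots \to b \to v$ out of reachability plus acyclicity, and you then lift the conclusion to $U(G)$ via the standard criterion that a vertex whose neighbours all lie in one component of the vertex-deleted graph is not a cutvertex. Your route avoids the block-structure reasoning entirely and yields stronger, reusable local facts (every edge of $U(G)$ at an interior vertex lies inside $B_{st}$; $U(B_{st})$ has no interior cutvertex), which makes it arguably more self-contained; the paper's route stays closer to the block-cut-tree viewpoint on which its algorithms are built. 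Two details you should spell out to make the argument airtight: (i) conditions (1)--(2) of \Cref{def:gartner} a priori give reachability in $G$, so you need the (immediate) observation that an $s$-$u$ path avoiding $t$ as an internal vertex stays inside $B_{st}$ (every vertex on it satisfies the matching property, and $B_{st}$ is induced), so that the paths used in your Step~2 really live in $B_{st}$; and (ii) in the degenerate case $a=b$ your cycle is the $2$-cycle $v \to a \to v$, which still contradicts acyclicity, since only parallel (not anti-parallel) arcs are excluded by assumption.
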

\begin{proof}
    We can assume that $\Bst$ contains at least three vertices.
    Suppose for a contradiction that the interior of $\Bst$ contains a cutvertex of $U(G)$. 
    There are two cases to analyze.

    \textbf{No block contains both $s$ and $t$:}
    Since no block contains both $s$ and $t$ there is a vertex $v$ whose removal disconnects $s$ from $t$ in $U(G)$. Vertex $v$ is thus reachable from $s$ without $t$ in $G$ and hence $v \in V(\Bst)$ because $(s,t)$ is a superbubble. Moreover, since every $s$-$t$ path in $U(G)$ contains $v$ and $U(\Bst) \subseteq U(G)$ (because $\Bst \subseteq G$), every $s$-$t$ path in $U(\Bst)$ contains $v$ and thus $v$ is an $s$-$t$ cutvertex with respect to $U(\Bst)$. Applying \Cref{prop:cutvertex-inside-bubble} gives a contradiction.

    \textbf{There is a block containing $s$ and $t$:}
    Let $v$ be a cutvertex in the interior of $\Bst$.
    Let $v'$ be the first vertex on the sequence of $s$-$v$ cutvertices in $U(G)$ (notice that the sequence of cutvertices between any two vertices is totally ordered). Then $s$ and $v'$ are in the same block, and thus so is $t$.
    
    Since $v$ is reachable from $s$ in $G$ without using $t$ (because $v \in V(\Bst)$) and every $s$-$v$ path in $U(G)$ contains $v'$, $v'$ is also reachable from $s$ without $t$. Therefore $v' \in V(\Bst)$ since $(s,t)$ is a superbubble.
    Further, since $s,t,v'$ are in the same block and $v'$ is a cutvertex of $U(G)$ there is a vertex $w \in V(G) \setminus \{s,t,v'\}$ such that $w$ belongs to a block containing $v'$ and not $s$ and $t$; moreover, $w$ can be chosen so that $v'w$ or $wv'$ is an edge of $G$.
    
    Notice now that every $s$-$w$ and $t$-$w$ path in $U(G)$ contains $v'$ and thus every $s$-$w$ and $w$-$t$ path in $G$ contains $v'$.
    If $w \in V(\Bst)$ then there is a cycle in $\Bst$ through $w$ and $v'$, a contradiction.
    Suppose now that $w \notin V(\Bst)$. Since $v' \in V(\Bst)$, $s$ reaches $v'$ without $t$ and $v'$ reaches $t$ without $s$ in $G$. So if $v'w \in E(G)$ then $s$ reaches $w$ without $t$ in $G$, and because $(s,t)$ is a superbubble it implies that $w \in V(\Bst)$, a contradiction; the case where $w'v \in E(G)$ follows symmetrically.
\end{proof}

By \Cref{lem:bubbles-cutvertices} a cutvertex of $U(G)$ can only be the entry or the exit of a superbubble. Therefore the superbubbles of $G$ are confined to the blocks of $G$ and there is a unique block that contains both the entrance and exit of of any given superbubble.
Then the task of computing superbubbles in a directed graph $G$ reduces to that of computing superbubbles in each block of $G$. Since block-cut trees can be built in linear time, if we can find superbubbles inside a block in linear time then we can find every superbubble of an arbitrary graph also in linear time.

Our approach to enumerate superbubbles is based on the following results.

\begin{theorem}[Superbubbles and split pairs]\label{thm:bubbles-split-pairs}
    Let $G$ be a weakly connected directed graph and let $(s,t)$ be a superbubble of $G$. Let $H_1,\dots,H_{\ell}$ be the blocks of $U(G)$ $(\ell \geq 1)$. Then $\{s,t\}$ is a split pair of some $H_i$ or $V(B_{st})=V(H_i)$.
\end{theorem}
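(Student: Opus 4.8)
The plan is to first locate $B_{st}$ within the block structure of $U(G)$ and then read off the required separation directly from the superbubbloid conditions of \Cref{def:gartner}. By \Cref{lem:bubbles-cutvertices}, no vertex in the interior of $B_{st}$ is a cutvertex of $U(G)$; as observed right after that lemma, this confines every superbubble to a single block, so there is a block $H_i$ with $s,t\in V(H_i)$ and $V(B_{st})\subseteq V(H_i)$. I would make the containment explicit as follows. Since $B_{st}$ is acyclic with unique source $s$ and unique sink $t$, every $u\in V(B_{st})$ lies on some $s$-$t$ path of $U(B_{st})\subseteq U(G)$. A simple path whose endpoints $s,t$ lie in $H_i$ cannot leave and re-enter $H_i$, since by the tree structure of blocks it would have to revisit the cutvertex through which it left; hence every such interior vertex $u$ belongs to $V(H_i)$, giving $V(B_{st})\subseteq V(H_i)$.

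Next I would establish the separation property. As $(s,t)$ is a superbubble it is in particular a superbubbloid, so conditions (3)--(4) of \Cref{def:gartner} apply: for any interior vertex $u$ of $B_{st}$ and any $w\in V(G)\setminus V(B_{st})$, every $w$-$u$ path contains $s$ and every $u$-$w$ path contains $t$. Because $u\notin\{s,t\}$ and $w\notin\{s,t\}$, a single edge joining $u$ and $w$ (in either orientation) would be a path through neither $s$ nor $t$, which is impossible. Thus no edge of $U(G)$ joins $\mathrm{int}(B_{st})$ to the outside, so deleting $\{s,t\}$ from $U(G)$ separates $\mathrm{int}(B_{st})$ from $V(G)\setminus V(B_{st})$.

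Finally I would run a short case analysis inside $H_i$. If $\{s,t\}$ is an edge of $H_i$ then it is a split pair and we are done; note that this covers the trivial case $V(B_{st})=\{s,t\}$, since reachability forces the direct edge $st$, and that an edge $\{s,t\}$ of $U(G)$ necessarily lies in $H_i$ because two distinct blocks share at most one vertex. Otherwise $s$ and $t$ are nonadjacent and $\mathrm{int}(B_{st})\neq\emptyset$. If $V(B_{st})=V(H_i)$ we land in the second alternative of the theorem. If instead some $w\in V(H_i)\setminus V(B_{st})$ exists, pick any interior vertex $u$; then $u,w\in V(H_i)$, and by the separation property every $u$-$w$ path of $U(G)$ — in particular every such path inside $H_i$ — passes through $s$ or $t$. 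Hence removing $\{s,t\}$ disconnects $H_i$, so $\{s,t\}$ is a separation pair of $H_i$ and thus a split pair.

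The step I expect to be the main obstacle is the first one: arguing cleanly that \emph{all} of $B_{st}$, and not merely its non-cutvertex interior, lies in the single well-defined block $H_i$ that contains both $s$ and $t$. Once that containment and the standard ``two distinct blocks meet in at most one vertex'' fact are in hand, the separation argument and the case analysis are routine.
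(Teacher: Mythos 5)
Your proof is correct and follows essentially the same route as the paper: confine $V(B_{st})$ to a single block via the superbubbles-and-cutvertices lemma, then invoke conditions (3)--(4) of the superbubbloid characterization to show that $\{s,t\}$ separates that block, handling the edge and whole-block cases separately. The only difference is that you spell out two steps the paper leaves implicit --- the containment $V(B_{st})\subseteq V(H_i)$ via simple $s$-$t$ paths (using acyclicity), and the passage from the directed path conditions to undirected separation via the ``no edge crosses from the interior to the outside'' argument --- both of which are sound.
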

\begin{proof}
    It follows from~\Cref{lem:bubbles-cutvertices} that $V(B_{st})$ is contained in a block of $U(G)$. Assume without loss of generality that $V(B_{st}) \subseteq V(H_1)$.
    We can assume that $|V(B_{st})| \geq 3$, $V(B_{st}) \neq V(H_1)$. We show that $\{s,t\}$ disconnects $H_1$.
    Let $u \in V(B_{st}) \setminus \{s,t\}$ and let $v \in V(H_1) \setminus V(B_{st})$.
    By 3 and 4 of \Cref{def:gartner}, every $u$-$v$ path in $G$ contains $t$ and every $v$-$u$ path in $G$ contains $s$. But then every $u$-$v$ path in $H_1$ contains $s$ or $t$, and therefore $\{s,t\}$ is a separation pair of $H_1$.
\end{proof}


The interesting case of~\Cref{thm:bubbles-split-pairs} is when the vertices identifying a superbubble form a separation pair of a block, as all the other cases can be dealt with a linear-time preprocessing step.
By~\Cref{prop:spqr-tree-contains-split-pairs} we know that every separation pair of $U(G)$ is encoded as the endpoints of a virtual edge of some node of the SPQR tree (except for nonadjacent vertices $u$ and $v$ contained in an S-node, but these will not cause problems). Conversely, the vertices of any virtual edge in a node of the SPQR tree forms a separation pair.
Therefore, by correct examination of the virtual edges of the SPQR tree we can obtain the complete set of superbubbles of $G$.
The high level idea of our algorithm is to examine every node/edge of the SPQR tree in an ordered way. In each step we gather information that depends on the subtrees the node/edge separates the tree into.

\begin{lemma}[Unique orientation at poles of acyclic components]\label{lem:unique-poles-from-acyclicity}
    Let $G$ be a directed graph and let $C \subseteq V(G)$, $|C| 
    \geq 2$, be such that $U(G[C])$ is connected and $G[C]$ is acyclic. Moreover, let $s,t \in C$ be such that for all other vertices $v \in C \setminus \{s,t\}$ there is no edge in $U(G)$ between $v$ and some $v' \in V(G) \setminus C$. If no vertex in $C \setminus \{s,t\}$ is a source or sink of $G$, then one vertex among $\{s,t\}$ is the unique source of $G[C]$ and the other vertex is the unique sink of $G[C]$.
\end{lemma}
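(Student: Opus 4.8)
The plan is to show that the stated hypotheses force every source and every sink of $G[C]$ to lie in $\{s,t\}$, and then to invoke the elementary fact that a nonempty finite acyclic graph always has a source and a sink in order to pin down $s$ and $t$ as precisely these.

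First I would analyze the internal vertices of $C$. Fix any $v \in C \setminus \{s,t\}$. By hypothesis $v$ has no edge of $U(G)$ to any vertex outside $C$, so every edge of $G$ incident to $v$ has both endpoints in $C$ and therefore survives in the induced subgraph $G[C]$. Consequently $N^+_{G[C]}(v) = N^+_G(v)$ and $N^-_{G[C]}(v) = N^-_G(v)$. Since $v$ is neither a source nor a sink of $G$, we have $|N^-_G(v)| \geq 1$ and $|N^+_G(v)| \geq 1$, and the equalities just noted give $|N^-_{G[C]}(v)| \geq 1$ and $|N^+_{G[C]}(v)| \geq 1$. Thus no vertex of $C \setminus \{s,t\}$ is a source or a sink of $G[C]$, so every source and every sink of $G[C]$ belongs to $\{s,t\}$.

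Next I would use acyclicity. Because $G[C]$ is acyclic and $C$ is finite and nonempty, $G[C]$ admits a topological order; its first element is a source and its last element is a sink, so $G[C]$ has at least one source and at least one sink. Moreover, since $U(G[C])$ is connected and $|C| \geq 2$, no vertex of $G[C]$ is isolated, hence no vertex is simultaneously a source and a sink. Combining these facts: $G[C]$ has at least one source and at least one sink, these are distinct, and both lie in the two-element set $\{s,t\}$. The only possibility is therefore that one of $s,t$ is the unique source and the other the unique sink of $G[C]$, which is exactly the claim.

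I do not anticipate a genuine obstacle, as the argument is a degree-counting observation combined with the basic structure of finite DAGs. The one point requiring care is the reduction from $G$ to $G[C]$: one must be sure that passing to the induced subgraph does not inadvertently turn an internal vertex into a source or sink, and this is precisely what the ``no edges to the outside of $C$'' hypothesis guarantees, since it keeps the in- and out-neighbourhoods of internal vertices unchanged. The complementary subtlety is ruling out the degenerate configurations in which both $s$ and $t$ are sources (or both sinks), or a single vertex is both a source and a sink; these are dispatched by the connectivity of $U(G[C])$, which forbids isolated vertices and thereby forces the source and the sink to be distinct.
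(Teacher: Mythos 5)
Your proof is correct and follows essentially the same route as the paper's: both arguments show that the ``no edges to the outside'' hypothesis prevents any vertex of $C \setminus \{s,t\}$ from being a source or sink of $G[C]$ (you argue the contrapositive of the paper's statement, which is equivalent), then use acyclicity to guarantee at least one source and one sink, and connectivity of $U(G[C])$ to rule out a vertex being both. The only difference is cosmetic phrasing; nothing is missing.
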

\begin{proof}
    Notice that since $G[C]$ is acyclic, it has at least one source (relative to $G[C]$), say $v$. If $v \in C \setminus \{s,t\}$, then it is also a source of $G$, since by the hypothesis there is no edge in $U(G)$ between $v$ and some $v' \in V(G) \setminus C$. This contradicts the assumption that no vertex in $C \setminus \{s,t\}$ is a source of $G$. Therefore, any source of $G[C]$ belongs to $\{s,t\}$. y

    Symmetrically, we have that any sink of $G[C]$ belongs to $\{s,t\}$. Observe that neither $s$ nor $t$ can be both a source and a sink of $G[C]$, because otherwise it would be an isolated vertex with no edges in $G[C]$, which contradicts the assumption that $U(G[C])$ is connected. Therefore, since $G[C]$ has at least one source and at least one sink (being acyclic), one vertex among $\{s,t\}$ is the unique source of $G[C]$ and the other vertex is the unique sink in $G[C]$.
\end{proof}

\subsection{The superbubble finding algorithm}

\begin{figure}[h]
    \centering
    \includegraphics[scale=1]{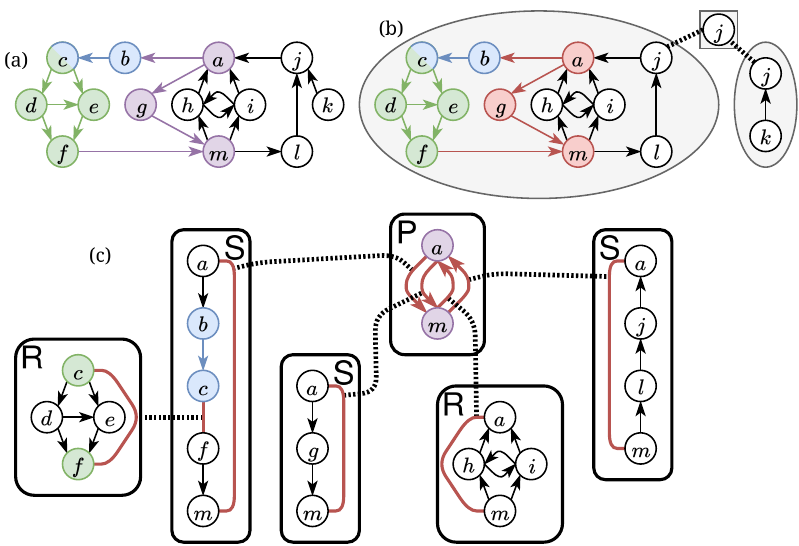}
    \caption{
    (a) A graph with highlighted superbubbles $(a, m)$ (purple, blue and green), $(b, c)$ (blue) and $(c, f)$ (green).
    Superbubbles $(b, c)$ and $(c, f)$ are nested inside $(a, m)$.
    For example, $(l, k)$ is not a superbubble because $l$ does not reach $k$; $(l, j)$ is not a superbubble because $l$ does not reach $k$ without using $j$ as an internal vertex, but $k$ reaches $j$ without using $l$ as an internal vertex; $(m, a)$ is not a superbubble because (among others condition violations) $B_{ma}$ contains the cycle $(h, i, h)$; and $(b, f)$ is a superbubbloid but it is not a superbubble because $c$ disproves minimality.
    (b) The block cut tree of the graph, with superbubbles highlighted in the same way.
    Oval nodes are blocks and square nodes are cutnodes.
    In the larger block, it looks like $(l, j)$ was a superbubble, but since it has an incoming arc from outside the block, it is not.
    (c) The SPQR tree of the larger block with virtual edges in bold red.
    The endpoints of the virtual edges that represent the superbubbles are highlighted.
    Superbubble $(a, m)$ consists of the two $(a, m)$ virtual edges in the P-node.
    Here, we also marked which of the two P-node-groups the virtual edges belong to by giving them a direction.
    Superbubble $(b, c)$ is found while examining each edge of the graph for being a trivial superbubble, and superbubble $(c, f)$ is found while analyzing the edge between the corresponding S-node and R-node.
    }
    \label{fig:superbubbles-spqr}
\end{figure}

In this section we develop an algorithm to find superbubbles in a directed graph $G$. Our algorithm is based on traversals over the SPQR trees of the blocks of $U(G)$.

In a preliminary phase we build the block-cut tree of $U(G)$ to obtain its blocks $H_1,\dots,H_\ell$ $(\ell\geq1)$. The algorithm reports pairs $(s,t)$ only within these blocks (recall \Cref{lem:bubbles-cutvertices}). Multi-bridges $\{s,t\}$ are handled with a simple predicate (essentially, these can only correspond to trivial superbubbles). Next, those superbubbles whose vertex set coincides with the vertex set of the current block are reported. Then, by \Cref{thm:bubbles-split-pairs}, any remaining superbubble in $H_i$ induces a separation pair in $H_i$, for which we build the SPQR tree $T_i$ of $H_i$ and have a dedicated routine to find superbubbles therein.
We can assume in what follows that the SPQR tree of $H_i$ does not consist of a single node, since otherwise every superbubble is either $H_i$ or a single edge by \Cref{thm:bubbles-split-pairs}.

There are two main pieces of information that are carried during the algorithm: reachability between endpoints of virtual edges and acyclicity of subgraphs. The minimality property will follow from the structure of the skeletons of the nodes of the SPQR tree. The matching property will also follow from structure and it interplays with the absence of sources/sinks in the skeletons.
The next result supports these claims and will be used later on frequently.

\begin{lemma}
\label{lem:separation-pair-is-superbubbloid}
    Let $G$ be a directed graph, let $\{s,t\}$ be a separation pair of a maximal 2-connected subgraph of $U(G)$, and let $H$ be the disjoint union of a nonempty subset of the split components of $\{s,t\}$. Let $K=G[V(H)]$. If there are no extremities of $G$ in $V(K)\setminus\{s,t\}$, $K$ is acyclic, $N^+_G(s) \subseteq V(K)$, $N^-_G(t) \subseteq V(K)$, and $ts \notin E(G)$, then $(s,t)$ is a superbubbloid of $G$ with graph $K$.
\end{lemma}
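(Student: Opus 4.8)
The plan is to verify directly the six conditions of \Cref{def:gartner} for the subgraph $B=K$, which by that characterization is equivalent to $(s,t)$ being a superbubbloid with graph $\Bst=K$. Before checking the conditions I would isolate two structural facts about $K$. First, $U(K)$ is connected: every split component of the pair $\{s,t\}$ contains both $s$ and $t$ and is connected, so the disjoint union $H$ of a nonempty collection of them is connected through $s$ and $t$, and $U(K)$ contains $U(H)$ on the same vertex set. Second, and crucially, the interior $V(K)\setminus\{s,t\}$ is \emph{closed}: for $v\in V(K)\setminus\{s,t\}$ there is no edge of $U(G)$ joining $v$ to a vertex outside $V(K)$. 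Indeed, $v$ is not a cutvertex of $U(G)$ (it is not an extremity by hypothesis), so all its edges lie in the block of $U(G)$ containing $\{s,t\}$; and since a vertex other than $s,t$ lies in a unique split component, which belongs to $H$, every such edge stays inside $K$.

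For conditions (1) and (2) I would invoke \Cref{lem:unique-poles-from-acyclicity} with $C=V(K)$: the facts just established ($U(K)$ connected, $K$ acyclic, closedness of the interior, and no source or sink of $G$ in the interior) are exactly its premises, so one of $s,t$ is the unique source of $K$ and the other the unique sink. The delicate step is to pin down the orientation. If $s$ were the sink then $N^+_K(s)=\emptyset$ and, as $N^+_G(s)\subseteq V(K)$, also $N^+_G(s)=\emptyset$; symmetrically $t$ would be a source of $G$. I would use the asymmetry of the hypotheses $N^+_G(s)\subseteq V(K)$ and $N^-_G(t)\subseteq V(K)$ to exclude this orientation and conclude that $s$ is the source and $t$ the sink. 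With $s$ the unique source of the connected acyclic graph $K$, a backward walk along in-edges from any $u\in V(K)$ cannot repeat a vertex and must terminate at a source, necessarily $s$, yielding an $s$-$u$ path; this is condition (1). Symmetrically every $u$ reaches the unique sink $t$, giving condition (2).

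Conditions (3)--(6) then follow from closedness together with the neighbourhood hypotheses. For condition (3), a $w$-$u$ path with $w\notin V(K)$ and $u\in V(K)$ must enter $K$; by closedness its first vertex in $V(K)$ is $s$ or $t$, and it cannot be $t$ because $N^-_G(t)\subseteq V(K)$ forbids an in-edge to $t$ from outside $K$, so the path contains $s$. Condition (4) is dual, using $N^+_G(s)\subseteq V(K)$ to rule out $s$ as the exit vertex, so the path contains $t$. For condition (5), given $uv\in E(K)$ and a $v$-$u$ path $Q$ in $G$: if $Q$ stayed inside $K$ it would close a cycle with the edge $uv$, contradicting acyclicity of $K$; hence $Q$ leaves $K$, so its first exit vertex is $t$ (by the argument for condition (4)) and its last re-entry vertex is $s$ (by that for condition (3)), whence $Q$ contains both $s$ and $t$. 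Finally, condition (6) is the hypothesis $ts\notin E(G)$.

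The main obstacle is the orientation determination in the second paragraph: establishing that $s$, rather than $t$, is the source of $K$. This is exactly where the otherwise symmetric treatment must be broken, and it is the asymmetry of $N^+_G(s)\subseteq V(K)$ and $N^-_G(t)\subseteq V(K)$ that selects entrance and exit. The other technical point on which everything rests is proving interior-closedness cleanly, since it both makes \Cref{lem:unique-poles-from-acyclicity} applicable and controls the boundary behaviour at $s$ and $t$ used throughout conditions (3)--(5).
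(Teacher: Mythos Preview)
Your approach---verifying the six conditions of \Cref{def:gartner} after isolating connectedness of $U(K)$ and closedness of the interior---is essentially the paper's, though you invoke \Cref{lem:unique-poles-from-acyclicity} on $K$ globally where the paper argues per split component, and your treatment of (3)--(5) via closedness plus the neighbourhood inclusions is tidier than the paper's more roundabout contradiction arguments.

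The one real gap is exactly the orientation step you flag, and your proposed resolution does not work. You correctly deduce that if $s$ were the sink of $K$ then $N^+_G(s)=\emptyset$ and $N^-_G(t)=\emptyset$, but the ``asymmetry of the hypotheses $N^+_G(s)\subseteq V(K)$ and $N^-_G(t)\subseteq V(K)$'' you then appeal to is vacuously satisfied on both sides and excludes nothing. In fact the lemma as stated is false in that degenerate case: take $G$ on vertices $\{s,t,a,b\}$ with arcs $t\to a$, $a\to s$, $t\to b$, $b\to s$; then $U(G)$ is a $4$-cycle, $\{s,t\}$ is a separation pair, and with $H$ the split component through $a$ every hypothesis of the lemma holds (both neighbourhood inclusions are empty-set inclusions), yet $(s,t)$ is not a superbubbloid since $s$ reaches nothing. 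The paper's own proof shares this defect---its inference that $|N^+_G(s)|\ge1$ follows from ``$K$ has edges incident to $s$'' together with $N^+_G(s)\subseteq V(K)$ is a non sequitur. A clean fix is to add the nondegeneracy hypothesis $N^+_G(s)\neq\emptyset$ (equivalently, $s$ has an out-edge in $K$); with that in hand your argument goes through verbatim.
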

\begin{proof}
    Let $K_1,\dots,K_k$ denote the set of split components of $\{s,t\}$ whose union is $K$ $(k\geq1)$. Since split components are connected, $K$ has edges incident to $s$, and since every out-neighbour of $s$ is in $K$ by assumption we have $|N^+_G(s)| \geq 1$ (analogously, $|N^-_G(t)| \geq 1$). Notice that each $K_i$ is a DAG with unique source $s$ and unique sink $t$, since a source or sink of $G$ except $\{s,t\}$ in $K_i$ contradicts the absence of extremities except $\{s,t\}$ in $K$ and a cycle in $K_i$ contradicts the acyclicity of $K$.
    
    We show that $K$ fulfills each condition of \Cref{def:gartner}.

    \begin{enumerate}
        \item Let $u \in V(K_i)$ and let $p=v_1,\dots,v_\ell$ be a maximal path in $K_i$ containing $u$ $(\ell \geq 1)$.
        
        Notice that if $v_1\neq s$ then $v_1$ is not a source of $K_i$, thus $K_i$ has an in-neighbour $u$ of $v_1$. If $u$ is not in $p$ then $p$ can be prolonged, contradicting its maximality. Otherwise $u$ is in $p$ and thus there is a cycle in $K_i$, contradicting the acyclicity of $K$ as $K_i\subseteq K$. Therefore $v_1=s$, and analogously one can show that $v_\ell=t$. Thus, if $u=s$ or $u=t$ then $p$ witnesses that $s$ reaches $t$, otherwise $p$ witnesses that every vertex distinct from $s$ and $t$ is reachable from $s$ and reaches $t$. As $\bigcup_{i=1}^kV(K_i)=V(K)$ it follows that every vertex in $K$ is reachable from $s$ and reaches $t$.
        
        \item (Proved in the previous item.)
        
        \item Let $u\in V(K)$ and $w\in V(G) \setminus V(K)$ and let $K_i$ be the split component that contains $u$.
        Suppose for a contradiction that there is a $w$-$u$ path in $G$ avoiding $s$. Then there is a $w$-$u$ path $p$ in $U(G)$ avoiding $s$. This path contains $t$. Suppose otherwise. Then $p$ avoids both $s$ and $t$. Thus, $G$ has an edge with one endpoint in $V(G)\setminus V(K)$ and the other in $V(K_i)\setminus\{s,t\}$ for some $\iink$, which contradicts the fact that $K$ is the union of split components of $\{s,t\}$, or the vertex in $K_i$ is an endpoint of an edge not in $H$, so that vertex is a cutvertex of $U(G)$, which contradicts the fact that $G$ has no extremities except $\{s,t\}$.
        Thus $t$ has an in-neighbour $v$ in $V(G)\setminus V(K)$.
        
        Let $p=v_1,\dots,v_\ell$ be a maximal path in $K$ starting at $t$ $(\ell \geq 2)$. Since we showed in the previous item (1) that $s$ reaches $t$, $v_\ell \neq s$ otherwise $K$ has a cycle, a contradiction. Thus $v$ is distinct from $s$ and $t$. Since $K$ has no extremities except $\{s,t\}$, $v$ has in-neighbors, all of which are in $p$ (for otherwise $p$ can be prolonged, contradicting its maximality), and thus there is a cycle in $K$, a contradiction. Therefore every $w$-$u$ path in $G$ contains $s$.
        
        \item (Analogous to the previous item.)
        
        \item Let $uv \in E(K)$. Let $K_i$ denote the split component that contains $u$.
        If $G$ has a $v$-$u$ path $p$ avoiding $t$ then $p$ uses only vertices of $K_i$ because the out-neighbours of $s$ are all in $K$ and $K_i$ is a split component. Therefore $K_i$ has a $v$-$u$ path, which together with the edge $uv$ creates a cycle, contradicting the fact that $K_i$ is acyclic. Analogously, one can derive the same argument for $s$. Therefore every $v$-$u$ path in $G$ contains both $t$ and $s$.
        
        \item Direct by assumption.
    \end{enumerate}
\end{proof}

Let $\{\nu,\mu\} \in E(T_i)$ such that $\nu$ is the parent of $\mu$, and let $e_\nu$ be the virtual edge in $\skel(\mu)$ pertaining to $\nu$ and define $e_\mu$ analogously; let $\{s,t\}$ denote the endpoints of these virtual edges. In the edge $\{\nu,\mu\}$ we store two pieces of information, the state corresponding to the subgraph $\expansion(e_\mu)$ as $\state{\nu}{\mu}$ and the state corresponding to $\expansion(e_\nu)$ as $\state{\mu}{\nu}$. We say that $\state{\nu}{\mu}$ \emph{leaves} $\nu$ and that it \emph{enters} $\mu$. (This can be seen as a directed edge in the tree pointing from $\nu$ to $\mu$.) Notice that any state uniquely identifies a pair of virtual edges $e_\nu$ and $e_\mu$.
Let $X=\expansion(e_\mu)$. In $\mathsf{State_{\nu,\mu}}$ we store the following information:

\begin{itemize}
    \item $\noextremity{\nu}{\mu} := \true$ iff no vertex in $V(X)\setminus\{s,t\}$ is an extremity of $G$.
    
    \item $\acyclic{\nu}{\mu} := 
    \begin{cases}
        \Null, & \text{if $\noextremity{\nu}{\mu}$ is false,}\\
        \true, & \text{otherwise, if $X$ is acyclic,}\\
        \false, & \text{otherwise.}
    \end{cases}$\\
    \item $\reachesst{\nu}{\mu} := 
    \begin{cases}
        \Null, & \text{if $\acyclic{\nu}{\mu}$ is $\false$ or $\Null$,}\\ 
        \true, & \text{otherwise, if $s$ reaches $t$ in $X$,}\\
        \false, & \text{otherwise.}
    \end{cases}
    $\\
    \item $\reachests{\nu}{\mu} := 
    \begin{cases}
        \Null, & \text{if $\acyclic{\nu}{\mu}$ is $\false$ or $\Null$,}\\ 
        \true, & \text{otherwise, if $t$ reaches $s$ in $X$,}\\
        \false, & \text{otherwise.}
    \end{cases}
    $
\end{itemize}

With this information we can ``almost'' decide if a separation pair $\{s,t\}$ identifies a superbubble $(s,t)$, since if $\noextremity{\nu}{\mu}$ and $\acyclic{\nu}{\mu}$ are $\true$, $N^+_G(s),N^-_G(t) \subseteq V(\expansion(e_\mu))$, and  $ts\notin E(G)$, then $(s,t)$ is a superbubbloid with graph $\expansion(e_\mu)$ by~\Cref{lem:separation-pair-is-superbubbloid}. This also explains that most of our effort is in the computation of $\state{\nu}{\mu}$ for every edge $\{\nu,\mu\}$ of $T$, as the other conditions can be checked easily. The minimality property, as we will see, does not raise any additional challenges as it follows from the structure of the nodes of the SPQR tree.

To update the states we traverse the (rooted) SPQR tree $T_i$ of $H_i$, henceforth denoted for simplicity as $T$ and $H$, respectively.
The algorithm consists of three phases.

\begin{itemize}[label={}]
    \item \emph{Phase 1.} Process the edges $\{\nu,\mu\}$ of $T$ (with $\nu$ the parent of $\mu$) with a DFS traversal starting in the root, and compute all $\state{\nu}{\mu}$.
    \item \emph{Phase 2.} Process the nodes $\nu$ of $T$ with a BFS traversal starting in the root. For every child $\mu$ of $\nu$, we compute all $\state{\mu}{\nu}$.
    \item \emph{Phase 3.} Examine the separation pairs $\{s,t\}$ of $H$ via $T$ and use the information computed in the previous phases to decide whether $(s,t)$ or $(t,s)$ are superbubbles.
\end{itemize}



\paragraph{Phase 1.}
Phase 1 is a dynamic program over the edges of $T$.
Let $\nu$ be the parent of $\mu$ in $T$ and let $\{s,t\}$ denote the endpoints of $e_\nu \in \skel(\mu)$ and of $e_\mu \in \skel(\nu)$.
If $\mu$ has no children then the edges of its skeleton but $e_\nu$ are all real edges, and hence the problem of updating $\state{\nu}{\mu}$ is trivial: with DFS on $\dirskel(\mu)-st-ts$ we can decide $\noextremity{\nu}{\mu}$, $\acyclic{\nu}{\mu}$, $\reachesst{\nu}{\mu}$, and $\reachests{\nu}{\mu}$.
Otherwise $\mu$ has at least one virtual edge besides $e_\nu$. Let us denote the children of $\mu$ by $\mu_1,\dots,\mu_k$ $(k\geq 1)$ and denote the endpoints of the corresponding virtual edges $e_i$ in $\skel(\mu)$ as $\{s_i,t_i\}$ for all $i \in \{1,\dots,k\}$.
Assume recursively that $\state{\mu}{\mu_i}$ is solved and let $X = \expansion(e_\mu)$, $X_i = \expansion(e_i)$ for all $i \in \{1,\dots,k\}$, and let $K=\dirskel(\mu)-st-ts$.

We now describe how to compute the states $\state{\nu}{\mu}$.
\begin{description}
    \item[$\noextremity{\nu}{\mu}$:] We set $\noextremity{\nu}{\mu}$ to $\true$ if and only if no vertex in $V(K) \setminus \{s,t\}$ is an inner extremity and $\noextremity{\mu}{\mu_i}$ is $\true$ for all $i \in \{1,\dots,k\}$. To see this is correct we prove both implications.
    
    ($\Rightarrow$) Suppose no vertex in $V(X) \setminus \{s,t\}$ is an extremity. Then indeed, no vertex in $V(K) \setminus \{s,t\}$ is an extremity because $V(K) \subseteq V(X)$.
    Moreover, $\noextremity{\mu}{\mu_i}$ must be $\true$ for all $i \in \{1,\dots,k\}$, for otherwise an extremity $x$ in $X_i$ is different from both $s_i$ and $t_i$ and thus also different from $s$ and $t$, as it does not belong to $\skel(\mu)$ since $\{s_i,t_i\}$ is a separation pair.

    $(\Leftarrow)$ Suppose no vertex in $V(K) \setminus \{s,t\}$ is an extremity and $\noextremity{\mu}{\mu_i}$ is $\true$, for all $i \in \{1,\dots,k\}$. For a contradiction, assume that some $x \in V(X) \setminus \{s,t\}$ is an extremity. By the initial assumption, we have that $x$ cannot belong to $V(K)$. Thus, $x$ is also different from $s_i,t_i$, for all $i \in \{1,\dots,k\}$. Since $x \in V(X) \setminus \{s,t\}$, $x$ must belong to $X_i$, for some $i \in \{1,\dots,k\}$. Therefore, it is an extremity for it, since it is different from $s_i$ and $t_i$. This contradicts the initial assumption that $\noextremity{\mu}{\mu_i}$ is $\true$.

    \item[$\acyclic{\nu}{\mu}$:] If $\noextremity{\nu}{\mu}$ is $\false$ then we set $\acyclic{\nu}{\mu}$ to $\Null$, which is correct by definition. Thus, in the following we assume that $\noextremity{\nu}{\mu}$ is $\true$.
    
    If for some $\iink$ $\acyclic{\mu}{\mu_i}$ is $\Null$, then by definition $\noextremity{\mu}{\mu_i}$ is $\false$. Let thus $x$ be an extremity in $V(X_i) \setminus \{s_i,t_i\}$. Since $\{s_i,t_i\}$ is a separation pair, we have that $x \notin \{s,t\}$. Thus, $x \in V(X) \setminus \{s,t\}$, which contradicts the fact $\noextremity{\nu}{\mu}$ is $\true$.
    
    If for some $\iink$ $\acyclic{\mu}{\mu_i}$ is $\false$, then $X_i$ has a cycle, which implies that also $X$ contains a cycle because $X_i$ is a subgraph of $X$. Since $\noextremity{\nu}{\mu}$ is $\true$, then by definition we can set $\acyclic{\nu}{\mu}$ to $\false$.

    \begin{sloppypar}
    Finally, we are in the case where for every $\iink$, $\acyclic{\mu}{\mu_i}$ is $\true$, and therefore $\reachesst{\mu}{\mu_i}$ and $\reachests{\mu}{\mu_i}$ are $\true$ or $\false$. In other words, each $X_i$ is acyclic, and, importantly, the reachability in $X_i$ between the endpoints of each virtual edge $e_i$ are known.
    
    Then $K$ can be built explicitly and we can set $\acyclic{\nu}{\mu}$ to $\true$ if $K$ is acyclic and to $\false$ otherwise. To see this is correct, note that any cycle $C$ in $X$ can be mapped to a cycle in $K$: whenever $C$ uses edges of some $X_i$, it passes through $s_i$ (or $t_i$), and since $X_i$ is acyclic, it must return to $t_i$ (or $s_i$). This path of $C$ in $X_i$ between $s_i$ and $t_i$ (or between $t_i$ and $s_i$) can be mapped to the edge of $K$ that was introduced from $s_i$ to $t_i$, if $\reachesst{\mu}{\mu_i}$ is $\true$ (or from $t_i$ to $s_i$, if $\reachests{\mu}{\mu_i}$ is $\true$). Viceversa, every cycle $C$ in $K$ can be symmetrically mapped to a cycle in $X$ such that whenever $C$ uses some edge $s_it_i$ (or $t_is_i$) in $K$, we expand this edge into a path from $s_i$ to $t_i$ (or from $t_i$ in $s_i$) in $X_i$.
    \end{sloppypar}
    
    \item[$\reachesst{\nu}{\mu}$, $\reachests{\nu}{\mu}$:] At this point we have $\acyclic{\nu}{\mu}$ computed. If it is $\false$ or $\Null$ we set $\reachesst{\nu}{\mu}$ and $\reachests{\nu}{\mu}$ to $\Null$, which is correct by definition. Otherwise $X$ is acyclic, $V(X) \setminus \{s,t\}$ has no cutvertex, no sink and no source of $G$, and there is no edge between a vertex not in $V(X)$ and a vertex of $V(X) \setminus \{s,t\}$ since $\{s,t\}$ is a separation pair; moreover $U(X)$ is clearly connected. We can thus apply \Cref{lem:unique-poles-from-acyclicity} to conclude that one vertex between $s$ and $t$ is a source of $X$ and the other is a sink. In the former case we can set $\reachesst{\nu}{\mu}$ to $\true$, and in the latter case we can set $\reachesst{\nu}{\mu}$ to $\true$.
\end{description}

Let $\mu$ be a node of $T$ and let $\nu$ be its parent.
Since $T$ is a tree, each of its edges is processed exactly once during the DFS, thus every state of the form $\state{\nu}{\mu}$ is updated during the algorithm. Moreover, since every node $\mu$ has a unique parent in $T$ and $\dirskel(\mu)$ is built only when the edge $\{\nu,\mu\}$ is processed, each directed skeleton $\dirskel(\mu)$ is built only once during the algorithm. Since the computational work per edge $\{\nu,\mu\}$ is linear in the size of $\dirskel(\mu)$ and the total size of all skeletons is linear in the size of the current block $H$ (recall \Cref{lem:spqr-total-size}), Phase 1 runs in time $O( |V(H)| + |E(H)|)$.

\begin{algorithm}[t]
\caption{Superbubble finding -- Phase 1}
\label{alg:phase1}
\KwIn{Directed graph $G$, SPQR tree $T$ having at least two nodes}

\SetKwFunction{ProcessEdge}{ProcessEdge}
\SetKwProg{Fn}{Function}{:}{}
\Fn{\ProcessEdge{$\nu,\mu$}}{
    \tcp{$\nu$ is the parent of $\mu$}
    \For{$\mu_i \in \mathsf{children}_T(\mu)$}{
        \ProcessEdge{$\mu,\mu_i$}\;
    }
    \If{$\nu = \text{null}$}{
        \Return{}\;
    }
    $\{s,t\} \gets e_\mu$ (where $e_\mu \in \skel(\nu)$)\;
    \textit{noExtr} $\gets \text{false}$ iff $V(\skel(\mu)) \setminus \{s,t\}$ has an extremity of $G$\;

    Let $e_1,\dots,e_k$ denote the virtual edges of $\mu$ with pertaining nodes $\mu_1,\dots,\mu_k$ $(k \ge 0)$\;
    \textit{ThereIsNoExtremityBelow} $\gets \bigwedge_{i=1}^k \noextremity{\mu}{\mu_i}$\;
    \textit{ThereIsNoCycleBelow} $\gets \bigwedge_{i=1}^k \acyclic{\mu}{\mu_i}$\;
    $\noextremity{\nu}{\mu} \gets$ \textit{noExtr} $\land$ \textit{ThereIsNoExtremityBelow}\;

    \If{$\noextremity{\nu}{\mu}$ is $\false$}{
        $\acyclic{\nu}{\mu} \gets \Null$\;
    }
    \Else{
        \If{$\neg$ ThereIsNoCycleBelow}{
            $\acyclic{\nu}{\mu} \gets \false$\;
        }
        \Else{
            \tcp{We are in conditions to build $\dirskel(\mu) - st - ts$}
            $K \gets \dirskel(\mu) - st - ts$\;
            $\acyclic{\nu}{\mu} \gets \true$ iff $K$ is acyclic\; \tcp{Run DFS or BFS on $K$}
        }
    }

    \If{$\acyclic{\nu}{\mu}$ is $\true$}{
        \If{$(N^+_K(s) \cap V(K) \neq \emptyset)$}{
            \tcp{See \Cref{lem:unique-poles-from-acyclicity}}
            $\reachesst{\nu}{\mu} \gets \true$;
            $\reachests{\nu}{\mu} \gets \false$\;
        }
        \Else{
            $\reachesst{\nu}{\mu} \gets \false$;
            $\reachests{\nu}{\mu} \gets \true$\;
        }
    }
    \Else{
        $\reachesst{\nu}{\mu} \gets \Null$;
        $\reachests{\nu}{\mu} \gets \Null$\;
    }
}

$\rho \gets$ the root of $T$\;
\ProcessEdge{$\text{null}, \rho$}\;

\end{algorithm}

\paragraph{Phase 2.} In Phase 2 we compute the states $\state{\mu}{\nu}$ with $\nu$ the parent of $\mu$ by processing the \emph{nodes} of $T$ via Breadth-First Search, i.e., we compute the states ``pointing'' towards the root.
Notice that the dependencies between states behave differently from Phase 1. Now the relevant states for $\state{\mu}{\nu}$ are those leaving $\nu$ to its children except $\mu$, and the state leaving $\nu$ to its parent whenever $\nu$ is different from the root of $T$; the former states are known from Phase 1 and the latter state is known due to the breadth-first traversal order. If we follow the same strategy of computation as in Phase 1 then the algorithm may have a worst-case quadratic running time. For example, if $T$ consists only of $\rho$ with children $\mu_1,\dots,\mu_k$, then in order to update $\acyclic{\mu_i}{\nu}$ we would have to build $\dirskel(\nu)-s_it_i-t_is_i$ for each $i=1,\dots,k$, which would have a quadratic running time in $|V(G)| + |V(E)|$ whenever, e.g., $|V(\skel(\nu))| \geq |V(G)|/2$. To overcome this issue we examine the states $\acyclic{\mu_i}{nu}$ for $i=1,\dots,k$ all ``at once''.

Let $\nu$ be a node of $T$. Let $\mu_1,\dots,\mu_k$ denote the children of $\nu$ and denote the endpoints of the corresponding virtual edges in $\skel(\nu)$ as $e_i = \{s_i,t_i\}$ for $i \in \{1,\dots,k\}$ $(k \geq 1)$. To distinguish the reference edges $e_\nu$ belonging to each node $\mu_i$, we write $e_\nu^i$ for the edge $e_\nu$ in node $\mu_i$. Assume from the breadth-first traversal order that the states leaving $\nu$ to its parent are known and, for convenience, denote by $e_0=\{s_0,t_0\}$ the reference edge of $\nu$ and by $\mu_0$ the parent of $\nu$ (if $\nu$ is the root of $T$ then $\mu_0$ can be ignored during the following discussion). So the neighbours of $\nu$ in $T$ are the nodes $\mu_0,\mu_1\dots,\mu_k$.
Let $X_i = \expansion(e_\nu^i)$, $K=\dirskel(\nu)$, and $K_i=K-s_it_i-t_is_i$ for every $\iink$.

First we compute $\noextremity{\mu_i}{\nu}$ for each $\iink$ similarly to Phase 1.

\begin{description}
    \item[$\noextremity{\mu_i}{\nu}$:] We set $\noextremity{\mu_i}{\nu}$ to $\true$ if and only if no vertex in $V(K_i) \setminus \{s_i,t_i\}$ is an inner extremity and $\noextremity{\nu}{\mu_j}$ is $\true$ for $\iinkz$ and $j \neq i$. To see this is correct we prove both implications.
    
    ($\Rightarrow$) Suppose no vertex in $V(X_i) \setminus \{s_i,t_i\}$ is an extremity. Then indeed, no vertex in $V(K_i) \setminus \{s_i,t_i\}$ can be an extremity.
    Moreover, $\noextremity{\nu}{\mu_j}$ must be $\true$ for each $j \in \{1,\dots,k\}$ distinct from $i$, for otherwise an extremity $x$ in $\expansion(e_j)$ is different from both $s_j$ and $t_j$ and thus also different from $s_i$ and $t_i$, as it does not belong to $\skel(\nu)$ since $\{s_j,t_j\}$ is a separation pair.

    $(\Leftarrow)$ Suppose no vertex in $V(K_i) \setminus \{s_i,t_i\}$ is an extremity, $\noextremity{\nu}{\mu_j}$ is $\true$ for all $j \in \{1,\dots,k\}$ distinct from $i$, and $\noextremity{\nu}{p(\nu)}$ is $\true$. For a contradiction, assume that some $x \in V(X_i) \setminus \{s_i,t_i\}$ is an extremity. By the initial assumption, we have that $x$ cannot belong to $V(K_i)$. Thus, $x$ is also different from $s_i,t_i$. Since $x \in V(X_i) \setminus \{s_i,t_i\}$, $x$ must belong to $\expansion(e_{p(\nu)})$ or to $\expansion(e_j)$, for some $j$ distinct from $i$. Therefore, it is an extremity for it, since it is different from $s_i$ and $t_i$. This contradicts the initial assumption that $\noextremity{\mu}{\mu_i}$ is $\true$.
\end{description}

Then we compute the states $\acyclic{\mu_i}{\nu}$ for all $\iink$.
Notice that at this point the states $\reachesst{\nu}{\mu_i}$ and $\reachests{\nu}{\mu_i}$ are known for all $\iinkz$. We proceed by cases on the values of these states.

\begin{itemize}

    \item If there is an $\iinkz$ such that $\reachesst{\nu}{\mu_i}$ or $\reachests{\nu}{\mu_i}$ is $\Null$, then by definition $\acyclic{\nu}{\mu_i}$ is $\Null$ or $\false$. Then we proceed by cases.

    \begin{itemize}
        \item If $\acyclic{\nu}{\mu_i}$ is $\Null$ then $\noextremity{\nu}{\mu_i}$ is $\false$ by definition, and so there is an extremity $x \in V(\expansion(e_i)) \setminus \{s_i,t_i\}$. So for every $\jink$ distinct from $i$, vertex $x$ is an extremity also for $X_j$: $x \in V(X_j)$ because $\expansion(e_i)$ is a subgraph of $X_j$ and $x$ is different from $s_j,t_j$ since $x$ is different from $s_i,t_i$ and $\{s_i,t_i\}$ is a separation pair; thus each state $\acyclic{\mu_j}{\nu}$ is $\Null$.
        
        For the remaining state $\acyclic{\mu_i}{\nu}$ we proceed by cases. First, if $\noextremity{\mu_i}{\nu}$ is $\false$ then $\acyclic{\mu_i}{\nu}$ is $\Null$. We can thus assume that $\noextremity{\mu_i}{\nu}$ is $\true$, which implies that $\acyclic{\nu}{\mu_j}$ is $\true$ or $\false$ for each $\jinkz$ distinct from $i$.
        If some $\acyclic{\nu}{\mu_j}$ is $\false$ then $\expansion(e_j)$ has a cycle, and hence so does $X_i$ as it is a supergraph of $\expansion(e_j)$; therefore $\acyclic{\mu_i}{\nu}$ is $\false$. Otherwise every $\acyclic{\nu}{\mu_j}$ is $\true$ and thus we are in conditions to build $K_i$ since the states $\reachesst{\nu}{\mu_j}$ and $\reachests{\nu}{\mu_j}$ are not $\Null$ by definition. Then $\acyclic{\mu_i}{\nu}$ is $\false$ if and only if $K_i$ has a cycle because any cycle in $X_i$ can be mapped to a cycle in $K_i$ (similarly to the acyclicity update rule discussed in Phase 1).
        \item Otherwise $\acyclic{\nu}{\mu_i}$ is $\false$. So $\expansion(e_i)$ contains a cycle $C$. Then for every $\jink$ with $j\neq i$, $\acyclic{\mu_j}{\nu}$ is $\false$ since $C\in \expansion(e_i) \subseteq X_j$. For the remaining state $\acyclic{\mu_i}{\nu}$ we proceed identically as in the case above.
    \end{itemize}

    \item Otherwise $\reachesst{\nu}{\mu_i}$ and $\reachests{\nu}{\mu_i}$ are either $\true$ or $\false$ for all $\iinkz$. Therefore we are in conditions to build $K$. Moreover, notice that by definition $\acyclic{\nu}{\mu_i}$ is $\true$ for every $\iinkz$; in particular, there is no cycle in $K$ of the form $s_it_is_i$.

    If $K_i$ is acyclic for some $\iink$ then $\acyclic{\mu_i}{\nu}$ is $\true$ since also $\acyclic{\nu}{\mu_j}$ is $\true$ for every $\jinkz$ distinct from $i$ (this argument was established in Phase 1). However, we do not test the acyclicity of $K_i$ individually. First we make the simple observation that if $K$ is acyclic then so is $K_i$ because $K_i$ is a subgraph of $K$, in which case $\acyclic{\mu_i}{\nu}$ is $\true$.
    Otherwise $K$ has a cycle and we proceed with a case analysis.
    
    \begin{itemize}
        \item If $K$ has two edge-disjoint cycles $C_1,C_2$, then $\acyclic{\mu_i}{\nu}$ is $\false$ for every $\iink$. Indeed, suppose for a contradiction that $K_i$ does not contain $C_1$ nor $C_2$. Because $s_it_i$ and $t_is_i$ cannot both be present in $K$, we can assume without loss of generality that $t_is_i \notin K$. Then $K_i$ can be expressed as $K-s_it_i$. Since $K$ contains $C_1$ and $C_2$ and $K_i$ contains neither by assumption, both $C_1$ and $C_2$ intersect at $s_it_i$, a contradiction. Therefore $K_i$ contains at least one of $C_1$ or $C_2$ and hence $\acyclic{\mu_i}{\nu}$ is $\false$ for every $\iink$, as $C_1$ or $C_2$ can be mapped to a cycle in $X_i$.
        \item Otherwise any two cycles in $K$ share an edge. Without loss of generality, let us denote the \emph{virtual} edges of $K$ where any two cycles of $K$ intersect as $e_1, \dots, e_\ell$ $(\ell \leq k)$ and let $e_{\ell+1},\dots,e_k$ denote the remaining virtual edges of $K$ (those edges of $K$ that are real are irrelevant). Then $\acyclic{\mu_z}{\nu}$ is $\true$ for $z \in \{1,\dots,\ell\}$: $K_z$ is acyclic since $e_z$ is contained in every cycle of $K$ but $e_z \notin E(K_z)$, and since by assumption each $\acyclic{\mu_i}{\nu}$ is $\true$ for $\iink$, it follows that $X_z$ is acyclic.
        Complementarily, the state $\acyclic{\mu_z}{\nu}$ is $\false$ for every $z \in \{\ell+1,\dots,k\}$ because $e_z$ does not intersect at least one cycle of $K$, and thus $K_z$ contains a cycle which can be mapped to a cycle in $X_z$.
        We summarize this discussion in the next remark.
        \end{itemize}

    \begin{remark}\label{remark:cycle-hitters}
        Let $T$ be an SPQR tree, let $\nu \in V(T)$ be a node, and let $\mu_1,\dots,\mu_k$ be the neighbours of $\nu$ in $T$ whose corresponding virtual edges in $\nu$ are $e_1,\dots,e_k$. Suppose that $\acyclic{\nu}{\mu_i}$ evaluates to true for all $\iink$. Then $\acyclic{\mu_i}{\nu}$ is $\true$ if and only if $e_i$ intersects every cycle of $\dirskel(\nu)$.
    \end{remark}
    
    To address~\Cref{remark:cycle-hitters} while keeping the algorithm linear-time, it suffices to identify the edges that intersect every cycle of the directed skeleton in time proportional to its size. This is essentially the feedback arc set problem for the restricted case where every feedback set contains just one arc\footnote{In its generality, the feedback arc set problem is an NP-hard problem which asks if a directed graph $G$ has a subset of at most $k$ edges intersecting every cycle of $G$. Here, we are interested in enumerating all feedback-arcs.} (``arc'' and ``edge'' mean the same thing).

    Our subroutine works as follows. We start by testing if the graph is acyclic. If it is we are done. Otherwise we compute the strongly connected components (SCCs) of $G$. If there are multiple non-trivial SCCs then there are two disjoint cycles and no solution exists. Thus, the last case is when there is a single non-trivial SCC, where we then have to find feedback-arcs.
    
    The enumeration version of the feedback-arc problem can also be defined for \emph{vertices}. This version of the problem is solvable in linear-time, as shown by Garey and Tarjan in 1978~\cite{garey1978linear}. We rely on their algorithm and use a standard linear-time reduction from the feedback problem on arcs to the feedback problem on vertices (see, e.g., Even et al.~\cite{Even98}). We briefly describe how the reduction works. Subdivide each arc $uv$ of $G$ into two arcs $uw$ and $wv$, obtaining a subdivided graph $G'$. If an arc $uv$ is a feedback arc of $G$ then $w$ is a feedback vertex of $G'$ (deleting $w$ from $G'$ corresponds to deleting the arcs $uw$ and $wv$ in $G$), and the converse also holds. Notice, however, that $G'$ has feedback vertices that do not correspond to arcs of $G$, but those can be safely ignored.
    
    \begin{theorem}\label{thm:garey-tarjan-arcs}
        Given a directed graph $G=(V,E)$ with $n$ vertices and $m$ edges, there is an $O(m+n)$ time algorithm reporting every feedback arc of $G$.
    \end{theorem}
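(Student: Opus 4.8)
The plan is to reduce the task of enumerating all \emph{feedback arcs} of $G$ (the arcs lying on every cycle of $G$, equivalently the arcs $e$ for which $G-e$ is acyclic) to the problem of enumerating all \emph{feedback vertices} of an auxiliary graph, and then to invoke the linear-time algorithm of Garey and Tarjan~\cite{garey1978linear}. Concretely, I would first build the subdivided graph $G'$: replace each arc $uv \in E$ by a fresh vertex $w_{uv}$ together with the two arcs $uw_{uv}$ and $w_{uv}v$. This construction is carried out in a single pass over $E$ and produces a graph with $n+m$ vertices and $2m$ arcs.

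The correctness hinges on a cycle-preserving correspondence between $G$ and $G'$. I would argue that the directed cycles of $G$ are in bijection with the directed cycles of $G'$: each cycle of $G$ maps to the cycle of $G'$ obtained by routing every traversed arc $uv$ through its subdivision vertex $w_{uv}$, and this map is invertible because every cycle of $G'$ must alternate between original vertices and subdivision vertices. Under this bijection a cycle of $G$ uses the arc $uv$ exactly when the corresponding cycle of $G'$ passes through $w_{uv}$. Hence $uv$ lies on every cycle of $G$ if and only if $w_{uv}$ lies on every cycle of $G'$, i.e.\ if and only if $w_{uv}$ is a feedback vertex of $G'$. I would therefore run the Garey--Tarjan algorithm on $G'$ to obtain the set of all vertices whose deletion makes $G'$ acyclic, discard those that are original vertices of $G$ (they carry no information about single arcs), and report the arc $uv$ for each surviving subdivision vertex $w_{uv}$.

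For the running time I would observe that $G'$ is built in $O(n+m)$ time and has size $O(n+m)$, that the Garey--Tarjan routine runs in time linear in $|V(G')|+|E(G')| = O(n+m)$, and that the final filtering and relabelling step scans each subdivision vertex once, again in $O(m)$. Summing these gives the claimed $O(n+m)$ bound. As a side remark, the degenerate cases fall out automatically: if $G$ has two arc-disjoint cycles then no single arc meets both, so $G'$ has no feedback vertex and the algorithm correctly reports nothing, which is exactly the ``multiple non-trivial SCC'' case flagged earlier; and if $G$ is already acyclic every arc vacuously lies on the (empty) family of cycles, matching the behaviour of Garey--Tarjan on an acyclic $G'$, where every vertex is a feedback vertex.

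Since the reduction is entirely mechanical, I do not expect a genuine obstacle here; the only point needing care is the \emph{precise} statement and verification of the cycle bijection, in particular checking that no cycle of $G'$ can skip a subdivision vertex (so that the two notions of ``on every cycle'' really coincide) and that feedback vertices of $G'$ which are original vertices of $G$ are safely ignorable rather than indicative of some arc. Everything else reduces to citing the correctness and linear-time guarantee of~\cite{garey1978linear}.
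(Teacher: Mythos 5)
Your proposal is correct and follows essentially the same route as the paper: subdivide each arc $uv$ into $uw_{uv}$ and $w_{uv}v$, observe that $uv$ is a feedback arc of $G$ exactly when $w_{uv}$ is a feedback vertex of the subdivided graph, run the linear-time Garey--Tarjan feedback-vertex enumeration, and discard the feedback vertices that are original vertices of $G$. The paper cites this as a standard reduction (via Even et al.) rather than re-deriving the cycle bijection, but the argument is the same.
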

    
\end{itemize}

The states $\reachesst{\mu_i}{\mu},\reachests{\mu_i}{\mu}$ get updated for $\iink$ as in Phase 1.

\begin{description} 
    \item[$\reachesst{\mu_i}{\nu}$, $\reachests{\mu_i}{\nu}$:] At this point $\acyclic{\mu_i}{\nu}$ is known. If $\acyclic{\mu_i}{\nu}$ is $\false$ or $\Null$ then $\reachesst{\mu_i}{\nu}$ and $\reachests{\mu_i}{\nu}$ are $\Null$ by definition. Otherwise $\acyclic{\mu_i}{\nu}$ is $\true$, and thus so is $\noextremity{\mu_i}{\nu}$ by definition. Therefore, $X_i$ is acyclic, $V(X_i) \setminus \{s_i,t_i\}$ has no cutvertex, no sink and no source of $G$, and there is no edge between a vertex not in $V(X_i)$ and a vertex in $V(X_i) \setminus \{s_i,t_i\}$ since $\{s_i,t_i\}$ is a separation pair; moreover $U(X_i)$ is clearly connected. We can thus apply \Cref{lem:unique-poles-from-acyclicity} to conclude that one vertex between $s_i$ and $t_i$ is a source of $X_i$ and the other is a sink. In the former case we can set $\reachesst{\mu_i}{\nu}$ to $\true$, and in the latter case we can set $\reachests{\mu_i}{\nu}$ to $\true$.
\end{description}

Notice that each node $\nu$ of $T$ is processed exactly once during this phase by BFS properties. Moreover, this also implies that every state pointing from a node to its parent gets updated.
As the work done in $\nu$ is linear in the size of $\dirskel(\nu)$ (see \Cref{alg:phase2}), with \Cref{lem:spqr-total-size} we can conclude that Phase 2 runs in time $O(|V(H)|+|E(H)|)$.

\begin{lemma}\label{lem:phases-correct}
    \Cref{alg:phase1} and \Cref{alg:phase2} correctly compute the states $\state{\nu}{\mu}$ and $\state{\mu}{\nu}$ for every edge $\{\nu,\mu\}$ of $T$ and run in time $O(|V(H)|+|E(H)|)$ where $H$ is a block.
\end{lemma}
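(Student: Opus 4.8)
The plan is to prove the two assertions—correctness and running time—separately, treating \Cref{alg:phase1} and \Cref{alg:phase2} as two passes whose individual field-update rules have already been justified in the surrounding discussion; the remaining work is to package those justifications into a single induction and a single amortized charging argument.

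For correctness of Phase~1, I would induct on the subtree rooted at $\mu$, ordered by the DFS post-order so that $\state{\mu}{\mu_i}$ is available for every child $\mu_i$ before $\state{\nu}{\mu}$ is computed. The induction hypothesis is that every state leaving an already-processed node agrees with its definition in terms of $\expansion$. The base case is a node $\mu$ all of whose skeleton edges except $e_\nu$ are real, where a single traversal of $\dirskel(\mu)-st-ts$ settles all four fields directly. The inductive step is exactly the four field computations $\noextremity{\nu}{\mu}$, $\acyclic{\nu}{\mu}$, and then $\reachesst{\nu}{\mu},\reachests{\nu}{\mu}$, whose correctness I would cite from the $(\Rightarrow)/(\Leftarrow)$ arguments given above: the acyclicity step rests on the bijective mapping between cycles of $\expansion(e_\mu)$ and cycles of $K$, and the reachability step on \Cref{lem:unique-poles-from-acyclicity}. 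Since $T$ is a tree, the DFS touches each edge once, so every $\state{\nu}{\mu}$ is produced.

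For Phase~2 I would argue analogously but in the BFS order, observing that when node $\nu$ is processed the states leaving $\nu$—toward its children (from Phase~1) and toward its parent (available because BFS has already visited the parent)—are all known, which is precisely what the $\state{\mu_i}{\nu}$ updates consume. The fields $\noextremity{\mu_i}{\nu}$ and $\reachesst{\mu_i}{\nu},\reachests{\mu_i}{\nu}$ are handled as in Phase~1. The delicate field is $\acyclic{\mu_i}{\nu}$ for \emph{all} children $i$ simultaneously: here I would first reduce to the case where $\dirskel(\nu)$ is fully oriented (all incident reachabilities non-$\Null$), and then invoke \Cref{remark:cycle-hitters}, which characterizes $\acyclic{\mu_i}{\nu}$ as ``$e_i$ meets every cycle of $\dirskel(\nu)$''. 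This turns the entire family of sibling queries into a single feedback-arc enumeration on $\dirskel(\nu)$, solved in linear time by \Cref{thm:garey-tarjan-arcs}.

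The running-time claim I would establish by charging. Each node's $\dirskel$ is constructed exactly once—in Phase~1 when its parent edge is processed, in Phase~2 when the node itself is visited—and every real edge of $G$ lies in exactly one skeleton, so the per-node work (a constant number of DFS/BFS passes, together with one feedback-arc computation in Phase~2) is linear in $|V(\skel(\nu))|+|E(\skel(\nu))|$. Summing over all nodes and applying \Cref{lem:spqr-total-size} bounds both phases by $O(|V(H)|+|E(H)|)$. I expect the main obstacle to be exactly the simultaneous acyclicity test in Phase~2: a naive per-child rebuild of $\dirskel(\nu)-s_it_i-t_is_i$ is quadratic when $\nu$ has many children, so the crux is to certify that \Cref{remark:cycle-hitters} is both correct (each surviving cycle is hit only by the removed edge) and realizable in linear time through the arc-to-vertex feedback reduction underlying \Cref{thm:garey-tarjan-arcs}, with the earlier SCC check disposing of the two-disjoint-cycles case.
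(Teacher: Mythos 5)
Your proposal is correct and takes essentially the same route as the paper: the paper likewise establishes this lemma through the inline per-field justifications, organizing Phase~1 as a DFS-order dynamic program over tree edges and Phase~2 as a BFS pass in which the simultaneous sibling acyclicity states are resolved via \Cref{remark:cycle-hitters} and the feedback-arc enumeration of \Cref{thm:garey-tarjan-arcs}, with the running time obtained by charging each directed skeleton's one-time construction per phase against the total skeleton size bound of \Cref{lem:spqr-total-size}. Your packaging into an explicit induction plus amortized charging is just a formalization of that same argument, including its identification of the per-child rebuild of $\dirskel(\nu)-s_it_i-t_is_i$ as the quadratic pitfall that the feedback-arc reduction avoids.
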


\begin{algorithm}[t!]
\caption{Superbubble finding -- Phase 2}
\label{alg:phase2}
\KwIn{Directed graph $G$, SPQR tree $T$ having at least two nodes}

$\rho \gets$ root of $T$, $\mathsf{Q} \gets \mathsf{queue()}$, $\mathsf{Q.push}(\rho)$\;

\While{$\mathsf{Q}$ is not empty}{
    $\nu \gets \mathsf{Q.pop()}$\;
    \lIf{$\nu$ has no children in $T$}{
        \textbf{continue}
    }

    Let $\mu_1,\dots,\mu_k$ be the children of $\nu$ with pertaining virtual edges $\{s_i,t_i\} = e_i$ $\in E(\skel(\nu))$, and $\mu_0$ the parent of $\nu$ with pertaining virtual edge $e_0 \in E(\skel(\nu))$\;
    \tcp{$\mu_0$ can be ignored if $\nu=\rho$}
    $\mathsf{Q.push}(\mu_i)$ \hspace{0.5em} $\forall i \in [1,k]$\;
    \textit{AllNodeExtremities} $\gets $ a set containing all the extremities of $G$ in $V(\skel(\nu))$\;
    \textit{AllEdgeExtremities} $\gets$ a set containing all the virtual edges $e_i \in E(\skel(\nu))$ such that $\noextremity{\nu}{\mu_i}$ is $\false$\;
    \For{$i \in [1,k]$}{
        \textit{noExtr} $\gets \true$ iff \textit{AllNodeExtremities} $\setminus \{s_i,t_i\} = \emptyset$\;
        \tcp{Notice that it suffices to store (up to) three extremities of $V(\skel(\mu))$ in order to update \textit{noExtr} in constant time}
        $\noextremity{\mu_i}{\nu} \gets $ (\textit{AllEdgeExtremities} $ \setminus \{\{s_i,t_i\}\} = \emptyset$) $\land$ \textit{noExtr}\;
        \tcp{Similarly, it suffices to store (up to) two virtual edges of $\skel(\nu)$ with corresponding extremity states leaving $\nu$ to $\false$}
        \If{$\noextremity{\mu_i}{\nu}$ is $\false$}{
            $\acyclic{\mu_i}{\nu} \gets \Null$\;
        }
    }
    \If{at least two states among $\{ \acyclic{\nu}{\mu_1}, \dots ,\acyclic{\nu}{\mu_k} \}$ evaluate to $\Null$}{
        $\acyclic{\mu_i}{\nu} \gets \Null \; \forall \iink$\;
    }
    \If{exactly one state among $\{ \acyclic{\nu}{\mu_1}, \dots ,\acyclic{\nu}{\mu_k} \}$ evaluates to $\Null$}{
        Let $\jink$ be such that $\acyclic{\nu}{\mu_j}=\Null$\;
        $Y \gets \{1,\dots,k\} \setminus \{j\}$\;
        $\acyclic{\mu_i}{\nu} \gets \Null, \; \forall i\in Y$\;
        \textit{AcyclicOutside} $\gets$ true iff $\bigwedge_{i\in Y \cup \{0\}} \acyclic{\nu}{\mu_i}$ is $\true$\;
        \lIf{$\neg$ AcyclicOutside}{
            $\acyclic{\mu_j}{\nu} \gets \false$
        }
        \Else{
            \tcp{We are in conditions to build $\dirskel(\mu) - s_jt_j - t_js_j$}
            $K \gets \dirskel(\nu) - s_jt_j - t_js_j$\;
            $\acyclic{\mu_i}{\nu} \gets \true$ iff $K$ is acyclic\;
        }
    }
    \If{no state among $\{ \acyclic{\nu}{\mu_1}, \dots ,\acyclic{\nu}{\mu_k} \}$ evaluate to $\Null$}{

        \If{at least two states among $\{ \acyclic{\nu}{\mu_1}, \dots ,\acyclic{\nu}{\mu_k} \}$ evaluate to $\false$}{
            $\acyclic{\mu_i}{\nu} \gets \false \; \forall \iink$\;
        }
        \If{exactly one state among $\{ \acyclic{\nu}{\mu_1}, \dots ,\acyclic{\nu}{\mu_k} \}$ evaluates to $\false$}{
            Let $\jink$ be such that $\acyclic{\nu}{\mu_j}=\false$\;
            $Y \gets \{1,\dots,k\} \setminus \{j\}$\;
            $\acyclic{\mu_i}{\nu} \gets \false, \; \forall i\in Y$\;
            \tcp{We are in conditions to build $\dirskel(\mu) - s_jt_j - t_js_j$}
            $K \gets \dirskel(\nu) - s_jt_j - t_js_j$\;
            $\acyclic{\mu_i}{\nu} \gets \true$ iff $K$ is acyclic\;
        }
        \If{no state among $\{ \acyclic{\nu}{\mu_1}, \dots ,\acyclic{\nu}{\mu_k} \}$ evaluate to $\Null$}{
            \tcp{We are in conditions to build $\dirskel(\mu)$}
            $A \gets \mathsf{FeedbackArcs(\dirskel(\nu))} \cap \{e_1,\dots,e_k\}$\;
            \tcp{$A$ contains those virtual edges of $\dirskel(\nu)$ which are feedback arcs in $\dirskel(\mu)$}
            $\acyclic{\mu_i}{\nu} \gets \true  \; \forall e_i \in A$\;
            $\acyclic{\mu_i}{\nu} \gets \false \; \forall e_i \notin A$\;
        }
    }
}
\end{algorithm}

\paragraph{Phase 3.}

\begin{algorithm}[t!]
\caption{Superbubble finding -- Phase 3}
\label{alg:phase3}
\KwIn{Directed graph $G$, SPQR tree $T$ having at least two nodes}

\For{every P-node $\mu$ of $T$}{
    Build the sets $E^+_s, E^-_t$ of $\mu$ as described in \Cref{prop:P-node}\;
    Build the sets $E^-_s, E^+_t$ analogously\;
    \If{$E^+_s = E^-_t$}{
        \tcp{Equivalently, $E^-_s = E^+_t$}
        Let $\{s,t\}$ be the vertex set of $\skel(\mu)$\;
        Let $e_1,\dots,e_k$ denote the edges in $\skel(\mu)$ $(k\ge3)$\;
        Let $\mu_1,\dots,\mu_\ell$ denote the pertaining nodes of edges in $E^+_s$ $(\ell\ge0)$\;
        Let $\mu'_1,\dots,\mu'_{\ell'}$ denote the pertaining nodes of edges in $E^-_s$ $(\ell'\ge0)$\;
        $\mathsf{assert}(\ell' = k - \ell)$\;
        
        \If{$\acyclic{\mu}{\mu_i}$ and $\noextremity{\mu}{\mu_i}$ are $\true$ for all $i \in \{1,\dots,\ell\}$}{
            \If{$N^+_G(s), N^-_G(t) \subseteq V(\bigcup_{e \in E^+_s} \expansion(e))$ and $ts \notin E(G)$}{
                \If{$\ell = 1$}{
                    \tcp{See \Cref{prop:superbubbles-in-S-nodes}}
                    Report $(s,t)$ if the pertaining node of the edge in $E^+_s$ is not an S-node\; \label{line:superbubbles-P-st-1}
                }
                \Else{
                    Report $(s,t)$\; \label{line:superbubbles-P-st}
                }
            }
        }

        \If{$\acyclic{\mu}{\mu'_i}$ and $\noextremity{\mu}{\mu'_i}$ are $\true$ for all $i \in \{1,\dots,\ell'\}$}{
            \If{$N^-_G(s), N^+_G(t) \subseteq V(\bigcup_{e \in E^-_s} \expansion(e))$ and $st \notin E(G)$}{
                \If{$\ell' = 1$}{
                    \tcp{See \Cref{prop:superbubbles-in-S-nodes}}
                    Report $(t,s)$ if the pertaining node of the edge in $E^-_s$ is not an S-node\; \label{line:superbubbles-P-ts-1}
                }
                \Else{
                    Report $(t,s)$\; \label{line:superbubbles-P-ts}
                }
            }
        }
    }
}

\For{every R-node $\mu$ of $T$}{
    \For{every neighbour $\nu$ of $T$ that is not a P-node}{
        Let $\{s,t\}=e_\mu \in \skel(\nu)$ be the virtual edge pertaining to $\mu$\;
        Let $X = \expansion(e_\mu)$\;
        \If{$\acyclic{\nu}{\mu}$ and $\noextremity{\nu}{\mu}$ are $\true$}{
            \If{$N^+_G(s), N^-_G(t) \subseteq V(X)$ and $ts \notin E(G)$}{
                \tcp{And hence $N^-_G(s), N^+_G(t) \subseteq \overline{V(X)}$}
                Report $(s,t)$\; \label{line:superbubbles-R-st}
            }
            \If{$N^+_G(t), N^-_G(s) \subseteq V(X)$ and $st \notin E(G)$}{
                \tcp{And hence $N^-_G(t), N^+_G(s) \subseteq \overline{V(X)}$}
                Report $(t,s)$\; \label{line:superbubbles-R-ts}
            }
        }
    }
}
\end{algorithm}

In Phase 3 the pairs $(s,t)$, $(t,s)$ such that $\{s,t\}$ is a separation pair are reported. \Cref{prop:spqr-tree-contains-split-pairs} tells us that every separation pair of $H$ is encoded in the edges of $T$ or it consists of a pair of nonadjacent vertices of an S-node. As we will show afterwards, pairs of nonadjacent vertices in an S-node do not form superbubbles due to minimality (this is essentially an application of \Cref{prop:cutvertex-inside-bubble}). Further, for similar reasons, if a pair of vertices are adjacent in the skeleton of an S-node and form a superbubble then the corresponding graph is not within the S-node. Hence, if $(s,t)$ is a superbubble such that $\{s,t\}$ is a separation pair of $H$, then there is a P-node of $T$ with vertex-set $\{s,t\}$, or there is an R-node of $T$ with a virtual edge $\{s,t\}$. We discuss informally the two cases, starting with the P-node.

Observe that SPQR trees encode not only every separation pair of the graph but also the respective sets of split components. The way in which these split components are put together to form the skeletons of the nodes is what defines the different types of nodes S,P, and R. In our application, i.e., finding superbubbles, examining only the natural separations encoded in the SPQR tree is not enough to ensure completeness. Consider for instance a P-node $\mu$ with $k \geq 4$ split components. The separations encoded in each of the $k$ edges of the SPQR tree incident to this P-node implicitly group $k-1$ expansions of the edges of $\mu$ and puts the vertices therein in one side of the separation, and the vertices on the expansion of remaining virtual edge is put on the remaining side. However, it is not hard to see that a graph of a superbubble could match, e.g., the union of the expansions of two virtual edges of $\mu$.
Thus we iterate over all the P-nodes of $T$ (with vertex set $\{s,t\}$) and proceed as follows. We group the virtual edges containing out-neighbours of $s$ and group the virtual edges containing in-neighbours of $t$ (also, the virtual edges containing in-neighbours (resp. out-neighbors) of $s$ (resp. $t$) are grouped); these have to match, otherwise $(s,t)$ is not a superbubble. Further, for each virtual edge in these (matching) sets, we check if the respective state leaving this P-node has the acyclicity and absence-of-extremities fields set to true. Finally, if all the out-neighbors (resp. in-neighbors) of $s$ (resp. $t$) are contained in candidate superbubble graph (which is given by the matching sets), and $ts \notin E(G)$, then $(s,t)$ is a superbubbloid by \Cref{lem:separation-pair-is-superbubbloid}. Minimality follows from the structure of P-nodes, and moreover the conditions in the algorithm are simple: if the resulting set has more than one edge then minimality clearly follows from \Cref{prop:superbubble-independent-paths}, otherwise we check if the pertaining node of the unique edge is not an S-node, since otherwise minimality is violated (by essential application of \Cref{prop:cutvertex-inside-bubble}).

The case of the R-node is simpler. For every R-node $\mu$ of $T$ and each edge adjacent to $\mu$ such that the other endpoint is not a P-node (since those were already examined), it suffices to check the conditions of \Cref{lem:separation-pair-is-superbubbloid} to conclude whether or not the current pair of vertices is a superbubble, as minimality is implied by the 3-connectivity of the R-nodes.
This phase is described in \Cref{alg:phase3}.

\paragraph{Correctness and runtime.}

\begin{algorithm}[t]
\caption{Superbubble finding algorithm}
\label{alg:superbubbles-main}
\KwIn{Directed graph $G$ without parallel edges}

Let $\mathcal{B}$ and $C \subseteq V(G)$ be the list of blocks and cutvertices of $U(G)$, respectively $(k \ge 1)$\;

\For{$H \in \mathcal{B}$}{
    \If{$H$ is a multi-bridge}{
        Let $s, t$ denote the vertices of $H$\;
        \If{$|E(H)| = 1$ and $N^+_s(G) = \{t\}$ and $N^-_t(G) = \{s\}$}{
            \tcp{Trivial superbubble}
            Report $(s,t)$\; \label{line:superbubbles-trivial}
        }
    }
    \Else{
        \tcp{$U(H)$ is 2-connected}
        \If{$H$ has exactly one source $s$ and one sink $t$ w.r.t.\ $H$ and $ts \notin E(G)$ \label{line:superbubbles-rn1}}{
            \If{$C \cap V(K)\setminus\{s,t\} = \emptyset$ and $N^+_G(s), N^-_G(t) \subseteq V(H)$ and $G[V(H)]$ is acyclic \label{line:superbubbles-rn2}}{
                \tcp{$V(\Bst)$ coincides with $V(H)$}
                Report $(s,t)$\; \label{line:superbubbles-block}
            }
        }
        $T \gets \mathsf{BuildSPQR}(H)$\;
        $\mathsf{Phase1}(G,T)$\;
        $\mathsf{Phase2}(G,T)$\;
        $\mathsf{Phase3}(G,T)$\;
    }
}
\end{algorithm}

We now give a series of results relating superbubbles and the different types of nodes of the SPQR tree. We begin with a simple result on S-nodes.

\begin{prop}\label{prop:superbubbles-in-S-nodes}
    Let $G$ be a directed graph, let $(s,t)$ be a superbubble of $G$ with graph $\Bst$, let $T$ be the SPQR tree of a maximal 2-connected subgraph of $U(G)$, let $\{\nu,\mu\}$ be an edge of $T$, and let $e_\mu \in \skel(\nu)$ be the virtual edge pertaining to node $\mu$. If $\{s,t\} = e_\mu$  and $\mu$ is an S-node then $\Bst \not\subseteq \expansion(e_\mu)$.
\end{prop}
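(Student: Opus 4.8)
The plan is to argue by contradiction: I would assume $\Bst \subseteq \expansion(e_\mu)$ and extract from the S-node structure an internal $s$-$t$ cutvertex of $U(\Bst)$, which contradicts \Cref{prop:cutvertex-inside-bubble} (an $s$-$t$ cutvertex cannot lie in the interior of a superbubble).

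First I would unpack the structure of $\expansion(e_\mu)$. Since $\mu$ is an S-node, $\skel(\mu)$ is a cycle on vertices $c_0 = s, c_1, \dots, c_{k-1}, c_k = t$ with $k \geq 2$, consisting of the reference edge $e_\nu = \{s,t\}$ (the virtual edge pertaining to $\nu$) together with the path edges $e_i = \{c_{i-1}, c_i\}$ for $i = 1,\dots,k$. The graph $\expansion(e_\mu)$ is obtained by expanding exactly these path edges (the reference edge is excluded, as it belongs to the $\nu$ side), and by the separation encoded in each split pair $\{c_{i-1}, c_i\}$, consecutive expansions $\expansion(e_i)$ and $\expansion(e_{i+1})$ meet only at $c_i$. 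Hence each $c_i$ with $1 \le i \le k-1$ is a cutvertex of $U(\expansion(e_\mu))$ separating $s$ from $t$; in particular $c_1$ (which exists since $k \geq 2$) is an $s$-$t$ cutvertex of $U(\expansion(e_\mu))$, i.e.\ every $s$-$t$ path in $U(\expansion(e_\mu))$ passes through $c_1$.

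Next I would locate $c_1$ inside the bubble. Because $(s,t)$ is a superbubble, $s$ reaches $t$ within $\Bst$ (reachability, together with \Cref{def:gartner}(1)--(2)), so $\Bst$ contains a directed $s$-$t$ path $P$. Viewing $P$ as an undirected path in $U(\Bst) \subseteq U(\expansion(e_\mu))$ and using the cutvertex property of $c_1$, the path $P$ must traverse $c_1$; thus $c_1 \in V(\Bst)$, and since $c_1 \neq s,t$ it lies in the interior of $\Bst$. Finally, every undirected $s$-$t$ path in $U(\Bst)$ is also an $s$-$t$ path in $U(\expansion(e_\mu))$ and therefore passes through $c_1$, so $c_1$ is an $s$-$t$ cutvertex of $U(\Bst)$. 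This contradicts \Cref{prop:cutvertex-inside-bubble}, completing the argument and yielding $\Bst \not\subseteq \expansion(e_\mu)$.

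The only mildly delicate point — the main obstacle — is making the ``series structure forces a cutvertex'' claim airtight: one must verify from the SPQR-tree definitions that consecutive edge expansions $\expansion(e_i)$ and $\expansion(e_{i+1})$ share exactly the single vertex $c_i$ and no other vertex, so that deleting $c_1$ genuinely disconnects the $s$-side from the $t$-side of $\expansion(e_\mu)$. This follows from each $\{c_{i-1}, c_i\}$ being a split pair and from the intersection-of-expansions property recorded after the definition of $\expansion$ (namely $V(\expansion(e_\nu)) \cap V(\expansion(e_\mu)) = \{s,t\}$ applied along the chain), but it should be stated explicitly. Everything else is a routine transfer of paths between $U(\Bst)$ and $U(\expansion(e_\mu))$ under the containment assumption.
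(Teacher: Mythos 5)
Your proposal is correct and follows essentially the same route as the paper's proof: assume $\Bst \subseteq \expansion(e_\mu)$, use the series (chain) structure of the S-node to exhibit an internal vertex separating $s$ from $t$ in $U(\expansion(e_\mu))$, transfer this to an $s$-$t$ cutvertex in the interior of $U(\Bst)$ via the containment, and contradict \Cref{prop:cutvertex-inside-bubble}. The paper states the existence of the separating vertex more tersely (as a consequence of $U(\expansion(e_\mu))$ being the chain-of-blocks split component of $\{s,t\}$), whereas you derive it explicitly from the cycle skeleton and the pairwise intersections of edge expansions; this is only a difference in level of detail, not in the argument.
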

\begin{proof}
    Suppose for a contradiction that $B_{st} \subseteq \expansion(e_\mu)$. Then $U(B_{st}) \subseteq U(\expansion(e_\mu))$.
    By definition of S-node, the graph $U(\expansion(e_\mu))$ is a split component of the split pair $\{s,t\}$ and contains a vertex $y$ separating $s$ and $t$ in $U(\expansion(e_\mu))$ (recall that S-nodes have at least three vertices). Since $U(\Bst) \subseteq U(\expansion(e_\mu))$ vertex $y$ is an $s$-$t$ cutvertex with respect to $U(\Bst)$. 
    The result now follows from \Cref{prop:cutvertex-inside-bubble}.
\end{proof}

Next we give a result for P-nodes. Essentially, we impose the usual conditions (e.g., acyclicity and absence of extremities) and group the virtual edges according to what expansions contain out-neighbors of one vertex and in-neighbors of the other vertex of the skeleton of the node. 

\begin{prop}[Superbubbles and P-nodes]\label{prop:P-node}
    Let $G$ be a directed graph and let $H$ be a maximal 2-connected subgraph of $U(G)$.
    Let $T$ be the SPQR tree of $H$ and let $\mu \in V(T)$ be a P-node. Let $e_1,\dots,e_k$ denote the edges of $\skel(\mu))$ with endpoints $\{s,t\}$ $(k\geq3)$. 
    Let $E^+_s = \{ e_i : V(\expansion(e_i))\cap N^+(s) \neq \emptyset \}$, $E^-_t = \{ e_i : V(\expansion(e_i))\cap N^-(t) \neq \emptyset \}$, and $K=\bigcup_{e\in E^+_s}\expansion(e)$.
    Then $(s,t)$ identifies a superbubbloid of $G$ with graph $K$ if and only if $E^+_s \neq \emptyset$, $E^+_s = E^-_t$, $N^+_G(s) \subseteq V(K)$, $N^-_G(t) \subseteq V(K)$, $ts \notin E(G)$, and for each $e \in E^+_s$ the graph $\expansion(e)$ is acyclic and does not contain extremities of $G$ except $\{s,t\}$.
\end{prop}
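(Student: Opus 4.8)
The plan is to prove the two implications separately, in both cases leveraging \Cref{lem:unique-poles-from-acyclicity} to pin down the orientation of each relevant expansion, and \Cref{lem:separation-pair-is-superbubbloid} to package the sufficient conditions into a superbubbloid. Throughout I will use that $\{s,t\}$ is a separation pair of $H$ (it is the vertex set of the P-node $\mu$), that the expansions $\expansion(e_1),\dots,\expansion(e_k)$ pairwise meet only in $\{s,t\}$, are each connected, and that their union is $H$; in particular $V(K)\subseteq V(H)$.

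For the $(\Leftarrow)$ direction I would first argue that $K=G[V(K)]$: the hypothesis that no $\expansion(e)$ with $e\in E^+_s$ contains an interior extremity of $G$ means every interior vertex of $K$ lies only in the block $H$, so all its $G$-edges stay inside its split component and hence inside $K$ (the direct edge $st$, when present, sits in an expansion belonging to $E^+_s$, and $ts\notin E(G)$). Thus $K$ is the induced subgraph on a nonempty union of split components of $\{s,t\}$ (nonempty since $E^+_s\neq\emptyset$). Next, for each $e\in E^+_s$ I apply \Cref{lem:unique-poles-from-acyclicity} to $C=V(\expansion(e))$ --- valid because $\expansion(e)$ is acyclic, connected, has at least two vertices, has no interior extremity, and its interior attaches to the rest of $G$ only through $s,t$ --- to conclude that $s$ is its unique source and $t$ its unique sink (it cannot be the other way, since $e\in E^+_s$ forces $s$ to have an out-edge there). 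Consequently $s$ has no in-edge and $t$ no out-edge inside any chosen expansion, so $s$ is a source and $t$ a sink of $K$; any cycle in $K$ would then lie inside a single (acyclic) expansion or else pass through the source $s$ or the sink $t$, both impossible, so $K$ is acyclic. Feeding $K$ acyclic, no interior extremities, $N^+_G(s)\subseteq V(K)$, $N^-_G(t)\subseteq V(K)$ and $ts\notin E(G)$ into \Cref{lem:separation-pair-is-superbubbloid} yields that $(s,t)$ is a superbubbloid with graph $K$.

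For $(\Rightarrow)$ I assume $(s,t)$ is a superbubbloid with $\Bst=K$. Condition (6) of \Cref{def:gartner} gives $ts\notin E(G)$, and the observation (stated after \Cref{def:gartner}) that every out-neighbour of $s$ and every in-neighbour of $t$ lies in $\Bst$ gives $N^+_G(s),N^-_G(t)\subseteq V(K)$; since $\Bst$ is acyclic, so is every subgraph $\expansion(e)$. For the absence of interior extremities I note that the interior of a superbubbloid contains no source or sink of $G$ (reachability conditions (1),(2)), and no cutvertex of $U(G)$ either: an interior cutvertex $v$ would lie in a second block and hence have a $G$-edge to some $w\notin V(H)\supseteq V(\Bst)$, giving a single-edge $v$-$w$ or $w$-$v$ path avoiding $t$ respectively $s$, contradicting (4) respectively (3). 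To get $E^+_s\neq\emptyset$, reachability forces $s$ to have an out-neighbour, which lies in $V(K)\subseteq V(H)=\bigcup_i V(\expansion(e_i))$ and hence in some $\expansion(e_i)$. Finally I prove $E^+_s=E^-_t$ by two inclusions: if $e_j\in E^-_t$ then its in-neighbour of $t$ is in $\Bst=K$, so its interior meets $K$, forcing $e_j\in E^+_s$ (the case where this in-neighbour is $s$ is the edge $s\to t$, which again puts $e_j$ in $E^+_s$); conversely if $e_j\in E^+_s$ then $\expansion(e_j)\subseteq K$ is acyclic with no interior extremity, so \Cref{lem:unique-poles-from-acyclicity} makes $t$ its unique sink, which therefore has an in-edge inside $\expansion(e_j)$, i.e.\ $e_j\in E^-_t$.

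The main obstacle I anticipate is the matching equality $E^+_s=E^-_t$ together with the acyclicity of the union $K$: both hinge on knowing that each chosen expansion is oriented strictly from $s$ to $t$ (unique source $s$, unique sink $t$), which is exactly where \Cref{lem:unique-poles-from-acyclicity} is indispensable and where the no-interior-extremity hypothesis does real work. A secondary subtlety worth care is that \Cref{lem:bubbles-cutvertices} is stated only for superbubbles, so the claim that interior cutvertices are excluded must be re-derived for superbubbloids directly from conditions (3)--(4) of \Cref{def:gartner}, as sketched above; I would also double-check the degenerate direct-edge situations (an edge $s\to t$ present) so that they fall correctly into $E^+_s$ rather than breaking the bookkeeping.
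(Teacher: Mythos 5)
Your proposal follows essentially the same route as the paper's proof: both directions hinge on \Cref{lem:unique-poles-from-acyclicity} (to orient the chosen expansions) and \Cref{lem:separation-pair-is-superbubbloid} (to package the sufficient conditions into a superbubbloid), with the same bookkeeping for $E^+_s=E^-_t$, the neighbourhood conditions, and $ts\notin E(G)$. In two places you are in fact more careful than the paper: you re-derive the exclusion of interior cutvertices directly from conditions (3)--(4) of \Cref{def:gartner}, whereas the paper cites \Cref{lem:bubbles-cutvertices}, which is stated only for superbubbles, not superbubbloids; and you make the acyclicity of $K$ explicit via the source/sink orientation of each expansion, where the paper's one-sentence justification (``a cycle in $K$ \dots contains $s$ or $t$, but $N^+_G(s),N^-_G(t)\subseteq V(K)$'') is too terse to reconstruct.

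The genuine soft spot is exactly the degenerate case you flagged at the end, and it deserves more than a flag: the inference ``$e\in E^+_s$ forces $s$ to have an out-edge in $\expansion(e)$'' is false under the literal, vertex-based definition of $E^+_s$. Every expansion of a P-node contains $t$ as a vertex, so if $st\in E(G)$ then $t\in V(\expansion(e_i))\cap N^+(s)$ for \emph{every} $i$, and all $k$ edges land in $E^+_s$ (and symmetrically in $E^-_t$) regardless of their orientation. Concretely, take $G$ with edges $s\to t$, $s\to a$, $a\to t$, $t\to b$, $b\to s$: the SPQR tree of $U(G)$ is a P-node on $\{s,t\}$ with three edges, all right-hand conditions of the proposition hold ($E^+_s=E^-_t$ is everything, $K=H$, each expansion is acyclic with no interior extremities, $ts\notin E(G)$), yet $K$ contains the cycle $s\to t\to b\to s$ and $B_{st}=G[\{s,a,t\}]\neq K$, so $(s,t)$ is not a superbubbloid with graph $K$. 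Thus your $(\Leftarrow)$ argument only goes through under the intended edge-based reading of $E^+_s$ (expansions containing an \emph{outgoing edge} of $s$), under which your orientation step is valid and the rest of your proof is sound; to be fair, the paper's own proof has the same blind spot, since its acyclicity sentence silently assumes each expansion in $E^+_s$ is oriented from $s$ to $t$. In the $(\Rightarrow)$ direction the same issue is harmless but your justification for ``$t$ is the unique sink of $\expansion(e_j)$'' should instead come from acyclicity of $K=\Bst$: if $t$ were the source then $t$ would reach $s$ inside that expansion, which together with $s$ reaching $t$ in $\Bst$ yields a cycle.
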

\begin{proof}
    $(\Rightarrow)$ Let $(s,t)$ be a superbubbloid of $G$ with graph $\Bst$ and let $\mu$ be a P-node whose skeleton has vertex set $\{s,t\}$. Since superbubbles are contained in the blocks of $G$ by \Cref{lem:bubbles-cutvertices} and $s,t \in V(H)$ it follows that $V(\Bst) \subseteq V(H)$. Further, since $\Bst$ contains all the out-neighbors of $s$ and $V(\Bst) \subseteq V(H)$, it follows that $E^+_s \neq \emptyset$ (analogously, $E^-_t \neq \emptyset$).
    We show that $K = \Bst$.
    
    We show that $K \subseteq \Bst$. Since $K$ and $\Bst$ are induced subgraphs it is enough to show that any vertex in $K$ is also in $\Bst$. 
    Let $u \in V(K)$. Then $u\in \expansion(e)$ for some $e\in E^+_s$. As established in the proof of (1) of \Cref{lem:separation-pair-is-superbubbloid}, $\expansion(e)$ has a path from $s$ to $t$ through $u$ since it is acyclic and has no extremities except $\{s,t\}$. Since $(s,t)$ is a superbubbloid, $u \in \Bst$.
    Now we show that $\Bst \subseteq K$. Suppose for a contradiction that $\Bst \not\subseteq K$. Since $K$ and $\Bst$ are induced subgraphs there is a vertex $v \in V(\Bst)\setminus V(K)$. So in particular, $s$ reaches $v$ without $t$ via some path. 
    Due to the structure of P-nodes, this path is contained in $\expansion(e)$ for some $e\in E(\skel(\mu))$. Thus, the first vertex following  $s$ in this path is also in $\expansion(e)$ and hence $\expansion(e)$ has an out-neighbour of $s$. Therefore $e\in E^+_s$ and hence $v \in V(K)$, a contradiction. We conclude that $K=\Bst$.

    The conditions $N^+_G(s) \subseteq V(K)$ and $N^-_G(t) \subseteq V(K)$ follow trivially since $\Bst=K$, and clearly $ts \notin E(G)$ because $(s,t)$ is a superbubbloid. 
    Further, for each $e \in E^+_s$ the graph $\expansion(e)$ is acyclic and does not contain extremities of $G$ except $\{s,t\}$, since a cycle or extremity except $\{s,t\}$ in some expansion would be a cycle or extremity in $K$, the former contradicting the acyclicity and the latter contradicting the matching property of superbubbloids.
    The equality $E^+_s = E^-_t$ follows at once from the fact that $\Bst=K=\bigcup_{e\in E^+_s}\expansion(e)$ and $N^+_G(s) \subseteq V(K)$ and $N^-_G(t) \subseteq V(K)$.

    $(\Leftarrow)$ Notice that $K$ has no extremities of $G$ except $\{s,t\}$ because each $\expansion(e)$ for $e\in E^+_s$ has no extremities of $G$ except $\{s,t\}$. Moreover, since each $\expansion(e)$ for $e\in E^+_s$ is acyclic, a cycle in $K$ contains vertices from different split components of $E^+_s$ and thus it contains $s$ or $t$, but $N^+_G(s),N^-_G(t) \subseteq V(K)$ and thus $K$ is acyclic.
    The set $\{s,t\}$ is a separation pair, $ts \notin E(G)$, and $K$ is the union of a subset of split components of $\{s,t\}$ by construction which moreover is nonempty since $E^+_s\neq\emptyset$ by assumption.
    So we are in conditions of applying \Cref{lem:separation-pair-is-superbubbloid} and conclude that $(s,t)$ is a superbubbloid of $G$ with graph $K$.
\end{proof}

Finally we show that superbubbloids within R-nodes are in fact superbubbles. More specifically, we show that no vertex in the skeleton of an R-node violates minimality (due to their connectivity, R-nodes contain too many paths between $s$ and $t$).

\begin{prop}[Superbubbles and R-nodes]\label{prop:R-node}
    Let $G$ be a directed graph and let $H$ be a maximal 2-connected subgraph of $U(G)$.
    Let $T$ be the SPQR tree of $H$ and let $\nu \in V(T)$ be an R-node. Let $e_\mu=\{s,t\} \in E(\skel(\nu))$ be a virtual edge with pertaining node $\mu$. Let $e_\nu \in E(\skel(\mu))$ be the virtual edge pertaining to $\nu$.
    If $N^+_G(s),N^-_G(t) \subseteq V(\expansion(e_\nu))$, $\expansion(e_\nu)$ is acyclic and has no extremities except $\{s,t\}$, $ts \notin E(G)$, then $(s,t)$ is a superbubble with graph $\expansion(e_\nu)$.
\end{prop}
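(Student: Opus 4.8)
The plan is to establish the claim in two stages: first show that $(s,t)$ is a \emph{superbubbloid} with graph $\expansion(e_\nu)$ by an application of \Cref{lem:separation-pair-is-superbubbloid}, and then upgrade superbubbloid to \emph{superbubble} by exhibiting two internally vertex-disjoint $s$-$t$ paths and invoking \Cref{prop:superbubble-independent-paths}. The hypotheses of \Cref{lem:separation-pair-is-superbubbloid} concerning acyclicity, absence of extremities in $V(\expansion(e_\nu)) \setminus \{s,t\}$, the containment $N^+_G(s), N^-_G(t) \subseteq V(\expansion(e_\nu))$, and $ts \notin E(G)$ are given directly in the statement, so the only work in the first stage is to verify the two structural hypotheses: that $\{s,t\}$ is a separation pair, and that $\expansion(e_\nu)$ is the induced subgraph on a nonempty union of split components.

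For the structural hypotheses I would argue as follows. Since $e_\mu$ is a virtual edge of $\skel(\nu)$, \Cref{prop:spqr-tree-contains-split-pairs} gives that its endpoints $\{s,t\}$ form a separation pair of the maximal $2$-connected subgraph $H$. The tree edge $\{\nu,\mu\}$ induces the separation $(V(\expansion(e_\nu)), V(\expansion(e_\mu)))$ of $H$ with $V(\expansion(e_\nu)) \cap V(\expansion(e_\mu)) = \{s,t\}$; as every split component of $\{s,t\}$ lies entirely on one side of this separation, $\expansion(e_\nu)$ is exactly a nonempty union of split components of $\{s,t\}$. Setting $K = G[V(\expansion(e_\nu))]$, \Cref{lem:separation-pair-is-superbubbloid} then yields that $(s,t)$ is a superbubbloid with graph $K$.

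It then remains to identify $K$ with $\expansion(e_\nu)$. Because the edge sets of $\expansion(e_\nu)$ and $\expansion(e_\mu)$ partition $E(H)$ and no edge joins $V(\expansion(e_\nu)) \setminus \{s,t\}$ to $V(\expansion(e_\mu)) \setminus \{s,t\}$, the graphs $K$ and $\expansion(e_\nu)$ can differ only in a direct arc between $s$ and $t$ placed on the $\mu$-side. The hypothesis $ts \notin E(G)$ excludes the arc $t \to s$; and since $\skel(\nu)$ is $3$-connected it is simple, so $e_\mu$ is the unique skeleton edge between $s$ and $t$, which together with the fact that $e_\mu$ is virtual (so $\mu$ is not a $Q$-node) lets me rule out a real arc $s \to t$ residing in $\expansion(e_\mu)$. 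Hence $K = \expansion(e_\nu)$, as required.

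For the minimality stage I would use the $3$-connectivity of $\skel(\nu)$. By Menger's theorem there are three internally vertex-disjoint $s$-$t$ paths in $\skel(\nu)$; at most one of them is the single edge $e_\mu$, so at least two of them avoid $e_\mu$ and have internal vertices. Every edge $f \neq e_\mu$ of $\skel(\nu)$ lies on the $\nu$-side, so $\expansion(f) \subseteq \expansion(e_\nu)$ and its endpoints are connected within $\expansion(f)$; replacing each such $f$ on these two skeleton paths by a connecting path through $\expansion(f)$ lifts them to two $s$-$t$ paths in $U(\expansion(e_\nu))$ that remain internally vertex-disjoint, since expansions of distinct skeleton edges meet only at shared skeleton vertices. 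Applying \Cref{prop:superbubble-independent-paths} to the superbubbloid $(s,t)$ with graph $\expansion(e_\nu)$ then gives that $(s,t)$ is a superbubble. I expect the main obstacle to be the careful bookkeeping in the graph-identity step $K = \expansion(e_\nu)$ — in particular ruling out a stray arc $s \to t$ hidden inside $\expansion(e_\mu)$ — since everything else follows cleanly from the two cited results and a standard path-lifting argument.
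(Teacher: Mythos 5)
Your two-stage plan is exactly the paper's proof of this proposition: establish the superbubbloid property via \Cref{lem:separation-pair-is-superbubbloid} (using that $\{s,t\}$ is a separation pair and $\expansion(e_\nu)$ a union of split components), then get minimality by lifting two of the three internally vertex-disjoint $s$-$t$ paths of the $3$-connected $\skel(\nu)$ into $U(\expansion(e_\nu))$ and invoking \Cref{prop:superbubble-independent-paths}. Your path-lifting stage is correct. The genuine gap is the step you yourself flagged as the main obstacle: ruling out a real arc $s\to t$ inside $\expansion(e_\mu)$. Your argument for it — $\skel(\nu)$ is simple, $e_\mu$ is its unique $\{s,t\}$ edge, and $e_\mu$ is virtual, "so $\mu$ is not a Q-node" — only constrains $\skel(\nu)$ and says nothing about the skeletons on the $\mu$-side of the tree edge. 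Since Q-nodes are omitted and their edges become \emph{real} edges inside the surviving skeletons, the pertaining node $\mu$ can perfectly well be a P-node whose skeleton consists of $e_\nu$, a real edge $st$, and a further virtual edge; then the arc $s\to t$ does lie in $\expansion(e_\mu)$, and no hypothesis of the proposition excludes this, because both endpoints of that arc lie in $V(\expansion(e_\nu))$, so the neighborhood conditions are untouched.

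This failure is not cosmetic. Take $\skel(\nu)$ to be $K_4$ on $\{s,t,a,b\}$ with $e_\mu=\{s,t\}$ virtual and real arcs $t\to a$, $t\to b$, $a\to b$, $a\to s$, $b\to s$ (so $\expansion(e_\nu)$ is acyclic with source $t$, sink $s$, and no inner extremities), and let the other side be a P-node carrying the real arc $s\to t$ and a third split component $t\to c\to s$. Every hypothesis of the proposition holds, yet $K=G[V(\expansion(e_\nu))]$ contains $s\to t$ and hence the cycle $s\to t\to a\to s$, so \Cref{lem:separation-pair-is-superbubbloid} — whose acyclicity hypothesis concerns $K$, not $\expansion(e_\nu)$ — cannot even be invoked; in fact $B_{st}$ here is the trivial superbubble on the single arc $st$, not $\expansion(e_\nu)$. (Note this also means your stage one is already infected: you feed the lemma the acyclicity of $\expansion(e_\nu)$ where it needs acyclicity of $K$, so the identification $K=\expansion(e_\nu)$ must come first, and it is exactly what fails.) To be fair, your instinct here was sharper than the paper's own write-up, which silently identifies $G[V(\expansion(e_\nu))]$ with $\expansion(e_\nu)$ and never discusses this case; the paper is rescued only by how the proposition is used: in \Cref{alg:phase3} the R-node test is run only for tree-neighbours that are not P-nodes (P-nodes being handled by \Cref{prop:P-node}), and when $\mu$ is an S- or R-node no real $\{s,t\}$ edge can occur anywhere in $\expansion(e_\mu)$, since such skeletons contain $\{s,t\}$ only as the virtual edge $e_\nu$ and no deeper skeleton on that side contains both $s$ and $t$. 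The correct repair of your step is to add this ``$\mu$ is not a P-node'' restriction (equivalently, to assume no arc between $s$ and $t$ lies in $\expansion(e_\mu)$), not to argue from the virtuality of $e_\mu$.
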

\begin{proof}
    Let $K=\expansion(e_\nu)$.
    Notice that $\{s,t\}$ is a separation pair of $H$ and that $U(K)$ is a split component with respect to $\{s,t\}$.
    So we are in conditions of applying \Cref{lem:separation-pair-is-superbubbloid}, which implies that $(s,t)$ is a superbubbloid with graph $K$. Next we argue on the minimality.
    
    Notice that $\skel(\nu)$ has three internally vertex-disjoint $s$-$t$ paths since it is 3-connected. Then $\skel(\nu)$ without the edge $\{s,t\}$ has two internally vertex-disjoint $s$-$t$ paths and hence so does $U(K)$ (recall that split components are connected, so a path through the edges of $\skel(\nu)$ can be mapped to a path in $U(\expansion(e_\nu))$). Therefore $(s,t)$ is a superbubble by \Cref{prop:superbubble-independent-paths}.
\end{proof}

\begin{theorem}
    Let $G$ be a directed graph. The algorithm computing superbubbles (\Cref{alg:superbubbles-main}) is correct, that is, it finds every superbubble of $G$ and only its superbubbles, and it can be implemented in time $O(|V(G)| + |E(G)|)$.
\end{theorem}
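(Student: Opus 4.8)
The plan is to prove correctness by establishing completeness and soundness separately, and then to bound the running time by summing the per-block costs. The backbone is the reduction from arbitrary graphs to blocks: by \Cref{lem:bubbles-cutvertices} no cutvertex of $U(G)$ lies in the interior of a superbubble, so every superbubble of $G$ is confined to a single block of $U(G)$, and conversely it suffices to search each block independently. \Cref{alg:superbubbles-main} iterates over the blocks produced by the block-cut tree; multi-bridges can only host trivial superbubbles (handled at Line~\ref{line:superbubbles-trivial}), and for a $2$-connected block $H$ I would invoke \Cref{thm:bubbles-split-pairs}, which asserts that for any superbubble $(s,t)$ of $G$ with graph $\Bst$ either $V(\Bst)=V(H)$ or $\{s,t\}$ is a split pair of $H$.

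For completeness I would first dispatch the whole-block case $V(\Bst)=V(H)$: here $H$ has a unique source $s$ and unique sink $t$, $G[V(H)]$ is acyclic, $ts\notin E(G)$, and no interior vertex is an extremity, which are exactly the tests at Lines~\ref{line:superbubbles-rn1}--\ref{line:superbubbles-block}; minimality is free because $H$ is $2$-connected, so \Cref{prop:superbubble-independent-paths} applies. For the split-pair case I would invoke \Cref{prop:spqr-tree-contains-split-pairs}: $\{s,t\}$ is either a pair of nonadjacent vertices of an S-node, or the endpoints of a virtual edge of some node of $T$. Nonadjacent S-node vertices are separated by a skeleton vertex, which becomes an $s$-$t$ cutvertex of $U(\Bst)$, so \Cref{prop:cutvertex-inside-bubble} rules these out as superbubbles entirely. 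For a virtual edge $\{s,t\}$ I would case on its pertaining nodes: if one endpoint is a P-node, the P-node loop of \Cref{alg:phase3} examines the P-node with pole set $\{s,t\}$ and, by \Cref{prop:P-node}, reports it exactly when it is a superbubbloid; otherwise the edge joins an R-node to an R- or S-node, and the R-node loop of \Cref{alg:phase3} examines it as $X=\expansion(e_\mu)$ on the R-node side. The one delicate point is that when $\{s,t\}$ is adjacent in an S-node skeleton the superbubble graph is not inside that S-node: \Cref{prop:superbubbles-in-S-nodes} guarantees $\Bst\not\subseteq\expansion(e_\mu)$, so the correct expansion is the one inspected from the P- or R-node side, which is precisely what \Cref{alg:phase3} examines.

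For soundness I would verify each report line in turn. Trivial and whole-block reports are immediate. A P-node report at Lines~\ref{line:superbubbles-P-st-1}/\ref{line:superbubbles-P-st} (and symmetrically the $(t,s)$ lines) yields a superbubbloid by \Cref{prop:P-node}; minimality is supplied either by \Cref{prop:superbubble-independent-paths} when the matching edge set has more than one element, or by the explicit check that the unique pertaining node is not an S-node (via \Cref{prop:cutvertex-inside-bubble}) when it is a singleton. An R-node report at Lines~\ref{line:superbubbles-R-st}/\ref{line:superbubbles-R-ts} is a superbubble directly by \Cref{prop:R-node}. Throughout, the state tests ($\acyclic{\cdot}{\cdot}$ and $\noextremity{\cdot}{\cdot}$ being $\true$, together with the out/in-neighbour containments and the $ts\notin E(G)$ condition) combine via \Cref{lem:separation-pair-is-superbubbloid} to certify the separability, matching, and acyclicity conditions of \Cref{def:gartner}; the correctness of those stored states is exactly \Cref{lem:phases-correct}.

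Finally, for the running time I would argue block-by-block. The block-cut tree and each SPQR tree are built in linear time, Phases~1 and~2 run in $O(|V(H)|+|E(H)|)$ per block by \Cref{lem:phases-correct} (the feedback-arc computations used inside Phase~2 being linear by \Cref{thm:garey-tarjan-arcs}), and Phase~3 does constant work per P-node pole pair and per R-node incident tree edge plus one pass to build the grouping sets and expansions, hence also $O(|V(H)|+|E(H)|)$. Summing over blocks and using that the blocks partition the edges of $G$ and that the total skeleton size is linear (\Cref{lem:spqr-total-size}) gives the overall $O(|V(G)|+|E(G)|)$ bound. I expect the main obstacle to be the completeness case analysis---specifically showing that the S-node redirection via \Cref{prop:superbubbles-in-S-nodes} routes every split-pair superbubble to exactly one handled P- or R-node report, so that no superbubble falls through the cracks.
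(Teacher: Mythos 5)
Your overall architecture matches the paper's proof exactly: block-by-block reduction via \Cref{lem:bubbles-cutvertices} and \Cref{thm:bubbles-split-pairs}, elimination of nonadjacent S-node pairs via \Cref{prop:cutvertex-inside-bubble}, P-node reports certified by \Cref{prop:P-node}, R-node reports by \Cref{prop:R-node}, states by \Cref{lem:phases-correct}, and the same per-block accounting for the running time. However, there is a genuine gap in your completeness argument, and it is precisely the point you defer as ``the main obstacle'': in the case where $\{s,t\}$ is the endpoint pair of a tree edge $\{\nu,\mu\}$ with $\mu$ an R-node and $\nu$ not a P-node, the algorithm only reports $(s,t)$ if $N^+_G(s), N^-_G(t) \subseteq V(\expansion(e_\mu))$ and the corresponding state flags hold. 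For completeness you must therefore prove that the superbubble graph $\Bst$ coincides with one of the two expansions of that tree edge --- a priori, $s$ could have out-neighbours in \emph{both} $\expansion(e_\mu)$ and $\expansion(e_\nu)$, in which case neither side passes the test and the superbubble would be missed. \Cref{prop:superbubbles-in-S-nodes} does not resolve this: it only tells you $\Bst$ is not inside an S-node expansion, not that $\Bst$ equals an expansion at all. The paper's proof devotes its longest and most technical passage to exactly this step: assuming $s$ has out-neighbours in both expansions, it shows $\expansion(e_\nu) \subseteq \Bst$ and $\expansion(e_\mu) \subseteq \Bst$ (hence $V(\Bst)=V(H)$, contradicting the case assumption) by taking a hypothetical vertex $x \in V(\expansion(e_\nu)) \setminus V(\Bst)$, constructing an $x$-$u$ path avoiding $s$ and $t$ to an out-neighbour $u$ of $s$ --- with separate sub-cases for $\nu$ an S-node and $\nu$ an R-node, both relying on \Cref{prop:reaches-in-expansion} and the connectivity of the skeletons --- and then deriving a contradiction from the first edge of that path crossing into $\Bst$. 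Only after this localization does your ``S-node redirection'' kick in to place $\Bst$ on the R-node side. Without this argument (or an equivalent one, e.g.\ showing that conditions (3)--(4) of \Cref{def:gartner} force $\Bst$ to be a union of split components of $\{s,t\}$, of which there are exactly two here), completeness does not follow.

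A second, more minor imprecision: your soundness for the whole-block report (Line~\ref{line:superbubbles-block}) is not ``immediate'' via \Cref{lem:separation-pair-is-superbubbloid}, because that lemma assumes $\{s,t\}$ is a separation pair, which need not hold when $V(\Bst)=V(H)$. The paper explicitly notes that a variant of the lemma's proof is needed for this case (replacing the separation-pair hypothesis with the absence of boundary edges other than at $s$ and $t$). This is a small patch, but it should be stated rather than absorbed into ``immediate.''
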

\begin{proof}
     
     Let $H_1,\dots,H_{\ell}$ be the blocks of $U(G)$ $(\ell \geq 1)$.
     
    \textbf{(Completeness.)}
    We argue that every superbubble $(s,t)$ of $G$ is reported by the algorithm.
    
    If $(s,t)$ is a trivial superbubble then $N^+_G(s)=\{t\}$, $N^-_G(t)=\{s\}$, and $ts\notin E(G)$ by definition. These are exactly the conditions tested in Line~\ref{line:superbubbles-trivial}, and so $(s,t)$ is reported by the algorithm.
    Otherwise, if $V(B_{st})=V(H_i)$ for some $\iinl$, then $(s,t)$ is reported by the algorithm in Line~\ref{line:superbubbles-block}: by the matching property $B_{st}$ has at most one source $s$ and at most one sink $t$ of $G$, no vertex in the interior of $\Bst$ is a cutvertex of $U(G)$ by \Cref{lem:bubbles-cutvertices}, $\Bst$ is acyclic, $ts\notin E(G)$, and the usual neighborhood constraints on $s$ and $t$ hold. These conditions altogether are enough to report the pair $(s,t)$.
    
    Otherwise $\{s,t\}$ is a separation pair of a maximal 2-connected subgraph $H$ of $U(G)$ by \Cref{thm:bubbles-split-pairs}, and moreover $V(\Bst) \subset V(H)$. Let $T$ denote the SPQR tree of $H$. First, notice that no pair of nonadjacent vertices $u,v$ of an S-node identifies a superbubble (unless $V(B_{uv})=V(H)$, which is not the case by assumption). To see why, notice that such a superbubble $(u,v)$ with graph $B_{uv}$ contains a $u$-$v$ cutvertex with respect to $U(B_{uv})$, a contradiction to \Cref{prop:cutvertex-inside-bubble}. Thus it follows by \Cref{prop:spqr-tree-contains-split-pairs} that $\{s,t\}$ are endpoints of a virtual edge of a node $\mu$ of $T$. This virtual edge is associated with a tree edge $\{\nu,\mu\}$. Let $e_\mu$ be the virtual edge in $\nu$ pertaining to $\mu$ and let $e_\nu$ the virtual edge in $\mu$ pertaining to $\nu$.
    
    We first make some observations about the types of the nodes of $T$ and the location of $\Bst$ in $T$.
    If $\mu$ is an S-node then \Cref{prop:superbubbles-in-S-nodes} implies that $B_{st} \not\subseteq \expansion(e_\mu)$. 
    (Essentially $\state{\nu}{\mu}$ can be ignored). Symmetrically, $\state{\mu}{\nu}$ can be ignored whenever $\nu$ is an S-node.
    If $\mu$ is a P-node with vertex set $\{s,t\}$ then $\Bst$ can be expressed as the union of the expansions of the virtual edges of $\mu$ as described in \Cref{prop:P-node}. Therefore $\mu$ is examined separately by analyzing the states corresponding to $\expansion(e_i)$ for each $\iink$. Symmetrically, the same is done whenever $\nu$ is a P-node.
    Hence, the remaining virtual edges that encode superbubbles are those contained in the R-nodes. Moreover, if the pertaining node of that virtual edge is a P-node then $(s,t)$ is processed when analyzing P-nodes, so it suffices to analyze P-nodes individually and those tree-edges $\{\nu,\mu\}$ whenever $\nu$ is not a P-node and $\mu$ is an R-node. We analyze the two cases separately.
    
    \begin{itemize}
        \item \textbf{$\mu$ is a P-node:}
        Let $e_1,\dots,e_k$ be the edges in $\skel(\mu)$ whose endpoints are $\{s,t\}$ $(k \geq 3)$.
        Since $(s,t)$ is a superbubble, $(s,t)$ is also a superbubbloid and thus \Cref{prop:P-node} implies that
        $\Bst$ can be expressed, without loss of generality, as $\bigcup_{i=1}^{k'} \expansion(e_i)$ for some $k' < k$ ($k\neq k'$ since otherwise $V(\Bst)=V(H)$); further, it implies that $\expansion(e_i)$ is acyclic and has no extremities except $\{s,t\}$ for each $i=1,\dots,k'$, $E^+_s=E^-_t$, $N^+_G(s), N^-_G(t) \subseteq V(\Bst)$ and $ts \notin E(G)$.
        If $k' \neq 1$ then these conditions are enough to report $(s,t)$ as a superbubble (Line~\ref{line:superbubbles-P-st}). 
        Otherwise we have $k'=1$. If $e_1$ is a real edge then it was reported when analyzing the trivial superbubbles (notice that the conditions given by \Cref{prop:P-node} match those of a trivial superbubble). Otherwise $e_1$ is virtual and thus it has a pertaining node in $T$. Suppose for a contradiction that the pertaining node of $e_1$ is an S-node. Since $(s,t)$ is a superbubble and the out-neighbors of $s$ are contained in $\expansion(e_1)$, it implies that $\Bst \subseteq \expansion(e)$, from where \Cref{prop:superbubbles-in-S-nodes} gives a contradiction. Therefore the pertaining node of $e_1$ is not an S-node and $(s,t)$ is reported in Line~\ref{line:superbubbles-P-st-1}. 
        
        \item \textbf{$\mu$ is an R-node and $\nu$ is not a P-node:}
        In this case we have that $s$ and $t$ are the endpoints of $e_\mu$ (and $e_\nu)$.
        Suppose that $s$ has out-neighbors in both expansions. Let $u$ be such an out-neighbor. We claim that $V(H)=V(\Bst)$, a contradiction to the fact that we are under the assumption $V(H)\neq V(\Bst)$.
        First we show that $\expansion(e_\nu) \subseteq \Bst$.
        Suppose for a contradiction that $\expansion(e_\nu) \not\subseteq \Bst$. Then $\expansion(e_\nu)$ has a vertex or an edge not contained in $\Bst$. If it has an edge whose endpoints are contained in $\Bst$ then this edge is also contained in $\Bst$ because superbubbles are induced subgraphs with respect to vertices. Thus there is a vertex $x \in V(\expansion(e_\nu)) \setminus V(\Bst)$.
        We claim that there is an $x$-$u$ path $p$ in $U(\expansion(e_\nu))$ avoiding $s$ and $t$. We proceed by cases on the type of $\nu$.
            
            \begin{itemize}
                \item Suppose that $\nu$ is an S-node and let $e_1=\{s,t\},\dots,e_k$ $(k\geq 3)$ denote the edges of $\skel(\nu)$. Let $e_x$ be the edge in $\skel(\nu)$ whose expansion contains $x$ and define $e_u$ analogously.
                Notice that one endpoint $u'$ of $e_u$ is not $t$ (the other necessarily is $s$). Similarly, one endpoint of $e_x$ is distinct from $s$ and $t$ since $x \in \expansion(e_\nu)$; further, $x'$ can be chosen so that it is closest to $u'$ in the graph obtained by removing the edge $\{s,t\}$ from $\skel(\nu)$.
                Since $x\notin V(\Bst)$, $x$ is distinct from $s$ and $t$, so we are in conditions of applying \Cref{prop:reaches-in-expansion} and get an $x$-$x'$ path $p_x$ avoiding $s$ and $t$.\footnote{Notice that we can apply \Cref{prop:reaches-in-expansion} to directed graphs, where the out- and in-neighborhoods correspond to $+$ and $-$ incidences in the natural way.}
                Since $u$ is an out-neighbor of $s$ and $\nu$ is an S-node we have that $u$ is distinct from $t$ (and from $s$ trivially), so we are in conditions of applying \Cref{prop:reaches-in-expansion} and get a $u$-$u'$ path $p_u$ avoiding $s$ and $t$. Since $e_x,e_u \neq e_1$, $\skel(\nu)$ has an $x'$-$u'$ path avoiding $s$ and $t$ and thus $U(\expansion(e_\nu))$ has an $x$-$u$ path avoiding $s$ and $t$.
                
                \item Suppose that $\nu$ is an R-node. We can proceed identically as above and get vertices $x',u' \in \skel(\nu)$ with the desired properties by essential application of \Cref{prop:reaches-in-expansion}. Since R-nodes are 3-connected, $\skel(\nu)$ has an $x'$-$u'$ path avoiding both $s$ and $t$, and therefore $U(\expansion(e_\nu))$ has an $x$-$u$ path avoiding $s$ and $t$.
            \end{itemize}

        The path $p$ starts in a vertex not in $\Bst$ and ends in a vertex contained in $\Bst$.
        Let $b$ denote the first vertex in $p$ that is contained $\Bst$ (such a vertex exists since $b=u$ at the latest). Let $a$ be the vertex preceding $b$ in $p$ (such a vertex exists since $b \neq u$). Then $a \notin V(\Bst)$ because $b$ is the first vertex of $p$ contained in $\Bst$. Thus $U(\expansion(e_\nu))$ has an edge $\{a,b\}$, and hence $\expansion(e_\nu)$ has an edge $ab$ or $ba$. Since $b \in V(\Bst)$, $\expansion(e_\nu)$ has an $s$-$a$ path avoiding $t$ if $ba \in E(\expansion(e_\nu))$, and it has an $a$-$t$ path avoiding $s$ if $ab \in E(\expansion(e_\nu))$, thus $a \in V(\Bst)$, a contradiction. Therefore $\expansion(e_\nu) \subseteq \Bst$.

        Applying the argument above to $\expansion(e_\mu)$ (which is valid since $\mu$ is an R-node), we get $\expansion(e_\mu) \subseteq \Bst$. Since $\expansion(e_\mu) \cup \expansion(e_\nu) = H$ we have $H \subseteq \Bst$, and since superbubbles live within blocks, we get $H=\Bst$. Thus $V(\Bst)=V(H)$, as desired.

        So $s$ has out-neighbors in only one expansion between $\nu$ and $\mu$ and therefore the superbubble is contained in that expansion. By symmetry, $t$ has in-neighbors in only one expansion, and it is not hard to see that these expansions have to match.
        If the out-neighbors of $s$ are contained in $\expansion(e_\nu)$ and $\nu$ is an S-node then \Cref{prop:superbubbles-in-S-nodes} gives a contradiction. Thus the out-neighbors (resp. in-neighbors) of $s$ (resp. $t$) are fully contained in an expansion of an R-node. Further, we have that $\Bst$ is acyclic, has no extremities of $G$ expect $\{s,t\}$, and $ts \notin E(G)$, since $(s,t)$ is a superbubble. These conditions altogether are enough to report $(s,t)$ in Line~\ref{line:superbubbles-R-st} (when iterating over node $\mu$ if $\Bst \subseteq \expansion(e_\mu)$ and over node $\nu$ if $\Bst \subseteq \expansion(e_\nu)$). 
    \end{itemize}

    \textbf{(Soundness.)}
    Let $(s,t)$ be a pair of vertices reported by the algorithm. We show that $(s,t)$ is a superbubble of $G$.

    If the pair $(s,t)$ is reported by virtue of Line~\ref{line:superbubbles-trivial} then $(s,t)$ is a trivial superbubble by definition.
    
    If the pair $(s,t)$ is reported by virtue of Line~\ref{line:superbubbles-block} then $H$ has exactly one source $s$ and exactly one sink $t$ (with respect to $H$), no vertex in $H$ except $\{s,t\}$ is a cutvertex of $U(G)$, $H$ is acyclic, $N^+_G(s),N^-_G(t) \subseteq V(H)$, and $ts\notin E(G)$. If $\{s,t\}$ is also a separation pair then we are in conditions of applying \Cref{lem:separation-pair-is-superbubbloid} and conclude that $(s,t)$ is a superbubbloid; otherwise, we can also conclude that $(s,t)$ is a superbubbloid: a proof to that of \Cref{lem:separation-pair-is-superbubbloid} is possible - and very similar - if, instead of assuming that $\{s,t\}$ is a separation pair, we assume that no vertex except $\{s,t\}$ in a directed graph $K$ (with $U(K)$ connected) is an endpoint of an edge whose other endpoint is outside $K$, which are exactly the conditions we have in this case since no vertex except $\{s,t\}$ is a cutvertex of $G$ and blocks are connected; we omit the proof for the sake of brevity. For minimality, notice that $U(H)$ has two internally vertex-disjoint paths $s$ and $t$ since $H$ is 2-connected, so \Cref{prop:superbubble-independent-paths} implies that $(s,t)$ is a superbubble.
    
    Now we discuss the case when $\{s,t\}$ is a separation pair of a block $H$ of $G$. By symmetry it suffices to show that the pairs reported in Lines~\ref{line:superbubbles-P-st-1},~\ref{line:superbubbles-P-st}, and~\ref{line:superbubbles-R-st} are superbubbles.
    If $(s,t)$ is reported in Line~\ref{line:superbubbles-P-st} then \Cref{lem:separation-pair-is-superbubbloid} implies that $(s,t)$ is a superbubbloid with graph $K$; moreover, since $K=\bigcup_{e\in E^+_s}\expansion(e)$ consists of the union of $\ell \geq 2$ split components of $\{s,t\}$, which are disjoint, $U(K)$ has two internally vertex-disjoint $s$-$t$ paths, and hence \Cref{prop:superbubble-independent-paths} gives that $(s,t)$ is in fact a superbubble.
    If $(s,t)$ is reported in Line~\ref{line:superbubbles-R-st} then the fact that $(s,t)$ is a superbubble follows at once by \Cref{prop:R-node}.
    If $(s,t)$ is reported in Line~\ref{line:superbubbles-P-st-1} then the pertaining node of the unique edge in $E^+_s$ is an R-node (as no two P-nodes are adjacent in $T$), so we are conditions of applying \Cref{prop:R-node} and conclude that $(s,t)$ is a superbubble.

    \textbf{(Running time.)} Block-cut trees can be built in linear time \cite{Hopcroft73blockcut} and the total size of the blocks is linear in $|V(G)|+|E(G)|$. The case when a block in a multi-bridge is trivial, so suppose that we are analyzing a block $H$ that is 2-connected. Let $|H|=|V(H)|+|E(H)|$. We show that the rest of the algorithm runs in time $O(|H|)$, thus proving the desired bound.
    
    The conditions on Lines~\ref{line:superbubbles-rn1} and~\ref{line:superbubbles-rn2} are trivial and require $O(|H|)$ time altogether.
    The SPQR tree $T$ can be built in $O(|H|)$ time~\cite{gutwenger2001linear}. Phases 1 and 2 take $O(|H|)$ time by \Cref{lem:phases-correct}.
    For Phase 3, recall first that $T$ has $O(|H|)$ P-nodes as well as tree-edges by \Cref{lem:spqr-total-size}. Further, notice that the work done in each P-node and in each tree-edge entering an R-node takes constant-time with exception of the neighborhood queries of $s$ and $t$.
    To handle this type of queries, we can proceed identically as in the proof of \Cref{thm:snarls-time} and emulate the neighborhood queries in constant time. Therefore, for each P-node $\mu$ the algorithm spends $O(|E(\skel(\mu))|)$ time to build the sets described in \Cref{prop:P-node}, and for tree-edges entering R-nodes the algorithm spends a constant amount of time. The latter thus requires $O(|V(H)|)$ time altogether because $T$ has $O(|V(H)|)$ R-nodes at most, and the former requires $O(|H|)$ time altogether since the total number of edges of the skeletons in the nodes of $T$ is $O(|E(H)|)$ and $T$ has $O(|V(H)|)$ P-nodes at most (see \Cref{lem:spqr-total-size}).
\end{proof}

\newpage
\section{Additional experimental results}
\label{sec:additional-experimental-details}

\textbf{Experimental setup.}
We measure wall clock time and peak memory using the Linux \texttt{/usr/bin/time} command on all tools. To measure wall clock time and peak memory per phase in BubbleFinder, we use \texttt{std::chrono} and read the process status file \texttt{/proc/pid/status}, respectively. We use \texttt{vg} version \texttt{1.67} and BubbleGun version \texttt{1.1.9}. We compile our tool with \texttt{gcc} version 9.4 with the \texttt{-O3} optimization flag. All our experiments ran on an isolated AMD Ryzen Threadripper PRO 3975WX (32-Cores) with 504GiB of RAM, running Ubuntu 20.04.6 (kernel GNU/Linux 5.15.0, 64bit).

\begin{table}[h!]
\centering
\begin{adjustbox}{max width=\textwidth}
\begin{tabular}{llcccccccc}
\toprule
Group & Dataset &
\multicolumn{1}{c}{$n$} &
\multicolumn{1}{c}{$m$} &
\multicolumn{2}{c}{\cellcolor{famSnarl}\textbf{BubbleFinder}} &
\cellcolor{famSnarl}\textbf{vg snarls} &
\multicolumn{2}{c}{\cellcolor{famBiSB}\textbf{BubbleFinder}} &
\cellcolor{famBiSB}\textbf{BubbleGun} \\
& &
\multicolumn{1}{c}{$(\times 10^6)$} &
\multicolumn{1}{c}{$(\times 10^6)$} &
\cellcolor{famSnarl}{I/O+ALGO} &
\cellcolor{famSnarl}{all} &
\cellcolor{famSnarl}{} &
\cellcolor{famBiSB}{I/O+ALGO} &
\cellcolor{famBiSB}{all} &
\cellcolor{famBiSB}{} \\
\midrule
PGGB & \textsf{50x E.\ coli}    & $1.6$  & $2.1$
    & \textbf{1.08}  & 6.4 & \textbf{1.9}
    & 2.13 & 13.1
    & \textbf{1.8} \\
& \textsf{Primate Chr.~6} & $34.3$ & $47.1$
    & \textbf{7.82} & 141.95 & \textbf{33.7}
    & 46.80  & 291.2
    & \textbf{40.9} \\
& \textsf{Tomato Chr.~2}  & $2.3$  & $3.2$
    & 8.97 & 9.56 & \textbf{2.87}
    & 18.69 & 19.92
    & \textbf{8.39} \\
& \textsf{Mouse Chr.~19}  & $6.2$  & $8.6$
    & 5.12  & 27.96   & \textbf{7.78}
    & 9.01 & 53.13
    & \textbf{7.27} \\
\midrule
\texttt{vg} & \textsf{Chromosome 1}   & $18.8$ & $25.7$
    & \textbf{12.61} & 66.9 & \textbf{15.8}
    & 25.33 & \textbf{134.0}
    & \cellcolor[HTML]{F2F2F2} TO \\
& \textsf{Chromosome 10}  & $11.6$ & $15.9$
    & \textbf{7.68} & 41.2 & \textbf{9.7}
    & 15.34 & \textbf{82.3}
    & \cellcolor[HTML]{F2F2F2} TO \\
& \textsf{Chromosome 22}  & $3.2$  & $4.4$
    & \textbf{2.18}  & 11.5 & \textbf{2.8}
    & 4.44  & \textbf{23.0}
    & \cellcolor[HTML]{F2F2F2} TO \\
\midrule
DBG & \textsf{10x M.\ xanthus} & $1.6$ & $2.1$
    & \textbf{1.10} & 5.76 & \textbf{1.4}
    & 2.18 & 11.4
    & \textbf{1.7} \\
\bottomrule
\end{tabular}
\end{adjustbox}
\caption{Maximum resident set size (Max RSS) measured for each dataset and tool, in GiB. Tools in blue compute snarls, while tools in red compute superbubbles.
Cells containing ``TO'' indicate that the experiment timed out after three hours. Columns $n$ and $m$ denote the number of nodes and edges, respectively, given in units of $10^{6}$.
Columns labeled ``I/O+ALGO'' indicate that the timing does not include the building of the BC tree and the SPQR trees.}
\label{tab:max_rss}
\end{table}

\begin{figure}[h!]
    \centering
    \includegraphics[width=\textwidth]{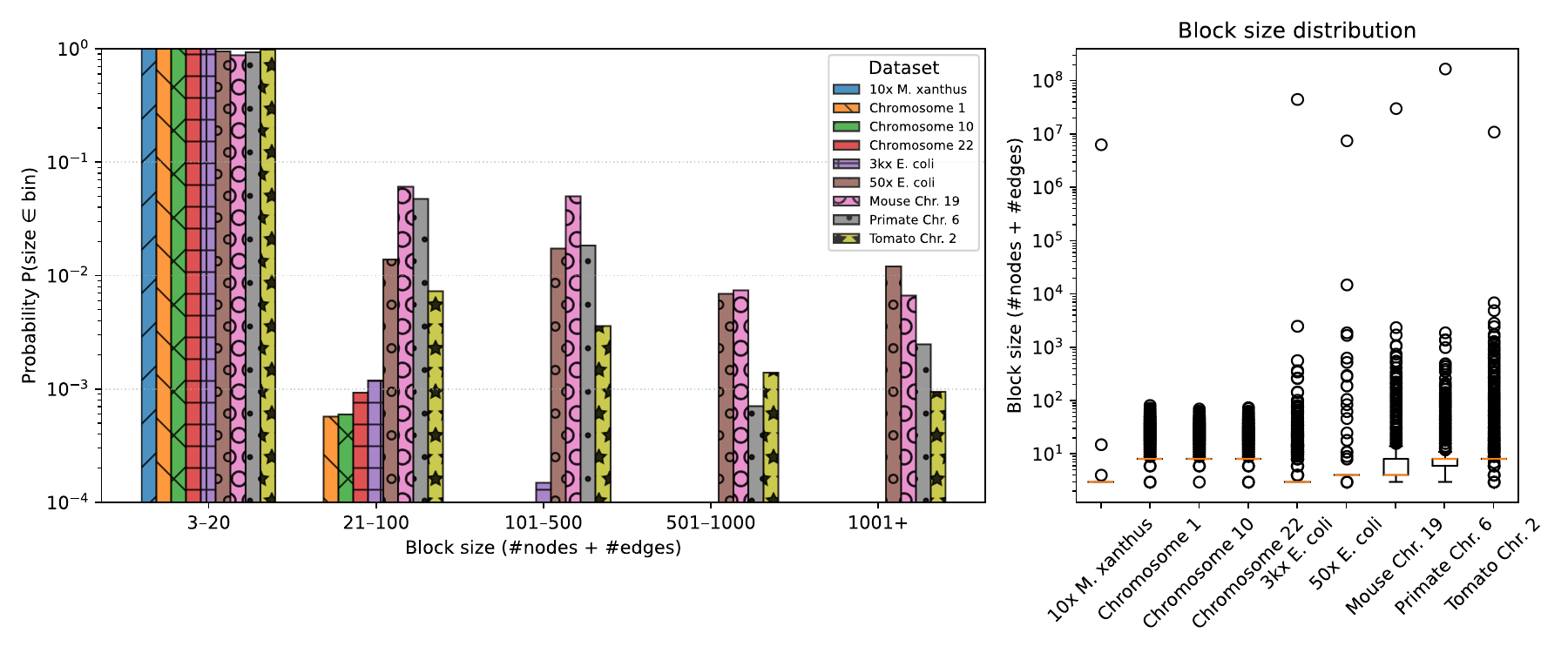}
    \caption{Distribution of BC-tree block sizes in some of our datasets.}
    \label{fig:block_dist}
\end{figure}

The datasets used in our benchmarks and their accession codes are listed in
\Cref{tab:dataset-accessions}.
Below we describe how we obtained or constructed the corresponding graphs.

\paragraph{PGGB graphs.}
The \textsf{50x~E.~coli} and \textsf{Primate~Chr.~6} graphs are the pangenome
graphs released with PGGB and were downloaded directly as GFA files from
Zenodo (see \Cref{tab:dataset-accessions}).
The \textsf{Tomato~Chr.~2} and \textsf{Mouse~Chr.~19} graphs were
reconstructed by running PGGB on the corresponding multi-genome FASTA files
from the same Zenodo record.
For each of these two datasets we downloaded the compressed FASTA
(\texttt{.fa.gz}), indexed it with
\texttt{samtools faidx \{fasta\}}, and then ran
\begin{verbatim}
pggb -t 4 -i {fasta} -o {output}
\end{verbatim}
using PGGB's default parameters (plus any dataset-specific options listed in
our Snakemake configuration).
Our pipeline then selected the \texttt{.smooth.final.gfa} produced by PGGB
(if present, otherwise a \texttt{.smooth.gfa} or generic \texttt{.gfa}) and
saved it as \texttt{name.pggb.gfa}.

\paragraph{Human variant graphs.}
To construct the variant graphs for human chromosomes~1, 10, and~22, we used
the 1000~Genomes Project Phase~3 VCFs on the GRCh37 reference.
For each chromosome \texttt{chr} we downloaded the reference FASTA and the
per-chromosome VCF listed in \Cref{tab:dataset-accessions}, and ran
\begin{verbatim}
vg construct -r {ref} -v {vcf} -R {chr} -m 3000000 -t 30 > {name}.vg
vg convert -f {name}.vg > {name}.gfa
\end{verbatim}
to obtain the corresponding GFA variation graph.
The resulting graphs for chromosomes~1, 10, and~22 are made available on
Zenodo (see \Cref{tab:dataset-accessions}) and are further processed as
described below.

\paragraph{De~Bruijn graph (\emph{Myxococcus xanthus}).}
For the \textsf{10x~\emph{M.~xanthus}} dataset, we selected the 10 assemblies
with NCBI accessions
GCA000012685.1, GCA000278585.2, GCA000340515.1,\\
GCA006400955.1, GCA006401215.1, GCA006401635.1,\\
GCA006402015.1, GCA006402415.1, GCA006402735.1, and\\
GCA900106535.1 (see \Cref{tab:dataset-accessions}).
We downloaded the corresponding genome FASTA files, concatenated them into a
single multi-FASTA, and built a compacted de~Bruijn graph of order~41 with
GGCAT~\cite{cracco2023ggcat} using
\begin{verbatim}
ggcat build --gfa-v1 -e -s 1 -k 41 -j 8 \
    mxanthus10.combined.fa -o mxanthus10.ggcat.gfa
\end{verbatim}
(where \texttt{-e -s 1} are the common options used in our workflow).

\paragraph{Graph preprocessing and bluntification.}
All graphs, regardless of origin (PGGB, \texttt{vg construct}, or GGCAT),
are processed uniformly by our Snakemake pipeline.
Each dataset first produces a ``raw'' GFA:
either from a reference+VCF via
\texttt{vg construct}/\texttt{vg convert},
from assemblies via GGCAT (\texttt{ggcat build}),
from pangenome FASTA input via PGGB (\texttt{pggb}),
or by downloading an existing GFA from Zenodo.

We then normalize every raw GFA into a \emph{blunt} GFA using
GetBlunted~\cite{eizenga2021getblunted}.
For each dataset \texttt{name} we run
\begin{verbatim}
get_blunted -i name.raw.gfa > name.bluntified.gfa
\end{verbatim}
and remove any header (\texttt{H}) lines to obtain \texttt{name.cleaned.gfa}.
Because \texttt{vg snarls} requires blunt graphs, we systematically use this
cleaned blunt GFA as input not only for \texttt{vg snarls}, but also for all
other tools (BubbleGun and BubbleFinder), ensuring that every tool operates
on exactly the same version of each graph.
If GetBlunted is not available or fails on a particular graph, the pipeline
falls back to a simple overlap-removal procedure that sets all edge overlaps
in the GFA to ``\texttt{*}''; this fallback bluntified GFA is then likewise
used for all tools.

\begin{table}[h]
    \centering
    \begin{tabular}{>{\raggedright}p{0.16\linewidth}|>{\raggedright}p{0.12\linewidth}|>{\raggedright}p{0.64\linewidth}}
    Dataset & Source & Accessions \tabularnewline\hline
        \textsf{50x E. coli} & Zenodo &
        Original PGGB GFA: \url{https://zenodo.org/records/7937947}
        \tabularnewline\hline

        \textsf{Primate Chr. 6} & Zenodo &
        Original PGGB GFA: \url{https://zenodo.org/records/7937947}
        \tabularnewline\hline

        \textsf{Tomato Chr. 2} & Zenodo &
        Input FASTA (PGGB): \url{https://zenodo.org/records/7937947}
        \tabularnewline\hline

        \textsf{Mouse Chr. 19} & Zenodo &
        Input FASTA (PGGB): \url{https://zenodo.org/records/7937947}
        \tabularnewline\hline

        \textsf{Chromosome 1} & 1000 Genomes Project &
        Reference: \url{https://ftp.1000genomes.ebi.ac.uk/vol1/ftp/technical/reference/human_g1k_v37.fasta.gz}\\
        VCF: \url{https://ftp.1000genomes.ebi.ac.uk/vol1/ftp/release/20130502/ALL.chr1.phase3_shapeit2_mvncall_integrated_v5b.20130502.genotypes.vcf.gz}
        \tabularnewline\hline

        \textsf{Chromosome 10} & 1000 Genomes Project &
        Reference: \url{https://ftp.1000genomes.ebi.ac.uk/vol1/ftp/technical/reference/human_g1k_v37.fasta.gz}\\
        VCF: \url{https://ftp.1000genomes.ebi.ac.uk/vol1/ftp/release/20130502/ALL.chr10.phase3_shapeit2_mvncall_integrated_v5b.20130502.genotypes.vcf.gz}
        \tabularnewline\hline

        \textsf{Chromosome 22} & 1000 Genomes Project &
        Reference: \url{https://ftp.1000genomes.ebi.ac.uk/vol1/ftp/technical/reference/human_g1k_v37.fasta.gz}\\
        VCF: \url{https://ftp.1000genomes.ebi.ac.uk/vol1/ftp/release/20130502/ALL.chr22.phase3_shapeit2_mvncall_integrated_v5b.20130502.genotypes.vcf.gz}
        \tabularnewline\hline

        \textsf{10x M. xanthus} & NCBI &
        GCA000012685.1, GCA000278585.2, GCA000340515.1, GCA006400955.1,\\
        GCA006401215.1, GCA006401635.1, GCA006402015.1, GCA006402415.1,\\
        GCA006402735.1, GCA900106535.1
        \tabularnewline\hline
    \end{tabular}
    \caption{Dataset accession codes.
    NCBI is the National Center for Biotechnology Information of the USA~\url{https://www.ncbi.nlm.nih.gov/}.
    All bluntified GFA graphs used in our benchmarks (one per dataset) are available in the Zenodo record
    \url{https://zenodo.org/records/17634355}.}
    \label{tab:dataset-accessions}
\end{table}

\end{document}